\documentclass[11pt]{article}%

\ifx\SODAVer\undefined%
\newcommand{\SODA}[1]{}
\newcommand{\notSODA}[1]{#1}
\else
\newcommand{\SODA}[1]{#1}
\newcommand{\notSODA}[1]{}
\fi

\makeatletter
\def\input@path{{styles/}}
\makeatother

\SODA{
   \def\CiteCrazy{1}%
}

\SODA{%
   \usepackage[letterpaper, margin=1in]{geometry}

   \usepackage{setspace}
   \singlespacing
   \setlength{\parskip}{0.5em} %
}

\def\UseBibLatex{1}

\providecommand{\BibLatexMode}[1]{}
\providecommand{\BibTexMode}[1]{}

\renewcommand{\BibLatexMode}[1]{#1}
\renewcommand{\BibTexMode}[1]{}

\ifx\UseBibLatex\undefined%
  \renewcommand{\BibLatexMode}[1]{}
  \renewcommand{\BibTexMode}[1]{#1}
\fi

\BibLatexMode{%
    \usepackage[bibencoding=utf8,style=alphabetic,backend=biber]{biblatex}%
    \usepackage{sariel_biblatex}%
    \usepackage{biblatex_authors}
}

\usepackage{amsmath}%
\usepackage{amssymb}%

\usepackage{lipsum} %

\usepackage[table,rgb]{xcolor}%

\definecolor{myDarkRed}{rgb}{0.5, 0.0, 0.0}
\definecolor{myUrlRed}{rgb}{0.25, 0.0, 0.0}
\definecolor{myCiteBlue}{rgb}{0.0, 0.2, 0.445}
\definecolor{myFileBlue}{rgb}{0.0, 0.0, 0.4}
\definecolor{myAnchorBlue}{rgb}{0.0, 0.1, 0.2}

\usepackage[amsmath,thmmarks]{ntheorem}%

\usepackage{microtype}
\usepackage{mleftright}%
\usepackage{xspace}%
\usepackage{graphicx}
\usepackage{hyperref}%
\usepackage[inline]{enumitem} %
\usepackage[ocgcolorlinks]{ocgx2} %
\usepackage{caption}
\usepackage[all]{hypcap} %
\usepackage{hypcap}

\usepackage{tikz}
\hypersetup{%
      unicode,
      breaklinks,%
      colorlinks=true,%
      urlcolor=myUrlRed,%
      linkcolor=myDarkRed,%
      citecolor=myCiteBlue,%
      filecolor=myFileBlue,
      anchorcolor=myAnchorBlue%
}

\usepackage{titlesec}%
\titlelabel{\thetitle. }%

\titlespacing*{\subparagraph}
{0pt}                %
{5pt}                %
{1em}                %

\titleformat{\subparagraph}[runin]
  {\normalfont\scshape} %
  {\thesubparagraph}    %
  {1em}                 %
  {}                    %

\usepackage[amsmath,thmmarks]{ntheorem}%
\usepackage{pgffor} %

\theoremseparator{.}%

\theoremstyle{plain}%
\theoremheaderfont{\normalfont\bfseries}
\theorembodyfont{\itshape}
\newtheorem{theorem}{Theorem}[section]

\theoremseparator{}%

\newtheorem{theoremnp}[theorem]{Theorem}

\theoremseparator{.}%

\foreach \env/\display in {
   lemma/Lemma, conjecture/Conjecture, corollary/Corollary, claim/Claim,  invariant/Invariant,
   proposition/Proposition, prop/Proposition, openproblem/Open Problem%
}{ \edef\temp{{\noexpand\newtheorem{\env}[theorem]{\display}}} \temp }

\theoremseparator{}%

\theoremseparator{.}%

\definecolor{darkslate}{rgb}{0.2, 0.2, 0.2} %

\theoremseparator{.}%
\theoremheaderfont{\sffamily\bfseries\color{darkslate}}
\theorembodyfont{\upshape}%
\foreach \env/\display in {%
   defn/Definition, definition/Definition, example/Example, exercise/Exercise, xca/Exercise, problem/Problem,
   question/Question,assumption/Assumption%
}{\edef\temp{{\noexpand\newtheorem{\env}[theorem]{\display}}} \temp}

\theoremstyle{plain}%
\theoremseparator{.}%
\theoremheaderfont{\sffamily}%
\theorembodyfont{\upshape}%
\newtheorem*{remark:unnumbered}[theorem]{Remark}%

\foreach \env/\display in {
   remark/Remark, note/Note, fact/Fact, observation/Observation
}{ \edef\temp{{\noexpand\newtheorem{\env}[theorem]{\display}}} \temp }

\newcommand{\myqedsymbol}{\rule{2mm}{2mm}}

\theoremheaderfont{\itshape}%
\theorembodyfont{\upshape}%
\theoremstyle{nonumberplain}%
\theoremseparator{}%
\theoremsymbol{\myqedsymbol}%
\newtheorem{proof}{Proof:}%

\definecolor{blue25emph}{rgb}{0, 0, 11}
\SODA{%
\definecolor{almostblack}{rgb}{0.0, 0.2, 0.5} %
}
\notSODA{%
\definecolor{almostblack}{rgb}{0, 0.0, 0.3}%
}

\providecommand{\emphi}[1]{}
\renewcommand{\emphi}[1]{\emphw{#1}}

\providecommand{\emphw}[1]{}%
\renewcommand{\emphw}[1]{{\textcolor{almostblack}{\emph{#1}}}}%
\SODA{%
\renewcommand{\emphw}[1]{{\textcolor{almostblack}{\textbf{#1}}}}%
}

\providecommand{\emphOnly}[1]{}%
\renewcommand{\emphOnly}[1]{\emph{\textcolor{blue25emph}{\textbf{#1}}}}

\newcommand{\SarielThanks}[1]{%
   \thanks{%
      School of Computing and Data Science; %
      University of Illinois; %
      201 N. Goodwin Avenue; %
      Urbana, IL, 61801, USA; %
      \href{mailto:spam@illinois.edu}{sariel@illinois.edu}; %
      \url{http://sarielhp.org/}.%
   #1%
   }%
}

\newcommand{\FaroukThanks}[1]{%
   \thanks{%
      School of Computing and Data Science; %
      University of Illinois; %
      201 N. Goodwin Avenue; %
      Urbana, IL, 61801, USA; %
      \href{mailto:eyfmharb@gmail.com}{eyfmharb@gmail.com}; %
      \url{eyfmharb@gmail.com}.%
   #1%
   }%
}

\newcommand{\QizhengThanks}[1]{%
   \thanks{%
      \href{mailto:hqztrue@sina.com}{hqztrue@sina.com}. %
   #1%
   }%
}

\newcommand{\HLink}[2]{\hyperref[#2]{#1~\ref*{#2}}}
\newcommand{\HLinkY}[2]{\hyperref[#2]{#1}}
\newcommand{\HLinkSuffix}[3]{\hyperref[#2]{#1\ref*{#2}{#3}}}

\newcommand{\figlab}[1]{\label{fig:#1}}
\newcommand{\figref}[1]{\HLink{Figure}{fig:#1}}

\newcommand{\thmlab}[1]{{\label{theo:#1}}}
\newcommand{\thmref}[1]{\HLink{Theorem}{theo:#1}}

\newcommand{\thmrefY}[2]{\HLinkY{#2}{theo:#1}}

\newcommand{\clmlab}[1]{\label{claim:#1}}
\newcommand{\clmref}[1]{\HLink{Claim}{claim:#1}}

\newcommand{\ts}{\hspace{0.6pt}}

\newcommand{\corlab}[1]{\label{cor:#1}}
\newcommand{\corref}[1]{\HLink{Corollary}{cor:#1}}%

\newcommand{\proplab}[1]{\label{prop:#1}}
\newcommand{\propref}[1]{\HLink{Proposition}{prop:#1}}%

\newcommand{\obslab}[1]{\label{observation:#1}}
\newcommand{\obsref}[1]{\HLink{Observation}{observation:#1}}

\newcommand{\tbllab}[1]{\label{table:#1}}
\newcommand{\tblref}[1]{\HLink{Table}{table:#1}}

\newcommand{\factlab}[1]{\label{fact:#1}}%
\newcommand{\factref}[1]{\HLink{Fact}{fact:#1}}%

\newcommand{\lemlab}[1]{\label{lemma_#1}}
\newcommand{\lemref}[1]{\HLink{Lemma}{lemma_#1}}%
\newcommand{\lemrefY}[2]{\hyperref[lemma_#1]{#2}}

\newcommand{\tablab}[1]{\label{table:#1}}%
\newcommand{\tabref}[1]{\HLink{Table}{table:#1}}%
\newcommand{\tabrefY}[2]{\hyperref[table:#1]{#2}}

\newcommand{\seclab}[1]{\label{sec:#1}}
\newcommand{\secref}[1]{\HLink{Section}{sec:#1}}
\newcommand{\secrefY}[2]{\hyperref[sec:#1]{#2}}

\providecommand{\deflab}[1]{\label{def:#1}}
\newcommand{\defref}[1]{\HLink{Definition}{def:#1}}
\newcommand{\defrefY}[2]{\hyperref[def:#1]{#2}}

\providecommand{\eqlab}[1]{}%
\renewcommand{\eqlab}[1]{\label{equation:#1}}

\providecommand{\remove}[1]{}%
\newcommand{\Set}[2]{\left\{ #1 \;\middle\vert\; #2 \right\}}

\newcommand{\pth}[1]{\mleft(#1\mright)}%

\newcommand{\ceil}[1]{\mleft\lceil {#1} \mright\rceil}
\newcommand{\floor}[1]{\mleft\lfloor {#1} \mright\rfloor}

\newcommand{\cardin}[1]{\left\lvert {#1} \right\rvert}%

\renewcommand{\th}{th\xspace}

\renewcommand{\Re}{\mathbb{R}}%

\newlist{compactenumA}{enumerate}{5}%
\setlist[compactenumA]{itemsep=-0.5ex,topsep=0.5ex,partopsep=1ex,parsep=1ex,%
   label=(\Alph*)}%

\newlist{compactenuma}{enumerate}{5}%
\setlist[compactenuma]{itemsep=-0.5ex,topsep=0.5ex,partopsep=1ex,parsep=1ex,%
   label=(\alph*)}%

\newlist{compactenumI}{enumerate}{5}%
\setlist[compactenumI]{itemsep=-0.5ex,topsep=0.5ex,partopsep=1ex,parsep=1ex,%
   label=(\Roman*)}%

\newlist{compactenumi}{enumerate}{5}%
\setlist[compactenumi]{itemsep=-0.5ex,topsep=0.5ex,partopsep=1ex,parsep=1ex,%
   label=(\roman*)}%

\newlist{compactitem}{itemize}{5}%
\setlist[compactitem]{itemsep=-0.5ex,topsep=0.5ex,partopsep=1ex,parsep=1ex,%
   label=\ensuremath{\bullet}}%

\newcommand{\etal}{\textit{et~al.}\xspace}

\usepackage{float}
\usepackage[section]{placeins}

\numberwithin{figure}{section}%
\numberwithin{table}{section}%
\numberwithin{equation}{section}%

\newcommand{\hcite}[2][]{\emph{\textbf{\cite[#1]{#2}.}}}

\newcommand{\NN}{\mathbb{N}}
\newcommand{\ZZ}{\mathbb{Z}}
\renewcommand{\Re}{\mathbb{R}}

\newcommand{\LNK}[2]{%
   \hypersetup{linkcolor=black}{\hyperref[#1]{#2}}
}

\newcommand{\notlab}[1]{\phantomsection\label{notation:#1}}

\newcommand{\gmarkdef}{{g}}
\newcommand{\gmark}{\LNK{notation:g_mark}{\gmarkdef}}

\newcommand{\epidef}{\mathsf{c}^\star}
\newcommand{\epi}{\LNK{def:epicenter}{\epidef}}

\newcommand{\Sin}{s_{\mathrm{in}}}
\newcommand{\Sout}{s_{\mathrm{out}}}

\newcommand{\rindef}{{r}_{\mathrm{in}}}
\newcommand{\rin}{\LNK{notation:r:in}{\rindef}}

\newcommand{\Routdef}{{R}_{\mathrm{out}}}
\newcommand{\Rout}{\LNK{notation:r:out}{\Routdef}}

\newcommand{\nvdef}{{v}}
\newcommand{\nv}{\LNK{notation:v}{\nvdef}}

\newcommand{\bbdef}{{b}}
\newcommand{\bb}{\LNK{notation:b}{\bbdef}}

\newcommand{\iidef}{{i}}
\newcommand{\ii}{\LNK{notation:i}{\iidef}}

\newcommand{\NTX}[1]{\hypersetup{linkcolor=black}{\tabrefY{notations}{#1}}}

\newcommand{\Loptdef}{{L}}
\newcommand{\Lopt}{\LNK{notation:L:opt}{\Loptdef}}

\newcommand{\PoptX}[1]{\Popt_{#1}}
\newcommand{\LoptX}[1]{\Lopt_{#1}}
    \newcommand{\areaOpt}{{\mathsf{A}}}
    \newcommand{\areaSym}{A}
    \newcommand{\perimSym}{L}

\newcommand{\lOpt}{\Lopt}
\newcommand{\lOptX}[1]{\LoptX{#1}}
\newcommand{\WD}{\gmark}

\newcommand{\areaC}{\areaSym}
\newcommand{\perimC}{\perimSym}

\newcommand{\lenX}[1]{\left\| #1 \right\|}

\DeclareMathOperator{\perim}{perim}

\newcommand{\perimX}[1]{\cardin{\partial #1}}

\providecommand{\Barany}{B{\'a}r{\'a}ny\xspace}
\providecommand{\Boroczky}{B{\"o}r{\"o}czky\xspace}

\usepackage{mathcalb}

\newcommand{\areaX}[1]{\cardin{#1}}
\newcommand{\DisksX}[1]{\mathcal{B}_{#1}}

\newcommand{\diskC}{\mathsf{D}}
\newcommand{\diskX}[1]{\diskC\pth{#1}}
\newcommand{\diskY}[2]{\diskC\pth{#1,#2}}

\newcommand{\circY}[2]{\mathcalb{c}\pth{#1,#2}}

\usepackage{booktabs}
\usepackage{siunitx}

\usepackage{booktabs} %
\usepackage[table]{xcolor} %

\definecolor{lightgray}{gray}{0.85}

\usepackage{stmaryrd}%
\providecommand{\IntRange}[1]{\mleft\llbracket #1 \mright\rrbracket}
\newcommand{\IRX}[1]{\IntRange{#1}}%
\newcommand{\IRY}[2]{\left\llbracket #1:#2 \right\rrbracket}
\newcommand{\sIRY}[2]{\llbracket #1:#2 \rrbracket}
\newcommand{\UBlOpt}{\overline{\lOpt}}

\newcommand{\npvC}{\Phi}%
\newcommand{\npvX}[1]{\npvC\pth{#1}}%

\usepackage{nicematrix}
\allowdisplaybreaks

\newcommand{\Poptdef}{\mathsf{P}}

\newcommand{\emaxdef}{\overline{e}}%
\newcommand{\emax}{\LNK{notation:max:e:length}{\emaxdef}}

\newcommand{\Popt}{\LNK{notation:P:opt}{\Poptdef}}

\newcommand{\Term}[1]{\textsf{#1}}

\newcommand{\ArrConf}{\ensuremath{\mathcal{A}_{\mathrm{conf}}}\xspace}
\newcommand{\Hash}{\ensuremath{\mathcal{H}}\xspace}

\newcommand{\DP}{\Term{DP}\xspace}%
\newcommand{\Good}{\mathcal{G}}%
\newcommand{\nG}{{\psi}}%
\newcommand{\dY}[2]{\left\| #1 - #2 \right\|}
\newcommand{\Din}{D_{\mathrm{in}}}
\newcommand{\Dout}{D_{\mathrm{out}}}
\newcommand{\origin}{\mathsf{o}}
\newcommand{\CHX}[1]{\mathcal{CH}\pth{#1}}

\newcommand{\G}{\mathsf{G}}%
\newcommand{\VV}{{V}}%
\newcommand{\EE}{{E}}%
\newcommand{\XSet}{\mathcal{X}}%

\usepackage[most]{tcolorbox}
\usepackage{marginnote}

\newcommand{\NewAuthor}[4]{%
    \expandafter\newcommand\csname #1m\endcsname[1]{%
        \marginnote{\begin{tcolorbox}[
            enhanced, sharp corners, boxrule=0.5pt,
            colback=#2!10, colframe=#2!80!black,
            fontupper=\tiny\sffamily, width=\marginparwidth,
            left=2pt, right=2pt, top=2pt, bottom=2pt
        ] \textbf{#3:} ##1 \end{tcolorbox}}%
    }

    \newenvironment{#1block}[1]
    {\begin{tcolorbox}[
        colback=#2!5, colframe=#2!75!black,
        fonttitle=\bfseries, title=#4's Note: ##1,
        breakable, enhanced
    ]}
     {\end{tcolorbox}}
 \expandafter\newcommand\csname #4\endcsname[2][]{%
    \begin{#1block}{##1}
        ##2
    \end{#1block}
 }

 \expandafter\newcommand\csname #1\endcsname[1]{%
    {\color{#2} \textbf{***} \textbf{#3:} ##1 \textbf{***}}
  }

}

\NewAuthor{sariel}{blue}{SH}{Sariel}

\NewAuthor{qizheng}{red}{QZ}{Qizheng}

\NewAuthor{farouk}{green}{FH}{Farouk}

\usepackage{pgffor}

\newcommand{\gcdot}{\mathbin{\vcenter{\hbox{\scalebox{0.55}{$\bullet$}}}}}
\newcommand{\gcdY}[2]{#1 \gcdot #2}
\newcommand{\pgcdY}[2]{\pth{#1 \gcdot #2}}

\newcommand{\half}{\tfrac{1}{2}}

\usepackage{array} %
\usepackage{etoolbox} %

\newtoggle{show_shortcuts}
\toggletrue{show_shortcuts}

\newcommand{\COIntervalX}[1]{\left[ #1 \right)}

\iftoggle{show_shortcuts}{
  \newcolumntype{H}{l} %
}{
  \newcolumntype{H}{>{\setbox0=\hbox\bgroup}c<{\egroup}@{}} %
}

\makeatletter
\let\c@table\c@figure         %
\let\ftype@table\ftype@figure %
\makeatother                  %

\bibliography{k_grid_points}

\title{How to Catch $k$ Grid Points}

\notSODA{%
\author{%
   Sariel Har-Peled%
   \notSODA{\SarielThanks{}}%
   \and%
   Elfarouk Harb%
   \notSODA{\FaroukThanks{}}%
   \and%
   Qizheng He%
   \notSODA{\QizhengThanks{}}%
}%
}
\date{}

\begin{document}
\maketitle

\begin{abstract}
    Given a positive integer $k$, we study the problem of finding a convex polygon of minimum perimeter that encloses exactly $k$ points of~$\ZZ^2$.  We show that an optimal polygon is contained in a circular annulus of width~$O(k^{1/6})$, has $\Theta(k^{1/3})$ boundary grid points, and its longest edge has length $\Theta(k^{1/4})$.  Using these structural bounds, we present a deterministic algorithm that computes an optimal polygon in $O(k^{29/18+o(1)})$ time, improving over the previous $O(k^3)$-time algorithm.
\end{abstract}

\begin{figure}[H]
    \foreach \i in {1,...,5}{%
       \hfill \includegraphics[page=\i, width=0.18\linewidth]{figs/polygons}%
    }%
    \par\bigskip
    \foreach \i in {6,...,10}{%
       \hfill \includegraphics[page=\i, width=0.18\linewidth]{figs/polygons}%
    }%

    \caption*{The minimum perimeter polygon covering $k$ points, for
       $k=3,\ldots, 12$. More examples are at \figref{rest_1},
       \figref{rest_1b}, \figref{rest_2}, \figref{rest_3}, and
       \figref{rest_4}.}
\end{figure}

\SODA{
\thispagestyle{empty}%
\newpage%
\setcounter{page}{1}%
}
\section{Introduction}
\seclab{intro}

We study the following problem.

\begin{problem}
    Given a positive integer $k$, find a (closed) convex polygon $P$ of minimum perimeter, such that $P$ contains exactly $k$ points of the integer grid $\ZZ^2$.
\end{problem}

By viewing the boundary of $P$ as a rubber band trying to shrink to its minimum length, it is clear that such an optimal polygon has vertices only at points of the grid $\ZZ^2$.  Let $\PoptX{k}$ denote such an optimal polygon, and let $\lOptX{k}$ denote its perimeter.  Observe that $0 = \lOptX{1}<\lOptX{2}<\cdots$ is a strictly increasing monotone sequence, since one can always snip a corner of $\PoptX{k}$ to obtain a shorter-perimeter polygon containing $k-1$ points.

\paragraph{Why is this problem interesting?}

Coming up with a \DP for this problem that is relatively efficient is not obvious. We describe below the known algorithm of Eppstein and Erickson \cite{ee-innfm-94} modified for this case, which has running time $O(k^3)$, which is our baseline. It is natural to ask if one can do better. First, because this is a natural problem. Secondly, as $k$ increases, the optimal solution converges to a disk. There is a lot of related work (see below), some of it recent, trying to understand the behavior of disks on a grid (such as the Gauss circle problem).

Indeed, it is tempting to conjecture that maybe an output sensitive linear-time algorithm exists for this problem (i.e., this would be running time $O(k^{1/3})$ in this case), since the related discrete hull-problem (compute the convex-hull of the grid points inside a given disk of radius $\sqrt{k}$) has such an algorithm \cite{h-osafd-98a}. And yet, to even beat the baseline requires non-trivial ideas and work. One needs to use tools from the geometry of numbers (such as Farey's sequences), and rather intricate exchange arguments to argue about the shape of the optimal solution. Indeed, the \DP uses the structure to improve its performance.

And even if the question is not interesting beyond discrete geometry, we believe the answer is (arguably) quite interesting.  As our work hopefully demonstrates, this problem has a rather intricate and fascinating mathematical structure. Beyond that, scaling our implementation to work on huge inputs required many interesting ideas on the applied side, which should be useful for other \DP problems whose instances are too large to fit in memory.

\paragraph{The starting point: An $O(k^3)$-time algorithm.}

The algorithm of Eppstein and Erickson~\cite{ee-innfm-94}, building on
Dobkin \etal~\cite{ddg-fsp-83}, implies an $O(k^4)$-time algorithm for
this problem.  We sketch how to improve it further to $O(k^3)$, following
the \DP of Eppstein \etal~\cite{eorw-fmakg-92}.  The optimal solution has
perimeter at most $4\sqrt{k}$, so one can restrict the solution to the
grid $G = \sIRY{-2\sqrt{k}}{2\sqrt{k}} \times \sIRY{0}{2\sqrt{k}}$, with
$(0,0)$ being the bottom-right vertex of~$\Popt$.  Anchoring at this
vertex shaves a factor of~$k$ from the running time.

\begin{figure}[t]
    \centering \phantom{}\hfill%
    \includegraphics{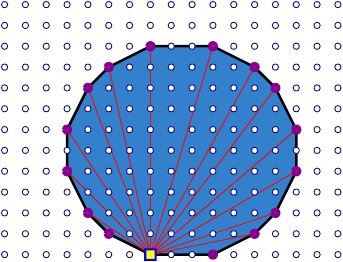} \hfill%
    \includegraphics[page=2]{figs/100} \hfill\phantom{}%
    \caption{Left: An optimal solution for $100$ points and its origin-anchored triangulation. Right: A \DP state consists of an edge from the origin together with the number of grid points covered by the minimum-perimeter partial polygon ending at this edge---in this case $(p, 52)$.}
    \figlab{100}
\end{figure}

The optimal polygon $\Popt$ can be triangulated by connecting all its vertices to the origin, see \figref{100}.  There are $O(k)$ potential such diagonals.  The polygon is reconstructed by gluing these triangles in clockwise order around the origin.  A state of the \DP is a pair $(p, \alpha) \in G \times \IRY{0}{k}$ consisting of the last diagonal from the origin (going counterclockwise), encoded by its endpoint $p$, and the number of grid points $\alpha$ covered so far.  The \DP stores, for each state, the smallest perimeter solution found.  There are $O(k^2)$ states, each with $O(k)$ outgoing transitions (one per triangle).  Each transition updates the perimeter and the grid-point count.  The \DP may consider nonconvex fan chains, but this does not change the optimum: taking the convex hull cannot increase perimeter, and if the hull contains too many grid points we trim it back to exactly $k$ grid points.  As the state space has $O(k^3)$ edges, the optimal solution can be computed in $O(k^3)$ time.

\subsection*{Related work}

\subparagraph*{Isoperimetric inequality.}

Famously, the minimum area shape for a prespecified perimeter is the circle. The problem we are solving is somewhat similar, as the number of grid points inside a (fat) region is a proxy for its area. Some inequalities capture the deficit of a region from being circular.  Bonnesen's inequality \cite{b-sadli-21, f-biidc-91} strengthens the classical isoperimetric inequality---see \thmref{bonnesen} for the exact statement.

\subparagraph*{Polygons and isoperimetry.} %
Macnab showed \cite{m-cprq-81} that if one specifies the length of edges of a polygon, in order, the maximum-area polygon is formed by placing the vertices of the polygon on a circle (such polygons are \emph{cyclic polygons}).  In particular, there is an \emph{isoperimetric inequality for polygons}, see \cite{ftt-apiip-55}---see \thmref{i_p_polygons} for details.

\subparagraph*{Gauss circle problem.} %
Counting the number of grid points in a disk of radius $r$ has a long history. The best known error exponent is $131/208$ due to Huxley \cite{h-eslpi-03}.  Andrews \cite{a-lbvsc-63} proved that the boundary of a strictly convex body of area $\areaC$ contains at most $O(\areaC^{1/3})$ grid points (a body is \emph{strictly convex} if it contains no segment in its boundary). Equivalently, a strictly convex curve of length $\perimC$ passes through at most $O(\perimC^{2/3})$ grid points.

\Barany and Pach \cite{bp-nclp-92} showed that a convex grid polygon with
area $\areaC$ has $O(\areaC^{1/3})$ vertices.  The problem of counting
minimal-perimeter polyominoes of a given area is related---see the work
by Altshuler \etal \cite{ayvwb-mpp-06}.

\subparagraph*{Discrete hull.} %
\Barany and Larman \cite{bl-chip-96} studied the number of faces of various dimensions of the convex-hull of all the grid points contained in a ball of radius $r$ centered at the origin. In two dimensions, they prove that the number of vertices of this discrete hull of the disk of radius $r$ is $\Theta(r^{2/3})$.  \Barany and \Boroczky \cite{bb-lpbih-03} showed that this bound also holds for the number of grid points on the boundary of this polygon.

\subparagraph{Isoperimetry on the grid without convexity.}  Wang and Wang \cite{ww-dip-77} studied the problem of ordering the points of the grid, so that any prefix of the first $k$ points minimizes the number of boundary points (note that they are not considering the convex-hull of the points). The resulting ordering is somewhat similar to Cantor's pairing function that zigzags over $\NN^2$ to map it to $\NN$.

\subparagraph*{The optimal solution is a convex grid polygon.}
The following observation is used implicitly throughout the paper. It formalizes the intuition that a convex set containing at least $k$ grid points can always be trimmed down to a convex polygon containing exactly $k$ of them, without increasing the perimeter.

\begin{observation}
    \obslab{trim-to-k}%
    If a bounded set $Q$ contains at least $k$ grid points, then there is
    a convex grid polygon $Q'\subseteq Q$ that contains exactly $k$ grid points
    and satisfies $\perimX{Q'}\leq\perimX{Q}$.
\end{observation}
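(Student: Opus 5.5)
The plan is to first reduce to the convex hull of the grid points inside $Q$, and then peel off extreme grid points one at a time. Following the sentence preceding the observation, I read $Q$ as a bounded \emph{convex} set. This is the case used throughout the paper, and it is needed for $Q'\subseteq Q$ to be meaningful together with $\perimX{Q'}\le\perimX{Q}$.

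\textbf{Step 1 (reduce to a grid polygon).} Since $Q$ is bounded, the set $X = Q\cap\ZZ^2$ is finite, and $|X|\ge k$ by assumption. Let $C=\CHX{X}$. By convexity of $Q$ we have $C\subseteq Q$, and $C$ is a convex polygon whose vertices are grid points. Possibly $C$ is degenerate, a segment or a point, which I allow as a degenerate convex grid polygon. Moreover $C\cap\ZZ^2 = X$, because $X\subseteq C\subseteq Q$. I then invoke the standard monotonicity of perimeter for nested convex sets: if $A\subseteq B$ are compact convex sets then $\perimX{A}\le\perimX{B}$. This follows, for instance, from Cauchy's formula, since perimeter is the integral of the width over all directions and widths are monotone; alternatively, repeatedly cutting $B$ by supporting lines of $A$ only shortens the boundary, by the triangle inequality. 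Hence $\perimX{C}\le\perimX{Q}$.

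\textbf{Step 2 (trim to exactly $k$).} While $|C\cap\ZZ^2|>k$, pick a vertex $v$ of $C$ and replace $C$ by $C'=\CHX{(C\cap\ZZ^2)\setminus\{v\}}$. Because $v$ is an extreme point of $C$, it is not in the convex hull of the remaining points, so $v\notin C'$. Also $C'\subseteq C$, so $C'$ contains no grid points outside $C$. Therefore $C'\cap\ZZ^2 = (C\cap\ZZ^2)\setminus\{v\}$, and the count drops by exactly one. Monotonicity again gives $\perimX{C'}\le\perimX{C}$, and we still have $C'\subseteq Q$. After $|X|-k$ iterations we reach a convex grid polygon $Q'$ with exactly $k$ grid points, $Q'\subseteq Q$, and $\perimX{Q'}\le \perimX{Q}$.

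\textbf{Main obstacle.} The only nontrivial ingredient is the perimeter monotonicity for nested convex sets, together with handling degenerate hulls. In the degenerate cases the ``perimeter'' of a segment is twice its length, and monotonicity still holds by the width/Cauchy argument. I would cite the monotonicity as a classical fact rather than reprove it. The combinatorial part, that removing an extreme grid point removes exactly one grid point, is immediate from extremality.
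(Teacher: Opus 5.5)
Your proof is correct and takes the same route as the paper's: pass to the hull of $Q\cap\ZZ^2$, then repeatedly delete an extreme grid point, so that the new hull lies in the old one, the count drops by exactly one, and the perimeter does not increase. Your explicit convexity assumption on $Q$ and the monotonicity step $\perimX{\CHX{X}}\le\perimX{Q}$ are also implicit in the paper's argument, which uses that each hull is contained in $Q$. That assumption is genuinely needed, since the conclusion $Q'\subseteq Q$ can fail for nonconvex $Q$.
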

\remove{%
\begin{proof}
    Let $S=Q\cap\ZZ^2$.  If $\cardin{S}=k$, there is nothing to prove.
    Otherwise choose a vertex $x$ of $\CHX{S}$ and replace $S$ by
    $S\setminus\{x\}$.  The new hull is contained in the old one, and
    hence its perimeter does not increase.  Moreover, since the new hull
    is contained in $Q$, every grid point in it belongs to $S$; and since
    $x$ was an extreme point, it is not in the new hull.  Thus the number
    of grid points drops by exactly one.  Repeating this step gives the
    desired polygon.
\end{proof}%
}

\subsection*{Our results}

Let $\Popt = \PoptX{k}$ denote an optimal polygon.  Our main contributions are:
\begin{compactenumI}%
    \item \textbf{Bounding Optimal Polygon Characteristics
       (\secref{gap}).} We prove several structural results on the
    optimal polygon $\Popt$. In particular, it is sandwiched between two
    concentric disks of radii $\rin < \Rout$, of width
    \begin{equation*}
        \WD
        =
        \Rout - \rin
        =
        O(k^{1/6}),
    \end{equation*}
    has $b = \Theta(k^{1/3})$ boundary grid points and longest edge $\emax = O(k^{1/4})$ (\corref{edge-improved}).  This also gives the perimeter window
    \[
        2\sqrt{\pi k}-O(k^{-1/6}) \leq \lOptX{k} \leq 2\sqrt{\pi k}.
    \]
    \item \textbf{Dynamic programming algorithms (\secref{dp}).}  We
    describe an edge-vector \DP framework and analyze it under
    successively tighter structural bounds, obtaining algorithms with
    running times $O(k^3)$, $O(k^{2+1/6})$, $O(k^{2})$, $O(k^{2-1/6})$,
    $O(k^{2-2/9})$, and finally $O(k^{29/18+o(1)})$.

    \item{} \textbf{Implementation (\secref{experiments}).} We
    implemented some parts of our algorithm and did extensive
    experimentation to guide our theoretical investigation.  The source
    code is available from github:
    \url{https://github.com/sarielhp/rust_k_perimeter}. A webpage to explore the various optimal solutions is available here:
    \url{https://sarielhp.org/r/26/k_perimeter}.

\end{compactenumI}

\paragraph{Techniques.}
Our proof of the gap bound combines several ingredients:
\begin{compactenumA}
    \item An almost tight upper bound on the perimeter, $\lOptX{k} \leq 2\sqrt{\pi k}$, obtained by an averaging argument over disk centers (\lemref{tight-upper}).

    \item \emph{Bonnesen's inequality} and Pick's theorem, which give $\WD = O(\sqrt{b})$ (\propref{gap-from-b}).

    \item An \emph{edge-structure analysis} of the convex grid polygon, yielding $b = O(k^{3/8}\, \WD^{1/4})$ (\clmref{edge-structure}).
\end{compactenumA}
Bootstrapping~(B) and~(C) gives $\WD = O(k^{3/16}\, \WD^{1/8})$, hence $\WD = O(k^{3/14})$.  A further more careful exchange argument exploiting the optimality of~$\Popt$ (\thmref{improved-gap}) tightens this to~$\WD = O(k^{1/6})$ with~$b = \Theta(k^{1/3})$.  Finally, a number-theoretic analysis of primitive grid vectors (\lemref{p_d_bound}) shows that every primitive edge direction of~$\Popt$ has length~$O(k^{1/4})$, yielding the same bound on the longest edge (\corref{edge-improved}).

\paragraph{Standard tools.}

For readability, the description of some of the standard tools used are
delegated to \secref{background}. These include defining primitive
vectors, \defrefY{generator}{grid polygons}, \thmrefY{pick}{Pick's
   theorem}, \thmrefY{bonnesen}{Bonnesen's inequality},
\defrefY{epicenter}{epicenter}, and \secrefY{farey}{Farey's sequences},
among other tools.

\paragraph*{Notation and setup.}

The notations used here are summarized in \tblref{notations}. Observe that $v$ and $b$ are  different, as $\Popt$ might have grid points in the middle of its edges. For example, a $2\times 2$ square has $v=4, b=8, i=1$ and $k=9 =i+b = 9$. In the following $\ZZ^2$ denotes the integer \emphi{grid}.

\begin{table}[ht]
    \capstart \centering
    \rowcolors{2}{}{lightgray}
    \begin{tabular}{l >{\raggedright\arraybackslash}p{0.68\linewidth}}
      \toprule
      \textbf{Notation}
      & \textbf{Description} \\
      \midrule
      $\ZZ^2$
      & The integer grid
      \\
      $k$
      & Number of points to be enclosed
      \\%-------------------------------------------------------------------
      $\Popt = \PoptX{k}$\notlab{P:opt}
      & Optimal minimum perimeter polygon enclosing $k$ grid points
      \\%--------------------------------------------------------------
      $\Lopt = \lOptX{k}$\notlab{L:opt} %
      & Optimal perimeter: $\Lopt = \perimX{\Popt}$ %
      \\%-------------------------------------------------------
      $\areaOpt = \areaX{\Popt}$
      & Area of $\Popt$
      \\%-------------------------------------------------------
      $\bbdef$\notlab{b}
      & Number of boundary grid points on $\partial \Popt$
      \\%-------------------------------------------------------
      $\nvdef$\notlab{v}
      & Number of vertices of $\partial \Popt$
      \\%-------------------------------------------------------
      $\iidef$\notlab{i}
      &
        Number of grid points in the interior of $\Popt$
      \\%-------------------------------------------------------
      $\WD$\notlab{g_mark}
      & Width of the min-width disk sandwich containing $\Popt$ (\defref{epicenter})
      \\%-------------------------------------------------------
      $\epi$
      &
        Center of the min-width disk sandwich containing $\Popt$ (\defref{epicenter})%
      \\%-------------------------------------------------------
      $\rin$/$\Rout$\notlab{r:in}\notlab{r:out}
      & Radii with $\diskY{\epi}{\rin}\subseteq\Popt\subseteq\diskY{\epi}{\Rout}$
      \\%-------------------------------------------------------
      $\emaxdef$\notlab{max:e:length}
      & Upper bound on  max length edge of $\Popt$
      \\%-------------------------------------------------------
      $\gcdY{x}{y}$
      &
        $\gcd( |x|, |y| )$
      \\%-------------------------------------------------------
      \bottomrule
    \end{tabular}
    \caption{Summary of notations. Observe that $k = i + b$.}
    \tablab{notations}
\end{table}

\paragraph{Paper organization.}
We begin in \secref{bounds} by establishing rough bounds on the optimal
perimeter $\lOpt$ and on the radii of the minimum-width disk sandwich
containing $\Popt$.
\secref{gap} forms the technical heart of the structural analysis: by
combining Bonnesen's inequality, an edge-structure argument, and a careful
exchange argument, we show that $\Popt$ is sandwiched in an annulus of
width $\WD = O(k^{1/6})$, has $b = \Theta(k^{1/3})$ boundary grid points,
longest edge $O(k^{1/4})$, and maximum turn angle $O(k^{-1/6})$.
\secref{dp} then develops our dynamic-programming algorithm through a
sequence of successively faster variants---plugging the structural bounds
into the natural \DP, restricting the search to a thin annulus, cutting the
annulus into thinner rings, exploiting the small turn angle, reducing the
epicenter location uncertainty, and finally batching
transitions via monotone queues (\secref{monotonicity})---culminating in
the $O(k^{29/18+o(1)})$-time algorithm.  \secref{experiments} describes our
Rust and C++ implementation, the heuristics that make it practical, and
experiments solving instances as large as $k = 2.5 \cdot 10^{6}$.  For
readability, the standard tools used throughout---primitive vectors, Pick's
theorem, Bonnesen's inequality, the epicenter, and Farey
sequences---are collected in \secref{background}.  Additional examples of
optimal $k$-polygons are given in the appendix.

\section{Background}
\seclab{background}

\subsection{Classical ingredients}

A {vector} $(x,y) \in \ZZ^2$ is \emphi{primitive} if $\gcdY{x}{y} = \gcd(|x|,|y|) = 1$.

\begin{defn}
    \deflab{generator}%
    A \emphi{grid polygon} is a connected polygon with vertices on the grid.  The boundary of a grid polygon $P$ is considered to be oriented in a counterclockwise direction. An oriented edge $e$ from $u$ to $v$ has the associated direction vector $v - u$. The unique primitive vector $\tau$ such that $ v - u = \alpha \tau$, for some positive integer $\alpha > 0$, is the \emphi{generator} of $e$.
\end{defn}

\begin{theorem}[Pick's]
    \thmlab{pick}%
    For a simple lattice polygon $P$, with $i \geq 0$ interior grid
    points, $b$ boundary grid points, and area $\areaC$, we have
    $\areaC = i + b/2 - 1$.
\end{theorem}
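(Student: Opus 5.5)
The statement is Pick's theorem, a classical result. I would prove it by the standard additivity-plus-triangulation argument: define the defect $\delta(P) = \areaC - i - b/2 + 1$ and show that it vanishes on every simple lattice polygon.

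\textbf{Step 1: additivity.} Let a simple lattice polygon $P$ be split by a lattice diagonal into $P_1$ and $P_2$, and let the diagonal contain $m$ grid points in its relative interior. Then:
\begin{itemize}
    \item Areas add, $\areaC = \areaC_1 + \areaC_2$.
    \item The $m$ points on the diagonal become interior points of $P$, so $i = i_1 + i_2 + m$.
    \item The two diagonal endpoints are counted in both $b_1$ and $b_2$, and the $m$ relative-interior points are also counted in both, so $b = b_1 + b_2 - 2m - 2$.
\end{itemize}
Substituting gives
\[
i + b/2 - 1 = (i_1 + b_1/2 - 1) + (i_2 + b_2/2 - 1),
\]
so $\delta(P) = \delta(P_1) + \delta(P_2)$. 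Every simple polygon has a diagonal between two of its vertices, and these vertices are lattice points. Induction on the number of vertices therefore reduces the claim to lattice triangles.

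\textbf{Step 2: reduce to empty triangles.} A lattice triangle that contains a lattice point other than its vertices (in its interior or on an edge) can be subdivided into smaller lattice triangles through that point. By additivity these have the same total defect. Each subdivision strictly lowers the number of lattice points per piece, so the process terminates with \emph{elementary} triangles, whose only lattice points are their three vertices.

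\textbf{Step 3: elementary triangles have area $1/2$.} This is the key step. For an elementary triangle we have $i=0$ and $b=3$, so we must show $\areaC = 1/2$. Take the triangle $T = (0, u, w)$ after translation, and let $Q$ be the parallelogram spanned by $u$ and $w$. The point reflection through the midpoint of $uw$ maps $T$ to the other half of $Q$ and preserves the lattice, so $Q$ contains no lattice points besides its four corners. The translates of $Q$ by $\ZZ u + \ZZ w$ tile the plane. Since $Q$ meets $\ZZ^2$ only in its corners, each lattice point lies only at tile corners, so the map sending each lattice point to its half-open tile is a bijection. Hence $\ZZ u + \ZZ w = \ZZ^2$, so the matrix $(u\; w)$ is unimodular, $|\det(u,w)| = 1$, and $\areaC(T) = 1/2$. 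An alternative is a direct counting argument: count lattice points in a large disk via tiles, which forces the tile area to be $1$.

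Putting the steps together, $\delta = 0$ for elementary triangles, hence for all lattice triangles by Step 2, and hence for all simple lattice polygons by Step 1. Step 3 is the only nontrivial part. The remaining care is bookkeeping in Step 1: tracking points on the diagonal and avoiding double counting of the diagonal's endpoints.
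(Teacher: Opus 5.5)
Your proof is correct, and it is the standard textbook argument. The paper gives no proof of this statement: Pick's theorem is quoted as a classical tool with a textbook reference.

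The paper also runs the implication in the opposite direction from you. It \emph{deduces} from Pick's theorem that lattice triangles with no lattice points besides their vertices have area $1/2$. You cannot borrow that fact without circularity, and your Step~3 correctly proves it independently. Every lattice point lies in some closed translate $Q+au+bw$, so it is a tile corner. Hence $\ZZ u+\ZZ w=\ZZ^2$ and $|\det(u,w)|=1$.

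One piece of bookkeeping is missing. Step~1 proves additivity only for a split along a single diagonal. In Step~2, however, the interior-point case is a three-way split through a point $p\notin\partial T$. This case cannot always be avoided: the triangle with vertices $(0,0),(2,1),(1,2)$ has $(1,1)$ as its only non-vertex lattice point.

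The same computation handles it, as it does any gluing along a common polygonal chain. Let $M$ be the number of lattice points in the relative interiors of $pa$, $pb$, $pc$. Then $i=i_1+i_2+i_3+M+1$ and $b=b_1+b_2+b_3-2M-6$, which again gives $\delta(T)=\delta(T_1)+\delta(T_2)+\delta(T_3)$. With that remark added, the proof is complete.
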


A consequence of Pick's theorem is that all grid triangles with no grid points in their interior, or the interior of their edges, have area $1/2$. These are \emphw{primitive} triangles. Primitive vectors form the edges of a primitive triangle. Any two vectors appearing on the boundary of a primitive triangle span the integer grid. See \cite{hw-itn-08} for more details.

\begin{theorem}[Bonnesen's inequality]
    \thmlab{bonnesen}%
    Every bounded convex polygon $P$ satisfies
    \begin{equation*}
        \perimC^2 - 4\pi \areaC
        \geq
        \pi^2\pth{R-r}^2.
    \end{equation*}
    where $\areaC$ denotes the area, $\perimC$ the perimeter, $r$ the inradius, and $R$ the circumradius of $P$.
\end{theorem}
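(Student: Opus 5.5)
We would follow the classical route through \emph{Bonnesen's quadratic inequality}: for every $\rho$ with $r \leq \rho \leq R$,
\begin{equation*}
    \pi \rho^2 - \perimC \rho + \areaC \leq 0 .
\end{equation*}
Granting this, the theorem is a two-line consequence. The quadratic $q(\rho)=\pi\rho^2-\perimC\rho+\areaC$ is nonpositive on all of $[r,R]$. Hence its two real roots $\rho_\pm = \bigl(\perimC \pm \sqrt{\perimC^2-4\pi\areaC}\bigr)/(2\pi)$ satisfy $\rho_- \leq r \leq R \leq \rho_+$. In particular the discriminant is nonnegative, which recovers the isoperimetric inequality. Therefore $R-r \leq \rho_+-\rho_- = \sqrt{\perimC^2-4\pi\areaC}/\pi$, and squaring gives the claim. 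If $r=R$ there is nothing to prove, so we may assume $r<R$.

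To prove the quadratic inequality we would use an integral-geometric counting argument over translates of a disk. Fix $\rho$ with $r<\rho<R$, let $D$ be the disk of radius $\rho$ centered at the origin, and write $D_t = D+t$. We compute two integrals over $t\in\Re^2$.

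\emph{First integral.} The set of $t$ with $D_t \cap P \neq \emptyset$ is the Minkowski sum $P \oplus D$ (reflected). By Steiner's formula for a convex polygon, its measure is $\areaC + \perimC\rho + \pi\rho^2$. For a polygon this is elementary: the sum decomposes into $P$, one rectangle of size $\ell_e\times\rho$ per edge $e$, and circular sectors at the vertices, which together form a full disk.

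\emph{Second integral.} We show $\int \cardin{\partial P \cap \partial D_t}\, dt = 4\perimC\rho$. Since $D$ is rotation invariant, this translation-only Poincaré formula can be verified edge by edge. For a segment of length $\ell$, the centers $t$ for which $\partial D_t$ meets the segment in a given number of points are explicit regions. A direct computation (or the limiting strip picture of width $2\rho$, crossed twice) gives $4\ell\rho$ for each segment, and summing over the edges gives $4\perimC\rho$. Crossings at vertices and tangencies occur on a null set of $t$ and are ignored.

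\emph{Combining.} Because $\rho>r$, no translate $D_t$ fits inside $P$. Because $\rho<R$, no translate $D_t$ contains $P$. So whenever $D_t \cap P\neq\emptyset$, neither set contains the other. Both sets are convex with Jordan-curve boundaries, so for almost every such $t$ the two boundaries cross at least twice. Hence
\begin{equation*}
    4\perimC\rho \geq 2\pth{\areaC + \perimC\rho + \pi\rho^2},
\end{equation*}
which rearranges to $q(\rho)\le 0$ on the open interval $(r,R)$. By continuity of $q$ the inequality extends to the endpoints.

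We expect the main obstacle to be the second integral: the translation-only Poincaré formula $\int \cardin{\partial P\cap\partial D_t}\,dt = 4\perimC\rho$. The plan is to reduce it to a single segment by additivity over edges, since vertices contribute measure zero. For the segment, we would place it on the $x$-axis and compute the measure of centers whose circle meets it once or twice. Alternatively, we would quote the general kinematic formula $\int n\, dK = 4 L_1 L_2$ and divide by $2\pi$ for the rotations, which is valid because $D$ is rotation invariant. The remaining steps are routine measure-zero and genericity remarks.
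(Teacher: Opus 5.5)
Your proof is correct. It differs from the paper only in that the paper gives no proof here: it quotes Bonnesen's inequality as a classical fact (citing Bonnesen and Fuglede) and uses it only to derive the common-center disk-sandwich bound.

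You follow the classical Santal\'o--Hadwiger integral-geometric route through Bonnesen's quadratic inequality $\pi\rho^2-L\rho+A\le 0$ for $\rho\in[r,R]$, and each step checks out.

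\begin{itemize}
\item The translative Poincar\'e formula is exact and elementary per edge. Fix the signed distance $b$ of the center from the edge's line, with $\lvert b\rvert<\rho$. Each of the two abscissae $a\pm\sqrt{\rho^2-b^2}$ lies on the edge for an $a$-interval of length $\ell$. Hence $\int n\,dt=\int_{-\rho}^{\rho}2\ell\,db=4\ell\rho$, so you need neither the strip heuristic nor the full kinematic formula.
\item The ``at least two crossings'' step is sound. If $\partial D_t$ missed $\partial P$ while $D_t\cap P\neq\emptyset$, the connected curve $\partial D_t$ would lie either in the interior of $P$, giving $D_t\subseteq P$, or outside $P$. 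In the latter case the connected set $P$ lies in the interior of $D_t$. Both are excluded for $r<\rho<R$. Evenness of transversal crossings then gives at least two.
\item Your argument uses only ``no disk of radius $>r$ fits in $P$'' and ``no disk of radius $<R$ covers $P$''. So it does not need the incenter and circumcenter to coincide, which matches the paper's remark that the theorem is used in the inradius/circumradius form with possibly different centers.
\item One small point: for a polygon with nonempty interior, $r<R$ always, so the case $r=R$ is vacuous. As written, ``nothing to prove'' would silently rely on the isoperimetric inequality.
\end{itemize}

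As for what each approach buys: the citation keeps the background section short, and the inequality is standard. Your argument makes the paper self-contained at the cost of roughly a page. It also yields the stronger quadratic inequality for every intermediate $\rho$, from which both the isoperimetric inequality (via the discriminant) and the stated bound follow at once.
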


\begin{remark}
    We use \thmref{bonnesen} only in its standard inradius/circumradius
    form.  The conversion from those possibly different centers to the
    common-center disk sandwich used by the algorithms is proved in
    \propref{lower-bonnesen}.
\end{remark}

\begin{defn}
    \deflab{epicenter}%
    Let $(\epidef,\rin,\Rout)$ minimize $\Rout-\rin$ subject to
    \[
        \diskY{\epi}{\rin}\subseteq\Popt\subseteq\diskY{\epi}{\Rout}.
    \]
    The point $\epi$ is the \emphi{epicenter}, and
    $\WD=\Rout-\rin$ is the annulus width of $\Popt$.
\end{defn}

\remove{%

}

\begin{theoremnp}[\Barany--Pach] \hcite{bp-nclp-92} %
    \thmlab{Andrews}%
    A convex grid polygon of perimeter $\perimC$ has at most
    $O(\perimC^{2/3})$ vertices.
\end{theoremnp}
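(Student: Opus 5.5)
The plan is to prove the statement through the edge vectors. Orient $\partial P$ counterclockwise and let its vertices be $p_1,\dots,p_v$ with edge vectors $e_j = p_{j+1}-p_j$. Each $e_j$ is a nonzero integer vector, so $e_j = \alpha_j \tau_j$ with $\tau_j$ its primitive generator (\defref{generator}) and $\alpha_j\geq 1$. Convexity means the directions of the $e_j$ strictly increase in angle as we go around, so the generators $\tau_1,\dots,\tau_v$ are pairwise distinct primitive vectors. The perimeter satisfies
\[
    \perimC = \sum_j \lenX{e_j} \geq \sum_j \lenX{\tau_j}.
\]
So it suffices to show that any $v$ distinct primitive vectors, and more generally any $v$ distinct nonzero integer vectors, have total length $\Omega(v^{3/2})$. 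That gives $v = O(\perimC^{2/3})$.

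For the counting step, let $N(t) = \cardin{\Set{x\in\ZZ^2}{0<\lenX{x}\leq t}}$. It satisfies $N(t)\leq \pi (t+1)^2 = O(t^2)$ for $t\geq 1$, since unit squares centered at these points are disjoint and lie in the disk of radius $t+1$. To minimize the total length of $v$ distinct vectors, we greedily take the shortest ones. Choose $t$ with $N(t) < v/2$; this is possible with $t = c\sqrt{v}$ for a small constant $c>0$. Then at least $v/2$ of the chosen vectors have length greater than $t$, which gives
\[
    \sum_j \lenX{\tau_j} \geq \frac{v}{2}\cdot c\sqrt{v} = \Omega(v^{3/2}).
\]
Combining, $\perimC = \Omega(v^{3/2})$, i.e.\ $v = O(\perimC^{2/3})$.

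The only subtle point, and the one that needs an explicit argument, is that the generators are pairwise distinct. For a convex polygon traversed counterclockwise, the edge directions turn monotonically through a total angle of exactly $2\pi$, with no two edges parallel in the same direction. Two edges with the same primitive direction would therefore contradict strict convexity at vertices; collinear consecutive edges are merged, since we count vertices as corners. The statement also allows degenerate cases ($v\leq 2$), which hold trivially. No other earlier result of the paper is needed; Pick's theorem is unnecessary here.
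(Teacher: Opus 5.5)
Your proof is correct. The paper gives no proof of this statement; it only cites Bárány and Pach, whose result it describes in the related-work section in area form (a convex grid polygon of area $A$ has $O(A^{1/3})$ vertices). So your argument is a genuinely different, self-contained route.

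You use the classical Jarník-type count. After merging collinear edges, the primitive generators are pairwise distinct, and $L\geq\sum_j\lenX{\tau_j}$. Only $O(t^2)$ nonzero grid vectors have length at most $t$, so any $v$ distinct ones have total length $\Omega(v^{3/2})$. This is the same device the paper itself uses in \clmref{edge-structure} and in the proof of \thmref{improved-gap}, where primitive directions sorted by length satisfy $\rho_j=\Omega(\sqrt{j})$.

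Your route buys elementarity, and it gives exactly the form the paper needs. Every actual use of the theorem is phrased via perimeter: the bound $v=O(L^{2/3})=O(k^{1/3})$ in \thmref{improved-gap}, and the edge count in \lemref{generators}. The cited area form is stronger, since it implies yours through $4\pi A\leq L^2$ and is more informative for thin polygons. However, the paper uses the area form only heuristically, to predict $\Theta(k^{1/3})$ vertices from $A=\Theta(k)$.

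Your insistence that vertices be true corners is also the right caveat. Counting collinear boundary grid points as vertices would give an $n\times n$ square $\Theta(L)$ vertices, which the bound does not allow.
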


\begin{fact}
    \factlab{approx}%
    For $x > 0$, we have $x + \frac{1}{2x} - \frac{1}{8x^3} \leq \sqrt{1+x^2} \leq x + \frac{1}{2x}$. For $x \in (0,1/2)$, we have $1-x^2 \leq 1-\tfrac{x^2}{2} - \tfrac{x^4}{8} \leq \sqrt{1-x^2} \leq 1-\tfrac{x^2}{2}$.
\end{fact}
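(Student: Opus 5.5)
We prove the two inequalities of \factref{approx} separately; each is elementary.

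\emph{First chain.} Fix $x>0$ and compare squares, which is legitimate because every quantity involved is positive.

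For the upper bound, expand
\[
\pth{x+\tfrac{1}{2x}}^2 = x^2+1+\tfrac{1}{4x^2} \geq 1+x^2 ,
\]
so $\sqrt{1+x^2}\leq x+\frac{1}{2x}$.

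For the lower bound, put $y=\frac{1}{2x}-\frac{1}{8x^3}$. If the left-hand side $x+y$ is nonpositive, the inequality is trivial. Otherwise expand
\[
(x+y)^2 = x^2+2xy+y^2 = x^2 + 1 - \tfrac{1}{4x^2} + y^2 .
\]
We therefore need $y^2\leq \frac{1}{4x^2}$, that is, $|y|\leq\frac{1}{2x}$. The upper half, $y\leq \frac{1}{2x}$, is immediate. The lower half, $y\geq-\frac{1}{2x}$, is equivalent to $\frac{1}{8x^3}\leq\frac{1}{x}$, i.e.\ $x^2\geq 1/8$.

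This leaves the case $x^2<1/8$ with $x+y>0$. It is the main point to check, and a clean direct estimate avoids it. Substitute $t=1/x^2$, so that $x+y = x\pth{1+\frac t2-\frac{t^2}{8}}$, and compare with $\sqrt{1+x^2}=x\sqrt{1+t}$. The claim becomes
\[
1+\tfrac t2-\tfrac{t^2}{8}\leq\sqrt{1+t}\qquad\text{for all } t>0 .
\]
If the left side is negative, this is trivial. Otherwise square it:
\[
\pth{1+\tfrac t2-\tfrac{t^2}{8}}^2 = 1+t-\tfrac{t^3}{8}+\tfrac{t^4}{64} .
\]
We need $-\frac{t^3}{8}+\frac{t^4}{64}\leq 0$, i.e.\ $t\leq 8$.

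For $t>8$, note that the left side $1+\frac t2-\frac{t^2}{8}$ is decreasing for $t>2$. At $t=8$ it equals $1+4-8=-3<0$, so it is negative for all $t>8$ and the claim holds trivially. This settles the first chain.

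\emph{Second chain.} Fix $x\in(0,1/2)$, so $1-x^2>0$.

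The leftmost inequality $1-x^2\leq 1-\frac{x^2}{2}-\frac{x^4}{8}$ is equivalent to $\frac{x^4}{8}\leq\frac{x^2}{2}$, i.e.\ $x^2\leq 4$, which holds.

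For the upper bound, $1-\frac{x^2}{2}>0$ and
\[
\pth{1-\tfrac{x^2}{2}}^2 = 1-x^2+\tfrac{x^4}{4}\geq 1-x^2 ,
\]
so $\sqrt{1-x^2}\leq 1-\frac{x^2}{2}$.

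For the middle inequality, set $s=x^2\in(0,1/4)$. The quantity $1-\frac s2-\frac{s^2}{8}$ is positive, and squaring gives
\[
\pth{1-\tfrac s2-\tfrac{s^2}8}^2 = 1 - s + \tfrac{s^2}{4} - \tfrac{s^2}{4} + \tfrac{s^3}{8} + \tfrac{s^4}{64} = 1-s+\tfrac{s^3}{8}+\tfrac{s^4}{64} .
\]
This exceeds $1-s$, which is the wrong direction. So the stated middle inequality, read literally, is false; indeed the Taylor series $\sqrt{1-s}=1-\frac s2-\frac{s^2}{8}-\dots$ lies below the truncation.

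The correct and useful chain is
\[
1-x^2\leq\sqrt{1-x^2}\leq 1-\tfrac{x^2}{2} .
\]
Its left half holds because $0<1-x^2\leq 1$. I would prove this corrected version and flag the middle term: the claim $1-\frac{x^2}2-\frac{x^4}{8}\leq\sqrt{1-x^2}$ should be replaced by $1-\frac{x^2}{2}-\frac{x^4}{2}\leq\sqrt{1-x^2}$, valid for $s\leq 1/4$. This is what the downstream arguments presumably need.
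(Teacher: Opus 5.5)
The paper states \factref{approx} without proof, so there is no authors' argument to compare against. Your analysis is correct, and its most useful content is the observation that the statement is partly false.

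For the first chain, the substitution $t=1/x^2$ followed by squaring settles the lower bound for every $x>0$ on its own. The preliminary computation with $y$ is redundant. The ``remaining case'' $x^2<1/8$, $x+y>0$ is in fact empty: for $t>8$ you showed $1+\tfrac t2-\tfrac{t^2}{8}<0$, i.e., $x+y<0$.

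For the second chain you are right. Your identity $\left(1-\tfrac s2-\tfrac{s^2}{8}\right)^2=1-s+\tfrac{s^3}{8}+\tfrac{s^4}{64}$ shows that $1-\tfrac{x^2}{2}-\tfrac{x^4}{8}>\sqrt{1-x^2}$ for \emph{every} $x\in(0,1/2)$, not just some. For example, $x=0.4$ gives $0.9168>\sqrt{0.84}\approx 0.91652$.

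A Taylor-remainder argument explains both chains at once. Let $f(t)=\sqrt{1+t}$, so $f'''(t)=\tfrac38(1+t)^{-5/2}>0$ on $(-1,\infty)$. By Lagrange's remainder, $\sqrt{1+t}-\left(1+\tfrac t2-\tfrac{t^2}{8}\right)$ has the sign of $t^3$. It is nonnegative for $t=1/x^2>0$, which is the first chain. It is negative for $t=-x^2$, so the middle inequality of the second chain points the wrong way; the sign pattern was evidently carried over from the $+$ case.

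Your replacement $1-\tfrac{x^2}{2}-\tfrac{x^4}{2}\le\sqrt{1-x^2}$ is valid. Squaring reduces it to $s^2(s^2+2s-3)\le 0$, which holds for all $s=x^2\le 1$, and the replacement still dominates $1-x^2$.

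Finally, the only place the paper invokes this fact is \clmref{lower_1}. That claim uses just the outer inequality $\sqrt{1-y}\ge 1-y$, which you prove, so the error in the statement does not propagate.
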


\subsection{A short foray into Farey's sequences}
\seclab{farey}

The following is standard introductory material in number theory---see Hardy and Wright \cite{hw-tn-65}.  A rather elegant construction of all primitive vectors up to a certain length follows from the following surprisingly elegant algorithm. Let $S_1 = (1,0), (0,1), (-1,0), (0,-1)$ be the (cyclical) sequence of the four primitive vectors with a zero coordinate, sorted in counterclockwise order. The $i$\th sequence $S_i$ is generated from $S_{i-1}$ by inserting between any two consecutive vectors in the original sequence their sum. Thus,
\begin{equation*}
    S_2 = (1,0), \underline{(1,1)}, (0,1), \underline{(-1,1)}, (-1,0), \underline{(-1,-1)}, (0,-1), \underline{(1,-1)}.
\end{equation*}
These sequences $S_1,S_2, \ldots$ are equivalent to \emphi{Farey sequence}. Every primitive vector $u$ has an index $i$ such that $S_i$ contains $u$.

\begin{fact}
    \factlab{primitive_number}%
    For an integer $R >0$, let $\Lambda_R$ be the set of all primitive vectors in $[-R, R]^2$. Then:
    \begin{compactenumI}
        \item $\Lambda_R = S_{R+1} \cap [-R,R]^2$.
        \item The number of primitive grid vectors in $[-R,R]^2$ is $\cardin{\Lambda_R} = \Theta(R^2)$.
    \end{compactenumI}
\end{fact}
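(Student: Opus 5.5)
We prove the two parts of \factref{primitive_number} in order. Part (I) comes from the mediant structure of the sequences $S_i$, and part (II) from a lower bound on the density of coprime pairs.

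\textbf{Part (I).} By the symmetry of $S_1$ under rotation by $90^\circ$, and since the insertion rule commutes with this rotation, it suffices to work in the first quadrant, i.e., between $(1,0)$ and $(0,1)$.

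The key invariant is that any two consecutive vectors $u=(a,b)$, $w=(c,d)$ in $S_i$ satisfy $\det(u,w)=ad-bc=1$. This holds for $S_1$. It is preserved by insertion, since $\det(u,u+w)=\det(u,w)=1$ and $\det(u+w,w)=\det(u,w)=1$. Consequently every vector ever inserted is primitive: a common divisor of its coordinates would divide the determinant~$1$.

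Next we bound the size of newly inserted vectors. By induction, every vector created at stage $i$ has $\|\cdot\|_\infty\ge i-1$. Indeed, a vector inserted at stage $i$ is the sum of two consecutive vectors, at least one of which was itself created at stage $i-1$. Both summands lie in the closed first quadrant, so the $\ell_\infty$ norm of the sum is at least that of the larger summand plus one. Hence new vectors inside $[-R,R]^2$ appear only at stages $i\le R+1$, and $S_{R+1}\cap[-R,R]^2\subseteq \Lambda_R$ together with the stability of later stages gives one inclusion.

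For the reverse inclusion we show that every primitive $p=(x,y)$ with $x,y\ge 0$ eventually appears, and by stage $\|p\|_\infty+1$. Suppose $p$ lies strictly between consecutive $u,w$ with $\det(u,w)=1$. Since $u,w$ form a lattice basis, $p=\alpha u+\beta w$ with $\alpha,\beta$ positive integers. If $p\neq u+w$, then $p$ lies strictly inside one of the two new cones, and in that cone's basis its coefficients sum to $\alpha+\beta-\min(\alpha,\beta)$, which is strictly smaller. So $p$ is reached after finitely many stages. The stage count is bounded by tracking $\|\cdot\|_\infty$: the vector $p$ appears exactly when it is the mediant $u+w$, and by the growth argument above this occurs at stage at most $\|p\|_\infty+1$. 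The main obstacle is getting this stage indexing exactly right, in particular checking the off-by-one. The Stern--Brocot depth of $x/y$ is at most $\max(x,y)$, which should give the bound; if the exact indexing fails, only the $\Theta$ statement in (II) is needed downstream.

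\textbf{Part (II).} The upper bound is trivial: $\cardin{\Lambda_R}\le(2R+1)^2$.

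For the lower bound, count pairs $(x,y)\in\IRY{1}{R}^2$ with $\gcd(x,y)=1$. Any non-coprime pair has some prime $p$ dividing both coordinates, and for each prime $p$ at most $R^2/p^2$ pairs have this property. Summing over primes,
\[
\sum_p R^2/p^2 \le R^2\Bigl(\tfrac14+\tfrac19+\sum_{n\ge 5} \tfrac{1}{n^2}\Bigr) < R^2\cdot 0.6,
\]
so at least $0.4R^2$ of the pairs are coprime. (The full Möbius-inversion asymptotic $6/\pi^2\cdot(2R)^2$ is not needed.) Multiplying by the four quadrants gives $\cardin{\Lambda_R}=\Theta(R^2)$.
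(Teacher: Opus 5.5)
The paper gives no proof of this fact. It presents it as standard Farey-sequence material from Hardy and Wright, and the observation on direction counts attributes the lower bound in (II) to the totient summatory function, $\sum_{n\le R}\varphi(n)=\frac{3}{\pi^2}R^2+O(R\log R)$. Your proposal is correct, and it is a self-contained, more elementary route.

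For (I), the ingredients do the following:
\begin{itemize}
\item The invariant $\det(u,w)=1$ makes every inserted vector primitive, which gives $S_{R+1}\cap[-R,R]^2\subseteq\Lambda_R$.
\item The descent on $\alpha+\beta$ shows that every primitive vector is eventually inserted.
\item The $\ell_\infty$-growth induction ties the stage index to the box. This is the part the classical Farey statements do not literally supply: they refine by denominator, inserting only mediants of the new order, whereas $S_i$ inserts every mediant.
\end{itemize}
For (II), your union bound over primes gives at least $0.4R^2$ coprime pairs in $[1,R]^2$ for every $R\ge1$. The totient route gives the sharp density $6/\pi^2$ but needs M\"obius inversion, and only $\Theta(R^2)$ is claimed or used.

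A few points to tighten.
\begin{itemize}
\item \emph{The growth step.} The assertion that $\|u+w\|_\infty\ge\max(\|u\|_\infty,\|w\|_\infty)+1$ does not follow just from the summands lying in the closed first quadrant, and it is false for $(1,0)+(0,1)$. It does hold for every other consecutive pair $u=(a,b)$, $w=(c,d)$. The relation $ad-bc=1$ forces $a,d\ge1$, which settles the case $b,c\ge1$. The case $c=0$ forces the pair $(1,b),(0,1)$ and the case $b=0$ forces $(1,0),(c,1)$; both are checked by hand. Your induction never needs the overstated version at the base pair: it only needs the sum to have norm at least $(i-2)+1$, and $\|(1,1)\|_\infty=1$ suffices at stage $2$.
\item \emph{The descent.} The case $\alpha=\beta$ needs primitivity of $p$ to force $p=u+w$.
\item \emph{The indexing.} Your off-by-one worry is unfounded, and the hedge should be dropped. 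If $p$ first appears at stage $i$, your growth bound gives $\|p\|_\infty\ge i-1$. Hence $p\in S_{\|p\|_\infty+1}\subseteq S_{R+1}$, since the sequences only grow.
\item \emph{The bookkeeping of inclusions.} The growth bound together with eventual appearance gives $\Lambda_R\subseteq S_{R+1}\cap[-R,R]^2$. The reverse inclusion needs only the determinant invariant. Your sentence crediting ``one inclusion'' to the growth bound should be reorganized accordingly.
\end{itemize}
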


The two \emphi{parents} of a primitive vector $u \in S_i \setminus S_{i-1}$, for $i >1$, are the two (consecutive) unique primitive vectors $u_1, u_2 \in S_{i-1}$, such that $u = u_1 + u_2$.

\begin{fact}
    For a primitive vector $u$ with parents $u_1, u_2$, we have that $\lenX{u} > \lenX{u_1}$ and $\lenX{u} > \lenX{u_2}$.
\end{fact}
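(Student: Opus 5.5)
The plan is to expand $\lenX{u}^2 = \lenX{u_1 + u_2}^2$ and show that the cross term is nonnegative. Everything then reduces to one geometric fact: two consecutive vectors in any sequence $S_j$ make a counterclockwise angle of at most $\pi/2$.

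\emph{Step 1 (invariant).} I will prove by induction on $j$ that, for every $j \geq 1$, the cyclic sequence $S_j$ is sorted in counterclockwise angular order. I will also prove that any two cyclically consecutive vectors $w, w'$ in it span a counterclockwise angle in $(0, \pi/2]$.

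For $S_1$ this is immediate: the four axis vectors are consecutive at angle exactly $\pi/2$.

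For the inductive step, suppose $w, w'$ are consecutive in $S_{j-1}$ with counterclockwise angle $\theta \in (0, \pi/2]$. The inserted vector $w + w'$ is a positive combination of $w$ and $w'$. Since $\theta < \pi$, it lies strictly inside the cone spanned by $w$ and $w'$. So the angle $\theta$ splits into two strictly positive angles, each smaller than $\theta \leq \pi/2$, and the counterclockwise order is preserved. This gives the invariant for $S_j$.

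\emph{Step 2 (the inequality).} Let $u \in S_i \setminus S_{i-1}$ have parents $u_1, u_2$. By definition, $u_1$ and $u_2$ are consecutive in $S_{i-1}$. By Step 1, the angle between them is at most $\pi/2$, so $\langle u_1, u_2 \rangle \geq 0$. Therefore
\begin{equation*}
    \lenX{u}^2 = \lenX{u_1}^2 + \lenX{u_2}^2 + 2\langle u_1, u_2 \rangle \geq \lenX{u_1}^2 + \lenX{u_2}^2 .
\end{equation*}
Both parents are nonzero, since every vector in $S_1$ is nonzero and sums of two non-opposite nonzero vectors are nonzero. Hence $\lenX{u}^2 > \lenX{u_1}^2$ and $\lenX{u}^2 > \lenX{u_2}^2$, which is the claim.

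\emph{Main obstacle.} The only step with real content is the inductive invariant in Step 1, specifically that the angle bound $\pi/2$ is never exceeded. This is handled by observing that insertion only subdivides existing angular gaps and never creates new ones. No number theory, such as the determinant-one property of consecutive Farey vectors, is needed.
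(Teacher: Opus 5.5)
Your proof is correct and complete. The paper states this as a standard Farey-sequence fact, citing Hardy and Wright, and gives no proof of its own, so there is no competing argument to compare against.

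The only real content is your invariant: cyclically consecutive vectors of $S_j$ are at most a right angle apart. Since the four axis vectors survive in every $S_j$, this is equivalent to saying the two parents lie in a common closed quadrant, so the coordinates of $u=u_1+u_2$ add in absolute value. The invariant gives $\langle u_1,u_2\rangle\geq 0$, and hence $\lenX{u}^2\geq\lenX{u_1}^2+\lenX{u_2}^2$. The same angular bound is what the paper tacitly relies on in \obsref{long_are_good}, where it asserts that both parents project onto $u$ in the direction of $u$.
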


The $i$\th frame is the set $F_i = \Lambda_i \setminus
\Lambda_{i-1}$. Clearly, $|F_i| \leq 8i$.
\begin{claim}
    \clmlab{shorty}%
    A vector $v \in F_i$ can be the parent of at most $2R/i$ vectors in $\Lambda_R$.
\end{claim}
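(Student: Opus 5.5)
The plan is to describe explicitly all vectors that have $v$ as a parent and then count those that lie in $[-R,R]^2$. The key structural fact about the sequences $S_j$ is that consecutive vectors $a,b$ in any $S_j$ satisfy $\det(a,b)=1$. This holds in $S_1$, and it is preserved under insertion because $\det(a,a+b)=\det(a+b,b)=\det(a,b)$.

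Let $v \in F_i$ first appear in $S_j$, inserted between its parents $u_1,u_2$, so that $v=u_1+u_2$. I will trace what happens next to $v$.
\begin{compactitem}
    \item In $S_{j+1}$ the neighbors of $v$ become $u_1+v$ and $u_2+v$.
    \item In $S_{j+2}$ they become $u_1+2v$ and $u_2+2v$, and so on.
\end{compactitem}
By induction, the vectors that are inserted adjacent to $v$, which are exactly those having $v$ as a parent, are precisely
\[
    u_1+tv \quad\text{and}\quad u_2+tv, \qquad t\geq 1 .
\]
The induction step is that in each round the current neighbor $u_\ell+tv$ of $v$ is replaced, as a neighbor, by the inserted sum $u_\ell+(t+1)v$. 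So the children of $v$ lie on two lattice lines parallel to $v$, and each line is traversed in one direction starting at $t=1$.

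It remains to bound the number of valid $t$ on each line. By symmetry I assume $v=(i,y)$ with $|y|\le i$ and $i\ge 1$, so the first coordinate is the dominant one. The one point needing care is the claim that both parents have nonnegative first coordinate. I will argue this from the Farey structure. Both $u_1$ and $u_2$ lie in the closed angular sector between the two consecutive vectors of $S_1$ that bracket $v$, and $v$ lies in the sector around $(1,0)$. Concretely, the two sequences between $(0,-1)$ and $(0,1)$ only ever contain vectors with first coordinate $\ge 0$.

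Given that, write $u_\ell=(x_\ell,\cdot)$ with $x_\ell \ge 0$. A child $u_\ell+tv$ lies in $[-R,R]^2$ only if $x_\ell+ti\le R$, which forces $t \le R/i$. Hence each of the two families contributes at most $R/i$ children, for a total of at most $2R/i$.

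The main obstacle is pinning down the sign claim on the parents' dominant coordinate; the case $y=\pm i$ (that is, $v=(1,\pm1)$) should be checked directly. If the sign argument turns out awkward, the fallback is a cruder count. Each line meets the box in a segment whose $x$-extent is at most $2R$, which gives at most $2R/i+1$ points per line. Hence this yields $O(R/i)$, which suffices for any downstream use.
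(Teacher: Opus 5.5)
Your proof is correct and is essentially the paper's argument. The paper also follows the children of $v$ on one side, observes that their dominant coordinates grow by exactly $i$ per generation starting from at least $i$, and concludes there are at most $R/i$ children per side (your explicit form $u_\ell+tv$ and the sector argument for $x_\ell\geq 0$ make rigorous the step the paper simply asserts as $x_1>i$, and the parentless case $v\in S_1$, which both you and the paper skip, goes through identically with the two $S_1$-neighbors of $v$ in place of $u_1,u_2$).
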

\begin{proof}
    Such a vector is the parent of at most two new children in $S_j$, for all $j > i$. This readily gives the upper bound $O(R)$. To get the better bound, we need to work slightly harder. For simplicity of exposition, assume $i>1$, $v =(i,\alpha)$, with $0 < \alpha < i$. Consider the children of $v$ that (say) are clockwise to it: $c_1, c_2, \ldots ,c_t$ in their order of generation, with $c_j = (x_j, y_j)$. Observe that $x_1 > i$ (as the vector $(1,0)$ is clockwise to it), and in particular, by the above construction we have the invariant that $x_{j+1} = x_j + i$. But, after $t = R/i$ times this happens, $X_{t+1} > R$, which is impossible.
\end{proof}

\begin{lemma}
    \lemlab{long}%
    For any integer $R > 1$ and any fixed $\eta>0$, there exists a
    constant $c = c(\eta)>0$ such that all but at most $\eta R^2$
    vectors in $\Lambda_R$ have the property that both their parents are
    of length at least $c R$. Such primitive vectors are
    \emphi{$cR$-long}.
\end{lemma}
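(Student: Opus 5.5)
We prove \lemref{long} by counting the complement. Call $u\in\Lambda_R$ \emph{bad} if one of its parents has length less than $cR$. We show that the number of bad vectors is at most $C' c R^2$, where $C'$ is an absolute constant. Choosing $c=\eta/C'$ then gives the claim. The only tool needed is \clmref{shorty}, which bounds how many vectors in $\Lambda_R$ a single short vector can parent.

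\textbf{Step 1 (charging).} Charge every bad vector $u$ to a parent $p$ with $\lenX{p}<cR$. This parent lies in $\Lambda_{\ceil{cR}}$, so it belongs to some frame $F_i$ with $i\le \ceil{cR}$. The number of bad vectors is therefore at most
\[
    \sum_{i=1}^{\ceil{cR}} \;\sum_{p\in F_i} \#\{\text{children of } p \text{ in } \Lambda_R\}.
\]

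\textbf{Step 2 (apply \clmref{shorty}).} Each $p\in F_i$ parents at most $2R/i$ vectors of $\Lambda_R$. For $i=1$ we use the cruder bound $O(R)$, since the proof of \clmref{shorty} gives at most two children per level. Since $|F_i|\le 8i$, the contribution of frame $i$ is at most $8i\cdot 2R/i=16R$. Summing over $i\le \ceil{cR}$ gives a total of at most $16R(cR+1)+O(R)$.

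\textbf{Step 3 (absorb lower-order terms).} The term $O(R)$ is at most $(\eta/2)R^2$ once $R\ge R_0(\eta)$. For the finitely many $R<R_0$, we shrink $c$ below $1/R_0$. Then $cR<1$, so no vector can have a parent of length less than $cR$, since every nonzero grid vector has length at least $1$. In that range there are no bad vectors at all. Choosing $c\le \eta/32$ then yields at most $\eta R^2$ bad vectors for every $R>1$.

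\textbf{Main obstacle.} The only delicate point is \clmref{shorty} itself: its per-frame bound $2R/i$ must hold uniformly for every direction of $p$, and here we take the claim as given. Without the $1/i$ gain, the naive $O(R)$ bound per parent would give $O(c^2R^3)$ bad vectors, which is useless. The gain from \clmref{shorty} exactly cancels the frame size $|F_i|=O(i)$ and makes the bound linear in $c$.
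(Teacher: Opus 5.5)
Your proposal is correct and is essentially the paper's proof. Both charge each bad vector to a parent in $\Lambda_{cR}$, then sum the bound $2R/i$ from \clmref{shorty} against $\cardin{F_i}\leq 8i$ over the frames, which gives $O(cR^2)$. Your Step~3 is more than needed: a parent of Euclidean length $<cR$ has integer $\ell_\infty$-norm at most $\floor{cR}$, so the sum runs only over $i\leq\floor{cR}$ and is at most $16R\floor{cR}\leq 16cR^2$ outright, with no lower-order term or small-$R$ case split.
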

\begin{proof}
    By \clmref{shorty}, for $\beta >0$, the vectors in $\Lambda_{\beta}$
    can be the parents of at most
    \begin{equation*}
        \sum_{j = 1}^{\beta}  8j \cdot \frac{2R}{j} = 16 R \beta
    \end{equation*}
    vectors in $\Lambda_R$.  Set $\beta=cR$, where $c>0$ is small
    enough that $16cR^2\leq \eta R^2$.  Then at most $\eta R^2$ vectors
    of $\Lambda_R$ have a parent in $\Lambda_{cR}$, proving the claim
    after adjusting the constant to pass from $\ell_\infty$ length to
    Euclidean length.
\end{proof}

\begin{observation}
    \obslab{long_are_good}%
    Consider a $cR$-\lemrefY{long}{long} primitive vector
    $u \in \Lambda_R$, with parents $u_1$ and $u_2$.  The projections of
    $u_1$ and $u_2$ on the segment $\origin u$ have length
    $\Omega_c(R)$, and their distances from the other endpoint of the
    segment are also $\Omega_c(R)$.
\end{observation}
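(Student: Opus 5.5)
The plan is to use the unimodularity of parents together with the length bounds already in hand. The vector $u\in\Lambda_R$ is primitive and $cR$-long, and its parents $u_1,u_2$ satisfy $u=u_1+u_2$. Since $u_1,u_2$ are consecutive in some $S_{i-1}$, we have $|\det(u_1,u_2)|=1$, because the Farey/Stern--Brocot construction preserves unimodularity of consecutive pairs. So $|\det(u_1,u)|=|\det(u_2,u)|=1$, and the distance of $u_1$ (and of $u_2$) from the line through $\origin u$ is $1/\lenX{u}$. As $\lenX{u}\ge cR$ (or at least $\lenX{u}>\lenX{u_j}\ge cR$), this distance is at most $1/(cR)$, which is tiny compared to $R$.

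Next I would compute the projections. Write $p_j$ for the signed projection length of $u_j$ onto the direction of $u$. Then $p_1+p_2=\lenX{u}$. For each $j$,
\[
    p_j^2 = \lenX{u_j}^2 - \frac{1}{\lenX{u}^2} \ge c^2R^2 - \frac{1}{c^2R^2}.
\]
Hence $|p_j|=\Omega_c(R)$ once $R$ exceeds a constant depending on $c$. Small $R$ is trivially absorbed into the $\Omega_c$ constant, since then all quantities are $\Theta(1)$.

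The main point needing care is that both projections are positive, i.e., that the feet lie inside the segment $\origin u$. If $p_1<0$, then $p_2=\lenX{u}-p_1>\lenX{u}$. But $\lenX{u_2}\ge |p_2|>\lenX{u}$, contradicting the earlier Fact that $\lenX{u}>\lenX{u_2}$. So $p_1,p_2>0$. The distance of the foot of $u_1$ from the other endpoint $u$ is $\lenX{u}-p_1=p_2=\Omega_c(R)$, and symmetrically for $u_2$. This gives both claims.

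Finally, I would make sure the notion of length is consistent. Lemma~\ref{lemma_long} already converts $\ell_\infty$ to Euclidean length up to a constant, and $\lenX{u}\le\sqrt2 R$ gives the matching upper scale. So all bounds are $\Theta_c(R)$, and only the constant hidden in $\Omega_c$ depends on $c$.
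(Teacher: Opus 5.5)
Your proof is correct and follows essentially the same route as the paper. Unimodularity of the parents puts each $u_j$ at perpendicular distance $1/\lenX{u}$ from the line $\origin u$, the $cR$-long hypothesis makes each projection $\Omega_c(R)$, and since the two projections sum to $\lenX{u}$, the distance from each foot to the far endpoint equals the other projection. The only difference is how positivity of the projections is justified. You derive it from the Fact $\lenX{u}>\lenX{u_j}$, and the same argument also rules out $p_1=0$, because $\lenX{u_2}>|p_2|$. The paper instead appeals to $u$ lying angularly between its parents, so your version is, if anything, the more explicit of the two.
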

\begin{proof}
    The vector $u$ lies between its two parents in angular order, so the
    orthogonal projections of $u_1$ and $u_2$ onto the line spanned by
    $u$ point in the direction of~$u$, and their projected lengths add
    to $\lenX{u}$.  Each parent has Euclidean length at least~$cR$ and
    is at perpendicular distance $1/\lenX{u}\leq 1$ from this line,
    because the corresponding primitive triangle has area~$1/2$.  Thus
    each projected length is $\Omega_c(R)$.  Since these two projected
    lengths add to $\lenX{u}$, the distance from each projected point to
    the other endpoint of $\origin u$ is also $\Omega_c(R)$.
\end{proof}

\section{Bounds on the optimal perimeter}
\seclab{bounds}

We start by proving some rough bounds on the optimal perimeter $L_k$ and
the radius of the minimum-width disk sandwich containing $\Popt$. This is
used in \secref{gap}.

\subsection{Upper-bound via density}

For $r \geq 1/\sqrt{2}$, let $\DisksX{r} = \Set{ \diskY{p}{r}}{ p \in \ZZ^2}$ be the cover of the plane by disks of radius $r$, centered at the points of the grid. Let $\diskX{R} = \diskY{0}{R}$ denote the disk of radius $R$ centered at the origin. For a set $X \subseteq \Re^2$, its \emphi{area} is denoted by $\areaX{X}$.

\begin{lemma}
    \lemlab{tight-upper}%
    $\lOpt = \LoptX{k} \leq 2\sqrt{\pi k}$.
\end{lemma}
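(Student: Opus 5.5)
Set $r=\sqrt{k/\pi}$, so that a disk of radius $r$ has area exactly $k$ and perimeter exactly $2\pi r = 2\sqrt{\pi k}$. By \obsref{trim-to-k}, it is enough to exhibit a disk of radius $r$ (a bounded convex set with perimeter $2\sqrt{\pi k}$) that contains at least $k$ grid points. Trimming it then gives a convex grid polygon with exactly $k$ grid points and perimeter at most $2\sqrt{\pi k}$, which proves the lemma. So the whole task reduces to showing that some translate $\diskY{p}{r}$ satisfies $\cardin{\diskY{p}{r}\cap\ZZ^2}\geq k$.

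To find such a translate, average over centers. Let $p$ be uniformly distributed in the unit square $[0,1)^2$, and set $N(p)=\cardin{\diskY{p}{r}\cap\ZZ^2}=\sum_{q\in\ZZ^2}\mathbf{1}[\dY{p}{q}\le r]$. By linearity of expectation, $\mathbb{E}[N(p)]=\sum_{q\in\ZZ^2}\areaX{\diskY{q}{r}\cap[0,1)^2}$. The translates of the unit square by the grid tile the plane, so this sum equals $\areaX{\diskY{0}{r}}=\pi r^2=k$. Hence some $p$ has $N(p)\geq k$. Since $N$ takes integer values, $N(p)\ge \lceil \mathbb{E}N\rceil = k$ for at least one $p$. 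This is exactly the density averaging suggested by the cover $\DisksX{r}$ introduced just before the lemma.

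There is no real obstacle, only two minor points to check. First, \obsref{trim-to-k} needs at least $k$ points, not exactly $k$, and that is what we have. Second, the perimeter comparison is between a disk and a polygon inside it: the trimmed polygon is convex and contained in the disk, so its perimeter is at most the circle's length $2\pi r$. For very small $k$ the requirement $r\ge 1/\sqrt2$ from the definition of $\DisksX{r}$ is irrelevant to this argument, and for $k=1$ the bound holds trivially.
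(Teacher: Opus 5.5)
Your proposal is correct and follows the paper's proof: you average $N(p)$ over centers $p$ in the unit square, use the tiling identity to get $\mathbb{E}[N(p)]=\pi r^2=k$ and hence a center with at least $k$ grid points, and then apply \obsref{trim-to-k}. The only cosmetic difference is that you trim the disk directly, whereas the paper first takes the convex hull of the covered grid points and bounds its perimeter by $2\pi r$; the two are equivalent.
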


\begin{proof}
    Let $r=\sqrt{k/\pi}$, and for $p\in[0,1)^2$ let
    \[
        N(p)=\cardin{\ZZ^2\cap\diskY{p}{r}}.
    \]
    By Fubini, averaging over one fundamental cell gives
    \[
        \int_{[0,1)^2} N(p)\,dp
        =
        \sum_{z\in\ZZ^2}
        \areaX{\bigl([0,1)^2\cap\diskY{z}{r}\bigr)}
        =
        \areaX{\diskX{r}}
        =
        \pi r^2
        =
        k.
    \]
    Hence some translate center $p$ satisfies $N(p)\geq k$.

    Let $S = \ZZ^2 \cap \diskY{p}{r}$. Clearly, $|S| \geq k$. Let $P = \CHX{S}$ be the convex hull of $S$. This is a convex grid polygon with $\perim(P) \leq \perim(\diskY{p}{r}) = 2\pi r$, since $P \subseteq \diskY{p}{r}$ and for nested convex bodies the inner one has smaller perimeter. By \obsref{trim-to-k}, there is a smaller polygon $P$, with smaller perimeter, containing exactly $k$ points and with perimeter $\lOpt \leq \perimX{P} \leq 2\pi r = 2 \sqrt{\pi k}$.
\end{proof}

\begin{proposition}
    \proplab{edge-bound}%
    If a convex polygon $P$ satisfies
    $\diskY{p}{\rin}\subseteq P\subseteq\diskY{p}{\Rout}$, with
    $\WD=\Rout-\rin$, then the longest edge $e$ of $P$ satisfies
    $\lenX{e} \leq 2\sqrt{2 \WD \Rout}$.  In particular, for
    $P=\Popt$ and its minimum-width disk sandwich,
    $\lenX{e}=O(\sqrt{\WD}\,k^{1/4})$.
\end{proposition}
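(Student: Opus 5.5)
The plan is a pure chord-length argument in the annulus, followed by plugging in the rough radius bounds for $\Popt$.

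Let $e = uv$ be any edge of $P$. Since $P \subseteq \diskY{p}{\Rout}$, both endpoints satisfy $\dY{u}{p}, \dY{v}{p} \leq \Rout$. Since $P$ is convex and contains $\diskY{p}{\rin}$, the line $\ell$ supporting $e$ is a supporting line of $P$. Hence the open disk of radius $\rin$ around $p$ lies on one side of $\ell$, so the distance $h$ from $p$ to $\ell$ satisfies $h \geq \rin$.

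Consequently $e$ lies inside the chord $\ell \cap \diskY{p}{\Rout}$, whose length is
\[
  2\sqrt{\Rout^2 - h^2} \leq 2\sqrt{\Rout^2-\rin^2} = 2\sqrt{(\Rout-\rin)(\Rout+\rin)} \leq 2\sqrt{2\WD\Rout},
\]
using $\Rout+\rin \leq 2\Rout$. This proves the first claim. It is a two-line computation; the only point to state carefully is that the supporting line of an edge keeps the inner disk on one side, which follows from convexity.

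For the ``in particular'' part, apply the first claim to $\Popt$ with its epicenter sandwich from \defref{epicenter}. It remains to show $\Rout = O(\sqrt{k})$. Any convex body contained in the disk of radius $\Rout$ around $\epi$ that contains two points at distance $D$ has perimeter at least $2D$. The sandwich is minimal and $\Popt$ is a nondegenerate polygon (for $k\geq 3$ it has positive area), so $\Rout$ is comparable to the diameter of $\Popt$. One way to see this: the minimum-width sandwich has width at most that of the sandwich centered at any interior point with inner radius $0$, so $\Rout \leq \WD + \rin$ with $\rin \leq \operatorname{diam}(\Popt)$. Alternatively, note that $\Popt \subseteq \diskY{\epi}{\Rout}$ and every boundary point of $\Popt$ is within distance $\WD$ of the circle of radius $\Rout$, so $\Rout \leq \operatorname{diam}(\Popt)$ up to constants. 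Then, by \lemref{tight-upper}, $\operatorname{diam}(\Popt) \leq \Lopt/2 \leq \sqrt{\pi k}$, giving $\Rout = O(\sqrt{k})$ and hence $\lenX{e} = O(\sqrt{\WD}\,k^{1/4})$.

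The main (mild) obstacle is this justification that the optimal sandwich has $\Rout = O(\sqrt{k})$, i.e., that minimizing the width does not push the center far away. I would argue it via the inner disk: $\rin \leq \Lopt/(2\pi)$ since $\diskY{\epi}{\rin} \subseteq \Popt$. Also $\WD$ is at most the width of the trivial sandwich with $\rin=0$ centered at a vertex, which is $\leq \operatorname{diam}(\Popt) \leq \Lopt/2$. Thus $\Rout = \rin + \WD = O(\Lopt) = O(\sqrt{k})$.
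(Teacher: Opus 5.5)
Your proof is correct and follows the paper's argument. The chord bound $2\sqrt{\Rout^2-h^2}\le 2\sqrt{\Rout^2-\rin^2}\le 2\sqrt{2\WD\Rout}$ is the paper's tangent-chord computation, stated a bit more carefully, and your final paragraph is a valid minor variant of \clmref{sandwich-radius-rough}: you bound $\rin\le\Lopt/(2\pi)$ and bound $\WD$ by a degenerate sandwich centered at a vertex, where the paper instead uses $\epi\in\Popt$ and $\operatorname{diam}(\Popt)\le\Lopt/2$. You should drop the vaguer ``alternatively \ldots up to constants'' sentence, which on its own proves nothing.
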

\begin{proof}
    \begin{figure}[H]
        \centering \includegraphics{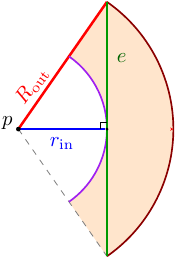}
        \caption{}
        \figlab{annulus}
    \end{figure}
    The supporting line of every edge avoids the interior of
    $\diskY{p}{\rin}$, while its endpoints lie in $\diskY{p}{\Rout}$.
    The maximum length such edge $e$ has its endpoints on the outer
    circle $\circY{p}{\Rout}$ and is tangent to the inner circle
    $\circY{p}{\rin}$, see \figref{annulus}. Thus, we have
    $\Rout^2 = \rin^2 + \lenX{e}^2/ 4$, which implies
    \begin{math}
        \lenX{e}%
        =%
        2 \sqrt{\Routdef^2 - \rindef^2 }%
        =%
        2 \sqrt{\Routdef^2 - {(\Rout - \WD)}^2 } \leq 2 \sqrt{2 \Rout \WD }.\Bigr.
    \end{math}
    For $P=\Popt$, \clmref{sandwich-radius-rough} gives
    $\Rout=O(\sqrt{k})$, yielding the stated asymptotic bound.
\end{proof}

\subsection{Lower bounds}

\begin{observation}
    \proplab{lower-trivial}%
    Consecutive boundary grid points along $\partial \Popt$ being at distance $\geq 1$ from each other implies that $\NTX{\lOpt} \geq b$.
\end{observation}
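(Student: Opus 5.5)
The plan is to walk once around $\partial\Popt$ and charge each boundary grid point to the stretch of boundary that follows it. List the boundary grid points in counterclockwise order as $p_0,p_1,\ldots,p_{b-1}$, with indices taken modulo $b$. Every vertex of $\Popt$ is a grid point, so each vertex is one of the $p_j$. Consequently, each portion of $\partial\Popt$ between $p_j$ and $p_{j+1}$ lies on a single edge of $\Popt$, and the boundary is the union of the $b$ segments $p_jp_{j+1}$. These segments meet only at their endpoints, so
\[
    \lOpt
    =
    \perimX{\Popt}
    =
    \sum_{j=0}^{b-1} \lenX{p_{j+1}-p_j}.
\]

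Next, each summand is at least $1$. The points $p_j$ and $p_{j+1}$ are distinct points of $\ZZ^2$, so their difference is a nonzero integer vector, and its Euclidean length is at least $1$. Summing over the $b$ terms gives $\lOpt\geq b$.

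The only point that needs care is the degenerate case. For $k\geq 3$ the optimal polygon could in principle collapse to a segment containing $k$ collinear grid points. In that case the "perimeter" is twice the length of the segment, and the boundary walk traverses each of the $b-1$ unit gaps twice. This gives $\lOpt\geq 2(b-1)\geq b$ for $b\geq 2$. For $k=1$ we have $b=1$, $\lOpt=0$, and the inequality fails, so that case is excluded or treated as trivial. Apart from this, there is no real obstacle: the argument uses only that vertices lie on the grid and that distinct grid points are at distance at least $1$.
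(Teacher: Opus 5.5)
Your proof is correct and is the same argument the paper has in mind. The paper's observation is justified in one line: the perimeter splits into the $b$ segments between consecutive boundary grid points, and each segment has length at least $1$. Your side remark is also right: the inequality genuinely fails for $k=1$, where $\lOpt=0<1=b$, and the degenerate segment case gives $2(b-1)\geq b$. The paper leaves both points implicit.
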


\begin{claim}
    \clmlab{lower_1}%
    For $k$ sufficiently large, if $\bb \leq c k^\alpha$, for some constants $\alpha \in (0,1/2]$ and $ c \geq 1$, then $\lOpt = \perimX{\Popt} \geq 2 \sqrt{\pi k} - 2 \sqrt{\pi} c k^{\alpha - 1/2}$.  Specifically, we have
    \begin{math}
        \NTX{\lOpt} \geq 2 \sqrt{\pi k} -4.
    \end{math}
\end{claim}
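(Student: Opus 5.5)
We combine Pick's theorem with the isoperimetric inequality. Let $\areaOpt=\areaX{\Popt}$. Since $\Popt$ is a convex lattice polygon with $\ii$ interior and $\bb$ boundary grid points and $k=\ii+\bb$, \thmref{pick} gives
\[
    \areaOpt = \ii + \bb/2 - 1 = k - \bb/2 - 1 .
\]
\thmref{bonnesen}, dropping its nonnegative right-hand side, yields the classical isoperimetric inequality $\lOpt^2 \geq 4\pi \areaOpt$. Therefore
\[
    \lOpt \geq 2\sqrt{\pi\pth{k - \bb/2 - 1}} .
\]

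Next we substitute the hypothesis $\bb \leq c k^\alpha$ and expand. This gives
\[
    \lOpt \geq 2\sqrt{\pi k}\,\sqrt{1-x}, \qquad x = \frac{c k^\alpha/2+1}{k}.
\]
Because $\alpha\le 1/2$ and $k$ is large, $x$ is small; in particular $x\le 1/2$ once $k$ exceeds a constant depending on $c$. Using $\sqrt{1-x}\geq 1-x$ for $x\in[0,1]$, we get
\[
    \lOpt \geq 2\sqrt{\pi k} - 2\sqrt{\pi}\, \frac{c k^\alpha/2 + 1}{\sqrt{k}}
    = 2\sqrt{\pi k} - \sqrt{\pi}\, c k^{\alpha-1/2} - 2\sqrt{\pi}\,k^{-1/2}.
\]
Since $c\geq 1$ and $\alpha>0$, we have $k^{-1/2}\le k^{\alpha-1/2}\le c k^{\alpha -1/2}$. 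Absorbing the $1$ in this way bounds the loss by $2\sqrt{\pi} c k^{\alpha-1/2}$, which is the first claim.

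For the specific bound $\lOpt\geq 2\sqrt{\pi k}-4$, we need an a priori bound on $\bb$ of order $\sqrt{k}$ with an explicit constant. By \obsref{lower-trivial} and \lemref{tight-upper}, $\bb\leq\lOpt\leq 2\sqrt{\pi k}$. Plugging this directly into the Pick–isoperimetric bound, rather than using the generic constant, gives
\[
    \lOpt \geq 2\sqrt{\pi k}\sqrt{1 - \tfrac{\sqrt{\pi}}{\sqrt k} - \tfrac1k} \geq 2\sqrt{\pi k}\pth{1-\tfrac{\sqrt\pi}{\sqrt k}-\tfrac1k} = 2\sqrt{\pi k} - 2\pi - \tfrac{2\sqrt\pi}{\sqrt{k}}.
\]

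The main obstacle is the constant: this yields only $2\pi+o(1)\approx 6.28$, not $4$. To do better, I would first try the sharper expansion $\sqrt{1-x}\geq 1-x/2-x^2/2$ from \factref{approx}, which halves the linear loss to $\pi+o(1)\approx 3.14+o(1)$, below $4$ for large $k$. Tracking the $o(1)$ terms then requires only that $k$ be sufficiently large, as the claim permits. If the intended constant must come from the generic statement instead, one would need $\bb\le (2/\sqrt{\pi})\sqrt{k}$, and I would then need to check whether a better a priori bound on $\bb$ is available.
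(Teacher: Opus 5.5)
Your proof is correct and follows the same route as the paper. For the general bound you use Pick's theorem, the isoperimetric inequality and $\sqrt{1-x}\ge 1-x$. For the constant $4$ you plug in $b\le 2\sqrt{\pi k}$ and use the second-order expansion $\sqrt{1-x}\ge 1-x/2-O(x^2)$ to get a loss of $\pi+o(1)<4$. That is exactly the paper's ``being slightly more careful with the constant,'' so the step you only proposed to try does close the proof.

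One small fix: absorbing the $+1$ into the factor $2$ needs $ck^\alpha\ge 2$, which holds once $k\ge 2^{1/\alpha}$. Your stated inequality only gives $ck^\alpha\ge 1$, and by itself that bounds the loss by $3\sqrt{\pi}\,ck^{\alpha-1/2}$ rather than $2\sqrt{\pi}\,ck^{\alpha-1/2}$.
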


\begin{proof}
    By \thmrefY{pick}{Pick's theorem}, we have that
    \begin{math}
        \NTX{\areaOpt} = \ii + \bb/2 - 1 =%
        \ii + \bb - \bb/2 - 1 \geq k - ck^\alpha.
    \end{math}
    By \factref{approx} and the isoperimetric inequality, we have that
    \begin{align*}
        \perimX{\Popt}
        &\geq
        2\pi \sqrt{\areaOpt/\pi}
        \geq
        2\sqrt{\pi k}\sqrt{1-c/k^{1-\alpha }}
        \\&
        \geq
        2\sqrt{\pi k}\pth{1-c/k^{1-\alpha }}
        =%
        2 \sqrt{\pi k} - 2 \sqrt{\pi} c k^{\alpha - 1/2}.
    \end{align*}

    By \lemref{tight-upper}, $\lOpt \leq 2\sqrt{\pi k}$ and \propref{lower-trivial}, we have that $b \leq 2\sqrt{\pi k}$; plugging this into the above argument and being slightly more careful with the constant implies the second bound.
\end{proof}

\corref{perimeter-window} sharpens the above lower bound to $2\sqrt{\pi k} -O(1/k^{1/6})$.

\begin{proposition}
    \proplab{lower-bonnesen}%
    We have
    \begin{math}
        \lOpt \geq \sqrt{4\pi(k - b/2 - 1) + (\pi^2/4) \WD^2}.
    \end{math}
\end{proposition}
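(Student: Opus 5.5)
The plan is to combine Pick's theorem with \thmrefY{bonnesen}{Bonnesen's inequality}, and to bridge the gap between Bonnesen's (possibly non-concentric) inradius/circumradius and the concentric sandwich width $\WD$ of \defref{epicenter}. Concretely, I will show $\WD \leq 2(R-r)$, where $r$ is the inradius and $R$ the circumradius of $\Popt$.

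\textbf{Step 1 (area via Pick).} Since $k = \ii + \bb$, \thmref{pick} gives
\[
\areaOpt = \ii + \bb/2 - 1 = k - \bb/2 - 1 .
\]

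\textbf{Step 2 (Bonnesen).} Applying \thmref{bonnesen} to $\Popt$ gives
\[
\lOpt^2 \geq 4\pi\areaOpt + \pi^2(R-r)^2 ,
\]
where $r$ is the inradius and $R$ the circumradius of $\Popt$. Let $c_1$ be the center of the incircle and $c_2$ the center of the circumcircle, and set $d = \dY{c_1}{c_2}$.

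\textbf{Step 3 (conversion to a common center).} This is the only step needing care. From
\[
\diskY{c_1}{r} \subseteq \Popt \subseteq \diskY{c_2}{R},
\]
nestedness of the two disks gives $d + r \leq R$, that is, $d \leq R - r$. I will then use $c_2$ as a candidate center for the sandwich; there are two cases.
\begin{compactitem}
\item If $d \leq r$: then $\diskY{c_2}{r-d} \subseteq \diskY{c_1}{r} \subseteq \Popt$ and $\Popt \subseteq \diskY{c_2}{R}$. So $(c_2, r-d, R)$ is a feasible sandwich, and its width is $R - r + d \leq 2(R-r)$.
\item If $d > r$: take inner radius $0$, so the sandwich $(c_2, 0, R)$ has width $R$. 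From $r < d \leq R - r$ we get $R > 2r$, hence $R \leq 2(R - r)$.
\end{compactitem}
In both cases, minimality of the epicenter sandwich gives $\WD \leq 2(R-r)$. Therefore $\pi^2(R-r)^2 \geq (\pi^2/4)\WD^2$.

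\textbf{Step 4 (conclusion).} Substituting Steps 1 and 3 into Step 2 and taking square roots yields
\[
\lOpt \geq \sqrt{4\pi(k - \bb/2 - 1) + (\pi^2/4)\WD^2},
\]
which is the claimed bound. The right-hand side is well defined: the radicand equals the right-hand side of the Bonnesen bound in Step 2 minus a nonnegative slack, and that right-hand side is at most $\lOpt^2$. The main obstacle is Step 3, namely that the incircle and circumcircle centers need not coincide. The containment $d \leq R - r$ handles this, at the cost of the factor $1/4$ in front of $\WD^2$.
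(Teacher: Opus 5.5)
Your proof is correct and follows the paper's route. Nestedness of the incircle and circumcircle gives $d\leq R-r$, this converts into a concentric sandwich of width at most $2(R-r)$, and Bonnesen plus Pick finish the argument.

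The only real difference is the choice of center. The paper centers the sandwich at the incenter $c_1$, keeps the full incircle as the inner disk, and enlarges the outer disk:
\[
\Popt\subseteq\diskY{c_2}{R}\subseteq\diskY{c_1}{R+d}\subseteq\diskY{c_1}{r+2(R-r)}.
\]
This gives $\WD\leq 2(R-r)$ in one line, with no case split.

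Your case $d>r$ has a small unstated step. The triple $(c_2,0,R)$ is a feasible sandwich only if $c_2\in\Popt$. This is true: the center of the minimum enclosing disk of a convex body lies in the body, since otherwise moving it slightly toward the nearest point of the body would strictly decrease every distance. You should state it, or simply use the incenter in that case as the paper does.
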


\begin{proof}
    Let $r_*$ be the inradius of $\Popt$ and let $R_*$ be its
    circumradius.  Choose an incenter $a$ and a circumcenter $b_*$, so
    that
    \[
        \diskY{a}{r_*}\subseteq\Popt\subseteq\diskY{b_*}{R_*}.
    \]
    Since $\diskY{a}{r_*}\subseteq\diskY{b_*}{R_*}$, we have
    $\dY{a}{b_*}\leq R_*-r_*$.  Therefore
    \[
        \Popt\subseteq\diskY{b_*}{R_*}
        \subseteq
        \diskY{a}{R_*+\dY{a}{b_*}}
        \subseteq
        \diskY{a}{r_*+2(R_*-r_*)}.
    \]
    Thus the minimum-width concentric disk sandwich of $\Popt$ has width
    $\WD\leq 2(R_*-r_*)$.  Applying
    \thmrefY{bonnesen}{Bonnesen's inequality} to $r_*,R_*$ gives
    \begin{math}
        \lOpt^2 - 4\pi \areaOpt \geq \pi^2 (R_*-r_*)^2
        \geq
        (\pi^2/4)\WD^2.
    \end{math}
    Rearranging, we have
    \begin{math}
        \lOpt \geq \sqrt{4\pi \areaOpt+(\pi^2/4)\WD^2},
    \end{math}
    and plugging the bound of \thmrefY{pick}{Pick's theorem},
    \begin{math}
        \areaOpt = i + b/2 - 1 = k - b / 2 - 1,
    \end{math}
    implies the bound.
\end{proof}

\begin{claim}
    \clmlab{sandwich-radius-rough}
    The radii of the minimum-width disk sandwich satisfy
    $\rin=O(\sqrt{k})$ and $\Rout=O(\sqrt{k})$.
\end{claim}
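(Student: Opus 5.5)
The plan is to bound the inner radius through area and the outer radius through the width, using the fact that the minimum-width sandwich is no worse than an explicit one.

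\emph{Inner radius.} Since $\diskY{\epi}{\rin}\subseteq\Popt$, comparing areas gives $\pi\rin^2\leq\areaOpt$. By \thmrefY{pick}{Pick's theorem}, $\areaOpt = k-b/2-1\leq k$, so $\rin\leq\sqrt{k/\pi}=O(\sqrt{k})$. Alternatively, perimeter monotonicity for nested convex bodies gives $2\pi\rin\leq\lOpt\leq 2\sqrt{\pi k}$ by \lemref{tight-upper}.

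\emph{Outer radius.} It suffices to show $\WD=\Rout-\rin=O(\sqrt{k})$; then $\Rout=\rin+\WD=O(\sqrt{k})$. Do not try to bound $\Rout$ through the epicenter's location, since $\epi$ is defined only by minimizing width. Instead, compare with an explicit feasible sandwich. Take any point $q\in\Popt$, radius $0$ for the inner disk, and $\operatorname{diam}(\Popt)$ for the outer disk. Then $\diskY{q}{0}\subseteq\Popt\subseteq\diskY{q}{\operatorname{diam}(\Popt)}$ is a valid sandwich of width $\operatorname{diam}(\Popt)$. By minimality, $\WD\leq\operatorname{diam}(\Popt)\leq\lOpt/2\leq\sqrt{\pi k}$, again using \lemref{tight-upper}.

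\emph{Alternative for the width.} One could instead use \propref{lower-bonnesen} together with $b\leq\lOpt\leq 2\sqrt{\pi k}$ (\propref{lower-trivial}). This gives $(\pi^2/4)\WD^2\leq\lOpt^2-4\pi(k-b/2-1)\leq 4\pi k-4\pi k+O(\sqrt{k})$, hence $\WD=O(k^{1/4})$. That is stronger, but the crude diameter argument is enough here.

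\emph{Where care is needed.} The only subtle point is that the inner disk may be degenerate, or that the minimizer might not exist. Existence follows by compactness: we may restrict the center to $\Popt$ and the radii to $[0,\operatorname{diam}(\Popt)]$. A degenerate inner disk is allowed by the definition. With these in place, $\rin\leq\sqrt{k/\pi}$ and $\Rout\leq\sqrt{k/\pi}+\sqrt{\pi k}$, which proves the claim.
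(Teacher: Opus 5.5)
Your proof is correct and follows the paper's argument: you bound $\rin$ by $\pi\rin^2\le\areaOpt\le k$ using Pick's theorem, and you bound $\Rout$ using $\operatorname{diam}(\Popt)\le\lOpt/2\le\sqrt{\pi k}$. The only difference is in that last step: the paper bounds $\Rout$ directly as a distance from $\epi\in\Popt$, while you bound $\WD$ by the competitor sandwich $\{q\}\subseteq\Popt\subseteq\diskY{q}{\operatorname{diam}(\Popt)}$ and then add $\rin$. Your version avoids the paper's implicit use of the fact that a minimizer has $\Rout=\max_{x\in\Popt}\lenX{x-\epi}$ (otherwise shrinking $\Rout$ would reduce the width), though that fact holds, so your warning against the paper's route is overcautious.
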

\begin{proof}
    Since $\diskY{\epi}{\rin}\subseteq\Popt$, we have
    $\pi\rin^2\leq\areaOpt\leq k$ by Pick's theorem, and hence
    $\rin=O(\sqrt{k})$.
    Also $\epi\in\Popt$.  For any convex body, its diameter is at most
    half its perimeter, because two boundary points at diameter distance
    split the boundary into two curves, each of length at least the
    diameter.  Thus
    \[
        \operatorname{diam}(\Popt)\leq \lOpt/2=O(\sqrt{k})
    \]
    by \lemref{tight-upper}.  Since every point of $\Popt$ is within
    $\operatorname{diam}(\Popt)$ of $\epi$, we have $\Rout=O(\sqrt{k})$.
\end{proof}

\section{Bounding optimal polygon characteristics}
\seclab{gap}

Our algorithm and analysis rely on bounding certain quantities that characterize the optimal polygon. We use the following edge-vector terminology throughout this section:
a vector $u=(x,y)\in\ZZ^2$ is \emphi{primitive} if
$\gcdY{x}{y}=1$;\footnote{See \tabref{notations} for notations used, in particular $\gcdY{x}{y}=\gcd(|x|, |y|)$} if an oriented edge $e$ of $\Popt$ has direction
vector $e=m u$, where $u$ is primitive and $m\in\ZZ_{>0}$, then we say
that $u$ \emphi{generates} $e$ and call $m$ the \emphi{multiplicity} of
$u$ on $e$.

The main quantities are:
\begin{compactenumI}
    \item The width of the minimum-width concentric disk sandwich
    $\diskY{\epi}{\rin}\subseteq\Popt\subseteq\diskY{\epi}{\Rout}$.
    This restricts the locations the \DP has to consider for vertices of
    $\Popt$.

    \item The longest edge and longest primitive edge of $\Popt$. This determines how far the \DP has to ``look'' in updating states.
\end{compactenumI}
The bounds established in this section are summarized in \tblref{summary}.

\begin{table}[t]
    \centering
    \definecolor{tableGray}{RGB}{232, 234, 237} \centering
    {%
       \begin{NiceTabular}{ll l}
           \CodeBefore \rowcolors{2}{tableGray}{white} \Body \toprule \textbf{Quantity} & \textbf{Best Bound} & \textbf{Source}
           \\
           \midrule Boundary grid points & $b = \Theta(k^{1/3})$ & \thmref{improved-gap}
           \\
           Annulus width & $\WD = O(k^{1/6})$ & \thmref{improved-gap}
           \\
           Number of vertices & $\nv = \Theta(k^{1/3})$ & \Block[l]{}{
              \thmref{Andrews}, \\
              \corref{nv_bound} }%
           \\
           Perimeter & $2\sqrt{\pi k} - O(k^{-1/6}) \leq \lOpt \leq 2\sqrt{\pi k}$ & \corref{perimeter-window}
           \\
           Primitive edge length & $\lenX{u} = O(k^{1/4})$ & \lemref{p_d_bound}
           \\
           Longest edge & $\emax = O(k^{1/4})$ & \corref{edge-improved}
           \\
        Max. turn angle of $\Popt$ & $\alpha=O(1/k^{1/6})$ & \lemref{turn_angle_2}
           \\
           \bottomrule
       \end{NiceTabular}
    }
    \caption{Summary of bounds established.}
    \tbllab{summary}
\end{table}
The maximum turn angle of $\Popt$ is $O(1/k^{1/6})$.

\subsection{Initial bounds}

\begin{observation}
    \obslab{b_ub_silly}%
    By \lemref{tight-upper}, $b \leq \lOpt \leq 2 \sqrt{\pi k}$, as the distance between two boundary vertices is at least $1$.
\end{observation}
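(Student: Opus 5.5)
The statement is the chain $b \leq \lOpt \leq 2\sqrt{\pi k}$. I will prove the two inequalities separately and then combine them.

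\emph{Right inequality.} The bound $\lOpt \leq 2\sqrt{\pi k}$ is exactly \lemref{tight-upper}, so I simply invoke it. Recall the argument: average the number of grid points in a disk of radius $\sqrt{k/\pi}$ over translates of its center, which shows that some disk contains at least $k$ grid points. Then take the convex hull of those points and trim it down to exactly $k$ points via \obsref{trim-to-k}.

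\emph{Left inequality.} For $b \leq \lOpt$, list the $b$ boundary grid points of $\Popt$ in cyclic counterclockwise order along $\partial\Popt$; this list includes the vertices and any grid points lying in the interiors of edges. These points cut $\partial \Popt$ into $b$ arcs, one between each pair of cyclically consecutive points.
\begin{compactitem}
    \item Since $\Popt$ is convex with vertices on the grid (as noted in the introduction), each arc is a straight segment contained in a single edge.
    \item Each segment joins two distinct points of $\ZZ^2$, so its length is at least $1$.
    \item Summing over the $b$ arcs gives $\lOpt = \perimX{\Popt} \geq b$. This is \propref{lower-trivial}.
\end{compactitem}

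\emph{Degenerate cases.} If $b \leq 1$, the polygon is a single point (the case $k=1$), and the inequality $b \leq \lOpt$ is trivial or vacuous. If $\Popt$ degenerates to a segment, its perimeter counts the segment twice, so the bound still holds.

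Chaining the two inequalities gives $b \leq \lOpt \leq 2\sqrt{\pi k}$, which is the claim. There is no real obstacle here. The only point needing care is that consecutive boundary grid points are joined by straight segments of length at least $1$, and this follows from convexity.
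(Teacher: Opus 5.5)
Your argument is correct for $k\geq 2$ and is the paper's argument: the upper bound is \lemref{tight-upper}, and $b\leq\lOpt$ comes from cutting $\partial\Popt$ at its $b$ boundary grid points into straight pieces of length at least $1$ (\propref{lower-trivial}). One correction to your degenerate-case remark: for $k=1$ the inequality is not ``trivial or vacuous''. There $\Popt$ is a single grid point with $b=1$ and $\lOpt=0$, so $b\leq\lOpt$ actually fails, and the statement has to be read for $k\geq 2$, as the paper implicitly does.
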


\begin{proposition}
    \proplab{gap-from-b}
    \begin{math}
        \NTX{\WD} \leq \sqrt{\frac{8\bb + 16}{\pi}}
    \end{math}
    and thus $\WD = O(\sqrt{\bb})$ and $\WD = O(k^{1/4})$.
\end{proposition}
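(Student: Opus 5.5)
We combine the lower bound of \propref{lower-bonnesen} with the upper bound of \lemref{tight-upper}. Together they give
\[
    4\pi(k - b/2 - 1) + (\pi^2/4)\WD^2 \;\leq\; \lOpt^2 \;\leq\; 4\pi k .
\]
Cancelling the $4\pi k$ terms leaves $(\pi^2/4)\WD^2 \leq 4\pi(b/2+1) = 2\pi b + 4\pi$. Solving for the width gives
\[
    \WD^2 \leq \frac{8b+16}{\pi},
\]
which is exactly the stated inequality after taking square roots.

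For the asymptotic consequences, note that $b\geq 3$ for any nondegenerate polygon. So $8b+16 = O(b)$, and hence $\WD = O(\sqrt{b})$. Next, \obsref{b_ub_silly} gives $b \leq 2\sqrt{\pi k} = O(\sqrt{k})$. Substituting, $\WD = O(k^{1/4})$.

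There is no real obstacle here; the only point needing care is the direction of the inequalities. Squaring the bound of \propref{lower-bonnesen} is legitimate because both sides are nonnegative for $k$ large: $k - b/2 - 1 > 0$ since $b = O(\sqrt{k})$. Squaring the bound of \lemref{tight-upper} is likewise legitimate since $\lOpt\geq 0$.
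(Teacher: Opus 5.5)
Your proposal is correct and is the paper's proof: combine $\lOpt^2 \leq 4\pi k$ from \lemref{tight-upper} with the squared form of \propref{lower-bonnesen}, cancel the $4\pi k$ terms to get $\WD^2 \leq (8b+16)/\pi$, and then use \obsref{b_ub_silly} ($b = O(\sqrt{k})$) for the $O(k^{1/4})$ bound. The ``for $k$ large'' caveat when squaring is not needed, because the proof of \propref{lower-bonnesen} already gives the squared inequality directly and $k-b/2-1=\areaOpt\geq 0$.
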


\begin{proof}
    By \lemref{tight-upper} and \propref{lower-bonnesen}, we have
    \begin{equation*}
        4\pi k
        \geq
        \lOpt^2
        \geq
        4\pi k - 2\pi b - 4\pi + (\pi^2/4) \WD^2
        \implies
        \frac{8\pi \bb + 16\pi}{\pi^2}
        \geq
        \WD^2.
    \end{equation*}
    Thus $\WD^2=O(b+1)$.  By \obsref{b_ub_silly},
    $b=O(\sqrt{k})$, and hence $\WD=O(k^{1/4})$.
\end{proof}

\begin{claim}
    $\Rout \geq \sqrt{k/\pi} - O(1)$ and $\rin \geq \sqrt{k/\pi} - O(k^{1/4}) = \Omega(\sqrt{k})$.
\end{claim}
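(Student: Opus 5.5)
The plan is to compare areas: first the outer disk against the polygon's area, then the inner disk against it, using the width bound $\WD = O(k^{1/4})$ from \propref{gap-from-b}. Throughout, $\areaOpt = k - \bb/2 - 1 \geq k - O(\sqrt{k})$, which follows from Pick's theorem together with \obsref{b_ub_silly}.

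\textbf{Outer radius.} Since $\Popt \subseteq \diskY{\epi}{\Rout}$, we have $\pi \Rout^2 \geq \areaOpt \geq k - \sqrt{\pi k} - 1$. Taking square roots and applying \factref{approx} in the form $\sqrt{1-x}\geq 1-x$ gives
\begin{equation*}
    \Rout \geq \sqrt{k/\pi}\,\sqrt{1 - O(k^{-1/2})} \geq \sqrt{k/\pi} - O(1).
\end{equation*}
The $O(\sqrt{k})$ loss in area costs only $O(\sqrt{k})/\sqrt{k} = O(1)$ in radius.

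\textbf{Inner radius.} Write $\rin = \Rout - \WD$ and use $\WD = O(k^{1/4})$ from \propref{gap-from-b}. This yields
\begin{equation*}
    \rin \geq \sqrt{k/\pi} - O(1) - O(k^{1/4}) = \sqrt{k/\pi} - O(k^{1/4}),
\end{equation*}
and the right-hand side is $\Omega(\sqrt{k})$ for $k$ large. Small $k$ is absorbed into the constants, using $\rin \geq 0$.

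\textbf{Where care is needed.} No step here is a real obstacle. The one delicate point is the inner-radius step: it uses the sandwich width $\WD$ of the minimum-width concentric sandwich with the same center $\epi$. This is exactly the quantity \propref{gap-from-b} bounds, since \propref{lower-bonnesen} already handled the conversion from the separate incenter and circumcenter to a common center. So the subtraction $\rin = \Rout - \WD$ is legitimate as written.
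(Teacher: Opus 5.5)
Your proof is correct. For the outer radius it takes a slightly different route from the paper; the inner-radius step is the same. The paper argues through perimeters. Since $\Popt\subseteq\diskY{\epi}{\Rout}$, monotonicity of perimeter for nested convex sets gives $2\pi\Rout\geq\lOpt$. Then \clmref{lower_1}, which says $\lOpt\geq 2\sqrt{\pi k}-4$, yields $\Rout\geq\sqrt{k/\pi}-O(1)$, and \propref{gap-from-b} finishes the inner radius exactly as you do.

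You instead compare areas directly: $\pi\Rout^2\geq\areaOpt\geq k-\sqrt{\pi k}-1$. Note that \clmref{lower_1} is itself derived from Pick's theorem plus the isoperimetric inequality $\lOpt\geq 2\sqrt{\pi\areaOpt}$. Moreover, $2\pi\Rout\geq 2\sqrt{\pi\areaOpt}$ is equivalent to $\pi\Rout^2\geq\areaOpt$. So your argument is the same underlying estimate with the detour through perimeter removed, which drops both perimeter monotonicity and the isoperimetric inequality.

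The paper's version is a one-liner because it reuses an already established perimeter bound. Yours is more elementary and self-contained. The resulting additive constants differ only immaterially: about $1$ for you versus about $2/\pi$ from the paper's $-4$.
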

\begin{proof}
    We have $2\pi \Rout \geq \lOpt $, and by \clmref{lower_1}, we have that
    \begin{math}
        2\pi \Rout \geq 2 \sqrt{\pi k} -O(1),
    \end{math}
    and thus
    \begin{math}
        \Rout \geq \sqrt{ k/ \pi} -O(1).
    \end{math}
    The bound now follows by \propref{gap-from-b}.
\end{proof}

\begin{observation}
    Clearly, $\areaOpt = \Theta(k)$. Intuitively, we expect the boundary of $\Popt$ to have few grid points in the middle of its edges, as $\Popt$ is relatively round. Thus, the \Barany--Pach theorem for convex grid polygons \cite{bp-nclp-92} suggests that the number of vertices of $\Popt$ should be $\Theta(\areaOpt^{1/3}) = \Theta( k^{1/3} )$. Thus, by \propref{gap-from-b}, $\WD$ should be $O(k^{1/6})$. We prove these are indeed the correct guesses.
\end{observation}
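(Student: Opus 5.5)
The only rigorous content of the observation is the claim $\areaOpt=\Theta(k)$. The remaining sentences are heuristic, and the paper itself defers their proof to \thmref{improved-gap} and \corref{nv_bound}. So the plan is to sandwich $\areaOpt$ between two explicit linear functions of $k$, using \thmrefY{pick}{Pick's theorem} together with the crude boundary bound of \obsref{b_ub_silly}.

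\emph{Upper bound.} $\Popt$ is a convex, hence simple, grid polygon, so Pick's theorem applies. Since $k=\ii+\bb$, it gives
\begin{equation*}
    \areaOpt = \ii + \bb/2 - 1 = k - \bb/2 - 1 \leq k .
\end{equation*}

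\emph{Lower bound.} By \obsref{b_ub_silly}, $\bb\leq 2\sqrt{\pi k}$. Substituting into the identity above gives
\begin{equation*}
    \areaOpt \geq k - \sqrt{\pi k} - 1 ,
\end{equation*}
which is at least $k/2$ once $k$ exceeds an absolute constant. For the finitely many smaller $k\geq 3$, $\Popt$ is non-degenerate, so its area is positive; this is a bounded number of cases, each with $\areaOpt\geq 1/2$, and they are absorbed into the constant. For $k\leq 2$ the polygon is a point or a segment and the statement is vacuous. Together the two bounds give $\areaOpt=\Theta(k)$.

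There is no real obstacle here. The only care needed is to use the boundary estimate $\bb=O(\sqrt{k})$ rather than anything sharper, since the sharper bound $\bb=\Theta(k^{1/3})$ is exactly what the observation says is still to be proved. With $\areaOpt=\Theta(k)$ in hand, the \Barany--Pach bound (\thmref{Andrews}) gives the heuristic prediction $\nv=O(\areaOpt^{1/3})=O(k^{1/3})$. If one also expects $\bb=\Theta(\nv)$, then \propref{gap-from-b} would predict $\WD=O(k^{1/6})$, as stated.
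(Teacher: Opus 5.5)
Your proof is correct and uses the paper's own reasoning: Pick's theorem gives $\areaOpt = k - \bb/2 - 1 \leq k$, and the crude bound $\bb \leq 2\sqrt{\pi k}$ of \obsref{b_ub_silly} gives $\areaOpt \geq k/2$ for large $k$, exactly as in the proof of \lemref{vertex-lb}. You are also right that the remaining claims are heuristics, proved later in \thmref{improved-gap} and \corref{nv_bound}; the small-$k$ case analysis, including the unproved non-degeneracy assertion, is unnecessary because $\Theta(k)$ is an asymptotic statement.
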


\begin{claim}
    \clmlab{edge-structure}%
    Using the notations of \tabref{notations}, we have $\bb = O(k^{3/8}\, \WD^{1/4})$.
\end{claim}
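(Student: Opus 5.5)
The plan is to write $\bb$ as a sum over the edges of $\Popt$ and bound that sum by splitting edges into those with a short primitive generator and those with a long one. Each oriented edge $e$ of $\Popt$ has the form $e = m_e u_e$, with $u_e$ primitive and $m_e \in \ZZ_{>0}$ its multiplicity. The grid points on $\partial\Popt$ are exactly the vertices together with the interior grid points of the edges, and edge $e$ contributes $m_e$ of them (counting, say, its start vertex but not its end vertex). Hence $\bb = \sum_e m_e$. Since $\Popt$ is convex, its edge directions are pairwise distinct, so the map $e \mapsto u_e$ is injective into the set of primitive vectors. This injectivity is the key structural fact.

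We need two bounds on each $m_e$. Write $\ell_e = \lenX{u_e} \geq 1$. First, by \propref{edge-bound} together with \clmref{sandwich-radius-rough}, every edge satisfies $\lenX{e} = m_e \ell_e \leq E$, where $E = O(\sqrt{\WD}\, k^{1/4})$. Hence $m_e \leq E/\ell_e$. Second, summing the edge lengths gives $\sum_e m_e \ell_e = \lOpt \leq 2\sqrt{\pi k}$ by \lemref{tight-upper}.

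Fix a threshold $t \geq 1$, to be chosen later, and split the edges into two groups.

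\emph{Short generators} ($\ell_e \leq t$). Here we use $m_e \leq E/\ell_e$. The generators are distinct primitive vectors, and by \factref{primitive_number} there are $O(j)$ primitive vectors with length in $[j, j+1)$. Therefore
\[
\sum_{\ell_e \leq t} \frac{1}{\ell_e} = O\Bigl(\sum_{j=1}^{\lceil t\rceil} j \cdot \tfrac{1}{j}\Bigr) = O(t),
\]
and this group contributes $O(tE)$ to $\bb$.

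\emph{Long generators} ($\ell_e > t$). Here $m_e < m_e \ell_e / t$, so this group contributes at most $\lOpt / t = O(\sqrt{k}/t)$.

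Altogether $\bb = O\bigl(tE + \sqrt{k}/t\bigr)$. We choose $t = (\sqrt{k}/E)^{1/2}$ to balance the two terms. This gives
\[
\bb = O\bigl(\sqrt{\sqrt{k}\,E}\bigr) = O\bigl(k^{1/4} \cdot \WD^{1/4} k^{1/8}\bigr) = O\bigl(k^{3/8}\,\WD^{1/4}\bigr),
\]
which is the claim.

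We must also check that $t \geq 1$. Since $\WD = O(k^{1/4})$ by \propref{gap-from-b}, we have $E = O(k^{3/8}) \leq \sqrt{k}$ for large $k$, so $t \geq 1$ holds. If instead $t$ were below $1$, only the long-generator term would be present, and the bound would still follow. For $\WD$ very small, where \propref{edge-bound} gives an even smaller $E$, the bound also remains valid.

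The main point requiring care is the short-generator count. It uses the injectivity of edge directions, which convexity provides, together with the $O(j)$ density of primitive vectors in each annulus $[j, j+1)$. Everything else is bookkeeping with earlier bounds.
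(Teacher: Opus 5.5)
Your proof is correct and is essentially the paper's argument. Both write $b=\sum_e m_e$, cap every edge length by $E=O(\sqrt{\WD}\,k^{1/4})$ via \propref{edge-bound}, use that distinct edges of the convex polygon have distinct primitive generators, and conclude $b=O(\sqrt{E\,L})$ with $L=O(\sqrt{k})$; the paper bounds $b$ by the greedy fill assigning length $E$ to the $J=\lceil L/E\rceil$ shortest primitive directions, and your threshold split at $t\approx\sqrt{L/E}$ is an explicit bound on that same greedy optimum.
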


\begin{proof}
    We may assume that $\Popt$ has no consecutive collinear edges, since deleting any vertex lying between two collinear edges does not change the polygon, its perimeter, or its boundary count $b$.

    Let $p_1, \ldots, p_s$ be the vertices of $\Popt$, as one traverses the boundary of $\Popt$ in counterclockwise orientation. Consider the vectors forming the edges $E = \{ e_1,\ldots,e_s \}$ of $\Popt$, where $e_i = p_i - p_{i-1}$ (with $p_0 = p_s$). Since $\Popt$ is a grid polygon, each edge has the form
    \begin{math}
        e_t = m_tu_t,
    \end{math}
    where $u_t =(x_t,y_t) \in \ZZ^2$ is primitive (i.e., $\gcdY{ x_t}{y_t} = 1$), and $m_t> 0$ is a positive integer. Thus, $\lenX{e_t}=m_td_t$, where $d_t=\lenX{u_t}$, and
    \begin{math}
        \lOpt = \sum_{t=1}^s m_td_t.
    \end{math}
    Also, because a segment with primitive direction $u_t$ and multiplier $m_t$ contains exactly $m_t+1$ grid points, summing over edges gives
    \begin{math}
        b=\sum_{t=1}^s m_t.
    \end{math}
    Next, let $u_1,u_2,\dots,$ be an enumeration of all primitive grid vectors in nondecreasing order of norm (using lexicographical ordering to order equal length vector).  The number of primitive grid vectors of length at most $R$ is $\Theta(R^2)$ \cite{hw-tn-65}. Thus, we have
    \begin{equation*}
        \rho_j
        =
        \lenX{ u_j }
        =
        \Theta(\sqrt j).
    \end{equation*}
    For any $j$, let $\Lambda_j$ be the length of the vector in $E$, having the direction $u_j$. If there is no such vector, then $\Lambda_j = 0$.  Hence
    \begin{equation*}
        \bb
        =
        \sum_{j}\frac{\Lambda_j}{\rho_j}
        \qquad\text{and}\qquad%
        \lOpt=\sum_{j\ge1}\Lambda_j.
    \end{equation*}
    By \propref{edge-bound}, every edge length is at most
    \begin{math}
        \emax = O(\sqrt{\WD} k^{1/4}),
    \end{math}
    and thus for all $j$, we have
    \begin{math}
        0\le \Lambda_j\le \emax.
    \end{math}
    Since $1/\rho_j$ is nonincreasing in $j$, to maximize $b$, we should use the shortest primitive directions first, using maximum edge length $\emax$.  To this end, let $J = \ceil{ \lOpt / \emax}$, set $\Lambda_i = \emax$, for $i=1,\ldots J-1$, and $\Lambda_J = \lOpt - \sum_{i=1}^{J-1} \Lambda_i $.  As $\lOpt=O(\sqrt{k})$, this implies that
    \begin{align*}
        \bb
        &\leq
        \emax \sum_{j=1}^J \frac1{\rho_j}
        =
        O\pth{ \Bigl. \emax \smash{\sum_{j=1}^J \frac{1}{j^{1/2}} }    }
        =
        O( \emax\sqrt{J} \,) \\
        &=
        O( \emax\sqrt{\lOpt / \emax} \,)
        =
        O( \sqrt{ \emax \lOpt } \, )
        =
        O(  \WD^{1/4} k^{1/8} \cdot {k}^{1/4} ).
    \end{align*}
\end{proof}

\begin{lemma}
    \lemlab{main-gap}%
    Using the notations of \tabref{notations}, we have $\WD = O(k^{3/14})$.
\end{lemma}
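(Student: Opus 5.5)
The plan is to bootstrap the two bounds already proved. \propref{gap-from-b} controls the annulus width by the boundary count, and \clmref{edge-structure} controls the boundary count by the annulus width.

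From \propref{gap-from-b} we have $\WD \leq \sqrt{(8\bb+16)/\pi}$, so $\WD^2 = O(\bb + 1)$. From \clmref{edge-structure} we have $\bb = O(k^{3/8}\,\WD^{1/4})$. Substituting the second into the first gives
\begin{equation*}
    \WD^2 = O\pth{k^{3/8}\,\WD^{1/4} + 1}.
\end{equation*}
We then split into two cases.

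\textbf{Case $\WD \leq 1$.} The claimed bound $\WD = O(k^{3/14})$ holds trivially.

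\textbf{Case $\WD > 1$.} The additive constant is absorbed, so $\WD^{2 - 1/4} = \WD^{7/4} = O(k^{3/8})$. Hence $\WD = O(k^{(3/8)\cdot(4/7)}) = O(k^{3/14})$, as claimed.

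The only subtle point is that \clmref{edge-structure} was proved using \propref{edge-bound} with the \emph{same} minimum-width sandwich. There the bound $\emax = O(\sqrt{\WD}\,k^{1/4})$ uses $\Rout = O(\sqrt{k})$ from \clmref{sandwich-radius-rough}, so both inequalities refer to the same quantity $\WD$ and the substitution is legitimate. I would state this explicitly.

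Nothing else is needed; the dividing step is routine once $\WD$ is known to be positive and larger than a constant. If one wants explicit constants, they can be tracked through the two inequalities, but the statement is asymptotic, so I would not do so.
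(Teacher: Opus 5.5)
Your proposal is correct and matches the paper's proof: both substitute $\bb = O(k^{3/8}\WD^{1/4})$ from \clmref{edge-structure} into $\WD = O(\sqrt{\bb})$ from \propref{gap-from-b} and solve $\WD^{7/4} = O(k^{3/8})$. Your explicit case split to absorb the additive constant (the case $\WD \leq 1$) is a bit of care that the paper's one-line argument leaves implicit.
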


\begin{proof}
    Combining \propref{gap-from-b} and \clmref{edge-structure}, we have
    \begin{equation*}
        \WD
        =
        O\pth{ \smash{ \sqrt{b}} \bigr.\ts }
        =
        O\pth{ \! \sqrt{k^{3/8}\, \WD^{1/4}}\ts },
    \end{equation*}
    and thus $\WD = O(k^{3/14})$.
\end{proof}

\begin{observation}
    We have
    \[
        b = O(k^{3/8}\WD^{1/4}) = O( k^{3/8+3/56} ) = O(k^{3/7}).
    \]
\end{observation}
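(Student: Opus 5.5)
The statement is the unnumbered observation: $b = O(k^{3/8}\WD^{1/4}) = O(k^{3/7})$. The plan is to substitute the bound of \lemref{main-gap} into the bound of \clmref{edge-structure}, and then check the exponent arithmetic.

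\clmref{edge-structure} gives $b = O(k^{3/8}\,\WD^{1/4})$, which is the first equality. Its proof uses only the edge-length bound of \propref{edge-bound}, the perimeter bound $\lOpt = O(\sqrt{k})$ from \lemref{tight-upper}, and the count of primitive vectors. None of these depend on the final value of $\WD$, so there is no circularity.

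Next, \lemref{main-gap} gives $\WD = O(k^{3/14})$. Since $x \mapsto x^{1/4}$ is monotone, $\WD^{1/4} = O(k^{3/56})$. Plugging this in yields
\begin{equation*}
    b = O\pth{k^{3/8}\cdot k^{3/56}} = O\pth{k^{21/56 + 3/56}} = O\pth{k^{24/56}} = O\pth{k^{3/7}}.
\end{equation*}
This is consistent with the fixed point from \lemref{main-gap}: $\WD = O(\sqrt{b})$ and $b = O(k^{3/7})$ together give $\WD = O(k^{3/14})$ again.

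There is no real obstacle. The only point needing care is that the implied constants in the two $O(\cdot)$ bounds are absolute, independent of $k$, so raising to a fixed power and multiplying preserves the $O$-notation. One should also note that the bound is meaningful only for $k$ large, since small $k$ is absorbed into the constant.
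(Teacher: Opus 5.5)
Your proposal is correct and is exactly the argument the paper intends: substitute $\WD = O(k^{3/14})$ from \lemref{main-gap} into $b = O(k^{3/8}\WD^{1/4})$ from \clmref{edge-structure}, and use $3/8 + 3/56 = 24/56 = 3/7$. Your remarks on non-circularity and the fixed-point consistency check are accurate, though the paper does not need them.
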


\begin{corollary}
    \corlab{edge}%
    The longest edge of $\Popt$ has length at most $\emax= O( k^{5/14})$.
\end{corollary}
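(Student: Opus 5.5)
This is a direct substitution: the proof should plug the gap bound of \lemref{main-gap} into the edge-length bound of \propref{edge-bound}. No new geometric input is needed.

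\propref{edge-bound} applies to $\Popt$ with its minimum-width disk sandwich $\diskY{\epi}{\rin}\subseteq\Popt\subseteq\diskY{\epi}{\Rout}$ from \defref{epicenter}. It gives that every edge $e$ of $\Popt$ satisfies
\[
    \lenX{e} \leq 2\sqrt{2\WD\Rout}.
\]
By \clmref{sandwich-radius-rough}, $\Rout = O(\sqrt{k})$, so $\lenX{e} = O(\sqrt{\WD}\,k^{1/4})$; this is the asymptotic form already stated in \propref{edge-bound}.

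Next, \lemref{main-gap} gives $\WD = O(k^{3/14})$, hence $\sqrt{\WD} = O(k^{3/28})$. Substituting,
\[
    \lenX{e} = O\pth{k^{3/28}\cdot k^{7/28}} = O\pth{k^{10/28}} = O\pth{k^{5/14}},
\]
for every edge, and in particular for the longest one. So we may take $\emax = O(k^{5/14})$.

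The only point needing care is that \propref{edge-bound} measures $\WD$ and $\Rout$ with respect to the same common-center sandwich as \lemref{main-gap}. That sandwich is the minimum-width concentric one of \defref{epicenter}; the conversion from Bonnesen's possibly different centers was already handled in \propref{lower-bonnesen}. I therefore expect no real obstacle.
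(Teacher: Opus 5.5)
Your proposal is correct and is exactly the paper's argument: substitute $\WD = O(k^{3/14})$ from \lemref{main-gap} into the bound $\lenX{e} = O(\sqrt{\WD}\,k^{1/4})$ of \propref{edge-bound}, which yields $O(k^{3/28+7/28}) = O(k^{5/14})$. Your check that both bounds refer to the same concentric sandwich is a reasonable extra precaution, and it holds.
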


\begin{proof}
    Consider the longest edge $e$ of $\Popt$.  By \propref{edge-bound}, we have
    \begin{equation*}
        \lenX{e}
        =%
        O(\sqrt{ \WD} {k}^{1/4} )
        =%
        O( k^{5/14}).
    \end{equation*}
\end{proof}

\begin{corollary}
    $\Popt$ has $\Omega(k^{1/7})$ vertices.
\end{corollary}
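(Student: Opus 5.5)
The lower bound on the number of vertices will come from the perimeter together with the bound on the longest edge. Every edge of $\Popt$ has length at most $\emax = O(k^{5/14})$ by \corref{edge}. The perimeter is the sum of the edge lengths over the $\nv$ edges, so
\begin{equation*}
    \lOpt
    =
    \sum_{t=1}^{\nv} \lenX{e_t}
    \leq
    \nv \cdot \emax .
\end{equation*}
This gives $\nv \geq \lOpt / \emax$.

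Next I bound the perimeter from below. By \clmref{lower_1}, $\lOpt \geq 2\sqrt{\pi k} - 4 = \Omega(\sqrt{k})$ for $k$ sufficiently large. For the finitely many small values of $k$ the statement is trivial, since any polygon has at least one vertex (at least three once it has positive area). Plugging both bounds in yields
\begin{equation*}
    \nv
    \geq
    \frac{\Omega(\sqrt{k})}{O(k^{5/14})}
    =
    \Omega\pth{k^{1/2 - 5/14}}
    =
    \Omega\pth{k^{1/7}} .
\end{equation*}

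One subtlety is whether collinear consecutive edges should count as separate vertices. This does not matter here: merging collinear edges produces genuine edges that are still edges of $\Popt$, so they are still bounded by \corref{edge}, and the count only decreases. The bound therefore holds for the true vertex count.

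There is no real obstacle in this argument. The only step requiring care is confirming that \corref{edge} bounds every edge of $\Popt$, not just a primitive segment. \propref{edge-bound} does this, since it bounds any edge whose supporting line avoids the inner disk and whose endpoints lie in the outer disk.
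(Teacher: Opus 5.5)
Your proposal is correct and matches the paper's proof: bound $\nv \geq \lOpt/\emax$ using the $O(k^{5/14})$ longest-edge bound of \corref{edge}, together with $\lOpt = \Omega(\sqrt{k})$. You are simply more explicit than the paper, which leaves the perimeter lower bound implicit; you cite \clmref{lower_1} for it and also handle the collinear-edge bookkeeping.
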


\begin{proof}
    By \corref{edge}, we have that number of vertices is at least $\lOpt / \emax = \Omega(\sqrt{k} / k^{5/14}) = \Omega(k^{1/7})$.
\end{proof}

\subsection{Improved bounds via an exchange argument}

The previous bounds are a good start, but can be significantly improved using a careful exchange argument.

The following is the \emph{isoperimetric inequality for polygons}.
\begin{theoremnp}\hcite{ftt-apiip-55} %
    \thmlab{i_p_polygons}%
    Let $P$ be a polygon with $\nv$ sides with perimeter $\perimC$ and area $\areaC$.  We have that
    \begin{math}
        \perimC^2 - 4 \areaC \nv \tan \frac{\pi}{\nv} \geq 0,
    \end{math}
    with equality only if $P$ is a regular $\nv$-gon.
\end{theoremnp}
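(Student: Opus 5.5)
Since this is a classical result cited from \cite{ftt-apiip-55}, I would give the standard argument via Lhuilier's inequality together with a Jensen step. The plan has three parts: reduce to the convex case, prove Lhuilier's inequality by Minkowski's mixed-area inequality, and then apply convexity of $\tan$ to the exterior angles.

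\emph{Reduction to convex polygons.} Replace $P$ by its convex hull $\CHX{P}$. Its perimeter does not increase and its area does not decrease. Its number of vertices $\nv'$ satisfies $\nv' \leq \nv$, and I would merge collinear edges so that every exterior angle is strictly positive. The function $f(x)=x\tan(\pi/x)$ is decreasing for $x\geq 3$; one checks that its derivative $\tan(\pi/x) - (\pi/x)\sec^2(\pi/x)$ is negative, since $\tan t < t\sec^2 t$ on $(0,\pi/2)$. Therefore the inequality for $\CHX{P}$ with $\nv'$ implies it for $P$ with $\nv$. If $P$ is not convex, then the area strictly increases, so equality cannot hold.

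\emph{Lhuilier's inequality.} Let $P$ be convex with edges of lengths $\ell_1,\ldots,\ell_\nv$, outward unit normals $n_1,\ldots,n_\nv$, and exterior angles $\theta_1,\ldots,\theta_\nv>0$ with $\sum\theta_i = 2\pi$. Let $Q$ be the polygon with the same outward normals circumscribed about the unit disk. Its support values are $h_Q(n_i)=1$, and its area is $\areaX{Q} = T := \sum_i \tan(\theta_i/2)$. The mixed area is
\[
  V(P,Q) = \tfrac12 \sum_i h_Q(n_i)\,\ell_i = \perimC/2 .
\]
Minkowski's inequality $V(P,Q)^2 \geq \areaX{P}\,\areaX{Q}$ then gives $\perimC^2 \geq 4\areaC\, T$. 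Equality in Minkowski's inequality forces $P$ to be homothetic to $Q$, that is, circumscribed about a circle.

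\emph{Jensen step and equality.} The function $\tan$ is strictly convex on $(0,\pi/2)$, and each $\theta_i/2$ lies in this interval. Hence
\[
  T \;\geq\; \nv \tan\Bigl(\tfrac{1}{\nv}\sum_i \theta_i/2\Bigr) \;=\; \nv\tan(\pi/\nv),
\]
with equality only when all exterior angles are equal. Combining the two equality conditions, $P$ is tangential with all angles equal. The tangent lengths from consecutive vertices are then equal, so all sides are equal and $P$ is the regular $\nv$-gon.

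The main obstacle is justifying the mixed-area step rigorously. This requires the formula $V(P,Q)=\frac12\sum h_Q(n_i)\ell_i$ and the equality case of Minkowski's (Brunn--Minkowski) inequality in the plane. I would take both as standard facts. If a self-contained proof is preferred, I would instead prove $\perimC^2\geq 4\areaC T$ directly: consider the parallel polygons $P_t$ with the same normals, whose area is $\areaX{P_t}=\areaC+\perimC t+T t^2$. Such a $P_t$ is nondegenerate for $t$ slightly below $0$ and degenerates at the inradius $t=-r$, which forces the quadratic to have a real root, so its discriminant is nonnegative.
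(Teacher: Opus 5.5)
The paper gives no proof of this statement. It cites the result from \cite{ftt-apiip-55} and uses only the inequality itself, not the equality case, and only for the convex polygon $\Popt$ in \lemref{vertex-lb}. So the convex-hull reduction and the equality analysis are not needed for the paper's purposes.

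Your main argument is a correct, standard proof:
\begin{itemize}
\item The convex reduction works, since $x\tan(\pi/x)$ is strictly decreasing because $\sin t\cos t<t$.
\item The circumscribed polygon $Q$ has area $T=\sum_i\tan(\theta_i/2)$, with every normal contributing an edge.
\item The mixed area is $V(P,Q)=\tfrac12\sum_i\ell_i=\perimC/2$.
\item Minkowski's inequality then gives Lhuilier's inequality $\perimC^2\geq 4\areaC T$.
\item Jensen on the strictly convex $\tan$ over $(0,\pi/2)$ gives $T\geq\nv\tan(\pi/\nv)$.
\item The two equality conditions (tangential, and all exterior angles equal) correctly force a regular $\nv$-gon.
\end{itemize}

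Be careful with your fallback ``self-contained'' argument. The identity $|P_t|=\areaC+\perimC t+Tt^2$ for inner parallel polygons holds only until the first edge collapses. That generally happens before $t=-r$; a square with a tiny corner cut off is an example. So you cannot simply evaluate the quadratic at $t=-r$.

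The fallback can be repaired by induction on the number of edges:
\begin{itemize}
\item At the first collapse time $t_0$, the polygon $P_{t_0}$ has area $q(t_0)$ and perimeter $q'(t_0)=\perimC+2Tt_0$. A direct computation shows $q'(t_0)^2-4Tq(t_0)=\perimC^2-4\areaC T$.
\item When an edge collapses, its two neighbouring exterior angles merge. Since $\tan\bigl((\theta_a+\theta_b)/2\bigr)\geq\tan(\theta_a/2)+\tan(\theta_b/2)$, the quantity $T$ can only increase.
\item The inductive hypothesis applied to $P_{t_0}$ therefore gives $\perimC^2-4\areaC T\geq 0$.
\item The base case is the triangle, whose inner parallel triangles shrink homothetically to the incenter.
\end{itemize}

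Alternatively, you can prove directly that $|P_{-s}|\geq \areaC-\perimC s+Ts^2$ for $s\in[0,r]$.
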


We also list a few helpful technical lemmas.

\begin{lemma}
    \lemlab{vertex-lb}%
    The optimal polygon $\Popt$ has $\nv \geq \sqrt{k/b}$ vertices.
\end{lemma}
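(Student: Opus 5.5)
The plan is to play the isoperimetric inequality for polygons (\thmref{i_p_polygons}) against the near-tight perimeter upper bound of \lemref{tight-upper}. For a polygon with few vertices, $\nv\tan(\pi/\nv)$ is noticeably larger than $\pi$. So if $\Popt$ had few vertices, its perimeter would have to exceed $2\sqrt{\pi k}$ by more than the slack allowed by the $b$ boundary points.

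\textbf{Step 1: expand the tangent.} Since $\tan x \geq x + x^3/3$ for $x\in[0,\pi/2)$, we get, for $\nv\geq 3$,
\begin{equation*}
    \nv\tan\frac{\pi}{\nv} \;\geq\; \pi + \frac{\pi^3}{3\nv^2}.
\end{equation*}
Plugging this into \thmref{i_p_polygons} for $\Popt$ gives
\begin{equation*}
    \lOpt^2 \;\geq\; 4\pi\areaOpt + \frac{4\pi^3}{3}\cdot\frac{\areaOpt}{\nv^2}.
\end{equation*}

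\textbf{Step 2: bring in the upper bound and Pick.} By \lemref{tight-upper}, $\lOpt^2 \leq 4\pi k$. By \thmrefY{pick}{Pick's theorem}, $\areaOpt = k - b/2 - 1$, so $4\pi\areaOpt = 4\pi k - 2\pi b - 4\pi$. Substituting and cancelling $4\pi k$ yields
\begin{equation*}
    \frac{4\pi^3}{3}\cdot\frac{\areaOpt}{\nv^2} \;\leq\; 2\pi(b+2),
    \qquad\text{that is,}\qquad
    \nv^2 \;\geq\; \frac{2\pi^2}{3}\cdot\frac{\areaOpt}{b+2}.
\end{equation*}

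\textbf{Step 3: clean up constants (the only delicate point).} We need the right-hand side to be at least $k/b$. By \obsref{b_ub_silly}, $b \leq 2\sqrt{\pi k}$, so $\areaOpt = k - b/2 - 1 \geq k/2$ for $k$ at least a modest constant. Also $b+2 \leq 2b$ once $b\geq 2$, which holds whenever $\Popt$ is a nondegenerate polygon. Hence
\begin{equation*}
    \nv^2 \;\geq\; \frac{2\pi^2}{3}\cdot\frac{k/2}{2b} \;=\; \frac{\pi^2}{6}\cdot\frac{k}{b},
\end{equation*}
and since $\pi^2/6 > 1$ this gives $\nv \geq \sqrt{k/b}$.

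For the small remaining cases (very small $k$, or $\Popt$ degenerate, e.g.\ a segment), the bound $\sqrt{k/b}$ is at most a small constant. These cases are checked directly, for instance from the explicit optimal polygons for small $k$. The degenerate case $\nv\leq 2$ occurs only for $k\leq 2$, where $k=b$ and so $\sqrt{k/b}=1$.

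I expect no real obstacle. The main risk is bookkeeping of the additive constants in Step 3, which the slack factor $\pi^2/6>1$ comfortably absorbs for all large $k$.
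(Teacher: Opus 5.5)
Your proof is correct and follows the paper's argument essentially step for step: the paper uses the same ingredients, namely the polygonal isoperimetric inequality with $\tan x\geq x+x^3/3$, the bound $\lOpt^2\leq 4\pi k$, Pick's theorem, and $\areaOpt\geq k/2$ from $b\leq 2\sqrt{\pi k}$. The paper only arranges the algebra differently, as $(k-\areaOpt)/\areaOpt\geq 2/\nv^2$ with $k-\areaOpt\leq b$, instead of cancelling $4\pi k$ directly; your handling of small $k$ is more careful than the paper's, which only asserts the bound for large $k$, and it needs no case checking: for a nondegenerate $\Popt$ we have $\nv,b\geq 3$, so $\sqrt{k/b}\leq\sqrt{k/3}\leq 3\leq\nv$ whenever $k\leq 27$, while your main argument already covers all $k\geq 17$.
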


\begin{proof}
    Consider the optimal polygon $\Popt$, which has minimum perimeter $\lOpt$ and area $\areaOpt$.  By \thmref{i_p_polygons}, we have
    \begin{equation*}
        \areaOpt
        \leq
        \frac{\lOpt^2}{4\nv \tan \tfrac{\pi}{\nv}}
        \implies%
        \lOpt^2 \geq
        \areaOpt \cdot 4\nv \tan \tfrac{\pi}{\nv}.
    \end{equation*}
    By the Taylor's expansion of $\tan$, we have $\tan(x) \geq x + \tfrac{x^3}{3}$, for $x \in \COIntervalX{0, \pi/2}$. This implies
    \begin{equation*}
        4\nv \tan \frac{\pi}{\nv}
        \geq
        4\nv\Bigl(\frac{\pi}{\nv} + \frac{\pi^3}{3\nv^3}\Bigr)
        =
        4\pi + \frac{4\pi^3}{3\nv^2}.
    \end{equation*}
    By \thmrefY{pick}{Pick's theorem}, $\areaOpt = k - \tfrac{b}{2} - 1 \leq k$.  Since $b \leq 2\sqrt{\pi k}$ (\obsref{b_ub_silly}), we have $\areaOpt \geq k/2$ for all sufficiently large $k$.  By \lemref{tight-upper}, we have
    \begin{align*}
        &
        4\pi k
        \geq
        \lOpt^2
        \geq
        \areaOpt \Bigl(4\pi + \frac{4\pi^3}{3\nv^2}\Bigr)
        \implies%
        k
        \geq
        \areaOpt \Bigl(1 + \frac{2}{\nv^2}\Bigr)
        \implies
        \frac{k - \areaOpt }{ \areaOpt }
        \geq
        \frac{2}{\nv^2}.
    \end{align*}
    As $\bb \geq \bb/2+1 = k - \areaOpt$, and $\areaOpt \geq k/2$, we have that
    \begin{math}
        \frac{\bb}{k/2} \geq \tfrac{k - \areaOpt }{ \areaOpt } \geq \tfrac{2}{\nv^2},
    \end{math}
    implying that $\nv^2 \geq {k}/{\bb}$.
\end{proof}

\begin{figure}[ht]
    \centering \includegraphics{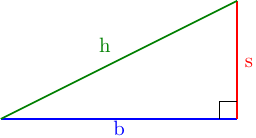}
    \caption{}
    \figlab{triangle}
\end{figure}

\begin{claim}
    \clmlab{triangle}%
    Consider a right triangle with legs $b>0$ and $s$, and hypotenuse $h$. Then
    \begin{math}
        s^2/(2h) \leq h-b \leq s^2/(2b),
    \end{math}
    see \figref{triangle}.
\end{claim}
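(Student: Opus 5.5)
The plan is to use Pythagoras in the form $s^2 = h^2 - b^2 = (h-b)(h+b)$, which gives the exact identity
\begin{equation*}
    h - b = \frac{s^2}{h+b}.
\end{equation*}
Both inequalities then follow by bounding the denominator $h+b$ from above and from below.

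For the lower bound, use $b \leq h$: the hypotenuse is at least as long as either leg, since $h^2 = b^2 + s^2 \geq b^2$. Hence $h + b \leq 2h$, and therefore $h-b \geq s^2/(2h)$.

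For the upper bound, use $h \geq b$ again, now in the form $h + b \geq 2b > 0$. This gives $h - b \leq s^2/(2b)$. The assumption $b>0$ is needed here, so that dividing by $2b$ is legitimate; $h>0$ follows automatically from $h \geq b > 0$.

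There is no real obstacle. The only point to state explicitly is that $h \geq b$, so that $h-b \geq 0$ and all the denominators are positive. The sign of $s$ is irrelevant, since only $s^2$ appears.
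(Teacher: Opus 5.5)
Your proposal is correct and follows the paper's argument: the paper likewise writes $h-b=\sqrt{b^2+s^2}-b = s^2/(\sqrt{b^2+s^2}+b)$ and bounds the denominator above by $2h$ and below by $2b$ using $b\leq h$. Your version also states explicitly that $h\geq b$ is what gives the lower bound $2b$ on the denominator, while $b>0$ is only needed to divide by $2b$.
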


\begin{proof}
    Since $h=\sqrt{b^2+s^2}$, we have
    \begin{equation*}
        h-b
        =
        \sqrt{b^2+s^2}-b
        =
        \frac{s^2}{\sqrt{b^2+s^2}+b}.
    \end{equation*}
    As $b\leq h$, the denominator is at most $2h$. As $b>0$, it is at least $2b$. Thus
    \begin{equation*}
        \frac{s^2}{2h}\leq h-b\leq \frac{s^2}{2b}.
    \end{equation*}
\end{proof}

\secref{farey} provides a short foray on Farey sequences, used in proving
the following.

\begin{lemma}
    \lemlab{long_expand}%
    Let $R >0$ be an integer, let $c \in (0,1)$ be a constant, and let
    $P$ be a convex grid polygon.  Let $\mathcal{E}$ be a family of
    pairwise nonadjacent edges of $P$ such that the generator $u$ (i.e. the primitive vector) of
    each edge satisfies $\lenX{u}\leq R$ and is
    $cR$-\lemrefY{long}{long}.  For every integer
    $z\leq\cardin{\mathcal{E}}$, one can compute a new grid polygon $Q$
    that contains at least $z$ more grid points, and
    $\perimX{Q} \leq \perimX{P} + O_c(z/R^3)$.
\end{lemma}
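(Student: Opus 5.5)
The construction adds one grid point per selected edge, by replacing that edge with a two-edge path through a vertex of its parent triangle. Fix $z$ edges $e_1,\ldots,e_z\in\mathcal{E}$. Each $e_j$ runs from $p$ to $q=p+mu$, where $u$ is the generator with $\lenX{u}\leq R$, and $u$ has parents $u_1,u_2$ with $u=u_1+u_2$.

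The operation on a single edge is the key step. Choose the unit piece $[p,p+u]$ at the start of $e_j$. The grid point $w=p+u_1$ lies on the outer side of the edge, since the orientation is counterclockwise and $u_1,u_2$ straddle $u$. Which of $p+u_1$ and $p+u_2$ lies outside depends on orientation; pick that one. Replace the segment $p\to p+u$ by the path $p\to w\to p+u$, and keep the rest of $e_j$. The triangle $p,w,p+u$ is primitive with area $1/2$, so by Pick's theorem it has no grid points other than its vertices. Hence this operation adds exactly the one grid point $w$. The boundary still passes through $p+u$, so nothing on the other side is lost. The new region $Q$ is the union of $P$ with these triangles.

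Next, bound the perimeter increase. The height of $w$ above the line of $u$ is $1/\lenX{u}$. By \obsref{long_are_good}, $w$ projects onto the segment $[p,p+u]$ at distance $\Omega_c(R)$ from both endpoints. Here I also need $\lenX{u}=\Theta_c(R)$: the parents have length at least $cR$ and $\lenX{u}>\lenX{u_1}$, so $\lenX{u}\geq cR$. Split the triangle into two right triangles with legs $b_1,b_2=\Omega_c(R)$ and common height $s=1/\lenX{u}$. By \clmref{triangle}, the increase is at most $s^2/(2b_1)+s^2/(2b_2)=O_c(1/(R^2\cdot R))=O_c(1/R^3)$. Summing over the $z$ edges gives $\perimX{Q}\leq\perimX{P}+O_c(z/R^3)$.

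The main obstacle is making the $z$ modifications non-interfering. The new triangles must be pairwise disjoint apart from boundary contact, and none may swallow extra grid points or cause double counting. Pairwise non-adjacency is the tool. Each triangle lies in the thin slab of width $1/\lenX{u}\leq 1/(cR)$ over its own edge, near the edge's start, away from its endpoints in projection. Two non-adjacent edges of a convex polygon have disjoint outward slabs near their interiors, except when the polygon is very sharp there. I would argue as follows: the triangle over $e_j$ is contained in the region bounded by the supporting lines of the two edges adjacent to $e_j$ and by the line of $e_j$. These wedge regions are disjoint for different edges, since they are separated by the rays extending the edges. So the only thing to check is that $w$ lies in that wedge. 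If it fails, the adjacent edges turn so sharply that the outside point $w$ would already be inside $P$. In that case I would instead take the other parent, or a unit piece at the other end of $e_j$. Either way, $Q$ may be nonconvex, but the statement only asks for a grid polygon; if needed, convexity can be restored by hull plus trimming via \obsref{trim-to-k}. The point count is exact: $Q\setminus P$ consists of $z$ disjoint primitive triangles, each contributing exactly one new grid point, so $Q$ has exactly $z$ more grid points.
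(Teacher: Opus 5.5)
Your construction is the paper's: add the exterior Farey parent $w=p+u'$ over a primitive piece, and bound the cost with \obsref{long_are_good} and \clmref{triangle}. The perimeter estimate is fine. The gap is in the non-interference step, which the wedge argument does not establish.

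\textbf{A counterexample to the wedge argument.} Write $u=u'+u''$ with $u'$ the exterior parent, so $\det(u,u')=-1$ and $\det(u,u'')=1$. Take $e_j$ with $m=1$, and let the preceding and following edges have generators $v=u+u'$ and $v'=u+u''$. Both are primitive, since each has determinant $\pm1$ with $u$. Both turns are left turns, so this is a legitimate convex configuration. Then
\begin{equation*}
\det(v,w-p)=\det(u+u',u')=-1, \qquad \det\bigl(v',w-(p+u)\bigr)=\det(u+u'',-u'')=-1 .
\end{equation*}
So $w$ lies strictly outside both adjacent supporting lines, and hence outside your wedge.

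\textbf{Your diagnosis of this case is backwards.} The point $w$ is strictly beyond the supporting line $\ell_j$ of $e_j$, so it is never inside $P$. The failure comes from a neighbor nearly collinear with $e_j$ (turn angle $O_c(1/R^2)$, which happens once the neighbor's generator has length $\Omega(R)$), not from a sharp turn.

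\textbf{Neither fallback helps.} There is only one unit piece. The other parent point $p+u''$ lies on the inner side of $\ell_j$, so it cannot serve as the new exterior point. As written, you have not shown that the triangles are disjoint or that the $z$ added points are distinct.

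\textbf{The fix: use the normal strip of the edge.} The right region is the strip over $e_j$, not the wedge cut out by the neighbors' supporting lines.
\begin{itemize}
\item By \obsref{long_are_good}, $w$ projects orthogonally into $[p,p+u]$. Hence the whole triangle $T_j$ with vertices $p,w,p+u$ projects into $[p,p+u]\subseteq e_j$.
\item $P$ lies in the closed inner halfplane of $\ell_j$, and this projection lies in $P$. So for every $x\in T_j\setminus P$, its projection is the nearest point of $P$ to $x$.
\item Nearest points to a convex set are unique. So $T_j\setminus P$ and $T_k\setminus P$ are disjoint whenever the pieces $[p_j,p_j+u_j]$ and $[p_k,p_k+u_k]$ are disjoint, which holds for nonadjacent edges.
\item Also $T_j\cap P=[p_j,p_j+u_j]$.
\end{itemize}
With this, your $Q$ is a simple grid polygon with exactly $z$ new grid points and the claimed perimeter.

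This is exactly the paper's distinctness argument. The paper also replaces the whole edge $AB$ by the path $A\to w\to B$ and takes the convex hull of the resulting closed curve, whose perimeter is at most the curve's length. That way it needs only that the added points are distinct, not that the triangles are disjoint, and it outputs a convex $Q$ directly.
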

\begin{proof}
    Choose any $z$ edges from $\mathcal{E}$.  Consider one such oriented
    edge, with generator~$u$, and let $u_1,u_2$ be the two Farey parents
    of~$u$.  Since $u=u_1+u_2$ and consecutive Farey vectors form a
    unimodular basis, the parents lie on the two lattice lines adjacent
    to the line spanned by~$u$, at perpendicular distance
    $1/\lenX{u}$ from it.  One of these two parents is on the exterior
    side of the boundary edge; call it $u_i$.

    Pick a primitive subsegment $[p,p+u]$ of the edge and add the grid
    point $w=p+u_i$.  By \obsref{long_are_good}, the orthogonal
    projection of~$w$ to the edge is at distance $\Omega_c(R)$ from both
    endpoints of this primitive subsegment.  In particular,
    $\lenX{u}=\Omega_c(R)$, and the height of~$w$ above the edge is
    $1/\lenX{u}=O_c(1/R)$.  Replacing the original full edge by the two
    segments through~$w$ increases the perimeter by
    $O_c(R^{-3})$, by applying \clmref{triangle} to the two thin right
    triangles.

    Do this independently for the chosen pairwise nonadjacent edges.  The
    added exterior grid points are distinct: for each such point, the
    orthogonal projection to $P$ lies in the relative interior of the edge
    on which it was inserted, and this projection is the unique closest
    point of $P$ because the edge is a supporting face.  The same exterior
    point therefore cannot be charged to two different edges.  Now take
    the convex hull of the resulting polygonal curve.  Taking the convex
    hull can only decrease perimeter, so the total increase is
    $O_c(z/R^3)$.
\end{proof}

Informally, the following technical lemma bounds how much extra perimeter we have to ``pay'' to cover $z$ more grid points, assuming the optimal polygon $\Popt$ has an edge with a primitive generator vector of length $d$.

\begin{lemma}
    \lemlab{add}%
    Consider an edge $e=p'q'$ of a convex grid polygon $P$, such that $e$ is formed by $m> 8$ copies of a primitive vector $u$ of length $d$.  For every integer $z$ with $2 \leq z \leq \floor{m/2}$, one can expand $P$ to a polygon $Q$, such that
    \begin{compactenumi}
        \item $Q$ covers at least $z$ more grid points, and
        \item $\perim(Q) \leq \perim(P) + \frac{1}{d^3(m/4-1)}$.
    \end{compactenumi}
\end{lemma}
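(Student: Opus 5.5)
The plan is to push the edge $e$ outward by exactly one lattice line. Replace $e$ by a convex chain that picks up $z$ grid points on the lattice line parallel to $e$ immediately outside $P$. Then bound the extra perimeter with \clmref{triangle}, applied to the two thin right triangles at the ends of the chain.

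\emph{Setup.} Orient coordinates so that $e$ lies on a horizontal line and $P$ lies below it. Since $u$ is primitive, the grid points of $\ZZ^2$ lie on lines parallel to $u$ spaced $1/d$ apart, and on each such line consecutive grid points are at distance exactly $d$. Let $\ell$ be the lattice line parallel to $e$ at height $h=1/d$ above it, on the exterior side. Its grid points project orthogonally onto the line of $e$ as a translate of the progression $d\ZZ$, with some offset in $[0,d)$. The edge $e$ has length $md$.

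\emph{Choosing the new points.} Choose $z$ consecutive grid points $w_1,\ldots,w_z$ on $\ell$ whose projections are as centered as possible inside $e$. The run $w_1 w_z$ has length $(z-1)d\le (\lfloor m/2\rfloor-1)d$. The leftover length $md-(z-1)d\ge (m/2+1)d$ is split between the two ends. Because of the offset of the progression and the rounding needed to center the run, each end loses at most about $d$. So the horizontal distances $a$ (from $p'$ to the projection of $w_1$) and $a'$ (from the projection of $w_z$ to $q'$) both satisfy $a,a'\ge (m/4-1)d>0$. This is where $m>8$ is used: it makes this quantity positive and the estimate meaningful.

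\emph{Constructing $Q$ and counting points.} Let $Q'$ be the polygon obtained from $P$ by replacing $e$ with the chain $p',w_1,\ldots,w_z,q'$. Its perimeter change is
\[
(\sqrt{a^2+h^2}-a)+(\sqrt{a'^2+h^2}-a')+(z-1)d-(z-1)d,
\]
since the middle segment $w_1w_z$ has length $(z-1)d$, exactly matching the part of $e$ it sits over. By \clmref{triangle}, the change is at most
\[
\frac{h^2}{2a}+\frac{h^2}{2a'}\le \frac{2}{d^2}\cdot\frac{1}{2(m/4-1)d}=\frac{1}{d^3(m/4-1)}.
\]
Finally let $Q=\CHX{Q'}$. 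Taking the convex hull cannot increase perimeter, and $Q\supseteq P$. The points $w_j$ lie strictly beyond the supporting line of $e$, so they are $z$ grid points of $Q$ not in $P$. Hence $Q$ covers at least $z$ more grid points, which gives both (i) and (ii).

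The step I expect to need the most care is the offset bookkeeping. One has to verify that, for every $z\le\lfloor m/2\rfloor$ and every possible offset of the progression on $\ell$, the two end gaps are indeed at least $(m/4-1)d$ rather than something slightly smaller. If this bound comes out too tight, I would first check the placement with the floor carefully, and otherwise absorb the discrepancy by using $m>8$ to trade a constant.
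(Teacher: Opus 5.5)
Your proposal is correct and matches the paper's proof: push $e$ out to the adjacent exterior lattice line, take $z$ consecutive lattice points there centered over $e$ (this is the paper's trapezoid $T=p'pqq'$), bound the excess of the two thin side triangles by \clmref{triangle}, and take the convex hull. The offset bookkeeping you flagged works out: the run can be centered to within $d/2$, so each end gap is at least $(m-z+1)d/2-d/2=(m-z)d/2\geq md/4\geq (m/4-1)d$.
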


\begin{proof}
    Since $u$ is primitive, there is a vector $w\in\ZZ^2$ with
    $|\det(u,w)|=1$.  Thus the lattice lines parallel to $u$ occur at
    perpendicular spacing $1/d$, and each such line contains a coset
    $w+\ZZ u$ whose consecutive grid points are spaced by distance~$d$.
    Let $h$ be the adjacent lattice line parallel to $e$ on the exterior
    side of $P$, and put $H=h\cap\ZZ^2$.

    Choose two points $p$ and $q$ from $H$ closest to the middle of $e$
    such that the segment $pq$ contains exactly $z$ points of~$H$; that
    is, it has $z-2$ points of $H$ between its endpoints. Let
    $e=p'q'$, and consider the trapezoid $T=p'pqq'$, see \figref{trap}.
    \begin{figure}[ht]
        \centering {\includegraphics[page=2]{figs/triangle}}
        \caption{}
        \figlab{trap}
    \end{figure}

    Since $z\leq m/2$, the projections of $p$ and $q$ onto $e$ are each
    at distance at least $(m/4-1)d$ from the nearer endpoint of~$e$.
    The trapezoid contains the $z$ exterior grid points on $pq$, so
    attaching $T$ to $P$ and taking the convex hull gives a new polygon
    $Q$ covering at least $z$ more grid points.  Its perimeter is larger
    by at most
    \begin{equation*}
        2 \frac{(1/d)^2}{2(m/4 -1)d}
        =
        \frac{1}{d^3(m/4-1)},
    \end{equation*}
    by using \clmref{triangle} on the two side triangles of this trapezoid.
\end{proof}

Conversely, the next lemma tells us how much perimeter we ``save'' if we cut a vertex $x$ from the optimal polygon, assuming it has two adjacent edges of length at most $\tau$.

\begin{lemma}
    \lemlab{shortcut}%
    Let $x$ be a vertex of a grid convex polygon $P$, with both edges adjacent to it having length at most $\tau$.  Then removing $x$, the residual polygon $Q = \CHX{ P \cap \ZZ^2 \setminus \{x\}}$ has perimeter $\perim(Q) \leq \perim(P) - \frac{1}{4\tau^3} = \perim(P) - \Omega(1/\tau^3)$.
\end{lemma}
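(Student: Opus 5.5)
Write $y$ and $z$ for the two neighbours of $x$ on $\partial P$, so that $\lenX{xy},\lenX{xz}\leq\tau$. The plan is to localize the change in perimeter to the triangle $\triangle yxz$ and then lower bound the saving using a lattice-width argument.

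\emph{Step 1: localize.} Every vertex of $P$ other than $x$ still belongs to $P\cap\ZZ^2\setminus\{x\}$, so $Q\subseteq P$ and $Q$ has all those vertices. Since $x$ is extreme in $P$, it is not in $Q$. Hence $\partial Q$ agrees with $\partial P$ outside $\triangle yxz$. Inside the triangle, the path $y\to x\to z$ is replaced by a convex chain $C$ from $y$ to $z$ that lies in $\triangle yxz$ and avoids $x$. Therefore
\[
\perim(P)-\perim(Q)=\lenX{xy}+\lenX{xz}-\lenX{C}.
\]

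\emph{Step 2: a separating lattice line.} Pick an edge $f$ of $C$ that is visible from $x$; for example, the edge hit by the segment from $x$ to an interior point of $Q$. Let $\ell$ be the supporting line of $f$. Then $Q$ lies on one side of $\ell$ and $x$ lies strictly on the other side. Let $p,q$ be the points where $\ell$ meets the segments $xy$ and $xz$. The chain $C$ is convex and lies in the convex quadrilateral $ypqz$, so $\lenX{C}\leq\lenX{yp}+\lenX{pq}+\lenX{qz}$. Consequently the saving is at least
\[
\lenX{px}+\lenX{xq}-\lenX{pq}.
\]
Both endpoints of $f$ are grid points in $\triangle yxz$, so the direction of $\ell$ is a multiple of a primitive vector $w$ with $\lenX{w}\leq\lenX{f}\leq\operatorname{diam}(\triangle yxz)\leq 2\tau$. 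Lattice lines parallel to $w$ are spaced $1/\lenX{w}$ apart, and $x\notin\ell$ is a grid point. Hence the distance from $x$ to $\ell$ is $h\geq 1/\lenX{w}\geq 1/(2\tau)$.

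\emph{Step 3: the saving in the thin triangle $\triangle pxq$.} Here $\lenX{px},\lenX{xq}\leq\tau$ and the height from $x$ is $h$. Let $f_0$ be the foot of the altitude from $x$.
\begin{itemize}
\item If $f_0$ lies on the segment $pq$, apply \clmref{triangle} to the right triangles $x f_0 p$ and $x f_0 q$. This gives $\lenX{px}-\lenX{pf_0}\geq h^2/(2\lenX{px})\geq h^2/(2\tau)$, and similarly for $q$. So the saving is at least $h^2/\tau\geq 1/(4\tau^3)$.
\item If $f_0$ falls outside $pq$, say beyond $q$, the angle at $q$ is obtuse. Then $\lenX{px}-\lenX{pq}\geq \lenX{px}-\lenX{pf_0}$ by monotonicity along the line, and the first bullet's estimate on the $p$ side gives $h^2/(2\tau)$. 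Adding $\lenX{xq}\geq h$ dominates the remaining term.
\end{itemize}
In both cases the saving is at least $1/(4\tau^3)$.

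The main obstacle is Step 2. It must produce a separating line that passes through two grid points which are close to each other, so that the lattice spacing forces $h=\Omega(1/\tau)$. Naive choices, such as the line through the grid points adjacent to $x$ along its two edges, can fail, because the triangle they form with $x$ may contain other grid points that lie in $Q$. Taking an actual edge of $Q$ inside $\triangle yxz$ avoids this problem, and its length is controlled by the diameter of the triangle, which is at most $2\tau$.
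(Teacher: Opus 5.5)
Your proof is correct. Its skeleton matches the paper's:
\begin{compactenumi}
  \item find a line separating $x$ from all remaining grid points, at distance at least $1/(2\tau)$ from $x$, using the spacing of lattice lines;
  \item reduce the saving to the excess $\lenX{xp}+\lenX{xq}-\lenX{pq}$ of the cut-off triangle;
  \item finish with \clmref{triangle}.
\end{compactenumi}

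The real difference is the choice of separating line. Let $p_1,q_1$ be the grid points immediately before and after $x$ on $\partial P$. You are right that the line through $p_1$ and $q_1$ itself can fail, because the triangle $p_1xq_1$ may contain interior grid points. The paper does not discard this chord, though. It keeps only the primitive direction $u$ of $p_1q_1$ and shifts the line parallel to itself, to the lattice line parallel to $u$ that is adjacent to the one through $x$.

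The line through $x$ parallel to $u$ meets $P$ only at $x$. The open strip between two consecutive lattice lines contains no grid point. So every grid point of $P$ other than $x$ lies beyond the shifted line, which is at distance $1/\lenX{u}\geq 1/\lenX{p_1q_1}\geq 1/(2\tau)$ from $x$. This yields $Q\subseteq P\cap H$ directly, with no description of $\partial Q$ needed.

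Your route instead ties the separating line to an actual edge of $Q$. That requires two extra pieces:
\begin{compactitem}
  \item the Step~1 localization;
  \item the visible-edge selection, which needs a generic interior point of $Q$ (so the segment from $x$ does not hit a vertex), plus a separate remark when $Q$ is just the segment $yz$ (take $f=yz$).
\end{compactitem}
It is equally valid and gives the same constant $1/(4\tau^3)$, since both arguments bound the relevant primitive length by $2\tau$.

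Finally, your case split in Step~3 is unnecessary. For the foot $m$ of the altitude from $x$, the inequality $\lenX{pq}\leq\lenX{pm}+\lenX{mq}$ holds wherever $m$ falls. The paper uses it and then applies \clmref{triangle} to both right triangles at once.
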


\begin{proof}
    Let $p$ and $q$ be the boundary grid points immediately before and after $x$ on $\partial P$.  Instead of working with the whole triangle $\triangle pxq$, we first cut off the smallest lattice-visible tip at $x$.

    Namely, move the line through $pq$ parallel to itself toward $x$
    until it reaches the lattice line parallel to $pq$ that is adjacent
    to the line through $x$ on the $pq$ side.
    Let this line meet $xp$ and $xq$ at $p'$ and $q'$, respectively.  The
    open region between this line and $x$ contains no grid point other
    than $x$: if $u$ is the primitive direction parallel to $pq$, then
    lattice lines parallel to $u$ are separated by distance $1/\lenX{u}$,
    and this is the closest such line to the line through~$x$ on the
    $pq$ side.  Since
    $\lenX{u}\leq\lenX{pq}$, the distance from $x$ to this first line is
    at least $1/\lenX{pq}$.

    Let $Q=\CHX{P\cap\ZZ^2\setminus\{x\}}$.  Since all remaining grid points of $P$ lie on or beyond the line $p'q'$, we have
    \[
        Q\subseteq P\cap H,
    \]
    where $H$ is the halfplane bounded by the line $p'q'$ that does not contain $x$.  Therefore, by monotonicity of perimeter for convex sets,
    \[
        \perim(Q)\leq \perim(P\cap H).
    \]
    Passing from $P$ to $P\cap H$ replaces the boundary path $p'\to x\to q'$ by the segment $p'q'$.  Hence
    \[
        \perim(P)-\perim(Q) \geq \lenX{xp'}+\lenX{xq'}-\lenX{p'q'}.
    \]

    It remains to lower-bound this last quantity.  Let $s$ be the distance from $x$ to the line $p'q'$.  By the preceding paragraph, $s\geq 1/\lenX{pq}$.  Also
    \[
        \lenX{pq}\leq \lenX{px}+\lenX{xq}\leq 2\tau,
    \]
    so
    \[
        s\geq \frac{1}{2\tau}.
    \]

    \begin{figure}[ht]
        \centering {\includegraphics[page=3]{figs/triangle}}
        \caption{}
        \figlab{drop_3}
    \end{figure}

    Let $m$ be the projection of $x$ to the line $p'q'$, see \figref{drop_3}.  We have
    \[
        \lenX{p'q'}\leq \lenX{p'm}+\lenX{mq'}.
    \]
    Therefore
    \begin{align*}
        \lenX{xp'}+\lenX{xq'}-\lenX{p'q'}
        &\geq
        (\lenX{xp'}-\lenX{p'm})
        +
        (\lenX{xq'}-\lenX{mq'}).
    \end{align*}
    Applying \clmref{triangle} to the two right triangles $\triangle p'mx$ and $\triangle q'mx$ gives
    \begin{align*}
        \lenX{xp'}+\lenX{xq'}-\lenX{p'q'}
        &\geq
        \frac{s^2}{2\lenX{xp'}}
        +
        \frac{s^2}{2\lenX{xq'}} .
    \end{align*}
    Since $p'$ lies on $xp$ and $q'$ lies on $xq$,
    \[
        \lenX{xp'}\leq \lenX{xp}\leq \tau, \qquad \lenX{xq'}\leq \lenX{xq}\leq \tau.
    \]
    Hence
    \[
        \lenX{xp'}+\lenX{xq'}-\lenX{p'q'} \geq \frac{s^2}{2\tau} + \frac{s^2}{2\tau} = \frac{s^2}{\tau} \geq \frac{1}{4\tau^3}.
    \]

    Thus
    \[
        \perim(Q)\leq \perim(P)-\frac{1}{4\tau^3},
    \]
    as claimed.
\end{proof}

Combining both results, the next lemma uses an exchange argument to upper-bound the maximum multiplicity of a primitive vector in terms of its length.

\begin{lemma}
    \lemlab{exchange}%
    Let $e$ be an edge of $\Popt$ whose generator has length $d$, and let $m\geq 8$ be the multiplicity of the generator in $e$.  Set $\tau = 3\lOpt/\nv$ and
    \begin{equation*}
        z = \min\!\left\{\floor{m/2},
          \left\lfloor \frac{\nv}{12}\right\rfloor \right\}.
    \end{equation*}
    Then
    \begin{math}
        mz = O\!\left({\tau^3}/{d^3}\right).
    \end{math}
    Consequently, if $b=O(k^{\alpha/2})$, then
    \begin{equation*}
        m =
        \begin{cases}
          O(k^{3\alpha/8}/d^{3/2}), & \text{if } m\leq \nv/6,\\
          O(k^{\alpha-1/2}/d^3), & \text{if } m>\nv/6.
        \end{cases}
    \end{equation*}
\end{lemma}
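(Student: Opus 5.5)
The plan is an exchange argument that pits \lemref{add} against \lemref{shortcut}. Use \lemref{add} on the edge $e$ to gain at least $z$ grid points cheaply. Then cut $z$ suitable vertices of $\Popt$ with \lemref{shortcut}, each cut losing exactly one grid point and saving $\Omega(1/\tau^3)$ perimeter. The result is a convex grid polygon with at least $k$ grid points. By \obsref{trim-to-k} it can be trimmed to exactly $k$ points without increasing the perimeter. Optimality of $\Popt$ then forces the gain in perimeter to be at least the saving:
\[
  \frac{z}{4\tau^3}\;\leq\;\frac{1}{d^3(m/4-1)} .
\]
Since $m\geq 8$ gives $m/4-1=\Omega(m)$, this yields $mz=O(\tau^3/d^3)$. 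Degenerate cases are handled separately: $z\leq 1$ (e.g.\ $\nv<24$) makes the bound trivial up to constants, and $z\geq 2$ is needed for \lemref{add}.

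The key combinatorial step is finding $z\leq\floor{\nv/12}$ vertices to cut. Call an edge \emph{short} if its length is at most $\tau=3\lOpt/\nv$. The total length is $\lOpt$, so at most $\nv/3$ edges are long and at least $2\nv/3$ are short. Removing the long edges leaves at most $\nv/3$ maximal runs of short edges. The vertices interior to these runs, i.e.\ those with both incident edges short, number at least $2\nv/3-\nv/3=\nv/3$.

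From these I take every other vertex, which gives at least $\nv/6$ pairwise nonadjacent vertices. I then discard those that are endpoints of $e$ or adjacent to them, and thin again so that the cut regions are disjoint. This leaves at least $\nv/12$ candidates, enough for the chosen $z$.

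I expect the main obstacle to be the independence of the simultaneous modifications: each operation must change the grid-point count by exactly the claimed amount and add up in perimeter.
\begin{compactenumi}
\item Each cut in \lemref{shortcut} removes only the cap region near $x$, bounded by the lattice line adjacent to $x$. That region contains no grid point other than $x$ and lies inside the two edges incident to $x$.
\item Nonadjacent cut vertices therefore have disjoint caps, and the caps also avoid the trapezoid attached near the middle of $e$.
\end{compactenumi}
I would carry out the cuts by intersecting with the corresponding halfplanes, rather than by taking a global convex hull, and then attach the trapezoid of \lemref{add}. This bounds the perimeter change additively, and the point count is at least $k-z+z=k$.

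For the consequences, use $\lOpt=O(\sqrt k)$ (\lemref{tight-upper}) and $\nv\geq\sqrt{k/b}$ (\lemref{vertex-lb}). These give
\[
  \tau=O\bigl(\sqrt{k}/\nv\bigr)=O\bigl(\sqrt{b}\bigr)=O\bigl(k^{\alpha/4}\bigr),
  \qquad\text{so}\qquad \tau^3=O\bigl(k^{3\alpha/4}\bigr).
\]
If $m\leq\nv/6$, then $z=\floor{m/2}=\Theta(m)$, so $m^2=O(k^{3\alpha/4}/d^3)$, i.e.\ $m=O(k^{3\alpha/8}/d^{3/2})$.

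If $m>\nv/6$, then $z=\floor{\nv/12}=\Theta(\nv)$ with $\nv=\Omega(k^{1/2-\alpha/4})$. Hence
\[
  m=O\!\left(\frac{\tau^3}{d^3\,\nv}\right)=O\!\left(\frac{k^{3\alpha/4}}{d^3\,k^{1/2-\alpha/4}}\right)=O\bigl(k^{\alpha-1/2}/d^3\bigr),
\]
as claimed.
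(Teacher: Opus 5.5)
Your proposal is correct and is essentially the paper's exchange argument. You cut $z$ pairwise nonadjacent short vertices (not endpoints of $e$) via the local halfplanes of \lemref{shortcut}, regain $z$ points along $e$ with \lemref{add}, and conclude by \obsref{trim-to-k} and optimality, with the same two-case substitution $\tau=O(k^{\alpha/4})$, $\nv=\Omega(k^{1/2-\alpha/4})$. The differences are cosmetic: the paper takes the convex hull after the cuts and applies \lemref{add} to that grid polygon (checking that $e$ survives as an edge), whereas you attach the trapezoid to the cut region directly. Your second thinning is also unnecessary, since nonadjacency already makes the caps disjoint as your item (ii) says, and dropping it avoids undercounting the candidates ($\nv/12-2$ instead of $\lfloor \nv/12\rfloor$).
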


\begin{proof}
    At most $\nv/3$ edges of $\Popt$ have length larger than $\tau$.  Since each such edge has two endpoints, at most $2\nv/3$ vertices are incident to a long edge.  Thus at least $\nv/3-2$ vertices are not endpoints of $e$ and have both incident edges of length at most $\tau$. These vertices are \emphw{short}.  From these vertices we can choose at least $\floor{\nv/12}$ pairwise nonadjacent ones.

    Choose $z$ of these pairwise nonadjacent short vertices.  For each
    chosen vertex, use the local cutting halfplane from
    \lemref{shortcut}.  Since no two chosen vertices are adjacent, the
    corresponding boundary arcs are disjoint; the local shortcuts can be
    performed simultaneously, and their perimeter savings add.  Taking the
    convex hull of the remaining grid points can only decrease the
    perimeter further.  None of the chosen vertices is an endpoint of~$e$.
    All remaining grid points stay in the same closed halfplane bounded by
    the supporting line of~$e$, and the two endpoints of~$e$ remain on
    that line; hence the same segment~$e$ remains a boundary edge of the
    new hull.  Thus we obtain a polygon $P'$ whose perimeter decreased by
    $\Delta^-$,
    where
    \begin{equation*}
        \Delta^- \geq \frac{z}{4\tau^3},
    \end{equation*}
    by \lemref{shortcut}.  Since $z \leq \floor{m/2}$, \lemref{add} lets us add at least $z$ grid points along $e$ at an additional perimeter cost $\Delta^+$, where
    \begin{math}
        \displaystyle \Delta^+ \leq \frac{2}{d^3(m/4-1)}.
    \end{math}

    The resulting polygon contains at least as many grid points as
    $\Popt$.  By \obsref{trim-to-k}, it contains an exactly feasible
    $k$-point polygon with no larger perimeter.  The optimality of
    $\Popt$ therefore implies $\Delta^+ - \Delta^- \geq 0$, and
    \begin{equation*}
        \frac{2}{d^3(m/4-1)}
        \geq
        \Delta^+
        \geq
        \Delta^-
        \geq
        \frac{z}{4\tau^3}.
    \end{equation*}
    This implies $mz=O(\tau^3/d^3)$.

    If $b=O(k^{\alpha/2})$, then \lemref{vertex-lb} gives $\nv\geq \sqrt{k/b}=\Omega(k^{1/2-\alpha/4})$, and $\tau=3\lOpt/\nv=O(k^{\alpha/4})$.  If $m\leq \nv/6$, then $z=\Theta(m)$, giving $m=O(k^{3\alpha/8}/d^{3/2})$.  If $m>\nv/6$, then $z=\Theta(\nv)$, giving $m=O(k^{\alpha-1/2}/d^3)$.
\end{proof}

Finally, bootstrapping these multiplicity bounds is sufficient to yield the tight bounds on $b$ and $\WD$, and a preliminary bound on the longest edge.

\begin{theorem}
    \thmlab{improved-gap}%
    We have $\bb = O(k^{1/3})$ and $\WD = O(k^{1/6})$.  Consequently,
    \propref{edge-bound} gives the preliminary longest-edge bound
    $O(k^{1/3})$.
\end{theorem}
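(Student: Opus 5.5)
The plan is a bootstrap on an exponent. Suppose we already know $b=O(k^{\alpha/2})$ for some $\alpha\le 1$; initially $\alpha=1$ by \obsref{b_ub_silly}, and after \clmref{edge-structure} and \lemref{main-gap} we even have $\alpha=6/7$. We redo the edge-structure count from \clmref{edge-structure}, but now use the multiplicity bounds of \lemref{exchange} in place of the crude edge-length cap. The target is an improved exponent $\alpha'=f(\alpha)$ with $f(\alpha)<\alpha$ whenever $\alpha>2/3$. Iterating then drives $\alpha$ to $2/3$ up to an arbitrarily small loss. To remove that loss, we will instead plug $b=O(k^{\alpha/2})$ into a self-referential inequality and solve it directly, as was done in \lemref{main-gap}.

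The key counting step goes as follows. Write $b=\sum_t m_t$ over the edges. Edges with $m_t<8$ contribute $O(\nv)$ in total, and since $\nv=O(\lOpt^{2/3})=O(k^{1/3})$ by \thmref{Andrews}, this part is $O(k^{1/3})$. For edges with $m_t\ge 8$ we use \lemref{exchange}. If $m_t\le \nv/6$, then $m_t=O(k^{3\alpha/8}/d_t^{3/2})$. Otherwise $m_t=O(k^{\alpha-1/2}/d_t^3)$, which is $O(k^{1/3})$ when $\alpha\le 5/6$. In the large-$\alpha$ regime, the constraint $m_t>\nv/6=\Omega(k^{1/2-\alpha/4})$ gives $d_t^3=O(k^{5\alpha/4-1})$, so only $O(1)$ such edges exist, each bounded via $m_t d_t\le\emax$.

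For the main terms we sum over distinct primitive directions ordered by length, with $d_j=\Theta(\sqrt j)$. We also use the perimeter constraint $\sum m_t d_t\le\lOpt=O(\sqrt k)$. Each direction carries at most $\min\{k^{3\alpha/8}d^{-3/2},\ \emax/d\}$ boundary points. Directions with $d\le D$ contribute at most $\sum_{d\le D} d\cdot k^{3\alpha/8}d^{-3/2}=O(k^{3\alpha/8}\sqrt D)$, since there are $\Theta(d)$ primitive vectors of length about $d$. Directions with $d>D$ contribute at most $\lOpt/D$ in total, because $m_t\le m_td_t/D$. Balancing the two gives $D=k^{1/3-\alpha/4}$ and
\[
b=O\bigl(k^{1/3}+k^{3\alpha/8+1/6-\alpha/8}\bigr)=O\bigl(k^{1/3}+k^{\alpha/4+1/6}\bigr).
\]
So $\alpha'/2=\max(1/3,\alpha/4+1/6)$, and the fixed point is $\alpha=2/3$. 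Rather than iterate, we write $b\le C(k^{1/3}+k^{1/6}b^{1/2})$ with explicit constants. Here we need the constant in \lemref{exchange} to be uniform in $\alpha$, which holds because it depends only on $\tau^3=O((\lOpt^2/\nv)^{3}\cdots)$ and $\nv\ge\sqrt{k/b}$. Solving this quadratic inequality in $\sqrt b$ gives $b=O(k^{1/3})$ in one step.

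Given $b=O(k^{1/3})$, \propref{gap-from-b} yields $\WD=O(\sqrt b)=O(k^{1/6})$. Then \propref{edge-bound} gives a longest edge of length $O(\sqrt{\WD}\,k^{1/4})=O(k^{1/3})$.

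The main obstacle is to make the multiplicity bound of \lemref{exchange} usable directly in terms of $b$ rather than $\alpha$. We need $\tau=3\lOpt/\nv=O(\sqrt{kb/k})=O(\sqrt b)$, which makes $m_tz=O(b^{3/2}/d^3)$ explicit. This turns the case $m\le\nv/6$ into $m_t=O(b^{3/4}/d^{3/2})$, and the balancing above then becomes $b=O(k^{1/3}+b^{3/8}k^{1/4}\cdot k^{-1/12}\ldots)$. Getting the exponents exactly consistent at this step is where I expect to spend care, including handling the second case of \lemref{exchange} and the edges with $m<8$.
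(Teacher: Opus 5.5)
Your proposal is correct in substance and uses the same skeleton as the paper. You bootstrap on $b=O(k^{\alpha/2})$ through the exchange lemma, charge edges with $m_t<8$ to the $O(k^{1/3})$ vertex bound, get $\tau=3L/v=O(\sqrt b)$ from $v\ge\sqrt{k/b}$, and finish with a self-referential inequality. The difference is how you sum the small-direction multiplicities. The paper truncates $\sum_j m_j=O\bigl(k^{3\alpha/8}\sum_j j^{-3/4}\bigr)$ at $j=O(k^{1/3})$ using only the B\'ar\'any--Pach count of directions, which amounts to a primitive-length cutoff at $k^{1/6}$. That gives $\alpha'=\tfrac34\alpha+\tfrac16$, needs two steps from $6/7$ to get below $4/5$, and closes with $b\lesssim k^{1/3}+k^{1/12}b^{3/4}$. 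You instead cut at a length threshold $D$, bound the tail by the perimeter via $\sum_{d_t>D}m_t\le L/D$, and optimize $D$. This gives the faster contraction $\alpha'=\alpha/2+1/3$ (one step from $6/7$ already gives $16/21<4/5$) and the closing inequality $b\lesssim k^{1/3}+k^{1/6}\sqrt b$. Both reach $b=O(k^{1/3})$; the paper's route avoids the perimeter tail bound, while yours converges faster.

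There are three slips to fix.
\begin{itemize}
\item[(i)] For $\alpha>4/5$, the claim ``only $O(1)$ such edges exist'' is false. The constraint only gives $d_t=O(\rho_*)$ with $\rho_*=k^{(5\alpha/4-1)/3}$, so there can be $\Theta(\rho_*^2)$ large directions. This is harmless: $\sum_u\|u\|^{-3}$ over primitive $u$ is $O(1)$, so summing $m_t=O(k^{\alpha-1/2}d_t^{-3})$ over distinct directions gives $O(k^{\alpha-1/2})=O(k^{5/14})$ at $\alpha=6/7$. That is below your main term $k^{8/21}$. The paper instead uses $m_t\le\overline{e}/d_t$ and gets $O(\overline{e}\,\rho_*)$.
\item[(ii)] You cannot skip iteration entirely, because $b\le C(k^{1/3}+k^{1/6}\sqrt b)$ holds only once large directions are excluded. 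Do one bootstrap step first. For $\alpha<4/5$, a large direction would force $v^2=O(\tau^3)$, i.e.\ $k^{1-\alpha/2}=O(k^{3\alpha/4})$, which is impossible for large $k$. Alternatively, absorb their total contribution $O(\tau^3/v)=O(b^2/\sqrt k)$, which is $o(b)$ since $b=O(k^{3/7})$.
\item[(iii)] The expression $b^{3/8}k^{1/4}\cdot k^{-1/12}$ in your last paragraph is wrong. Balancing $b^{3/4}\sqrt D$ against $\sqrt k/D$ gives $k^{1/6}b^{1/2}$, as you had stated earlier.
\end{itemize}
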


\begin{proof}
    We bootstrap \lemref{exchange} with \thmref{Andrews} and \propref{gap-from-b}.  Suppose $b = O(k^{\alpha/2})$ for some $\alpha \leq 6/7$.  We derive $b = O(k^{\alpha'/2})$ for a strictly smaller $\alpha'$.

    By \lemref{vertex-lb}, $\nv = \Omega(k^{1/2-\alpha/4})$.  Moreover
    \[
        \tau = \frac{3\lOpt}{\nv} = O(k^{\alpha/4}).
    \]
    By \thmref{Andrews}, $\Popt$ has at most $\nv = O(\lOpt^{2/3}) = O(k^{1/3})$ distinct edge directions.  Enumerating these in nondecreasing order of primitive length, the $j$\th vector has length $\rho_j = \Omega(\sqrt{j})$.

    Let $m_j$ be the multiplicity of the edge with primitive length $\rho_j$.  Multiplicities smaller than $8$ contribute $O(k^{1/3})$, since there are only $O(k^{1/3})$ directions, so we only consider directions with $m_j\geq 8$.

    Call a direction \emph{small} if $m_j\leq \nv/6$, and \emph{large} otherwise.  For a small direction, \lemref{exchange} gives
    \begin{equation*}
        m_j
        =
        O\pth{\frac{k^{3\alpha/8}}{\rho_j^{3/2}} }
        =
        O\pth{\frac{k^{3\alpha/8}}{j^{3/4}}}.
    \end{equation*}
    Hence the total contribution of the small directions is
    \[
        b_2%
        =%
        O\Bigl( k^{3\alpha/8}\sum_{j=1}^{O(k^{1/3})} j^{-3/4} \Bigr)%
        =%
        O(k^{3\alpha/8+1/12}).
    \]

    For a large direction, \lemref{exchange} gives $m_j=O(k^{\alpha-1/2}/\rho_j^3)$.  Since also $m_j>\nv/6=\Omega(k^{1/2-\alpha/4})$, a large direction must satisfy
    \[
        \Omega(k^{1/2-\alpha/4}) = m_j = O(k^{\alpha-1/2}/\rho_j^3) \implies \rho_j = O(k^{(5\alpha/4-1)/3}).
    \]
    Call this threshold $\rho_*$.  If $\alpha<4/5$, then $\rho_*<1$, so no large direction exists.  If $\alpha\geq4/5$, then there are $O(\rho_*^2)$ large directions.  For them we use the maximum-edge-length bound $m_j\rho_j\leq\emax$.  Here $\WD=O(k^{\alpha/4})$ by \propref{gap-from-b}, so
    \begin{equation*}
        \emax=O(\sqrt{\WD} {k}^{1/4 })=O(k^{1/4+\alpha/8})
    \end{equation*}
    by \propref{edge-bound}.  Thus
    \[
        b_1 \leq \emax\sum_{j=1}^{O(\rho_*^2)}\frac1{\rho_j} = O(\emax\rho_*) = O(k^{13\alpha/24-1/12}).
    \]
    Since $13\alpha/24-1/12\leq 3\alpha/8+1/12$ for $\alpha\leq1$, the large directions are dominated by $b_2$.

    Therefore $b = O(k^{3\alpha/8 + 1/12})$.  Setting $b = O(k^{\alpha'/2})$ gives the recurrence $\alpha' = \tfrac{3}{4}\alpha + \tfrac{1}{6}$.  Starting from $\alpha_0 = 6/7$:
    \begin{equation*}
        \alpha_1
        = \tfrac{3}{4} \cdot \tfrac{6}{7} + \tfrac{1}{6}
        = \tfrac{17}{21}
        \approx 0.810,
        \qquad
        \alpha_2
        = \tfrac{3}{4} \cdot \tfrac{17}{21} + \tfrac{1}{6}
        = \tfrac{65}{84}
        \approx 0.774
        <
        \tfrac{4}{5},
    \end{equation*}
    so after two bootstrap steps we have $\alpha<4/5$, and hence no large directions.  At this point we close the estimate directly.  Since $\tau = O(\lOpt/\nv)=O(\sqrt{k}/\sqrt{k/b})=O(\sqrt b)$, \lemref{exchange} gives $m_j=O(b^{3/4}/j^{3/4})$ for every direction with $m_j\geq8$.  Therefore
    \begin{equation*}
        b
        \leq
        O(k^{1/3})
        +
        O\Bigl( b^{3/4}\sum_{j=1}^{O(k^{1/3})} j^{-3/4}\Bigr)
        =
        O\pth{b^{3/4} k^{1/12}} + O(k^{1/3}),
    \end{equation*}
    which implies $b = O(k^{1/3})$.  Then $\WD = O(\sqrt{\bb}) = O(k^{1/6})$ by \propref{gap-from-b}, and \propref{edge-bound} gives the preliminary longest-edge bound $O(\sqrt{\WD} k^{1/4}) = O(k^{1/3})$.
\end{proof}

\begin{corollary}
    \corlab{nv_bound}%
    Using the notations of \tabref{notations}, we have $\nv = \Omega(k^{1/3})$.
\end{corollary}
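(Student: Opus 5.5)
This is immediate from two earlier results. \lemref{vertex-lb} gives $\nv \geq \sqrt{k/\bb}$ for the optimal polygon, and \thmref{improved-gap} gives $\bb = O(k^{1/3})$. Substituting the second into the first yields
\begin{equation*}
    \nv \geq \sqrt{k/\bb} = \sqrt{k / O(k^{1/3})} = \Omega\pth{\sqrt{k^{2/3}}} = \Omega(k^{1/3}).
\end{equation*}

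We should check that \lemref{vertex-lb} applies unconditionally to $\Popt$. Its proof uses only the isoperimetric inequality for polygons (\thmref{i_p_polygons}), Pick's theorem, the upper bound $\lOpt \leq 2\sqrt{\pi k}$, and $\bb \leq 2\sqrt{\pi k}$. None of these depend on the bootstrapping in \thmref{improved-gap}, so no circularity arises. The lemma also needs $k$ sufficiently large, which is harmless for an asymptotic statement.

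As an alternative route, the perimeter is a sum of $\nv$ edge lengths, each at most $\emax$. So $\nv \geq \lOpt/\emax$. With the preliminary bound $\emax = O(k^{1/3})$ from \thmref{improved-gap}, this only gives $\Omega(k^{1/6})$, which is why we use \lemref{vertex-lb}.

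Together with \thmref{Andrews}, which gives $\nv = O(\lOpt^{2/3}) = O(k^{1/3})$, this yields $\nv = \Theta(k^{1/3})$. No step here is a genuine obstacle; all the substantive work lies in \thmref{improved-gap}.
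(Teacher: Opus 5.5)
Your proof is correct and is exactly the paper's argument: substitute $\bb = O(k^{1/3})$ from \thmref{improved-gap} into $\nv \geq \sqrt{k/\bb}$ from \lemref{vertex-lb}. Your check that \lemref{vertex-lb} does not depend on \thmref{improved-gap} is sound, since the dependence actually runs the other way.
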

\begin{proof}
    By \thmref{improved-gap}, we have $\bb = O(k^{1/3})$. By \lemref{vertex-lb}, $\nv \geq \sqrt{k/b} = \Omega(k^{1/3})$.
\end{proof}

Using \clmref{lower_1} with the improved bounds of \thmref{improved-gap} implies the following.

\begin{corollary}
    \corlab{perimeter-window}
    We have $ 2\sqrt{\pi k}-O(k^{-1/6}) \leq \lOpt \leq 2\sqrt{\pi k}.  $
\end{corollary}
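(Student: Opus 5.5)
The upper bound $\lOpt \leq 2\sqrt{\pi k}$ is exactly \lemref{tight-upper}, so only the lower bound needs work. For it we apply the first part of \clmref{lower_1} with the sharper exponent from \thmref{improved-gap}. That theorem gives $\bb = O(k^{1/3})$, so there is a constant $c\geq 1$ with $\bb \leq c k^{1/3}$ for all sufficiently large $k$.

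We take $\alpha = 1/3 \in (0,1/2]$ in \clmref{lower_1}. The claim then yields
\[
\lOpt \geq 2\sqrt{\pi k} - 2\sqrt{\pi}\, c\, k^{1/3-1/2} = 2\sqrt{\pi k} - O(k^{-1/6}).
\]
For completeness, we recall the mechanism inside the claim, which is the only step requiring care. By \thmrefY{pick}{Pick's theorem}, $\areaOpt = k - \bb/2 - 1 \geq k - c' k^{1/3}$. The isoperimetric inequality then gives $\lOpt \geq 2\sqrt{\pi \areaOpt}$. Finally, \factref{approx}, in the form $\sqrt{1-x}\geq 1-x$ for small $x$, converts this into
\[
2\sqrt{\pi k}\,\sqrt{1 - c'k^{-2/3}} \geq 2\sqrt{\pi k}\,\bigl(1 - c'k^{-2/3}\bigr) = 2\sqrt{\pi k} - O(k^{-1/6}).
\]

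The only obstacle is bookkeeping. The term $-1$ in Pick's formula must be absorbed into the $O(k^{1/3})$ deficit, and $k$ must be large enough that $c'k^{-2/3}<1/2$. Small $k$ are handled by enlarging the implied constant, since for bounded $k$ both sides of the inequality are bounded.
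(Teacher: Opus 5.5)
Your proposal is correct and matches the paper's argument: the upper bound is \lemref{tight-upper}, and the lower bound comes from plugging $b=O(k^{1/3})$ from \thmref{improved-gap} into \clmref{lower_1} with $\alpha=1/3$, which gives a deficit of $2\sqrt{\pi}\,c\,k^{-1/6}$. Your unpacking of the Pick/isoperimetric step, together with the bookkeeping for the $-1$ term and for small $k$, is accurate.
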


To obtain the final $O(k^{1/4})$ bound on the longest edge, we require a slightly more refined version of the argument above.

\begin{lemma}
    \lemlab{add_vertex}%
    Let $P$ be a convex grid polygon, and let $u$ be a primitive
    direction utilized by an edge of $P$.
    Then there is a convex grid polygon covering at least
    $\cardin{P\cap\ZZ^2}+1$ grid points, with perimeter
    $\leq \perimX{P}+O(1/\lenX{u}^2)$.
\end{lemma}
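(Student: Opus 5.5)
Fix an edge $e=[p',q']$ of $P$ whose generator is $u$, and write $d=\lenX{u}$. We add one exterior grid point lying on the adjacent lattice line, close to the middle of $e$, and then take the convex hull. The added area is a thin triangle of height $1/d$, which costs $O(1/d^2)$ in perimeter provided the new point projects well inside $e$. Since $u$ is primitive, \lemref{add} gives a $w\in\ZZ^2$ with $\lvert\det(u,w)\rvert=1$. Hence the lattice lines parallel to $u$ are spaced $1/d$ apart, and on each such line the grid points are spaced exactly $d$ apart. Let $h$ be the first such line on the exterior side of $e$, and let $x\in h\cap\ZZ^2$ be a point whose orthogonal projection onto the line of $e$ lies as close as possible to the midpoint of $e$. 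Since the points of $h$ are $d$ apart, the projection of $x$ is within $d/2$ of the midpoint. Put $Q=\CHX{P\cup\{x\}}$. Then $x\notin P$, because $e$ is a supporting edge and $x$ lies strictly beyond it. So $Q$ covers at least $\cardin{P\cap\ZZ^2}+1$ grid points.

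For the perimeter, note that $P\cup\triangle p'xq'$ is a polygon whose convex hull is $Q$. Replacing $e$ by the path $p'\to x\to q'$ therefore gives a curve whose hull $Q$ has perimeter at most $\perimX{P}+\lenX{p'x}+\lenX{xq'}-\lenX{e}$, since taking the convex hull does not increase perimeter. Let $a$ and $c$ be the distances from the projection of $x$ to $p'$ and to $q'$. As long as $a,c>0$, \clmref{triangle} with $s=1/d$ gives
\[
\lenX{p'x}+\lenX{xq'}-\lenX{e}\leq \frac{1}{2d^2a}+\frac{1}{2d^2c}.
\]
So it suffices to have $a,c=\Omega(d)$.

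The main obstacle is the case where $e$ has multiplicity $m=1$, that is, $e=u$ itself. Then the projection of $x$ may land near an endpoint, or even outside $e$. For $m\geq 2$ the edge has length at least $2d$, and choosing the nearest lattice point of $h$ on the correct side of the midpoint gives $a,c\geq d/2$, since the projections are spaced exactly $d$ apart. For $m=1$ I would instead use the Farey parents of $u$. Write $u=u_1+u_2$, and take the parent lying on the exterior side; as in the proof of \lemref{long_expand}, it lies on $h$, so we set $x=p'+u_i$. Its projection lies strictly inside $e$, because the projections of $u_1$ and $u_2$ onto $u$ are both positive and sum to $d$. However, the general bound on $a$ and $c$ is then only $a,c\geq 1/d$, not $\Omega(d)$, since the parents need not be long.

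To repair this in the $m=1$ case, I would use a sharper form of \clmref{triangle} in which the hypotenuse appears in the denominator. Concretely, compute the excess exactly as
\[
\lenX{u_1}+\lenX{u_2}-\lenX{u},
\]
and bound it using $\det(u_1,u_2)=1$ together with the parent lengths. When $\lenX{u_1}$ and $\lenX{u_2}$ are both $\Omega(d)$, this gives $O(1/d^2)$. If instead one parent is short, say $u_1$, then its angle with $u$ is large and the excess is only $O(1/\lenX{u_1})$. In that case I would try moving to a different lattice point of $h$, or adding a point near the adjacent edge. If that also fails, I would fall back on proving the lemma only for $m\geq 2$ or for long generators, as \lemref{long} suggests, since that may be all the subsequent argument needs. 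I expect this $m=1$ case with an unbalanced Farey parent to be the crux.
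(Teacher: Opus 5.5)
Your treatment of multiplicity $m\geq 2$ is fine. However, the proposal leaves the case $m=1$ open, and that case cannot be dropped. \lemref{p_d_bound} applies this lemma to the edge whose generator is longest. Nothing guarantees that this edge has multiplicity at least two, or that its generator is long in the sense of \lemref{long}, so your fallback would not support the later argument.

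The gap comes from two misjudgments. First, you ask for $a,c=\Omega(d)$, which would give $O(1/d^3)$. For the claimed $O(1/d^2)$ it suffices that $a$ and $c$ are bounded below by an absolute constant, since \clmref{triangle} with $s=1/d$ gives $\frac{1}{2d^2a}+\frac{1}{2d^2c}$. Second, the Farey-parent point already achieves this:
\begin{itemize}
\item Since $\det(u_i,u)=\pm1$, the parent $u_i$ lies at perpendicular distance exactly $1/d$ from the line spanned by $u$.
\item Since $\lenX{u_i}\geq1$, its projection onto $u$ has length $a=\sqrt{\lenX{u_i}^2-1/d^2}\geq\sqrt{1-1/d^2}$.
\item The same holds for $u-u_i=u_{3-i}$, so $c\geq\sqrt{1-1/d^2}$ as well.
\end{itemize}
Your bound $a\geq 1/d$ only used that $u_i\cdot u$ is a positive integer. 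Lagrange's identity $(u_i\cdot u)^2+\det(u_i,u)^2=\lenX{u_i}^2d^2$ gives the stronger $u_i\cdot u\geq\sqrt{d^2-1}$.

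Even more simply, use the hypotenuse form $h-b=s^2/(h+b)\leq s^2/h$. Each of the two new sides joins two distinct grid points, so $h\geq1$ and each side costs at most $1/d^2$. Your intuition that a short parent makes a large angle with $u$ is backwards. In fact $\sin\theta_i=1/(d\lenX{u_i})\leq1/d$, and the excess on that side is about $1/(2d^2\lenX{u_i})$, not $O(1/\lenX{u_i})$.

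With this correction the construction $x=p'+u_i$ works for every multiplicity, and it is essentially the paper's proof. The paper takes an explicit B\'ezout vector $w$ with $\det(u,w)=1$. It computes that $p+w$ or $p+u-w$ projects onto a primitive subsegment $[p,p+u]$ at distance $\Omega(1)$ from both endpoints, and then applies \clmref{triangle} with both legs $\Omega(1)$. Like the paper, you also need to handle $d=O(1)$ separately. For instance, $u=(1,0)$ has no Farey parents, and the adjacent lattice points project onto endpoints. In that case adding any adjacent exterior grid point costs $O(1)=O(1/d^2)$.
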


\begin{proof}
    Let $d=\lenX{u}$, and let $AB$ be an edge of $P$ whose
    primitive direction is $u$.  If $d=O(1)$, choose a primitive
    subsegment $[p,p+u]$ of $AB$.  Since $u$ is primitive, there is a
    grid point on each adjacent lattice line forming a primitive triangle
    with this subsegment; one of the two is on the exterior side of
    $P$.  Adding that point and taking the convex hull increases the
    perimeter by at most a constant, which is $O(1/d^2)$ in this case.
    Thus assume $d$ is larger than an absolute constant.

    After reflecting and swapping the coordinate axes if necessary,
    write $u=(a,b)$ with $1\leq a\leq b$ and $\gcd(a,b)=1$.  In the
    remaining nontrivial case $b\geq2$.  Choose
    $i\in\{1,\ldots,b-1\}$ and $x\in\ZZ$ with
    $ai-bx=1$, and put $w=(x,i)$.  Then
    $\det(u,w)=1$, so $w$ lies on a lattice line parallel to the line
    spanned by $u$ and at perpendicular distance $1/d$ from it.

    Let $\alpha$ be the distance, measured along direction $u$, from the
    origin to the projection of $w$ onto $\Re u$.  Since
    $x=(ai-1)/b$ and $d^2=a^2+b^2$,
    \[
        \alpha
        =
        \frac{w\cdot u}{d}
        =
        \frac{ax+bi}{d}
        =
        \frac{i d}{b}-\frac{a}{bd}.
    \]
    Hence
    \[
        \alpha\geq \frac{d}{b}-\frac{1}{d}=\Omega(1),
        \qquad
        d-\alpha
        =
        \frac{(b-i)d}{b}+\frac{a}{bd}
        =
        \Omega(1).
    \]
    Thus this adjacent lattice point projects to the interior of the
    primitive segment $[0,u]$, at distance bounded below by an absolute
    constant from both endpoints.

    Place the construction on a primitive subsegment $[p,p+u]$ of
    $AB$.  The two points $p+w$ and $p+u-w$ lie on the two adjacent
    lattice lines.  One of them is on the exterior side of $P$; call
    it $w'$, and let $r$ be its orthogonal projection onto $AB$.  The
    point $w'$ is a new grid point, its height above $AB$ is $1/d$, and
    $r$ is at distance $\Omega(1)$ from both ends of the chosen
    primitive subsegment.

    Let $L=\lenX{AB}$ and $t=\lenX{Ar}$.  The polygonal curve obtained
    from $\partial P$ by replacing the edge $AB$ with the two
    segments $Aw'$ and $w'B$ contains
    $\CHX{P\cup\{w'\}}$, and hence upper-bounds its perimeter.
    Since $t$ and $L-t$ are both $\Omega(1)$, \clmref{triangle} gives
    \[
        \perimX{\CHX{P\cup\{w'\}}}-\perimX{P}
        \leq
        \frac{1}{d^2t}+\frac{1}{d^2(L-t)}
        =
        O(1/d^2).
    \]
    The hull contains all grid points of $P$ and also the exterior
    grid point $w'$, so it covers at least
    $\cardin{P\cap\ZZ^2}+1$ grid points.
\end{proof}

\begin{lemma}
    \lemlab{p_d_bound}%
    Every primitive direction that appears on an edge of $\Popt$ has length $O(k^{1/4})$.
\end{lemma}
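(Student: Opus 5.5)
We argue by an exchange: if some edge of $\Popt$ has a long primitive generator, we can add one grid point very cheaply and delete one grid point, and the net perimeter change is negative, contradicting optimality.

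Let $u$ be a primitive direction used by an edge of $\Popt$, and set $d=\lenX{u}$. By \lemref{add_vertex}, there is a convex grid polygon $Q^+$ covering at least $k+1$ grid points with
\[
    \perimX{Q^+}\leq \lOpt + C_1/d^2 .
\]
In the other direction, by \corref{nv_bound} we have $\nv=\Omega(k^{1/3})$. With $\tau=3\lOpt/\nv=O(k^{1/6})$, the counting argument of \lemref{exchange} applies: at most $\nv/3$ edges are longer than $\tau$, so at least $\nv/3-2$ vertices are \emph{short}. Choose one short vertex $x$ that is not an endpoint of the edge carrying $u$. By \lemref{shortcut}, removing $x$ saves at least $1/(4\tau^3)=\Omega(k^{-1/2})$ in perimeter. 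As in the proof of \lemref{exchange}, this shortcut does not disturb the edge along $u$. The local halfplane cut at $x$ leaves that edge as a boundary edge of the hull, so the construction of \lemref{add_vertex} can still be performed on it.

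Doing both operations gives a polygon with at least $k$ grid points and perimeter at most
\[
    \lOpt - \frac{c_2}{k^{1/2}} + \frac{C_1}{d^2}.
\]
By \obsref{trim-to-k} it can be trimmed to exactly $k$ grid points without increasing perimeter. Optimality of $\Popt$ then forces $C_1/d^2\geq c_2/k^{1/2}$, that is, $d=O(k^{1/4})$.

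The two operations must be compatible. The added point $w'$ lies outside $\Popt$, beyond the edge along $u$. The removed vertex $x$ lies elsewhere, and all grid points remaining after the cut stay on the interior side of the supporting line of $u$. Hence adding $w'$ and taking the convex hull only replaces that one edge, and it does not bring $x$ or any other discarded point back. Therefore the count changes by exactly $+1-1$, and the perimeter effects add.

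\textbf{Main obstacle.} The step needing the most care is this compatibility, together with making sure the shortcut saving really is $\Omega(k^{-1/2})$. The latter needs a short vertex not adjacent to the $u$-edge, which is why we use $\nv=\Omega(k^{1/3})$ from \corref{nv_bound}, so that $\nv/3-2$ is well above $2$. The small case $d=O(1)$ is trivial.
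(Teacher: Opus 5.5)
Your proposal is correct and is essentially the paper's proof. Both arguments use $\nv=\Omega(k^{1/3})$ and $\tau=O(k^{1/6})$, delete a short vertex not incident to the $u$-edge via \lemref{shortcut} (saving $\Omega(k^{-1/2})$), note that the $u$-edge survives as a boundary edge, insert one exterior grid point on it via \lemref{add_vertex} at cost $O(1/d^2)$, and finish with \obsref{trim-to-k} and optimality. The one overstatement is that the count changes by exactly $+1-1$; the paper only claims, and \obsref{trim-to-k} only needs, that the final hull contains at least $k$ grid points.
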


\begin{proof}
    Let $e$ be an edge of $\Popt$ whose primitive direction $u$ is
    longest among all primitive edge directions, and put $d=\lenX{u}$.

    We pay for the future insertion along $e$ by first deleting a cheap
    vertex elsewhere.  By \thmref{improved-gap}, $b=O(k^{1/3})$, and therefore \lemref{vertex-lb} gives $\nv=\Omega(k^{1/3})$.  Also $\lOpt=O(\sqrt{k})$, so
    \[
        \tau=\frac{3\lOpt}{\nv}=O(k^{1/6}).
    \]
    At most $\nv/3$ edges have length larger than $\tau$, and hence at least $\nv/3$ vertices have both incident edges of length at most $\tau$. Since at least $\nv/3 > 2$ (for sufficiently large $k$) vertices are short, there is a short vertex $x'$ not incident to $e$. By \lemref{shortcut}, deleting $x'$ and taking the convex hull decreases the perimeter by
    \[
        \Omega(1/\tau^3)=\Omega(k^{-1/2}).
    \]

    Perform the deletion first.  Since $x'$ is not incident to~$e$, both
    endpoints of~$e$ remain, and all remaining grid points are still in
    the same closed halfplane bounded by the supporting line of~$e$.
    Therefore the same segment~$e$ is still a boundary edge of the new
    hull.  Then apply \lemref{add_vertex} on that edge, at cost
    $O(1/d^2)$.  The combined
    operation deletes exactly one old grid point and adds at least one
    new grid point, so the resulting polygon contains at least as many
    grid points as $\Popt$.  By \obsref{trim-to-k} and the optimality of
    $\Popt$, this forces
    \begin{math}
        1/\sqrt{k} = O(1/d^2),
    \end{math}
    and $d=O(k^{1/4})$.
\end{proof}

\begin{corollary}
    \corlab{edge-improved}%
    The longest edge of $\Popt$ has length $\emax=O(k^{1/4})$.
\end{corollary}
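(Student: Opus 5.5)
The plan is to bound the length $\lenX{e}=md$ of an arbitrary edge $e$ of $\Popt$, where $e=mu$ with $u$ primitive, $d=\lenX{u}$, and $m$ the multiplicity. We split on the size of $m$ and combine \lemref{p_d_bound}, which bounds $d$, with the exchange inequality of \lemref{exchange}, which bounds $m$ in terms of $d$. In every case we aim to show $md=O(k^{1/4})$.

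\emph{Small multiplicity, $m<8$.} By \lemref{p_d_bound}, $d=O(k^{1/4})$. Hence $\lenX{e}=md<8d=O(k^{1/4})$, and nothing else is needed.

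\emph{Large multiplicity, $m\geq 8$.} We apply \lemref{exchange} with the sharp parameters now available:
\begin{compactitem}
    \item By \thmref{improved-gap}, $b=O(k^{1/3})$.
    \item By \corref{nv_bound} (or \lemref{vertex-lb}), $\nv=\Omega(k^{1/3})$.
    \item Combined with $\lOpt=O(\sqrt{k})$ from \lemref{tight-upper}, this gives $\tau=3\lOpt/\nv=O(k^{1/6})$, so $\tau^3=O(k^{1/2})$.
\end{compactitem}
\lemref{exchange} then yields $mz=O(k^{1/2}/d^3)$, where $z=\min\{\floor{m/2},\floor{\nv/12}\}$. Two sub-cases follow.
\begin{compactitem}
    \item If $\floor{m/2}\leq\floor{\nv/12}$, then $z=\Theta(m)$. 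So $m^2=O(k^{1/2}/d^3)$, i.e.\ $m=O(k^{1/4}/d^{3/2})$, and therefore $md=O(k^{1/4}/\sqrt d)=O(k^{1/4})$ since $d\geq 1$.
    \item Otherwise $z=\floor{\nv/12}=\Theta(k^{1/3})$. Then $m=O(k^{1/2-1/3}/d^3)=O(k^{1/6})$, so $md=O(k^{1/6}/d^2)=O(k^{1/6})$, which is even better.
\end{compactitem}
In either sub-case $\lenX{e}=O(k^{1/4})$. Taking the maximum over all edges gives $\emax=O(k^{1/4})$.

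The main point needing care is checking that the hypotheses of \lemref{exchange} (through \lemref{add}) are met. Specifically, we need $z\geq 2$, which follows from $m\geq 8$ (so $\floor{m/2}\geq 4$) and $\nv\geq 24$ for $k$ sufficiently large by \corref{nv_bound}. We also need the strict $m>8$ used in \lemref{add}; the boundary case $m=8$ can be absorbed into the $m<8$ case, since there too $md\leq 8d=O(k^{1/4})$. Small $k$ only affects constants.

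Finally, the matching lower bound $\Omega(k^{1/4})$ claimed in the abstract is not part of this corollary, so we would not address it here.
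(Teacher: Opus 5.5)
Your proposal is correct and matches the paper's proof essentially step for step. You split on the multiplicity, use \lemref{p_d_bound} for bounded $m$, and for large $m$ apply \lemref{exchange} with $b=O(k^{1/3})$ (the paper's choice $\alpha=2/3$), which gives $md=O(k^{1/4}/\sqrt{d})$ or $md=O(k^{1/6}/d^2)$. Your explicit checks that $z\geq 2$, and that the case $m=8$ can be absorbed into the small-multiplicity case, are slightly more careful than the paper, which glosses over the mismatch between the $m\geq 8$ hypothesis of \lemref{exchange} and the $m>8$ hypothesis of \lemref{add}.
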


\begin{proof}
    Let $e=m u$ be an edge of $\Popt$, where $u$ is primitive and $d=\lenX{u}$.  If $m\leq 8$, then $\lenX{e}=md=O(k^{1/4})$ by \lemref{p_d_bound}.  Assume then that $m>8$. This is the only case in which multiplicity might make the edge longer than its primitive direction.

    We now apply \lemref{exchange} with $\alpha=2/3$, which is allowed because \thmref{improved-gap} gives $b=O(k^{1/3}) =O(k^{\alpha/2})$.  By \lemref{exchange}, with $\alpha=2/3$,
    \[
        \lenX{e} =%
        md =%
        \begin{cases}
          O(k^{1/4}/d^{1/2}), & \text{if } m\leq \nv/6,\\
          O(k^{1/6}/d^2), & \text{if } m>\nv/6.
        \end{cases}.
    \]
    Since $d\geq1$, the first case is $O(k^{1/4})$ and the second is $O(k^{1/6})\subseteq O(k^{1/4})$.
\end{proof}

\subsubsection{Bounding the turn angle}

The next property of $\Popt$ to bound is the turning angel.

\begin{defn}
    Consider a convex polygon $P$. As one traverses its boundary in the counterclockwise direction, it enters the $i$\th vertex on a direction $d_i$ and leaves on a direction $d_{i+1}$. The angle between these two directions, is the \emphi{turn angle} at the $i$\th vertex. The maximum such quantity is the \emphi{turn angle} of $P$.
\end{defn}

\begin{lemma}
    \lemlab{turn_angle_2} The maximum turn angle of $\Popt$ is $O(1/k^{1/6})$.
\end{lemma}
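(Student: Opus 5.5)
The plan is to bound the turn angle purely geometrically, from the thin annulus of \thmref{improved-gap}. No further exchange argument should be needed. Let $\epi$, $\rin$, $\Rout$ be as in \defref{epicenter}, so that $\diskY{\epi}{\rin}\subseteq\Popt\subseteq\diskY{\epi}{\Rout}$. By \thmref{improved-gap} we have $\WD=\Rout-\rin=O(k^{1/6})$, and by the earlier claim on the sandwich radii we have $\rin=\Omega(\sqrt k)$ and $\Rout=O(\sqrt k)$. Hence
\begin{equation*}
    \frac{\WD}{\Rout}=O(k^{-1/3}).
\end{equation*}

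Fix a vertex $x$ of $\Popt$ and put $\rho=\lenX{x-\epi}$, so $\rin\leq\rho\leq\Rout$. Both edges incident to $x$ lie on supporting lines of $\Popt$. Each such line therefore avoids the interior of $\diskY{\epi}{\rin}$, so its distance from $\epi$ is at least $\rin$.

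For a line $\ell$ through $x$, let $\varphi$ be the angle between $\ell$ and the direction perpendicular to $x-\epi$, i.e., the ``tangential'' direction at $x$. The distance from $\epi$ to $\ell$ is $\rho\cos\varphi$. The supporting-line condition $\rho\cos\varphi\geq\rin$ then gives
\begin{equation*}
    \cos\varphi\geq \frac{\rin}{\rho}\geq\frac{\rin}{\Rout}=1-\frac{\WD}{\Rout}.
\end{equation*}
Using $1-\cos\varphi\geq \varphi^2/5$ for small $\varphi$ (as in \factref{approx}), this yields $\varphi\leq O(\sqrt{\WD/\Rout})=O(k^{-1/6})$.

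Now apply this to both edges at $x$. Traversing $\partial\Popt$ counterclockwise, the incoming direction $d_i$ and the outgoing direction $d_{i+1}$ both point ``forward'' around $\epi$. By convexity, with $\epi$ in the interior, both are within angle $\varphi_{\max}=\arccos(\rin/\Rout)$ of the same oriented tangential direction at $x$, not its opposite. The turn angle, i.e., the angle between $d_i$ and $d_{i+1}$, is then at most $2\varphi_{\max}=O(k^{-1/6})$.

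The only step needing care is this orientation argument: we must make sure that the two edge directions are compared against the same orientation of the tangent, so that the angle bound does not degenerate into something close to $\pi$. I would handle it by noting two facts. First, $\epi$ lies strictly inside $\Popt$, because $\rin>0$. Second, for a convex polygon traversed counterclockwise, $\det(x-\epi,d)>0$ holds for every edge direction $d$ at $x$, since $\epi$ lies strictly to the left of every edge. Hence both directions have a positive tangential component, and the angles they make with the tangent lie in $(-\pi/2,\pi/2)$ with absolute value at most $\varphi_{\max}$, which completes the bound.
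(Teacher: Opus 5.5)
Your proof is correct and follows essentially the paper's argument. Both use the $O(k^{1/6})$-wide disk sandwich, the fact that each incident supporting line stays at distance at least $r_{\mathrm{in}}$ from the epicenter while the vertex lies within distance $R_{\mathrm{out}}$, and both reach the identical bound $\alpha\le 2\arccos(r_{\mathrm{in}}/R_{\mathrm{out}})=O(k^{-1/6})$. The one difference is that the paper simply asserts the intersection of two such lines is closest to the center in the symmetric configuration, whereas your split of the turn into two deviations from the same oriented tangent at $x$ (with the determinant-sign orientation check) actually proves that step.
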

\begin{proof}
    Let $c$ be the epicenter of the minimum-width concentric disk
    sandwich for $\Popt$, and let its radii be $\rin$ and $\Rout$.  By
    \thmref{improved-gap}, $\Rout-\rin=O(k^{1/6})$.  Also
    $\Rout=\Omega(\sqrt{k})$, since
    $\Popt\subseteq \diskY{c}{\Rout}$ and Pick's theorem gives
    $\areaOpt=k-O(k^{1/3})$.  Hence $\rin=\Omega(\sqrt{k})$.

    Consider a vertex $v$ with turn angle $\alpha$, and let $\ell_1$ and
    $\ell_2$ be the supporting lines of the two incident edges.  Since
    $\diskY{c}{\rin}\subseteq \Popt$, each of these supporting lines is
    at distance at least $\rin$ from~$c$.  The angle between the two
    outward normals is exactly~$\alpha$.  Among two lines with this angle
    and distances at least $\rin$ from~$c$, their intersection is closest
    to~$c$ when both distances are $\rin$ and the configuration is
    symmetric; in that case the distance is $\rin/\cos(\alpha/2)$.
    Therefore
    \[
        \Rout \geq \dY{c}{v} \geq \frac{\rin}{\cos(\alpha/2)}.
    \]
    Thus
    \[
        1-\cos(\alpha/2)
        \leq
        1-\frac{\rin}{\Rout}
        =
        \frac{\Rout-\rin}{\Rout}
        =
        O(k^{-1/3}).
    \]
    Since $1-\cos x=\Omega(x^2)$ for $x\in[0,\pi/2]$, it follows that
    $\alpha=O(k^{-1/6})$.
\end{proof}

\section{The dynamic programming algorithm}
\seclab{dp}

We present below our \DP algorithm for computing the optimal polygon $\Popt$, given $k$. The basic idea is to construct a visibility graph over a set of grid points $\VV$ that can serve as boundary grid points on $\partial \Popt$.  The dynamic program itself ``walks'' around $\Popt$, constructing it (or, more precisely, constructing various competing partial solutions that might end up forming $\Popt$). Importantly, the \DP step is between two boundary points (not vertices!) connected via a primitive vector.

In the following, we show a sequence of \DP algorithms with escalating sophistication improving the running times of the \DP as we go along. In particular, we show:
\begin{compactenumI}
    \item \textsc{Plug in:} Using the bounds from the previous section into the natural \DP, yields a \DP with running time $O(k^{2+1/2})$, see \thmref{dp-general} below.

    \item \textsc{Limiting the search space to a ring:} The above analysis implies that the optimal solution must lie in a pretty tight ring. By explicitly computing this ring-like area, we can reduce the \DP space significantly. This improves the running time to $O(k^{2+1/6})$, see \lemref{dp-annulus}.

    \item \textsc{Cutting into thinner rings:} The search can be restricted to a small number of thinner rings. This improves the running time to $O(k^2)$, see \lemref{dp-annulus_3}.

    \item \textsc{Smaller turns:} Using that the optimal solution has a small turn angle, we can further limit the number of transitions the algorithm needs to consider. This improves the running time to $O(k^{2-1/6})$, see \lemref{dp-annulus_4}.

    \item \secref{rings_2} carefully argues that, after choosing a suitable grid point on a supporting edge as the anchor, the epicenter has small horizontal uncertainty. This enables us to restrict the search space for candidate centers even further, resulting in an improved running time $O(k^{2-2/9})$, see \lemref{dp-annulus_5}.

    \item \textsc{A final technical speedup:} The last step batches the
    \DP transitions between columns using an underlying monotonicity
    property, see \secref{monotonicity}.  This is independent of the
    geometric center-location reduction, and combined with it gives the
    final running time $O(k^{29/18+o(1)})$.
\end{compactenumI}

\subsection{Basic tools}

A portion of $\partial \Popt$ between two consecutive grid points is a \emphi{link}. A link is always formed by a primitive vector. Several links with the same direction might form an edge of $\Popt$.

\paragraph{Edge-vector encoding.}
By \lemref{tight-upper}, $\UBlOpt = 2\sqrt{\pi k}$ is an upper bound on the optimal perimeter.  Fix also an upper bound $\emax = O(k^{1/4})$ on the maximum edge length of $\Popt$ (\corref{edge-improved}), and hence on the length of every boundary link used by the \DP.

The first step is to translate the polygon so that a chosen boundary grid
point on a supporting line is at the origin and, after a rotation if
needed, the polygon lies in the upper halfplane.  In the basic runs this
anchor can be the bottom-right vertex.  In the final center search
(\thmref{dp-leftmost}) the anchor may be an interior grid point of a
supporting edge, which is why the \DP is formulated in terms of boundary
grid points rather than only vertices.  An edge of $\Popt$ can be viewed
as a directed vector, as we traverse $\partial \Popt$, say, in the
counterclockwise direction.  Let $\npvC = \npvX{\emax}$ be the number of
oriented primitive grid vectors of length at most $\emax$, and order them as
\begin{equation*}
    u_1, u_2, \dots, u_\npvC
\end{equation*}
in strictly increasing angular order.  Let $\rho_j = \lenX{u_j}$ denote the length of the $j$\th primitive vector in this list.

\begin{observation}
    \obslab{direction-count}%
    The total number of primitive vectors of length at most $\emax$ is $\Theta(\emax^2)$, as this is bounded by the number of grid points in the square $[-\emax,\emax]^2$. The lower bound is implied by the known bounds on the totient summatory function \cite{hw-itn-08}.
\end{observation}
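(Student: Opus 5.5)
The statement to prove is \obsref{direction-count}: the number $\npvX{\emax}$ of primitive grid vectors of Euclidean length at most $\emax$ is $\Theta(\emax^2)$. I will prove the upper bound and the lower bound separately, using Euclidean balls and squares as sandwiches.

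\emph{Upper bound.} Every primitive vector $u$ with $\lenX{u}\le \emax$ is a nonzero grid point of the disk $\diskY{\origin}{\emax}$. The disk lies in the square $[-\emax,\emax]^2$, which contains at most $(2\emax+1)^2$ grid points. Since $\emax \geq 1$ in our setting, this gives $\npvX{\emax} \le (2\floor{\emax}+1)^2 = O(\emax^2)$.

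\emph{Lower bound.} Put $R=\floor{\emax/\sqrt{2}}$. Then $[-R,R]^2\subseteq \diskY{\origin}{\emax}$, so every primitive vector in $\Lambda_R$ has length at most $\emax$. By \factref{primitive_number}~(II), $\cardin{\Lambda_R}=\Theta(R^2)=\Theta(\emax^2)$. This already proves the observation, provided $\emax$ exceeds a suitable constant. For bounded $\emax$, the four axis vectors (of length $1\le\emax$) give a constant lower bound, which suffices.

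If one prefers not to rely on \factref{primitive_number}, a direct count works. It suffices to count primitive $(x,y)$ with $1\le y<x\le R$. For each $x\ge2$ there are exactly $\varphi(x)$ such vectors, so the total is $\sum_{x\le R}\varphi(x)$. The classical totient summatory estimate $\sum_{x\le R}\varphi(x)=\tfrac{3}{\pi^2}R^2+O(R\log R)$ \cite{hw-itn-08} gives $\Omega(R^2)$.

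Alternatively, an elementary Möbius-type bound also works. Among the $\Theta(R^2)$ grid points of $\IRX{R}^2$, the number with $\gcd$ divisible by a prime $p$ is at most $R^2/p^2$. Since $\sum_p 1/p^2<1/2$, at least half of the points are primitive.

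The only subtle step is this lower bound. It needs a genuine positive density of primitive vectors, which the totient estimate or the sieve bound supplies. Everything else is containment of grid points in a square or a disk.
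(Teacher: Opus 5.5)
Your proof is correct and follows the paper's route: you bound the count from above by the grid points of $[-\emax,\emax]^2$, and from below by the totient summatory estimate. Your inscribed-square reduction to $\Lambda_R$ and the sieve bound are valid extra variants. One trivial nit is that $x=1$ contributes nothing, so the count of primitive $(x,y)$ with $1\le y<x\le R$ is $\sum_{x\le R}\varphi(x)-1$; this does not affect the asymptotics.
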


\paragraph*{Good points.}
Let $k' = 2\ceil{\smash{\sqrt{k}}}$.  All relevant vertices lie in the set $\Good = \IRY{-k'}{k'} \times \IRY{0}{k'}$. Observe that $\cardin{ \Good}=O(k)$.

\paragraph*{Counting grid points on an edge.}
The number of grid points internal to the edge $pq$, denoted by $\#_{pq}$, with $p,q \in \Good $, can be computed by translating the segment so that it starts at the origin and ends at $z= q-p$.  This quantity is $(\gcdY{z_x}{z_y}) - 1 $, and it can be precomputed and queried in constant time.

\paragraph*{Counting grid points in a triangle.}
The number of grid points covered by a triangle $\triangle = \triangle \origin p q$, denoted by $\#_{\triangle}$, with $p,q \in \Good $, can be computed by using \thmrefY{pick}{Pick's theorem}. Indeed, the number of grid points, denoted by $\#_\triangle$, contained in $\triangle$ is
\begin{equation*}
    \#_\triangle
    =
    \areaC_\triangle
    +  b_\triangle /2 + 1  .
\end{equation*}

\paragraph{The \DP state and transitions.}
Order the points of $\Good\setminus\{\origin\}$ by the polar angle of the
ray from~$\origin$, breaking ties along a common ray by distance from
$\origin$.  A partial solution is a polygonal chain
\[
    \origin=p_0,p_1,\ldots,p_t=s
\]
whose points are increasing in this order.  Together with the segment
$s\origin$, this chain bounds the fan
$\bigcup_{i=1}^{t-1}\triangle\origin p_i p_{i+1}$; the triangles have
disjoint interiors because the endpoints are in polar order.  The fan need
not be convex; this is harmless.  If a completed fan contains $k$ grid
points, then the convex hull of the fan contains at least $k$ grid points
and has perimeter no larger than the chain.  By \obsref{trim-to-k}, it
contains a convex grid polygon with exactly $k$ grid points and no larger
perimeter.
Conversely, if a feasible convex polygon has $\origin$ on its boundary,
then every ray from~$\origin$ intersects it in a segment, so the boundary
arc from the first edge after~$\origin$ to the last edge before returning
to~$\origin$ is encountered in nondecreasing polar order.  We keep the two
boundary portions incident to~$\origin$ as the source and sink
transitions, and refine every other boundary portion into primitive
links.  This includes the case where $\origin$ lies in the relative
interior of a polygon edge, so the two incident portions are the two
subsegments of that edge.  This forms one of the chains considered by the
\DP.  Thus minimizing over these fan chains gives exactly the same
optimum as the original convex problem.

A state of the \DP is defined by a pair $(s, \nG)$:
\begin{compactenumI}
    \item $s \in \Good$: The state encodes a chain starting at the origin and ending at $s$, processed in counterclockwise angular order around the origin.

    \item $\nG$: The convex polygon constructed so far covers $\nG$ grid points.
\end{compactenumI}
For each state, the \DP stores the minimum perimeter among partial solutions with the prescribed parameters.  The start state is $( \origin, 1)$, where $\origin$ denotes the origin.

For an internal transition from $(s, \nG)$, use a primitive vector $u_j$,
for $j \in \IRX{\npvC}$, and set $s' = s + u_j$.  The \DP keeps the
transition only if $s'\in\Good$, $s'$ is later or equal to $s$ in the
above polar order.
The source
and sink transitions are allowed to use any grid vector of length at
most~$\emax$, since these represent the two boundary portions incident
to~$\origin$.  Even when $\origin$ lies in the relative interior of an
edge, these two portions are subsegments of that edge, and hence have
length at most~$\emax$.  The triangle
$\triangle = \triangle \origin s s' $ is added to the constructed
polygon.  If the state $(s, \nG)$ had perimeter $\tau$, then the new
polygon has perimeter
\begin{equation*}
    \tau' = \tau - \dY{\origin}{s} + \dY{\origin}{ s'} + \dY{s}{ s'}.
\end{equation*}
The number of grid points covered by the new polygon is
\begin{equation*}
    \nG' = \nG + \#_\triangle - \#_{\origin s} - 2.
\end{equation*}
The subtraction removes the shared diagonal $\origin s$, including its two endpoints.  The formula applies to the nondegenerate fan triangles used by internal transitions.  A degenerate transition along a ray from~$\origin$ only adds the previously uncounted grid points on that ray; this is used only for the source edge and, symmetrically, the sink edge adds no grid points and only removes the last closing diagonal.  Thus a transition is created from the current state $(s,\nG)$, with current perimeter $\tau$, to a new state $( s', \nG')$, with $\tau'$ being the suggested new perimeter.  If this state already has an associated solution with a shorter perimeter, the \DP does not update the stored value.

The \DP explores the states in angular order, with $\nG$ ranging from $0$ to $k$.  The optimal solution is the minimum perimeter of any completed state whose $\nG$ value is $k$.

\subsubsection{Analysis}

\begin{theorem}
    \thmlab{dp-general}%
    Given an upper bound $\emax$ on the maximum length of links, the \DP computes the optimal polygon in $O(k^2 \emax^2)$ time. Consequently, the \DP runs in $O(k^{2+1/2})$ time.
\end{theorem}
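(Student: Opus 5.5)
The proof splits into two parts: correctness of the \DP, and a count of states times transitions.

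\emph{Correctness.} I would rely on the fan-chain discussion preceding the statement. Take an optimal polygon $\Popt$, translated so that a boundary grid point on a supporting line is at $\origin$, with $\Popt$ in the upper halfplane. By \obsref{trim-to-k} we may take it to be a convex grid polygon. Its boundary, traversed in polar order around $\origin$, splits into three parts. The two portions incident to $\origin$ are single edge pieces, so each has length at most $\emax$ and is a legal source or sink transition. Every other boundary portion refines into primitive links. Each link has length at most $\emax$, since it is a primitive vector dividing some edge. Every point lies in $\Good$, because the perimeter is at most $2\sqrt{\pi k}$ by \lemref{tight-upper}, so all points are within distance $\sqrt{\pi k}\le k'$ of $\origin$ and have nonnegative $y$-coordinate. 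Hence this chain is realized by the \DP. The running count $\nG$ is exact, by the Pick's theorem formula for $\#_\triangle$ and by subtracting the shared diagonal, and the stored perimeter is exact. So the \DP value is at most $\Lopt$. Conversely, any completed chain counting $k$ points has a hull whose perimeter is at most that of the chain. Trimming the hull via \obsref{trim-to-k} gives a feasible polygon, so the \DP value is at least $\Lopt$. Equality follows, and the polygon is recovered by standard back-pointers.

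\emph{Running time.} The states are pairs $(s,\nG)\in\Good\times\IRY{0}{k}$, so there are $O(k)\cdot O(k)=O(k^2)$ of them. From each state the internal transitions are indexed by the primitive vectors $u_j$ of length at most $\emax$. By \obsref{direction-count} there are $\npvC=\Theta(\emax^2)$ such vectors. Source and sink transitions use arbitrary grid vectors of length at most $\emax$, also $O(\emax^2)$ in number, and they occur only from the start state and into termination. Each transition costs $O(1)$ to evaluate: the gcd table gives $\#_{pq}$, Pick's theorem gives $\#_\triangle$ via a cross product, the polar-order test is a cross-product sign, and the perimeter update is $O(1)$ arithmetic. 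Processing states in polar order of $s$ makes this a DAG relaxation. The one exception is collinear steps along a common ray, and those are ordered by distance. The total is therefore $O(k^2\emax^2)$ time, plus $O(k)$ preprocessing for the gcd table over differences in $\Good$. Plugging in $\emax=O(k^{1/4})$ from \corref{edge-improved} gives $O(k^{2+1/2})$.

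The step needing the most care is the claim that processing in polar order is a valid topological order, so that each state is finalized before it is expanded. Transitions with $s'$ equal or later in order are acyclic only if they strictly advance either the polar angle or, along a ray, the distance. This holds because $u_j\neq 0$, so I would make that explicit. Everything else is bookkeeping.
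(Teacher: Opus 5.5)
Your proposal is correct and follows the paper's route. You count $O(k^2)$ states $(s,\nG)$, each with $O(\emax^2)$ outgoing primitive steps (the $O(\emax^2)$ source and sink vectors are absorbed), take correctness from the fan-chain discussion preceding the theorem, and then plug in $\emax=O(k^{1/4})$ from \corref{edge-improved}. Your extra checks are sound: the anchored optimum lies in $\Good$ because its diameter is at most $\lOpt/2\le\sqrt{\pi k}\le k'$, and the polar order with distance tie-breaking makes the transitions acyclic. They only make explicit what the paper leaves implicit.
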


\begin{proof}
    The number of states is $\cardin{\Good} (k+1) = O(k^2)$.  Every internal state has $\npvC = O(\emax^2)$ outgoing primitive steps by \obsref{direction-count}; the source and sink choices add only $O(\emax^2)$ possible grid vectors of length at most~$\emax$.  It follows that the running time is $O(k^2 \emax^2)$.  Using $\emax = O(k^{1/4})$ (\corref{edge-improved}), the basic \DP takes $O(k^{5/2})$ time.
\end{proof}

\subsection{Constraining the solution via an annulus}

\begin{figure}[t]
    \centering \hfill%
    \includegraphics[page=1]{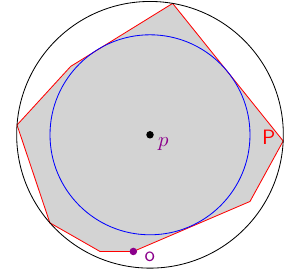} \hfill%
    \includegraphics[page=2]{figs/around} \hfill%
    \phantom{}%
    \caption{In search of a good center.}
    \figlab{chord}
\end{figure}

\begin{lemma}
    \lemlab{in}%
    After translating an optimal polygon so that a boundary grid point on
    a bottom supporting line is the origin and the polygon lies in the
    halfplane $y\geq0$, one can \emph{explicitly} compute a point $q$ and
    radii $r<R$ such that every such translate of $\Popt$: (i) contains
    $\Din=\diskY{q}{r}$, and (ii) it is contained in $\Dout=\diskY{q}{R}$,
    where $R-r=O(k^{1/3})$ and $R=\Theta(\sqrt{k})$.  Moreover, the
    epicenter lies in the explicit rectangle
    \[
        \cardin{x}\leq C k^{1/3},
        \qquad
        \cardin{y-\sqrt{k/\pi}}\leq C k^{1/6},
    \]
    for a sufficiently large absolute constant~$C$.
\end{lemma}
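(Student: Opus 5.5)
The plan is to locate the epicenter $\epi=(c_x,c_y)$ of the translated optimal polygon to within the stated rectangle, and then take $q$ to be a fixed explicit point such as $q=(0,\sqrt{k/\pi})$. For the radii we take $r=\sqrt{k/\pi}-C'k^{1/3}$ and $R=\sqrt{k/\pi}+C'k^{1/3}$. Containment then follows from the triangle inequality: $\diskY{q}{r}\subseteq\diskY{\epi}{\rin}$ as soon as $\dY{q}{\epi}\le \rin-r$, and $\diskY{\epi}{\Rout}\subseteq\diskY{q}{R}$ as soon as $\dY{q}{\epi}\le R-\Rout$. So it suffices to prove three things: $\rin$ and $\Rout$ are both $\sqrt{k/\pi}\pm O(k^{1/6})$, and $\dY{q}{\epi}=O(k^{1/3})$.

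\textbf{Step 1 (radii).} By \thmref{improved-gap}, $b=O(k^{1/3})$ and $\WD=O(k^{1/6})$. Pick's theorem gives $\areaOpt=k-O(k^{1/3})$. Since $\pi\rin^2\le\areaOpt\le\pi\Rout^2$, we get $\rin\le\sqrt{\areaOpt/\pi}\le\Rout$. Combined with $\Rout-\rin=\WD$, both radii equal $\sqrt{k/\pi}\pm O(k^{1/6})$; the error term $O(k^{1/3})/\sqrt{k}$ coming from the area is negligible.

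\textbf{Step 2 ($y$-coordinate).} The origin is a boundary point lying on the bottom supporting line $y=0$, and the polygon lies in $y\ge0$. The line $y=0$ supports $\Popt$, so it avoids the interior of $\diskY{\epi}{\rin}$, giving $c_y\ge\rin$. The origin lies in $\Popt\subseteq\diskY{\epi}{\Rout}$, so $c_y\le\dY{\origin}{\epi}\le\Rout$. Hence $\cardin{c_y-\sqrt{k/\pi}}=O(k^{1/6})$.

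\textbf{Step 3 ($x$-coordinate).} This is the main obstacle. The origin lies in the annulus, so $c_x^2+c_y^2\le\Rout^2$, which gives $c_x^2\le \Rout^2-\rin^2\le 2\Rout\WD=O(k^{2/3})$ and hence $\cardin{c_x}=O(k^{1/3})$. This is the same chord computation as in \propref{edge-bound}: the origin lies on a supporting line tangent-or-outside the inner disk, so its horizontal offset from the foot of the perpendicular is at most half a chord of the outer circle. This yields exactly the $k^{1/3}$ horizontal uncertainty, and no sharper bound is needed here.

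\textbf{Conclusion.} We now have $\dY{q}{\epi}\le\cardin{c_x}+\cardin{c_y-\sqrt{k/\pi}}=O(k^{1/3})$. Choosing $C'$ larger than the implied constants plus the $O(k^{1/6})$ radius slack gives $\Din\subseteq\diskY{\epi}{\rin}\subseteq\Popt\subseteq\diskY{\epi}{\Rout}\subseteq\Dout$. This holds uniformly over every such translate, because every constant used depends only on $k$. Finally, $R-r=2C'k^{1/3}$ and $R=\Theta(\sqrt k)$, as required.
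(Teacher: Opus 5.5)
Your proposal is correct and follows essentially the same route as the paper. You pin $\rin,\Rout$ to $\sqrt{k/\pi}+O(k^{1/6})$ via Pick's theorem and the $O(k^{1/6})$ annulus width, trap the epicenter using $c_y\geq\rin$ and $\dY{\origin}{\epi}\leq\Rout$, and then take $q=(0,\sqrt{k/\pi})$ with radii $\sqrt{k/\pi}\pm C k^{1/3}$. Your direct estimate $c_x^2\leq\Rout^2-\rin^2\leq 2\Rout\WD$ is just a more explicit form of the paper's bound on the diameter of the sector $S$ that contains the epicenter.
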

\begin{proof}
    Let $\diskY{p}{\rin}\subseteq\Popt\subseteq\diskY{p}{\Rout}$ be the
    concentric disk sandwich from \thmref{improved-gap}, with
    $\WD = \Rout - \rin = O(k^{1/6})$.  Put
    $\rho=\sqrt{k/\pi}$.  Since
    $\areaOpt=k-O(k^{1/3})$, the inclusions
    $\diskY{p}{\rin}\subseteq\Popt\subseteq\diskY{p}{\Rout}$ imply
    \[
        \rin,\Rout=\rho+O(k^{1/6}).
    \]
    Indeed, $\pi\rin^2\leq\areaOpt\leq\pi\Rout^2$ and
    $\Rout-\rin=O(k^{1/6})$.

    After translating so that the chosen support grid point is the
    origin, $\Popt$ lies in the halfplane $y\geq 0$.  Since the inner disk
    lies inside $\Popt$, we have $p_y\geq \rin$, see \figref{chord}.  Also
    the origin is on $\partial\Popt$, so
    $\rin\leq\dY{p}{\origin}\leq\Rout$.  Thus, $p$ lies in the sector
    \begin{equation*}
        S = \pth{ \bigl. \diskY{\origin}{\Rout} \setminus
           \diskY{\origin}{\rin} } \cap ( y \geq \rin ).
    \end{equation*}
    The middle point of the bottom chord of this sector is the point
    $s=(0,\rin)$.  The distance $\dY{s}{p}$ is bounded by the diameter
    of $S$.  This diameter is
    $O(\sqrt{\Rout(\Rout-\rin)})=O(k^{1/3})$, by the same argument used
    in \propref{edge-bound}.
    Also $p_y\in[\rin,\Rout]$ and
    $\rin,\Rout=\rho+O(k^{1/6})$.  Thus $p$ lies in the stated explicit
    rectangle after increasing the constant~$C$.

    Set $q=(0,\rho)$.  Since $\rin=\rho+O(k^{1/6})$, we have
    $\dY{s}{q}=O(k^{1/6})$, and by the triangle inequality
    $\dY{p}{q}=O(k^{1/3})$.  Choose a sufficiently large absolute
    constant $C$ and set
    \begin{equation*}
        r=\rho-Ck^{1/3}
        \qquad\text{and}\qquad
        R=\rho +Ck^{1/3}.
    \end{equation*}
    Increasing $C$ beyond the hidden constants gives
    $\dY{p}{q}+r<\rin$ and $\Rout+\dY{p}{q}<R$.  Hence
    $\Din = \diskY{q}{r}$ is contained in the interior of
    $\diskY{p}{\rin}$, and therefore in the interior of~$\Popt$, while
    $\Popt\subseteq\diskY{p}{\Rout}\subseteq \Dout = \diskY{q}{R}$.
\end{proof}

We can now use $\Din$ and $\Dout$ to speed up our search algorithm.  All the points in $B' = \Din \cap \ZZ^2 \cap (y \geq 0) $ are ``bad'', in the sense that the optimal solution contains them in its interior and therefore cannot use them as boundary points. Similarly, the points of $\VV' = (\Dout \setminus \Din) \cap \ZZ^2 \cap (y \geq 0)$ are ``good'' in the sense that the optimal solution might use them as vertices for $\Popt$. We can do slightly better.

\begin{defn}
    \deflab{forbidden} Let $\mathcal{B}=\CHX{B'\cup\{\origin\}}$ be the forbidden region, extending $B'$ by the shadow of the origin, and let $B=(\mathcal{B}\cap\ZZ^2)\setminus\{\origin\}$ be the corresponding forbidden grid set.  The set $\VV = (\VV' \setminus \mathcal{B})\cup\{\origin\}$ is the \emphi{good} set.
\end{defn}

Two points $p,q\in\VV$ are \emphw{visibly~compatible} if $\mathcal{B}$ is to the left of the directed line $pq$ and the line supporting $pq$ does not intersect $\mathcal{B}$ except possibly at~$\origin$.  We also require the supporting line to avoid the interior of the inner disk~$\Din$.  This tangent filter cannot remove a boundary link of $\Popt$, since $\Din\subseteq\Popt$ and every boundary link lies on a supporting line of~$\Popt$.  A visibly compatible segment may be nonprimitive.

Let $\prec$ be the polar order of $\VV\setminus\{\origin\}$ around
$\origin$, with distance from $\origin$ used to break ties on a ray.  We
create a sink copy~$\origin'$.  The graph $\G=(\VV\cup\{\origin'\},\EE)$
contains the directed edge $p\to q$ only when the following hold: $p,q$ are
visibly compatible after identifying $\origin'$ with $\origin$;
$\lenX{q-p}\leq\emax$; and either $p=\origin$, or $q=\origin'$, or
$p\prec q$.  For internal edges, where neither endpoint is an origin copy,
we additionally require $q-p$ to be primitive and the two endpoints to lie
on distinct rays from~$\origin$.  Source and sink edges may be nonprimitive
and may be radial; their boundary grid points are counted directly by the
corresponding gcd.  This makes $\G$ acyclic with the single source~$\origin$
and the single sink~$\origin'$.  Recall that $\emax=O(k^{1/4})$ by
\corref{edge-improved}, so no edge of $\Popt$ is lost.

The restricted graph preserves the optimum in the same sense as the basic
\DP.  Indeed, $B'$ is contained in the interior of~$\Popt$ and
$\origin\in\Popt$, so by convexity
$\mathcal{B}\setminus\{\origin\}$ is contained in the interior of~$\Popt$.
Hence the two boundary portions incident to~$\origin$ and every primitive
boundary link on the remaining boundary arc pass the visibility and tangent
tests, lie in~$\VV$, and appear as edges of~$\G$.  Conversely, any
source--sink path of~$\G$ defines a fan chain; if the \DP count of this fan
is $k$, then taking its convex hull and applying \obsref{trim-to-k} yields a
feasible convex grid polygon of no larger perimeter.

\begin{lemma}
    \lemlab{forced-count}%
    For every $p\in\VV\setminus\{\origin\}$, all source--$p$ paths in~$\G$
    have grid-count values in an interval of length at most
    $\cardin{\VV}+1$.  The source state at~$\origin$ has a single
    grid-count value.
\end{lemma}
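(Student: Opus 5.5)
We prove the statement by splitting the grid count of a partial fan into a part that depends only on the endpoint~$p$ and a part that is bounded by~$\cardin{\VV}$. Fix $p\in\VV\setminus\{\origin\}$ and a source--$p$ path $\origin=p_0,p_1,\ldots,p_t=p$ in~$\G$. Let $F$ be its fan, that is, the union of the segment $\origin p_1$ and the triangles $\triangle\origin p_ip_{i+1}$, closed by the segment $p\origin$. The \DP value of the path is $\cardin{F\cap\ZZ^2}$, since the transition formulas count every grid point of each triangle exactly once. The source state is the single value $\nG=1$, counting only~$\origin$, which settles the second sentence.

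\textbf{Step 1: where the counted points live.} We first show that $F\cap\ZZ^2\subseteq B\cup\VV$.
\begin{compactitem}
    \item Every $p_i$ lies in $\Dout$, which is convex, so each triangle $\triangle\origin p_ip_{i+1}$ lies in $\Dout$. Hence $F\subseteq\Dout\cap\{y\geq0\}$.
    \item A grid point of $\Dout\cap\{y\geq 0\}$ either lies in $\Din$, and then it is in $B'\subseteq\mathcal{B}$, or it lies in $\Dout\setminus\Din$. In the latter case it is in $\VV'$, so it belongs either to $\VV$ or to $\mathcal{B}$.
\end{compactitem}
Therefore $F\cap\ZZ^2\subseteq(\mathcal{B}\cap\ZZ^2)\cup\VV = B\cup\VV$, using that $\origin\in\VV$.

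\textbf{Step 2: the forbidden part is forced.} Let $W_p$ be the closed angular wedge at $\origin$ that lies clockwise of the ray $\origin p$ (inclusive) inside $\{y\geq0\}$. We claim
\[
    F\cap B \;=\; B\cap W_p .
\]
The inclusion $F\cap B\subseteq B\cap W_p$ holds because $F\subseteq W_p$: the chain points are in polar order, which is the content of the ordering condition $p\prec q$.

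For the reverse inclusion we use visible compatibility.
\begin{compactitem}
    \item The source edge $\origin p_1$ has $\mathcal{B}$ to its left and $\mathcal{B}\subseteq\{y\geq0\}$. Hence every point of $\mathcal{B}\setminus\{\origin\}$ has polar angle at least that of $p_1$. So $B\cap W_p$ is covered by the wedges of the triangles $\triangle\origin p_ip_{i+1}$, together with the ray $\origin p_1$ and the segment $\origin p$.
    \item Each internal edge $p_ip_{i+1}$ has $\mathcal{B}$ on its left, and its line avoids $\mathcal{B}$ except possibly at~$\origin$. Since $\origin\in\mathcal{B}$, the point $\origin$ is on that same left side. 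Thus every point of $\mathcal{B}$ in the wedge of $\triangle\origin p_ip_{i+1}$ lies on the $\origin$-side of the line $p_ip_{i+1}$, and so it lies inside that triangle.
    \item On the rays $\origin p_1$ and $\origin p$, the points of $\mathcal{B}$ lie between $\origin$ and the chain vertex, again because the chain avoids $\mathcal{B}$.
\end{compactitem}
This gives $B\cap W_p\subseteq F\cap B$. The right-hand side depends only on~$p$, so the number of forbidden points counted is the same for every path ending at~$p$.

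\textbf{Step 3: conclusion.} By Steps 1 and 2,
\[
    \cardin{F\cap\ZZ^2}=\cardin{B\cap W_p}+\cardin{F\cap\VV}.
\]
The second term lies in $[1,\cardin{\VV}]$, since $\origin$ is always counted. So all values for a fixed~$p$ lie in an interval of length at most $\cardin{\VV}$, and hence at most $\cardin{\VV}+1$ as claimed.

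The main obstacle is Step~2. There one must check carefully the degenerate situations: grid points of $\mathcal{B}$ lying exactly on the rays $\origin p_i$, the radial or nonprimitive source edge, and the fact that lines through~$\origin$ may touch $\mathcal{B}$ there. The point to verify is that these boundary cases are counted exactly once by the gcd terms in the transition rule, so the forced count $\cardin{B\cap W_p}$ is the same for all paths. We would handle this by fixing the closed-wedge convention above and comparing it directly against the transition formula $\nG'=\nG+\#_\triangle-\#_{\origin s}-2$.
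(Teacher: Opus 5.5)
Your proposal is correct and follows essentially the same route as the paper. Both arguments split the count into a forced forbidden part, your $B\cap W_p$ versus the paper's $B_{\preceq p}$, plus a variable part lying in $\VV$ by convexity of $\Dout$. Both show the forced part is swept exactly, by every source--$p$ path, using visible compatibility of each edge.

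Your closed-wedge convention is a slightly cleaner way to settle the ray cases that the paper handles by its distance tie-breaking remark. The justification there is that $\mathcal{B}$ is convex, contains $\origin$, and does not contain the chain vertex, so $\mathcal{B}$ meets each ray only before that vertex.
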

\begin{proof}
    Extend the polar order $\prec$ to the grid points in
    $B\cup\VV\setminus\{\origin\}$, breaking ties on a ray by distance
    from~$\origin$.  Let
    \[
        B_{\preceq p}=\Set{z\in B}{z\preceq p},
        \qquad
        f_p=\cardin{B_{\preceq p}}.
    \]
    Let $\gamma$ be a source--$p$ path, written as
    $\origin=p_0,p_1,\ldots,p_t=p$, and let
    \[
        F(\gamma,p)=\bigcup_{r=0}^{t-1}\triangle\origin p_r p_{r+1}
    \]
    be the closed fan already swept by this partial path.  The first
    triangle is interpreted as the source radial segment, and all later
    internal triangles are nondegenerate by the graph definition.

    We first prove that every point of $B_{\preceq p}$ lies in this fan.
    Consider $z\in B_{\preceq p}$, and choose the first index~$r$ such
    that $z\preceq p_{r+1}$.  If $r=0$, then visibility of the source edge
    $\origin p_1$ implies that no point of $\mathcal{B}$ with smaller
    angular order lies strictly before this ray: $\mathcal{B}$ is on the
    swept side of the directed line $\origin p_1$, which in the upper
    halfplane is the side after the ray.  Thus only points on the ray
    $\origin p_1$ can occur in this case, and they are counted by the
    source-edge gcd term with the distance-tie rule deciding boundary
    ownership.  Otherwise, $z$ has polar order between $p_r$ and
    $p_{r+1}$.  The edge $p_rp_{r+1}$ is visibly compatible, so
    $\mathcal{B}$ is on the swept side of the directed line
    $p_rp_{r+1}$, and this line does not cut through $\mathcal{B}$ except
    possibly at~$\origin$.  Hence the part of $\mathcal{B}$ in this
    angular wedge is contained in
    $\triangle\origin p_rp_{r+1}$, and therefore $z\in F(\gamma,p)$.
    Boundary-ray points are again assigned by distance order and counted by
    the gcd term of the corresponding radial side.  Thus every state ending
    at~$p$ has grid count at least~$f_p$.

    Conversely, no point of $B\setminus B_{\preceq p}$ lies in
    $F(\gamma,p)$.  Each nondegenerate triangle
    $\triangle\origin p_rp_{r+1}$ is contained in the closed angular wedge
    between its two endpoint rays, and the source radial segment contains
    only its own ray.  Since $p_0,\ldots,p_t=p$ are increasing in the polar
    order, their union has angular support only up to~$p$, with same-ray
    boundary points assigned by the distance-tie rule.  Thus the forbidden
    points swept by a source--$p$ path are exactly $B_{\preceq p}$.

    All vertices of a path lie in $\VV\subseteq\Dout$, and $\Dout$ is
    convex, so the swept fan contains no grid point outside $\Dout$.
    Inside $\Dout$, the grid points are partitioned into the forbidden set
    $B$ and the good annulus set $\VV$ (with $\origin$ counted separately).
    Therefore any variation in the grid count among paths ending at~$p$
    comes only from grid points of~$\VV$, giving
    \[
        f_p\leq \nG\leq f_p+\cardin{\VV}.
    \]
\end{proof}

\begin{lemma}
    \lemlab{visibility} %
    The graph $\G$ has $\cardin{\VV} =O(k^{5/6})$ vertices, and $\cardin{\EE}= O(k^{4/3})$ edges, with the maximum degree in $\G$ being $\Delta = O( k^{1/2})$.
\end{lemma}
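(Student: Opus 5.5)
We bound the vertex count by an area estimate and then bound each vertex's degree by counting the short primitive vectors that survive the polar-order and visibility filters.

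\emph{Vertex count.} Every point of $\VV\setminus\{\origin\}$ lies in the annulus $\Dout\setminus\Din$ with radii $r<R$ from \lemref{in}. Here $R-r=O(k^{1/3})$ and $R=\Theta(\sqrt{k})$, so the annulus has area $\pi(R^2-r^2)=\pi(R-r)(R+r)=O(k^{1/3}\cdot k^{1/2})=O(k^{5/6})$. The perimeter of the annulus is $O(\sqrt{k})$, so the number of grid points in it exceeds its area by at most $O(\sqrt{k})$ (pad each point by a unit square and compare areas). Hence $\cardin{\VV}=O(k^{5/6})$.

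\emph{Degree.} Fix a vertex $p\in\VV$. An internal edge $p\to q$ requires $q-p$ to be a primitive vector of length at most $\emax=O(k^{1/4})$. By \obsref{direction-count} this already gives out-degree $O(\emax^2)=O(k^{1/2})$. Source and sink edges are arbitrary grid vectors of length at most $\emax$, and there are $O(\emax^2)$ of those as well. For in-degree, $q-p$ ranges over the same set with sign reversed, so it obeys the same bound. This gives $\Delta=O(k^{1/2})$.

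\emph{Edge count.} Multiplying the two bounds naively gives $O(k^{5/6}\cdot k^{1/2})=O(k^{4/3})$ edges, which is exactly the claimed bound. Only the source vertex $\origin$ and the sink copy $\origin'$ need separate attention. Their edges are also restricted to length at most $\emax$, so each has degree $O(k^{1/2})$ and contributes negligibly. Thus $\cardin{\EE}\leq\sum_p \deg(p)=O(\cardin{\VV}\,\Delta)=O(k^{4/3})$.

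No finer analysis is needed: the visibility and tangent filters only remove edges, so the upper bounds survive. The one step needing care is the lattice-point count in a thin annulus. The standard Gauss-type argument (each grid point owns a unit square, and these squares lie in the annulus thickened by $\sqrt2$ on each side) settles it. The resulting error term $O(R)=O(\sqrt{k})$ is dominated by $k^{5/6}$.
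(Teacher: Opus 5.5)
Your proposal is correct and follows essentially the same route as the paper: bound $\cardin{\VV}$ by the area of the $O(k^{1/3})$-wide annulus from \lemref{in}, bound the degree by the $O(\emax^2)=O(k^{1/2})$ grid vectors of length at most $\emax$ (primitive for internal edges, arbitrary for source and sink edges), and multiply to get the edge count. Your unit-square padding argument simply makes explicit the paper's remark that the annulus area also bounds the number of grid cells meeting it, with the $O(\sqrt{k})$ boundary term absorbed.
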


\begin{proof}
    By \lemref{in}, the set $\VV$ lies in an annulus of outer radius $R = O(\sqrt{k})$ and width $w = O(k^{1/3})$, and thus its area is $ O(w R) = O( k^{5/6})$, which also bounds the number of grid cells intersecting it, and thus the number of grid points in $\VV$, implying $\cardin{\VV} = O( k^{5/6})$.

    Every internal edge $u \to w$ in $\G$ is associated with the unique primitive grid vector $w - u$ of length at most $\emax$.  Source and sink edges are not necessarily primitive, but there are only $O(\emax^2)$ grid vectors of length at most~$\emax$.  By \obsref{direction-count}, the number of primitive grid vectors of length at most $\emax$ is also $O(\emax^2)$.  Using the refined edge bound $\emax = O(k^{1/4})$ (\corref{edge-improved}), we have
    \[
        \Delta = O(\emax^2) = O(k^{1/2}).
    \]

    Thus, the number of edges in the graph is $\cardin{\EE} \leq \cardin{\VV} \cdot \Delta = O(k^{5/6} \cdot k^{1/2}) = O(k^{4/3})$.
\end{proof}

The new \DP algorithm uses only the vertices of $\G$, starting at the origin, and only edges of $\G$ for transitions.  It precomputes, for every edge in $\G$, the corresponding grid-count and perimeter updates, so that the \DP can process an edge in constant time. We thus get the following.
\begin{lemma}
    \lemlab{dp-annulus}%
    The above dynamic program computes the smallest-perimeter polygon containing $k$ points of the grid in $O( k^{13/6})=O( k^{2+1/6})$ time.
\end{lemma}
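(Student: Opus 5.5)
The plan is to bound the running time by (number of states) $\times$ (per-state work), while keeping correctness from the discussion preceding \lemref{forced-count}. Correctness is already settled: the restricted graph $\G$ keeps every boundary link of $\Popt$, since $\emax=O(k^{1/4})$ by \corref{edge-improved} and $\mathcal{B}\setminus\{\origin\}$ lies in the interior of $\Popt$. Conversely, every source--sink path with count $k$ gives a feasible polygon of no larger perimeter, via the convex hull and \obsref{trim-to-k}. So I only need to organize the \DP over $\G$ and count its work.

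First, the \DP state space is the set of pairs $(p,\nG)$ with $p\in\VV\cup\{\origin'\}$. The key observation is \lemref{forced-count}: for each $p$, the attainable values of $\nG$ lie in a window $[f_p, f_p+\cardin{\VV}]$. Naively one would use $\nG\in\IRY{0}{k}$, but instead I precompute $f_p$ for every $p$. I would do this by sorting $B\cup\VV$ in polar order, which costs $O(k\log k)$ since $\cardin{B}=O(k)$, and then taking prefix counts. The state for $(p,\nG)$ is then stored at offset $\nG-f_p$, so each vertex carries an array of length $\cardin{\VV}+2$. The number of states is therefore $O(\cardin{\VV}^2)=O(k^{5/3})$ by \lemref{visibility}.

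Second, I process vertices in the topological order $\prec$, with source $\origin$ first and sink $\origin'$ last; this order is valid because $\G$ is acyclic. For each vertex $p$ and each outgoing edge $p\to q$, I relax all offsets of $p$ into $q$. The count increment $\#_\triangle-\#_{\origin p}-2$ and the perimeter increment are constants per edge, precomputed in $O(1)$ via Pick and gcd. So each relaxation is $O(1)$, and an edge costs $O(\cardin{\VV})$. Any increment that lands outside $q$'s window, or above $k$, is discarded, which is safe since \lemref{forced-count} says such values never occur. The total time is
\[
  O\bigl(\cardin{\EE}\cdot\cardin{\VV}\bigr) = O\bigl(k^{4/3}\cdot k^{5/6}\bigr)=O(k^{13/6}),
\]
using \lemref{visibility}. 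Building $\G$ itself costs $O(\cardin{\VV}\Delta)=O(k^{4/3})$ times the cost of testing visible compatibility. I would make that test $O(1)$ amortized by precomputing, for each direction, the tangent support of the convex region $\mathcal{B}$ and of $\Din$. Alternatively, an $O(\log k)$ test per edge suffices, since $O(k^{4/3}\log k)$ is still dominated by $O(k^{13/6})$. The optimum is the smallest value at the sink state with $\nG=k$.

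The main obstacle is making sure the window argument really replaces the factor $k$ by $\cardin{\VV}$. Concretely, the stored offset must be consistent across edges: the count offset at $q$ must equal the offset at $p$ plus the edge increment, minus $f_q-f_p$. That last difference is again a per-edge constant, so the shift is uniform. I would verify this by rechecking that the forbidden points swept between the rays of $p$ and $q$ are exactly $B_{\preceq q}\setminus B_{\preceq p}$, which is the content of \lemref{forced-count}. Given that, the bound follows directly.
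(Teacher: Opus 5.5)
Your proposal is correct and follows essentially the same route as the paper. \lemref{forced-count} confines each vertex to a window of $\cardin{\VV}+1$ grid-count values, which gives $O(\cardin{\VV}^2)=O(k^{5/3})$ states, each with $O(\Delta)=O(k^{1/2})$ constant-time transitions; this is exactly your $O(\cardin{\EE}\cdot\cardin{\VV})=O(k^{13/6})$ count.

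Your offset arrays indexed by $\nG-f_p$ are just a concrete implementation of the paper's ``store only the reachable sparse states.'' The offset-consistency check you flag as the main obstacle is pure arithmetic, since both the edge increment and $f_q-f_p$ are per-edge constants.
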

\begin{proof}
    By \lemref{visibility}, there are $S = O( |\VV| k ) = O(k^{11/6})$ states, and each state requires $D = O( \Delta ) = O( k^{1/2})$ time to handle, where $\Delta$ is the maximum degree of $\G$.  Thus, the overall running time of the \DP is $O( S D ) = O( k^{7/3} )$.

    We can, however, do better.  By \lemref{forced-count}, each endpoint
    $p$ participates in at most $\cardin{\VV}+1$ possible grid-count
    states.  Store only these reachable sparse states, rather than a dense
    $\cardin{\VV}\times k$ table.  The number of states is
    $S=O(\cardin{\VV}^2)=O(k^{5/3})$, and the overall time is
    $O(SD)=O(k^{5/3}\cdot k^{1/2})=O(k^{13/6})$.
\end{proof}

\subsubsection{Smaller rings, faster algorithm}

A natural approach to speeding up the algorithm is to perform a tighter search for the epicenter $\epi$ of the minimum-width concentric disk sandwich, which in turn enables us to use a narrower ring in the dynamic program.

\begin{lemma}
    \lemlab{dp_annulus_2}%
    Given a parameter $\rho\geq C_0 k^{1/6}$ and a point $q$ at distance
    at most~$\rho$ from the epicenter $\epi$ (where $\Popt$ is anchored at
    a bottom support grid point at the origin), the \DP algorithm can be
    modified to run in $O(k^{3/2} \rho^2)$ time.
\end{lemma}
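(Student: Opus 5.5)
We redo the construction of \lemref{in} with $q$ in place of the point $(0,\sqrt{k/\pi})$, and then rerun the analysis of \lemref{visibility} and \lemref{dp-annulus} on the narrower ring.

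\textbf{The ring.} Since $\dY{q}{\epi}\leq\rho$, \thmref{improved-gap} gives $\diskY{\epi}{\rin}\subseteq\Popt\subseteq\diskY{\epi}{\Rout}$ with $\Rout-\rin=O(k^{1/6})$. Set
\[
    \Din=\diskY{q}{\rin-\rho},
    \qquad
    \Dout=\diskY{q}{\Rout+\rho}.
\]
The triangle inequality gives $\Din\subseteq\diskY{\epi}{\rin}\subseteq\Popt$ and $\Popt\subseteq\Dout$. The ring width is then $w=\Rout-\rin+2\rho=O(\rho)$, because $\rho\geq C_0k^{1/6}$.

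The radii $\rin,\Rout$ are not known explicitly. We replace them by explicit values: $\rin,\Rout=\sqrt{k/\pi}\pm O(k^{1/6})$, as shown in the proof of \lemref{in}. So we take radii $\sqrt{k/\pi}\mp C\rho$ with $C$ a large constant. This still yields an explicit sandwich around $q$ of width $O(\rho)$ and outer radius $\Theta(\sqrt{k})$. Containment of $\Din$ in the interior of $\Popt$ is what the forbidden-region and tangent filters need.

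\textbf{The graph.} The good set $\VV$, the forbidden region $\mathcal{B}$ and the graph $\G$ are defined exactly as before (\defref{forbidden}), using the new $\Din,\Dout$. The argument that $\G$ preserves the optimum is unchanged, since it only used $\Din\subseteq\Popt\subseteq\Dout$ and $\origin\in\partial\Popt$.

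Counting as in \lemref{visibility}, $\VV$ lies in an annulus of radius $O(\sqrt{k})$ and width $O(\rho)$, so $\cardin{\VV}=O(\rho\sqrt{k})$. The maximum degree is still $\Delta=O(\emax^2)=O(k^{1/2})$ by \corref{edge-improved}.

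\textbf{The count.} By \lemref{forced-count}, whose proof only uses the partition of the grid points of $\Dout$ into $B$ and $\VV$, each $p\in\VV$ carries at most $\cardin{\VV}+1$ grid-count values. The number of sparse states is therefore $O(\cardin{\VV}^2)=O(k\rho^2)$. Each state is processed in $O(\Delta)$ time, using the precomputed per-edge perimeter and count increments. The total time is
\[
    O(k\rho^2\cdot k^{1/2})=O(k^{3/2}\rho^2).
\]
Building $\G$ costs $O(\cardin{\VV}\Delta)$, which is dominated by this bound.

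\textbf{Main obstacle.} The delicate step is checking that \lemref{forced-count} still holds when $q$ is not on the $y$-axis. This needs the forbidden hull $\mathcal{B}=\CHX{B'\cup\{\origin\}}$ to stay inside $\Popt$ and all swept fans to stay inside $\Dout$; both follow from convexity exactly as before. I would also check that the origin, which is a bottom support point, lies in $\Dout\setminus\Din$. This holds because $\origin\in\partial\Popt$.
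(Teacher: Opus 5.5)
Your proposal is correct and follows the paper's proof essentially step for step. You use explicit radii $\sqrt{k/\pi}\mp C\rho$ around $q$, justified by the $O(k^{1/6})$ sandwich and the radius estimates from the proof of \lemref{in}, and then the same forbidden region, visibility graph and tangent filter. From there you get $O(|V|^2)=O(k\rho^2)$ sparse states via \lemref{forced-count} and $O(\sqrt{k})$ transitions per state from the $O(k^{1/4})$ edge/primitive-length bound. The extra checks you flag are points the paper passes over without comment, and your convexity justification for them is sound: that the origin lies in the ring, and that the forbidden hull stays inside $\Popt$ when $q$ is off the $y$-axis.
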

\begin{proof}
    By \thmref{improved-gap}, the width of the minimum-width concentric
    disk sandwich for $\Popt$ is $O(k^{1/6})$.  As in \lemref{in}, its two
    radii are $\sqrt{k/\pi}+O(k^{1/6})$. Thus, setting
    \[
        r = \sqrt{ k/\pi} - C \rho
        \qquad\text{and}\qquad
        R = \sqrt{ k/\pi} + C \rho
    \]
    with $C$ sufficiently large gives
    \[
        \diskY{q}{r}\subset \operatorname{int}(\Popt)
        \qquad\text{and}\qquad
        \Popt\subseteq\diskY{q}{R}.
    \]
    Construct the forbidden region, good set, visibility graph, and
    tangent filter exactly as in the proof of \lemref{dp-annulus}, using
    this inner disk and outer disk.  Then no boundary link of $\Popt$ is
    lost, and \lemref{forced-count} applies to this run as well.  Hence
    each endpoint has at most $\cardin{\VV}+1$ possible grid-count states.

    The good set lies in an annulus of radius $\Theta(\sqrt{k})$ and
    width $O(\rho)$, so $\cardin{\VV}=O(\sqrt{k}\rho)$.  The number of
    states is therefore $O(\cardin{\VV}^2)=O(k\rho^2)$.  By
    \lemref{p_d_bound}, every state has $O(\sqrt{k})$ transitions out of
    it. We conclude that the resulting \DP has running time
    $O(k^{3/2} \rho^2)$.
\end{proof}

\begin{lemma}
    \lemlab{dp-annulus_3}%
    The optimal polygon $\Popt$ can be computed in $O(k^2)$ time.
\end{lemma}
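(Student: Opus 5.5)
We plan to combine \lemref{dp_annulus_2} with a search over candidate centers. By \lemref{in}, after anchoring at a bottom support grid point, the epicenter $\epi$ lies in the explicit rectangle $\cardin{x}\leq Ck^{1/3}$, $\cardin{y-\sqrt{k/\pi}}\leq Ck^{1/6}$. Fix $\rho = C_0k^{1/6}$ and cover this rectangle by a grid of candidate centers $q$ with spacing $\rho$. Since the rectangle has dimensions $O(k^{1/3})\times O(k^{1/6})$, the number of candidates is
\[
    O\pth{\frac{k^{1/3}}{\rho}\cdot\frac{k^{1/6}}{\rho}} = O(k^{1/3}\cdot k^{-1/6}\cdot k^{1/6}\cdot k^{-1/6}) = O(k^{1/6}).
\]
One of these candidates is within distance $O(\rho)$ of $\epi$. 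Enlarging $C_0$ by a constant makes the hypothesis of \lemref{dp_annulus_2} hold for that candidate.

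For each candidate $q$ we run the modified \DP of \lemref{dp_annulus_2} with this $\rho$, at cost $O(k^{3/2}\rho^2)=O(k^{3/2+1/3})=O(k^{11/6})$. Over all $O(k^{1/6})$ candidates the total is $O(k^2)$. We return the minimum perimeter over all runs that produce a completed state with count exactly $k$.

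Correctness has two halves.
\begin{compactitem}
    \item \emph{No run reports too small a value.} Every run outputs a fan chain whose \DP count is $k$. Taking its convex hull and applying \obsref{trim-to-k} gives a feasible polygon of no larger perimeter. Hence no run reports a value below $\lOpt$, even when its candidate center is wrong.
    \item \emph{Some run finds $\lOpt$.} The run whose $q$ lies near $\epi$ keeps every boundary link of $\Popt$, by the argument in \lemref{dp_annulus_2}, so it attains $\lOpt$.
\end{compactitem}

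The main obstacle is the anchoring. \lemref{in} and \lemref{dp_annulus_2} are stated for an optimal polygon translated so that a bottom support grid point sits at the origin. The algorithm does not know $\Popt$, so the candidate rectangle must be a fixed, explicit region, which \lemref{in} provides. We also need to check that the rectangle constants and the coverage radius in \lemref{dp_annulus_2} are compatible, i.e., that "distance at most $\rho$" survives the rounding to the candidate grid. This only changes constants in $r,R$. The sparse state count bound from \lemref{forced-count} must also be re-verified per run, but it applies verbatim to any inner/outer disk pair with $\Din\subseteq\operatorname{int}\Popt$.
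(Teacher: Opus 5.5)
Your proposal is correct and follows the paper's proof. The paper likewise takes the $O(k^{1/3})\times O(k^{1/6})$ rectangle from \lemref{in} and places $O(k^{1/6})$ candidate centers spaced $\Theta(k^{1/6})$ apart; it puts them on a single horizontal line, which is what your grid amounts to, since it has only $O(1)$ rows. It then runs \lemref{dp_annulus_2} with $\rho=\Theta(k^{1/6})$ from each candidate, for $O(k^{1/6}\cdot k^{3/2}\rho^2)=O(k^2)$ total.

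One small correction to your last remark: the running-time bound must also hold for wrong candidates, where $\Din\not\subseteq\Popt$. It does, because $\cardin{\VV}=O(\sqrt{k}\rho)$ and \lemref{forced-count} are purely structural and never use $\Din\subseteq\operatorname{int}\Popt$.
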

\begin{proof}
    \lemref{in} gives an explicit rectangle of width $O(k^{1/3})$ and
    height $O(k^{1/6})$ containing the epicenter.  Place
    $O(k^{1/6})$ candidate points along the middle horizontal line of this
    rectangle, spaced $\Theta(k^{1/6})$ apart.  Since the rectangle has
    height $O(k^{1/6})$, one candidate is at distance $O(k^{1/6})$ from
    the epicenter.  Now deploy the algorithm of \lemref{dp_annulus_2} for
    each one of these points. As $\rho=O(k^{1/6})$, and this also bounds
    the number of candidate centers, the total running time is
    $O(k^{3/2}\rho^3)=O(k^2)$. We return the best polygon found by all
    these runs.
\end{proof}

\subsubsection{Smaller turns, faster algorithm}

The same annulus geometry that gives \lemref{turn_angle_2} also gives a
more algorithmic pruning rule.  In a centered thin-annulus run, any
boundary link of the optimum must be almost tangent to the inner disk:
its supporting line cannot cross a disk known to be contained in
$\Popt$.  Thus, from a fixed endpoint, admissible primitive directions
lie in an angular interval of length $O(k^{-1/6})$.  This gives the same
speedup one would expect from the small-turn bound, without adding the
incoming direction to the \DP state.

\begin{lemma}
    \lemlab{dp-annulus_4}%
    The optimal polygon $\Popt$ can be computed in $O(k^{2-1/6})$ time.
\end{lemma}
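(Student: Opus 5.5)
We reuse the multi-center scheme of \lemref{dp-annulus_3}, keeping the same state space, and shrink the \emph{out-degree} of each state from $O(\sqrt{k})$ to $O(k^{1/3})$. As in the proof of \lemref{dp-annulus_3}, we place $O(k^{1/6})$ candidate centers $q$ inside the rectangle of \lemref{in}, so that some candidate is within $\rho=O(k^{1/6})$ of $\epi$. For each candidate we run the \DP of \lemref{dp_annulus_2} with inner disk $\diskY{q}{r}$, $r=\sqrt{k/\pi}-C\rho$, and outer disk $\diskY{q}{R}$, $R=\sqrt{k/\pi}+C\rho$. Thus $R-r=O(k^{1/6})$, $\cardin{\VV}=O(k^{2/3})$, and, by \lemref{forced-count}, there are $O(\cardin{\VV}^2)=O(k^{4/3})$ states per run. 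If the number of transitions per state drops to $O(k^{1/3})$, a run costs $O(k^{5/3})$, and the $O(k^{1/6})$ runs cost $O(k^{11/6})=O(k^{2-1/6})$ in total.

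The pruning rule is the tangent filter already built into $\G$: an internal edge $p\to p+u$ is kept only if its supporting line avoids the interior of $\diskY{q}{r}$. This filter never removes a link of $\Popt$, since $\diskY{q}{r}\subseteq\Popt$ in the good run. The key geometric step is the following. Fix $p\in\VV$, so that $\dY{p}{q}\in[r,R]$, and let $\theta$ be the angle between $u$ and the direction perpendicular to $pq$. The line through $p$ in direction $u$ is at distance $\dY{p}{q}\cos\theta$ from $q$. Avoiding the inner disk therefore requires $\cos\theta\geq r/R$, so $1-\cos\theta\leq (R-r)/R=O(k^{-1/3})$. Exactly as in \lemref{turn_angle_2}, this gives $\theta=O(k^{-1/6})$. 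The counterclockwise orientation (keeping $\mathcal{B}$ to the left) fixes the sign, so only one such angular window of width $O(k^{-1/6})$ survives at $p$.

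Next we count the primitive vectors of length at most $\emax=O(k^{1/4})$ (\corref{edge-improved}) whose direction lies in a window of angle $\theta_0=O(k^{-1/6})$. The window meets the disk of radius $\emax$ in a sector of area $O(\theta_0\emax^2)=O(k^{1/3})$. Lattice points in such a thin sector are counted by its area plus a boundary term of $O(\emax)=O(k^{1/4})$, so the total is $O(k^{1/3})$. Source and sink edges still range over all $O(\emax^2)=O(\sqrt{k})$ vectors, but they occur only at the origin and sink, that is, only $O(\cardin{\VV})$ states times $O(\sqrt k)$ transitions, which is negligible. To enumerate admissible transitions in output-sensitive time, we precompute the primitive vectors of length at most $\emax$ sorted by angle. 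For each $p$ we compute the admissible angular window and binary-search for its endpoints. Alternatively, we precompute the window per vertex once, in time $O(\cardin{\VV}\cdot\sqrt{k})=O(k^{7/6})$ per run.

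The main obstacle is the lattice count in the thin sector: with angle $k^{-1/6}$ and radius $k^{1/4}$, its width near the tip is below $1$. We handle this by splitting it into dyadic annuli $\lenX{u}\in[2^j,2^{j+1}]$. Each piece has area $O(\theta_0 4^j)$ and a perimeter term of $O(2^j)$. The piece count is $O(\theta_0 4^j+2^j)$, so summing over $j$ up to $\log\emax$ gives $O(\theta_0\emax^2+\emax)=O(k^{1/3})$. The epicenter rectangle of \lemref{in} was derived for a bottom-supporting anchor, so the same anchoring must be used in every run.
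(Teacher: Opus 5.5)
Your proof is correct and follows the paper's argument. It uses the same $O(k^{1/6})$ candidate centers and $O(\cardin{\VV}^2)=O(k^{4/3})$ states per run, and the same inner-disk tangent filter, which confines each state's outgoing links to an $O(k^{-1/6})$ angular window containing $O(\theta_0\emax^2+\emax)=O(k^{1/3})$ primitive directions. Your two refinements do not affect the bound. Using the orientation to keep one window instead of the paper's two is harmless, and the dyadic split is unnecessary, since the area-plus-perimeter lattice-point bound already applies directly to the convex thin sector.
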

\begin{proof}
    Use the candidate centers from the proof of \lemref{dp-annulus_3},
    so that one run uses an annulus of width $\rho=O(k^{1/6})$ centered
    within $O(\rho)$ of the true epicenter.  Such a run has
    $\cardin{\VV}=O(\sqrt{k}\rho)=O(k^{2/3})$ good vertices and therefore
    $O(\cardin{\VV}^2)=O(k^{4/3})$ grid-count states, as in
    \lemref{dp_annulus_2}.

    In this run we use the inner-disk tangent filter from the definition
    of~$\G$: a candidate link is discarded if its supporting line meets
    the interior of the inner disk of the annulus.  This cannot remove
    any link of $\Popt$, because that disk is contained in $\Popt$ and
    every boundary link of a convex polygon lies on a supporting line.

    Let the annulus be centered at $q$, with inner radius $r$ and outer
    radius $R$, where $R-r=O(k^{1/6})$ and $R=\Theta(\sqrt{k})$.  Fix a
    point $p\in\VV$, and let $\theta$ be the angle between a candidate
    outgoing vector and the radius vector $p-q$.  If its supporting line
    avoids the interior of the inner disk, then
    \[
        \lenX{p-q}\,|\sin\theta| \geq r .
    \]
    Since $\lenX{p-q}\leq R$, this implies
    $|\sin\theta|\geq r/R=1-O(k^{-1/3})$.  Hence the directed vector lies
    in the union of two angular intervals, each of length
    $O(k^{-1/6})$, around the two tangent directions at~$p$.  The number
    of primitive grid vectors of length at most $\emax=O(k^{1/4})$ in an
    angular interval of length $\alpha=O(k^{-1/6})$ is
    $O(\alpha\emax^2+\emax)=O(k^{1/3})$: cover the corresponding sector of
    radius~$\emax$ by unit grid cells, and use its
    $O(\alpha\emax^2)$ area plus its $O(\emax)$ boundary contribution.

    Thus every grid-count state has only $O(k^{1/3})$ outgoing links in
    the pruned graph.  One centered run takes
    $O(k^{4/3}k^{1/3})=O(k^{5/3})$ time.  Trying the $O(k^{1/6})$
    candidate centers of \lemref{dp-annulus_3} gives total time
    $O(k^{5/3}k^{1/6})=O(k^{2-1/6})$.
\end{proof}

\subsection{Reducing the epicenter location uncertainty}
\seclab{rings_2}

The algorithm of \lemref{dp-annulus_4} still tries
$O(k^{1/6})$ candidate centers.  The true concentric disk sandwich for
$\Popt$ has width only $\WD=O(k^{1/6})$, so the remaining task is to
reduce the one-dimensional uncertainty in the center location.

\begin{lemma}
    \lemlab{generators}%
    For any fixed $\alpha \in (0,1)$, there are constants
    $0<c_1<c_2$ such that at least an $\alpha$-fraction of the edges of
    $\Popt$ have generator length between $c_1 k^{1/6}$ and
    $c_2 k^{1/6}$.
\end{lemma}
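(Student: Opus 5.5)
The plan is to use counting from both sides, with $\nv=\Theta(k^{1/3})$ as the reference scale. By \corref{nv_bound} there is an absolute constant $c_0>0$ with $\nv\geq c_0k^{1/3}$ for all sufficiently large $k$. As in the proof of \clmref{edge-structure}, we may first merge consecutive collinear edges. Then $\Popt$ has exactly $\nv$ edges, and their oriented directions are pairwise distinct, because the edge directions of a convex polygon strictly increase in angle around the boundary. So the $\nv$ edges have $\nv$ pairwise distinct primitive generators. Put $\beta=(1-\alpha)/2$. It suffices to show that at most $\beta\nv$ edges have a generator that is too short, and at most $\beta\nv$ have one that is too long.

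\emph{Short generators.} The generators are distinct primitive vectors. By \factref{primitive_number}, or \obsref{direction-count}, the number of primitive vectors of length at most $c_1k^{1/6}$ is at most $C'c_1^2k^{1/3}$ for an absolute constant $C'$. Choosing $c_1$ so that $C'c_1^2\leq \beta c_0$, at most $\beta\nv$ edges have a generator of length below $c_1k^{1/6}$.

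\emph{Long generators.} Every edge is at least as long as its generator, and the edge lengths sum to $\lOpt\leq 2\sqrt{\pi k}$ by \lemref{tight-upper}. Hence the number of edges whose generator has length greater than $c_2k^{1/6}$ is at most $2\sqrt{\pi k}/(c_2k^{1/6})=(2\sqrt{\pi}/c_2)\,k^{1/3}$. Choosing $c_2\geq 2\sqrt{\pi}/(\beta c_0)$ makes this at most $\beta\nv$. We also need $c_1<c_2$; we can always decrease $c_1$ or increase $c_2$ to arrange this.

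Combining the two bounds, at least $(1-2\beta)\nv=\alpha\nv$ edges have generator length in $[c_1k^{1/6},c_2k^{1/6}]$. Small $k$ are absorbed into the constants, for instance by shrinking $c_1$ and enlarging $c_2$ so that all generators of the finitely many small cases fall in the range, since every generator length lies in $[1,\lOpt]$.

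The only substantive input is the lower bound $\nv=\Omega(k^{1/3})$. This is exactly where the exchange argument of \thmref{improved-gap} is needed, through $b=O(k^{1/3})$ and \lemref{vertex-lb}. The rest of the argument is pigeonhole counting. The one point that needs care is distinctness of the generators after merging collinear edges; without it, the count of short primitive vectors would not bound the number of short edges.
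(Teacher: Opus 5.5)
Your proof is correct and follows the paper's argument: after merging collinear edges the $\nv=\Theta(k^{1/3})$ edges have distinct generators, \factref{primitive_number} bounds how many of them can be short, and the perimeter bound $\lOpt\leq 2\sqrt{\pi k}$ bounds how many can be long. Your version spells out the constant choices and the small-$k$ case that the paper leaves implicit, and it rightly observes that only the lower bound $\nv=\Omega(k^{1/3})$ from \corref{nv_bound} is needed.
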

\begin{proof}
    By \corref{nv_bound} and \thmref{Andrews},
    $\Popt$ has $\Theta(k^{1/3})$ edges.  The
    oriented edge directions of a convex polygon strictly rotate after
    collinear consecutive pieces are merged into single edges, so there is
    one oriented generator per edge.  By
    \factref{primitive_number}, if $c_1$ is sufficiently small, at most
    $(1-\alpha)/2$ fraction of these generators can be shorter than
    $c_1 k^{1/6}$. Similarly, if more than $(1-\alpha)/2$-fraction of them
    are longer than $c_2 k^{1/6}$, then their total length, which is a
    lower bound on $\perimX{\Popt}$, is larger than (say) $4\sqrt{k}$, which
    is impossible.
\end{proof}

\begin{lemma}
    \lemlab{leftmost-edge}%
    Let $x_\ell$ be the leftmost $x$ coordinate of a vertex of $\Popt$ and
    let $e_\ell=\Popt\cap ( x = x_\ell )$.  Then $e_\ell$ is a vertical
    segment, possibly degenerate, and in fact
    \begin{math}
        \lenX{e_\ell} = \Theta(k^{1/4}).
    \end{math}
\end{lemma}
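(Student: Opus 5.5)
The upper bound comes directly from the edge-length bound. The lower bound comes from an exchange argument that compares the perimeter saved by deleting the leftmost column of grid points with the cost of re-inserting the same number of points elsewhere.

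\emph{Structure and upper bound.} Since $\Popt$ is convex and $x_\ell$ is its minimum $x$ coordinate, $\Popt\cap(x=x_\ell)$ is a supporting intersection. It is therefore either a vertex or a full vertical edge, whose endpoints are vertices of $\Popt$ and hence grid points. In the nondegenerate case it is an edge of $\Popt$, so $\lenX{e_\ell}\leq\emax=O(k^{1/4})$ by \corref{edge-improved}. This settles the structural claim and the $O(k^{1/4})$ half.

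\emph{Lower bound: deleting the column.} Let $t=\lenX{e_\ell}$, an integer, and assume for contradiction that $t\leq c k^{1/4}$ for a small constant $c$ (this includes $t=0$). The grid points of $\Popt$ in the slab $x_\ell\leq x<x_\ell+1$ are exactly the $t+1$ points of $e_\ell$. Intersect $\Popt$ with the halfplane $x\geq x_\ell+1$. As in \lemref{shortcut}, this removes exactly these $t+1$ points and, by \obsref{trim-to-k}, loses no perimeter. The boundary arc it removes is made of $e_\ell$ together with two chains that climb horizontally by $1$. It is replaced by a vertical chord. Applying \clmref{triangle} to each of the two chains, each with horizontal offset $s=1$ and length $h$, gives a saving of at least $2\cdot\frac{1}{2h}$ beyond the vertical part. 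So it suffices to show that each chain has length $h=O(k^{1/4})$, which yields a saving of $\Omega(k^{-1/4})$.

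To bound $h$ I would use the local shape near the leftmost point. By \lemref{turn_angle_2} the turn angles are $O(k^{-1/6})$, and by \corref{edge-improved} the edges have length $O(k^{1/4})$. Hence within the strip the boundary follows a curve of curvature about $1/\Rout=\Theta(k^{-1/2})$, that is, $x-x_\ell\approx y^2/(2\Rout)$. Reaching horizontal depth $1$ then takes vertical extent $O(\sqrt{\Rout})=O(k^{1/4})$. The same bound should also follow from the fact that $\Popt$ is sandwiched between the disks of \thmref{improved-gap} on the relevant scale near an extreme point. \textbf{This is the step I expect to be the main obstacle.} The annulus width $\WD=O(k^{1/6})$ alone only gives chords of length $O(k^{1/3})$. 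So I would first try to bound $h$ by the sum of at most $O(1)$ edge lengths: edges that make an angle of $\Omega(k^{-1/4})$ with the vertical gain horizontal distance $1$ within length $O(k^{1/4})$, and the small turn angle limits how many consecutive nearly-vertical edges there can be. If that fails, I would use $O(k^{1/3})$ and accept a weaker intermediate bound before refining.

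\emph{Re-inserting the points.} Next we restore $t+1$ grid points cheaply, away from the left cap. By \lemref{generators} and \lemref{long}, $\Popt$ has $\Omega(k^{1/3})$ pairwise nonadjacent edges whose generators have length $\Theta(k^{1/6})$ and are $\Theta(k^{1/6})$-long. None of these needs to be near $e_\ell$. \lemref{long_expand} with $R=\Theta(k^{1/6})$ and $z=t+1\leq ck^{1/4}+1\ll k^{1/3}$ then adds at least $t+1$ points at cost $O(t/k^{1/2})$. The edges used stay boundary edges after the cut, as in \lemref{exchange}. By optimality and \obsref{trim-to-k}, the saving cannot exceed this cost, so $\Omega(k^{-1/4})\leq O((t+1)k^{-1/2})$. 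This forces $t=\Omega(k^{1/4})$, which contradicts the choice of a small $c$.
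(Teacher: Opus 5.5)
Your plan is the paper's proof in outline. You cut $\Popt$ at $x=x_\ell+1$, which removes exactly the $t+1$ grid points of $e_\ell$ and saves $\Omega(1/\emax)$ via \clmref{triangle}. You then reinsert $t+1$ points with \lemref{long_expand} at $R=\Theta(k^{1/6})$ and compare. However, the step you flag as the main obstacle is left open, and neither route you suggest closes it. \lemref{turn_angle_2} and \corref{edge-improved} give only upper bounds, so they cannot force the local profile $x-x_\ell\approx y^2/(2\Rout)$. An upper bound on turning also cannot limit how many consecutive edges stay nearly vertical; slow turning is exactly what would allow many. The annulus fallback gives only $h=O(k^{1/3})$, a saving of $\Omega(k^{-1/3})$, and hence only $t=\Omega(k^{1/6})$, which is weaker than the lemma.

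The missing idea is integrality, and it removes the obstacle entirely. Vertices of $\Popt$ are grid points, so no vertex has $x$-coordinate strictly between $x_\ell$ and $x_\ell+1$; every non-vertical edge spans at least one unit horizontally. Thus each of your ``chains'' inside the slab $x_\ell\leq x\leq x_\ell+1$ is a single straight segment. It lies on the edge incident to the corresponding endpoint of $e_\ell$ (or to the single vertex, if $e_\ell$ is degenerate). So $h\leq\emax=O(k^{1/4})$ immediately by \corref{edge-improved}. This is also what makes \clmref{triangle} applicable at all, since it is a statement about one right triangle, not a polygonal chain.

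You should also account for signs. If the turn at an endpoint of $e_\ell$ exceeds a right angle, the chord at $x=x_\ell+1$ does not extend past that endpoint. The saving on that side is then $\sqrt{1+a^2}+\cardin{a}\geq 1$ rather than a \clmref{triangle} term. The paper handles both cases uniformly with signed offsets and $\sqrt{1+a^2}\pm a\geq 1/(\sqrt{1+a^2}+\cardin{a})$.

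Finally, write the reinsertion cost as $O((t+1)k^{-1/2})$ rather than $O(t\,k^{-1/2})$; this matters when $t=0$. With these fixes your argument coincides with the paper's.
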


\begin{proof}
    All edges in this proof are maximal collinear boundary segments, after
    consecutive collinear links have been merged.  The two neighboring
    edges of $e_\ell$ are the two maximal edges sharing an endpoint with
    it.  Let $\lambda=\lenX{e_\ell}$. \corref{edge-improved} implies the upper bound. It remains to prove the matching lower bound $\lambda=\Omega(k^{1/4})$, which also rules out the degenerate case for sufficiently large~$k$.

    Translate $\Popt$ so that
    $e_\ell=\{(0,y):0\le y\le\lambda\}$.  Let
    \[
        Q=\CHX{\bigl(\ZZ^2\cap\Popt\bigr)\setminus e_\ell}.
    \]
    Then $Q$ is a convex grid polygon and, since $Q\subseteq\Popt$,
    it contains exactly the grid points of $\Popt$ except the
    $\lambda+1$ points on~$e_\ell$.

    We compare its perimeter to the geometric cut
    $K=\Popt\cap\{x\ge 1\}$.  Since $Q\subseteq K$, monotonicity of
    perimeter for convex sets gives $\perimX{Q}\leq \perimX{K}$.  If
    the line $x=1$ meets the lower and upper boundary chains at
    $(1,y_-)$ and $(1,y_+)$, write the signed offsets
    \[
        a=y_-,
        \qquad
        b=y_+-\lambda .
    \]
    The cut replaces the left support segment of length~$\lambda$ and the
    two incident boundary pieces of lengths $\sqrt{1+a^2}$ and
    $\sqrt{1+b^2}$ by the vertical segment at $x=1$, whose length is
    $\lambda+b-a$.  Therefore
    \[
        \perimX{\Popt}-\perimX{K}
        =
        \sqrt{1+a^2}+\sqrt{1+b^2}-(b-a)
        =
        \bigl(\sqrt{1+a^2}+a\bigr)
        +
        \bigl(\sqrt{1+b^2}-b\bigr).
    \]
    By \corref{edge-improved}, the two edges adjacent to $e_\ell$ have
    length at most $\emax=O(k^{1/4})$.  Thus
    $\cardin{a},\cardin{b}\leq \emax$.  For any
    $\cardin{t}\leq\emax$,
    \[
        \sqrt{1+t^2}\pm t \geq
        \sqrt{1+t^2}-\cardin{t}
        =
        \frac{1}{\sqrt{1+t^2}+\cardin{t}}
        =
        \Omega(1/\emax).
    \]
    Hence
    \[
        \perimX{\Popt}-\perimX{Q}
        \geq
        \perimX{\Popt}-\perimX{K}
        =
        \Omega(k^{-1/4}).
    \]

    It remains to show that enough cheap insertion edges survive in
    $Q$.  By \corref{nv_bound}, $\nv=\Theta(k^{1/3})$.  Choose the
    constants in the following order.  First choose a sufficiently large
    constant $C_R$ and set $R=C_Rk^{1/6}$.  At most
    $\nv/8$ edges of $\Popt$ have generator length larger than~$R$;
    otherwise their total length would exceed $\lOpt=O(\sqrt{k})$.
    Next choose $\eta>0$ small enough that $\eta R^2\leq \nv/8$ for all
    sufficiently large~$k$.  By \lemref{long}, all but at most
    $\eta R^2$ primitive directions in $\Lambda_R$ are $cR$-long, for
    some constant $c=c(\eta)>0$.  This fixes the positive constant used
    later by \lemref{long_expand}.  Since the oriented edge generators of
    a convex polygon are distinct, at most $\nv/8$ edges of $\Popt$ with
    generator length at most~$R$ fail to be $cR$-long.

    Thus at least $3\nv/4$ edges of $\Popt$ have generator length at
    most~$R$ and are $cR$-long.  Taking every other such edge along the
    boundary gives a pairwise nonadjacent family $\mathcal{E}$ of
    $\Omega(k^{1/3})$ edges.  Let $e$ be a selected edge that is not
    $e_\ell$ and not one of its two neighboring edges, and let $\ell$ be
    its supporting line.  Since edges are maximal exposed faces,
    $\Popt\cap\ell=e$, and all grid points of $\Popt$ lie in one closed
    halfplane bounded by~$\ell$.  Deleting the grid points of $e_\ell$
    removes no point of~$e$, so the endpoints of~$e$ and all lattice points
    on~$e$ remain in the set whose convex hull is~$Q$.  Thus
    $Q\cap\ell=\CHX{(\Popt\cap\ZZ^2\setminus e_\ell)\cap\ell}=e$, and
    $e$ is still an actual exposed edge of~$Q$, not merely a supporting
    subsegment.  We keep only these surviving selected edges.  They remain
    pairwise nonadjacent as edges of~$Q$: away from the cut their incident
    boundary arcs are unchanged, and near the cut the replacement chain is
    incident only to the two removed neighboring edges, which were not kept.

    We discarded at most three selected edges, so the surviving subfamily
    still has size $\Omega(k^{1/3})$.  Since $\lambda=O(k^{1/4})$, at least
    $\lambda+1$ surviving edges remain for all sufficiently large~$k$.  By
    \lemref{long_expand}, we can insert $\lambda+1$ grid points on these
    surviving edges at total perimeter cost
    $O((\lambda+1)/R^3)=O((\lambda+1)k^{-1/2})$.  These inserted points
    are new points outside~$Q$.  They are distinct by the projection
    argument in \lemref{long_expand}, and none of them is one of the
    removed points of~$e_\ell$: each insertion is on the exterior side of
    a supporting line of a surviving non-neighbor edge, while every point
    of~$e_\ell$ lies in $\Popt$ on the non-exterior side of that line.
    The resulting polygon
    contains at least $k$ grid points.  By \obsref{trim-to-k}, it
    contains an exactly feasible $k$-point polygon with no larger
    perimeter.  By optimality of $\Popt$,
    \[
        O((\lambda+1)k^{-1/2})
        \geq
        \Omega(k^{-1/4}),
    \]
    and hence $\lambda=\Omega(k^{1/4})$.  All constants above are
    effective once the constants in \lemref{long} and
    \lemref{long_expand} are fixed; choosing $c_\ell>0$ smaller than the
    resulting hidden constant gives the certified lower bound
    $\lambda\geq c_\ell k^{1/4}$ used later.
\end{proof}

\begin{figure}[t]
    \centering%
    {\includegraphics{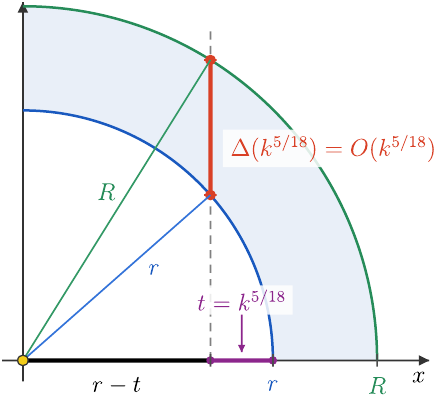}}
    \caption{Illustration of \lemref{one_ring}.}
    \figlab{one_ring}
\end{figure}
\begin{lemma}
    \lemlab{one_ring}%
    Let $r < R$ be two radii with $r,R=\sqrt{k/\pi}+O(k^{1/6})$ and
    $R-r \leq \WD = O(k^{1/6})$. For $t \geq 0$, let
    $\Sin(t)= (x=r-t) \cap \diskY{\origin}{r}$ and
    $\Sout(t)= (x=r-t) \cap \diskY{\origin}{R}$ be two vertical segments,
    where $\diskY{\origin}{r}$ denotes the disk of radius $r$ centered at
    the origin.  Let $\Delta(t)$ be the vertical distance between the top
    endpoints of $\Sin(t)$ and $\Sout(t)$.  For any
    $t \in [1/2, 3/2]k^{5/18}$, we have that
    $\lenX{\Sout(t)} = \Theta(k^{7/18})$ and $\Delta(t) = O( k^{5/18})$.
\end{lemma}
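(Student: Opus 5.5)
The plan is to compute both quantities explicitly from the circle equations. On the vertical line $x=r-t$, with $0\le t\le r$, the two segments are centered on the $x$-axis. Their half-lengths are
\[
    \tfrac12\lenX{\Sin(t)}=\sqrt{r^2-(r-t)^2}=\sqrt{2rt-t^2},
    \qquad
    \tfrac12\lenX{\Sout(t)}=\sqrt{R^2-(r-t)^2}.
\]
The top endpoints are exactly these half-lengths, so
\[
    \Delta(t)=\sqrt{R^2-(r-t)^2}-\sqrt{r^2-(r-t)^2}.
\]
Throughout, use $r,R=\Theta(\sqrt{k})$, which holds because both equal $\sqrt{k/\pi}+O(k^{1/6})$, and $t=\Theta(k^{5/18})$. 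In particular $t=o(r)$, so $t\le r$ and the line really meets both disks. Here $5/18<1/2$, which is what makes the asymptotics valid.

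\textbf{Length of $\Sout(t)$.} Write $R^2-(r-t)^2=(R^2-r^2)+2rt-t^2$. The main term is $2rt=\Theta(k^{1/2}\cdot k^{5/18})=\Theta(k^{7/9})$. The correction satisfies $t^2=O(k^{5/9})=o(k^{7/9})$. Also $R^2-r^2=(R-r)(R+r)$ lies in $[0,O(k^{1/6}\cdot k^{1/2})]=[0,O(k^{2/3})]$, and $k^{2/3}=o(k^{7/9})$ because $2/3=12/18<14/18$. Hence $R^2-(r-t)^2=\Theta(k^{7/9})$, and $\lenX{\Sout(t)}=2\sqrt{\Theta(k^{7/9})}=\Theta(k^{7/18})$. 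The same computation with $R$ replaced by $r$ gives $\lenX{\Sin(t)}=\Theta(k^{7/18})$ as well, which is used in the next step.

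\textbf{Bounding $\Delta(t)$.} Rationalize the difference of square roots:
\[
    \Delta(t)=\frac{R^2-r^2}{\sqrt{R^2-(r-t)^2}+\sqrt{r^2-(r-t)^2}}.
\]
The numerator is $O(k^{2/3})=O(k^{12/18})$ by the bound on $R^2-r^2$ above. The denominator is at least $\sqrt{r^2-(r-t)^2}=\Theta(k^{7/18})$ by the previous step. Therefore $\Delta(t)=O(k^{12/18-7/18})=O(k^{5/18})$. The constants are uniform over $t\in[1/2,3/2]k^{5/18}$, since $t$ stays within constant factors of $k^{5/18}$ on that range.

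\textbf{Where care is needed.} There is no real obstacle here. The main point is to check that the $R^2-r^2$ term is dominated, which holds because $2/3<7/9$, so that the length estimate is $\Theta$ and not merely $O$. One should also note that $R\ge r$ keeps every radicand nonnegative.
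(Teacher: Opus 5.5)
Your proposal is correct and takes essentially the same route as the paper. Both compute the chord half-lengths explicitly, and both rationalize $\Delta(t)$, with numerator $R^2-r^2=O(k^{2/3})$ and denominator at least $\sqrt{t(2r-t)}=\Theta(k^{7/18})$. The only cosmetic difference is that you estimate $R^2-(r-t)^2$ by expanding it as $(R^2-r^2)+2rt-t^2$, while the paper factors it as $(R+r-t)(R-r+t)$.
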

\begin{proof}
    See \figref{one_ring}. We have that
    \begin{align*}
        \lenX{\Sout(t)}
        &=%
        2 \sqrt{R^2 - (r-t)^2}
        =
        2 \sqrt{ (R+r -t) ( R -r +t)}
        \\&
        =
        \Theta\Bigl( \sqrt{ k^{1/2}  (k^{1/6} + k^{5/18})}\Bigr)
        =%
        \Theta( k^{1/4+5/36})
        =%
        \Theta( k^{7/18}).
    \end{align*}
    Similarly, we have
    \begin{align*}
        \Delta(t)
        &=%
        \sqrt{R^2 - (r-t)^2} -
        \sqrt{r^2 - (r-t)^2}
        \\&
        =
        \frac{R^2 - r^2}{ \sqrt{R^2 - (r-t)^2} +
           \sqrt{r^2 - (r-t)^2}}
        \\&
        \leq
        \frac{2 R \WD }{2\sqrt{t(2r-t)}}
        \leq
        \frac{ R \WD }{\sqrt{t r}}
        \leq
        \frac{ 2\sqrt{R} \WD }{\sqrt{t}}
        =%
        O\pth{\frac{k^{1/4 + 1/6}}{ k^{5/36}} }
        \\&
        =
        O\pth{k^{10/24 - 5/36}}
        =
        O\pth{k^{(15-5)/36}}
        =
        O\pth{k^{5/18}}.
    \end{align*}
\end{proof}

\begin{lemma}
    \lemlab{leftmost_midpoint}
    Translate $\Popt$, so its \defrefY{epicenter}{epicenter} is at the
    origin.  Let $x_r$ be the largest $x$-coordinate of a point of
    $\Popt$, and set $s_r=\Popt\cap (x=x_r)$.  Then:
    \begin{compactenumI}
        \item $\sqrt{k/\pi} - \Gamma \leq x_r \leq \sqrt{k/\pi} + \Gamma$, where $\Gamma=C_\Gamma k^{1/6}$,
        \item $s_r$ is of length $\Theta(k^{1/4})$,
        \item the midpoint of $s_r$ is contained in the segment $x_r \times [-\nabla, \nabla]$, where $\nabla = O(k^{5/18})$.
    \end{compactenumI}
\end{lemma}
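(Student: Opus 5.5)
We will prove the three parts in order, working with the epicenter at the origin, so that $\diskY{\origin}{\rin}\subseteq\Popt\subseteq\diskY{\origin}{\Rout}$ with $\Rout-\rin=\WD=O(k^{1/6})$.

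For part (I) we use $\rin\leq x_r\leq\Rout$. Indeed, the inner disk lies in $\Popt$, which gives $x_r\geq\rin$, and the outer disk contains $\Popt$, which gives $x_r\leq\Rout$. As in the proof of \lemref{in}, we have $\areaOpt=k-O(k^{1/3})$ by Pick's theorem and \thmref{improved-gap}. Together with $\pi\rin^2\leq\areaOpt\leq\pi\Rout^2$ and $\Rout-\rin=O(k^{1/6})$, this yields $\rin,\Rout=\sqrt{k/\pi}+O(k^{1/6})$. Part (I) follows by taking $C_\Gamma$ larger than these hidden constants.

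Part (II) is \lemref{leftmost-edge} applied after a reflection $x\mapsto -x$. This reflection preserves the grid, the perimeter and optimality, so the rightmost support segment $s_r$ also has length $\Theta(k^{1/4})$. Note that $s_r$ is a vertical segment whose endpoints are grid points, so $x_r$ is an integer relative to the grid translate. This fact is not needed for the bounds.

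Part (III) is the main step. We bound the midpoint of $s_r$ through the annulus geometry of \lemref{one_ring}, applied with $r=\rin$ and $R=\Rout$.

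\begin{compactenumI}
    \item Every point of $s_r$ has the form $(x_r,y)$ with $x_r^2+y^2\leq\Rout^2$. Since $x_r\geq\rin$, we get $|y|\leq\sqrt{\Rout^2-\rin^2}=O(\sqrt{\Rout\WD})=O(k^{1/3})$. This crude bound is too weak, so we refine it.
    \item Consider the vertical line $x=\rin-t$ with $t=k^{5/18}$. By \lemref{one_ring}, this line meets the inner disk in a chord $\Sin(t)$ whose top endpoint is within $\Delta(t)=O(k^{5/18})$ of the top endpoint of $\Sout(t)$, and similarly for the bottom endpoints.
    \item The top endpoint $(\rin-t,y_+^{\mathrm{in}})$ of $\Sin(t)$ lies in $\Popt$, and so does the whole segment $s_r$. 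Let $(x_r,y_{\max})$ be the top endpoint of $s_r$. Consider the portion of $\partial\Popt$ running from the top of $s_r$ leftward to the line $x=\rin-t$. It is a convex chain whose slope at $x_r$ is nonpositive, and it stays inside $\diskY{\origin}{\Rout}$. Over this portion the chain descends, moving right, by at most the vertical gap available in the annulus, namely $\Delta(t)+O(\text{turn})$.
    \item The cleaner formulation is via supporting lines. The top edge adjacent to $s_r$ lies on a supporting line that avoids the inner disk. Comparing where the upper boundary crosses $x=\rin-t$ (between the inner and outer circles, a window of size $\Delta(t)$) with the position of $y_{\max}$ at horizontal offset at most $t+\WD$, we get
    \[
        y_{\max}=\sqrt{\rin^2-(\rin-t)^2}+O(k^{5/18})+O\bigl(\text{slope}\cdot t\bigr).
    \]
    By symmetry, $y_{\min}=-\sqrt{\rin^2-(\rin-t)^2}+O(k^{5/18})$.
    \item Averaging, the symmetric main terms cancel, so the midpoint satisfies $|(y_{\max}+y_{\min})/2|=O(k^{5/18})$.
\end{compactenumI}

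The obstacle is controlling how far the boundary can move over the horizontal stretch of length $\approx t$ between $x=\rin-t$ and $x=x_r$. We plan to handle this with convexity. The chain from the top of $s_r$ to the crossing point at $x=\rin-t$ lies above the chord joining them, because $\Popt$ contains both points. It also lies below the outer circle. These two constraints squeeze the height at $x_r$ to within $O(\Delta(t))$ plus the curvature drop of the circle over horizontal length $t$. That drop is $O(t\cdot\sqrt{t/\rin})=O(k^{5/18}\cdot k^{5/36-1/4})=o(k^{5/18})$ after using that the circle's slope at height $\sqrt{2\rin t}$ is $O(\sqrt{t/\rin})$.

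The choice $t\approx k^{5/18}$ is exactly what balances $\Delta(t)\sim\sqrt{\Rout}\,\WD/\sqrt t$ against $t$ itself, which explains the exponent $5/18$. If the convexity squeeze proves delicate, we would instead bound $|y_{\max}-\sqrt{\rin^2-(\rin-t)^2}|$ directly by tangency of the edges to the annulus, using \lemref{turn_angle_2}.
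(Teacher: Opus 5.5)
Your parts (I) and (II) are fine and match the paper. Part (III), however, has a genuine gap.

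\textbf{The computation in step (iv) is wrong.} The main term there is $\sqrt{\rin^2-(\rin-t)^2}=\sqrt{t(2\rin-t)}=\Theta(k^{7/18})$. This already exceeds the a priori bound $\cardin{y_{\max}}=O(k^{1/3})$ from your own step (i), so the ``error'' term $O(\text{slope}\cdot t)$ is really of order $k^{7/18}$. Between the columns $x=\rin-t$ and $x=x_r$ the upper boundary is steep, with slope $\Theta(\sqrt{\rin/t})$, and it drops by $\Theta(\sqrt{\rin t})$. Your curvature-drop estimate $O(t\sqrt{t/\rin})$ treats the circle's $\cardin{dx/dy}\approx\sqrt{t/\rin}$ as if it were $dy/dx$. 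Consequently nothing useful cancels when you average $y_{\max}$ and $y_{\min}$.

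\textbf{No purely geometric squeeze can repair this.} Consider the constraints you have: convexity, $\diskY{\origin}{\rin}\subseteq\Popt\subseteq\diskY{\origin}{\Rout}$, edges of length $O(k^{1/4})$, and turn angles $O(k^{-1/6})$. The turn-angle bound adds nothing, since \lemref{turn_angle_2} derives it from the annulus alone. All of these are compatible with a rightmost vertical edge whose midpoint is at height $\Theta(\sqrt{\Rout\WD})=\Theta(k^{1/3})$. To see this, take a vertex near the outer circle at that height whose normal cone contains the $x$-direction. Truncate it by a vertical line, and replace the adjacent long edges by chains of short, slightly turning edges. So the lattice and the optimality of $\Popt$ must enter the argument.

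\textbf{The missing idea is the paper's column-by-column exchange argument.}
\begin{compactenumi}
\item Pick a lattice column $x^-$ with $t=\rin-x^-\in[k^{5/18},k^{5/18}+1]$. The chord $\Popt\cap\{x=x^-\}$ is sandwiched between the inner and outer chords, both symmetric about the $x$-axis. So by \lemref{one_ring} its midpoint $m(0)$ is $O(\Delta(t))=O(k^{5/18})$.
\item Between two consecutive lattice columns no vertex lies strictly inside the slab, so $\partial\Popt$ crosses it in two straight bridges.
\item Translate the part of $\Popt$ to the right of the slab by $(0,h)$, $h\in\ZZ$, and reconnect. This preserves the lattice count and changes only the two bridge lengths.
\item The bridge-length sum is $G(\delta+h)=\sqrt{1+(\delta+h+a)^2}+\sqrt{1+(\delta+h-a)^2}$, where $\delta=m(i)-m(i-1)$ and $a$ is half the difference of the two chord lengths. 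It is even and strictly increasing in $\cardin{\delta+h}$.
\item Optimality, via the convex hull and \obsref{trim-to-k}, then forces $\cardin{m(i)-m(i-1)}\leq 1/2$.
\item Summing over the $W=x_r-x^-=O(k^{5/18})$ columns gives $\cardin{m(W)}\leq O(k^{5/18})+W/2=O(k^{5/18})$.
\end{compactenumi}
This $O(1)$ drift per unit of horizontal distance is exactly what your balancing heuristic $t\approx k^{5/18}$ implicitly assumes. Convexity alone does not supply it.
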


\begin{proof}
    (I) By \thmref{improved-gap}, $\partial \Popt$ lies in an annulus of
    width $\WD = O(k^{1/6})$, between disks of radii $\rin$ and $\Rout$
    centered at the origin.  Pick's theorem and $b=O(k^{1/3})$ imply
    $\areaOpt=k-O(k^{1/3})$, and hence
    $\rin,\Rout=\sqrt{k/\pi}+O(k^{1/6})$.  Since the disk
    $\diskY{\origin}{\rin}$ is contained in $\Popt$ and
    $\Popt\subseteq\diskY{\origin}{\Rout}$, the right supporting
    coordinate satisfies $\rin\leq x_r\leq \Rout$.  This proves (I).

    \medskip%
    (II) follows from \lemref{leftmost-edge} after reflecting the polygon
    through the $y$-axis.  The lemma is unchanged by translating the
    underlying lattice and by this reflection.

    \medskip%
    (III) After the translation, the grid is a translated lattice
    $L=\ZZ^2-c$ and is still invariant under integer vertical shifts.
    The polygon $\Popt$ is optimal for this translated lattice, since the
    original problem is translation invariant.  Let $\XSet$ be the set of
    $x$-coordinates of~$L$.  Let $x^- \in \XSet$ be the maximum
    number smaller than $\rin-k^{5/18}$, and define
    $s(\xi)=\Popt\cap\{x=x^-+\xi\}$.  For
    $t=\rin-x^-$ we have
    $k^{5/18}\leq t\leq k^{5/18}+1$.  By \lemref{one_ring}, the segment
    $s(0)$ is contained in the outer chord
    $\diskY{\origin}{\Rout}\cap\{x=x^-\}$, which is symmetric about the
    $x$-axis, and its two endpoints are within vertical distance
    $O(t)$ from the corresponding endpoints of that chord.  Hence the
    midpoint of $s(0)$ has $y$-coordinate $m(0)=O(t)$.

    Let $s_i=s(i)$, for $i=0,\ldots,W=x_r-x^-$.  Since
    $x_r\leq\Rout=\rin+O(k^{1/6})$, we have
    $W=O(k^{5/18})$.  Let $m(i)$ be the $y$-coordinate of the midpoint
    of $s_i$.  We claim that
    \[
        \cardin{m(i)-m(i-1)}\leq \frac12
        \qquad\text{for every }i=1,\ldots,W.
    \]

    Fix $i$, and write the vertical chords as
    $s_{i-1}=[a^-,b^-]$ and $s_i=[a^+,b^+]$, with $a^\pm$ on the lower
    boundary chain and $b^\pm$ on the upper boundary chain.  Since
    $x^-+i-1$ and $x^-+i$ are consecutive columns of the translated
    lattice, no vertex of the grid polygon $\Popt$ lies in the open slab
    between them.  Hence the two boundary pieces of $\Popt$ crossing this
    slab are exactly the straight bridges $a^-a^+$ and $b^-b^+$.

    For an integer $h$, form a closed polygonal region $P_h$ as follows.
    Let $P^-=\Popt\cap\{x\leq x^-+i-1\}$ and
    $P^+=\Popt\cap\{x\geq x^-+i\}$.  Keep $P^-$ fixed, translate $P^+$
    by the lattice vector $(0,h)$, and replace the two slab bridges by the
    segments from $a^-$ to $a^++(0,h)$ and from $b^-$ to
    $b^++(0,h)$.  The lattice-point count with respect to~$L$ is
    unchanged.  Use the half-open ownership rule that lattice points in the
    left boundary column belong to $P^-$ and lattice points in the right
    boundary column belong to $P^+$.  The map $z\mapsto z+(0,h)$ is a
    bijection of~$L$, the open slab between the two consecutive lattice
    columns contains no point of~$L$, and the new bridge interiors also lie
    in this open slab.  Hence $P_h\cap L$ has exactly as many lattice
    points as $\Popt\cap L$.

    The boundary length of $P_h$ differs from $\perimX{\Popt}$ only by
    replacing the two original bridge lengths by the two new bridge
    lengths.  The region $P_h$ need not be convex.  However
    $K_h=\CHX{P_h\cap L}$ contains all lattice points of $P_h$ and hence at
    least $k$ points of~$L$.  Also, the perimeter of the convex hull of a
    closed curve is at most the length of the curve (for example by
    Cauchy's projection formula), so
    $\perimX{K_h}\leq\perimX{\CHX{P_h}}$, and this is at most the
    boundary length of~$P_h$.
    Applying \obsref{trim-to-k} to the translated lattice~$L$ and using the
    optimality of $\Popt$ for~$L$, no integer shift can reduce the total
    bridge length between these two columns.

    Let $\delta=m(i)-m(i-1)$.  If the lengths of $s_{i-1}$ and $s_i$ are
    $\ell_{i-1}$ and $\ell_i$, the sum of the two bridge lengths after
    shifting by $h$ is
    \[
        G(\delta+h)
        =
        \sqrt{1+(\delta+h+a)^2}
        +
        \sqrt{1+(\delta+h-a)^2},
        \qquad
        a=\frac{\ell_i-\ell_{i-1}}{2}.
    \]
    The function $G$ is even and strictly increasing as a function of
    $\cardin{\delta+h}$: for $x>0$ its derivative is
    \[
        \frac{x+a}{\sqrt{1+(x+a)^2}}
        +
        \frac{x-a}{\sqrt{1+(x-a)^2}}>0,
    \]
    since $y/\sqrt{1+y^2}$ is odd and strictly increasing.  If
    $\cardin{\delta}>1/2$, an integer $h$ can be chosen with
    $\cardin{\delta+h}<\cardin{\delta}$, contradicting optimality.
    Therefore $\cardin{m(i)-m(i-1)}\leq 1/2$.

    Consequently,
    \[
        \cardin{m(W)}
        \leq
        \cardin{m(0)}+\sum_{i=1}^W \cardin{m(i)-m(i-1)}
        \leq
        O(t)+W/2
        =
        O(k^{5/18}).
    \]
\end{proof}

\subsubsection{The improved algorithm}

Once a bottom support grid point of $\Popt$ is fixed at the origin,
\lemref{in} places the epicenter in a region of height $O(k^{1/6})$.
Equivalently, before this final anchoring, we may rotate the picture so
that the relevant support line is vertical and apply
\lemref{leftmost_midpoint}; this reduces the remaining horizontal
uncertainty to $O(k^{5/18})$.  The chosen anchor need not be the midpoint
of that support edge, but the edge itself has length $O(k^{1/4})$ by
\corref{edge-improved}, which is absorbed by the $O(k^{5/18})$ error
term.  Thus, one can compute a set of
$O(k^{5/18}/ k^{1/6}) = O(k^{1/9})$ candidate centers, spaced so that one
of them is at distance $O(k^{1/6})$ from the epicenter.  Modifying the
algorithm of \lemref{dp-annulus_4} to use $O(k^{1/9})$ candidate centers
instead of $O(k^{1/6})$ gives:

\begin{lemma}
    \lemlab{dp-annulus_5}%
    The optimal polygon $\Popt$ can be computed in $O(k^{1+7/9})$ time.
\end{lemma}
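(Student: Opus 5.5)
The plan is to rerun the centered thin-annulus algorithm of \lemref{dp-annulus_4}, but over a smaller candidate set of epicenters supplied by \lemref{leftmost_midpoint}. The running time is then (number of candidate centers) times (cost of one centered run). A single centered run already costs $O(k^{5/3})$, so the whole task is to show that $O(k^{1/9})$ candidates suffice. That gives $O(k^{5/3+1/9})=O(k^{16/9})=O(k^{1+7/9})$.

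First, the anchoring. For each of $O(1)$ rotations by multiples of $\pi/2$ (these are grid symmetries), we take the extreme support line of $\Popt$ in that direction. By \lemref{leftmost-edge} and \lemref{leftmost_midpoint}(II), the support set in that direction is a segment of length $\Theta(k^{1/4})$, and we anchor a boundary grid point of it at the origin. In the rotated frame, \lemref{leftmost_midpoint}(I) says the epicenter lies at distance $\sqrt{k/\pi}\pm O(k^{1/6})$ from the support line. By (III), its offset along the support line from the edge midpoint is $O(k^{5/18})$. The anchor is within $O(k^{1/4})=O(k^{5/18})$ of that midpoint. Hence, relative to the anchor, the epicenter lies in an explicit rectangle with sides $O(k^{1/6})$ (normal to the support line) and $O(k^{5/18})$ (along it).

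Second, the candidate set. Place candidates on the midline of this rectangle at spacing $\Theta(k^{1/6})$. This gives $O(k^{5/18-1/6})=O(k^{1/9})$ points, one of which is within $\rho=O(k^{1/6})$ of $\epi$. Since the algorithm does not know which grid point of the support edge is the anchor, we enumerate anchors as follows. The \DP is translation-invariant, so for a fixed rotation we may fix the anchor at the origin and let the candidate center range over the rectangle. This absorbs the choice of anchor point on the edge, because only the relative position matters and that is captured by the rectangle.

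Third, run the pruned \DP of \lemref{dp-annulus_4} for each candidate center $q$. It uses the annulus $\diskY{q}{\sqrt{k/\pi}\pm C\rho}$, the forbidden region, the tangent filter, and the sparse counts of \lemref{forced-count}. \lemref{dp_annulus_2} guarantees that for the good candidate no boundary link of $\Popt$ is lost. Each run has $O(k^{4/3})$ states and $O(k^{1/3})$ outgoing links per state, for $O(k^{5/3})$ time per run. Over $O(1)\cdot O(k^{1/9})$ candidates this totals $O(k^{16/9})$, and we return the best polygon found. Every returned polygon is feasible by the fan-chain argument and \obsref{trim-to-k}.

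The main obstacle is reconciling coordinate frames. \lemref{leftmost_midpoint} is stated with the epicenter at the origin and a translated lattice, while the \DP anchors a lattice point at the origin. I would handle this by translating back: the midpoint bound becomes a bound on $\epi$ relative to the support edge. I would also check carefully that the $O(k^{1/6})$ bound in the normal direction from (I) and the $O(k^{5/18})$ bound in the tangential direction from (III) combine into the rectangle claimed above, with constants effective enough for the algorithm to construct it explicitly.
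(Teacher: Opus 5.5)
Your proposal is correct and follows the paper's route. You anchor a bottom-support grid point at the origin, and use \lemref{leftmost_midpoint} (after a lattice-preserving rotation) with the $O(k^{1/4})$ edge-length bound to place the epicenter in an explicit $O(k^{5/18})\times O(k^{1/6})$ rectangle. You then try $O(k^{1/9})$ candidate centers at spacing $\Theta(k^{1/6})$, each with the tangent-pruned $O(k^{5/3})$-time annulus \DP of \lemref{dp-annulus_4}. The one difference, enumerating all four rotations, is harmless but unnecessary, since by lattice symmetry some optimal polygon can always be anchored on its bottom support edge.
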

\begin{proof}
    For any fixed bottom-support anchor, the preceding paragraph gives an
    explicitly constructible rectangle that contains the epicenter, has
    height $O(k^{1/6})$, and has width $O(k^{5/18})$.  Put a grid of
    candidate centers in this rectangle with spacing $\Theta(k^{1/6})$.
    The number of candidates is $O(k^{5/18}/k^{1/6})=O(k^{1/9})$, and one
    of them is within $O(k^{1/6})$ of the true epicenter.

    For one such candidate center, use the same annulus and tangent
    pruning as in the proof of \lemref{dp-annulus_4}.  The annulus has
    $O(k^{2/3})$ good vertices, hence $O(k^{4/3})$ grid-count states, and
    each state has $O(k^{1/3})$ outgoing links after the tangent filter.
    Thus one centered run costs $O(k^{5/3})$.  Multiplying by the
    $O(k^{1/9})$ candidate centers gives
    \[
        O(k^{5/3+1/9}) = O(k^{1+7/9}).
    \]
\end{proof}

\subsection{Speeding up transitions via monotone queues}
\seclab{monotonicity}

We now give a final, more technical speedup.  The geometric reductions
above already give \lemref{dp-annulus_5}; here we improve the transition
processing by batching transitions that share the same horizontal
displacement and exploiting a monotonicity property of the Euclidean
distance.  Throughout this subsection, $\emax=\Theta(k^{1/4})$ denotes the
algorithmic edge-length cap guaranteed by \corref{edge-improved}, with the
constant chosen large enough not to remove an edge of~$\Popt$.

\subsubsection{Explicit grid-count update for primitive edges}

A \emphi{column} is the set of points in $\VV$ sharing a common
$x$-coordinate.  A \emphi{column pair} is an ordered pair of columns
$(x_0,x_1)$ that can occur as source and target columns in the polar
topological order, with $0<\cardin{x_1-x_0}\leq \emax$.  The case
$x_0=x_1$ corresponds to vertical primitive vectors $(0,\pm1)$ and is
processed directly within the same time bound.

The nonprimitive source and sink edges incident to the origin copies are
processed separately by their gcd grid counts.  There are only
$O(\cardin{\VV})$ such graph edges in a centered run.  The source
relaxations take $O(\cardin{\VV})$ time, and the sink relaxations take
$O(\cardin{\VV}^2)$ time over all grid-count states, which is absorbed by
the bounds below.  The batched transitions in this subsection are the
internal primitive edges of~$\G$.

Consider a transition from source $s = (x_0, i)$ with grid count $\nG_s$ to target $t = (x_1, j)$ with grid count $\nG_t$.  The \DP adds the triangle $\triangle = \triangle \origin s t$.  The new grid points are those in~$\triangle$ not already counted (i.e., excluding the origin, the vertex~$s$, and the interior grid points of the shared diagonal $\origin s$).

\begin{lemma} %
    \lemlab{grid-update}%
    For a \emph{primitive} edge $s \to t$, i.e., $\rho = \gcdY{(x_1 - x_0)}{(j - i)} = 1$, the grid-count update is
    \begin{equation*}
        \nG_t - \nG_s
        =
        \half (x_0 j - x_1 i)
        - \half (\gcdY{x_0}{i})
        + \half (\gcdY{x_1}{j})
        + \half.
    \end{equation*}
\end{lemma}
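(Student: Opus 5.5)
The plan is to apply Pick's theorem to the fan triangle $\triangle=\triangle\origin s t$ and then subtract the grid points already counted in $\nG_s$. By the \DP transition rule, the new count is $\nG_t=\nG_s+\#_\triangle-\#_{\origin s}-2$. Here $\#_{\origin s}=\gcdY{x_0}{i}-1$ is the number of interior grid points of the diagonal $\origin s$, and the $2$ accounts for its endpoints $\origin$ and $s$. So the task reduces to writing $\#_\triangle$ explicitly in terms of the coordinates.

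First, I will compute the boundary count $b_\triangle$ of the triangle. The three sides are $\origin s$, $st$, and $t\origin$, with integer direction vectors $(x_0,i)$, $(x_1-x_0,j-i)$, and $(x_1,j)$. A lattice segment with direction $z$ contributes $\gcd(z)$ boundary points once its starting endpoint is excluded. Hence
\[
b_\triangle=\gcdY{x_0}{i}+\gcdY{(x_1-x_0)}{(j-i)}+\gcdY{x_1}{j}=\gcdY{x_0}{i}+1+\gcdY{x_1}{j},
\]
where the middle term equals $1$ by the primitivity hypothesis.

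Next, the area is $\areaC_\triangle=\half\cardin{\det((x_0,i),(x_1,j))}=\half\cardin{x_0j-x_1i}$. Then $\#_\triangle=i_\triangle+b_\triangle$, and Pick's theorem (\thmref{pick}) gives $i_\triangle=\areaC_\triangle-b_\triangle/2+1$. So $\#_\triangle=\areaC_\triangle+b_\triangle/2+1$, which is the paper's formula. Substituting, I expect
\[
\nG_t-\nG_s=\half\cardin{x_0j-x_1i}+\half\gcdY{x_0}{i}+1+\half\gcdY{x_1}{j}+\half-\gcdY{x_0}{i}-1,
\]
which simplifies to $\half\cardin{x_0j-x_1i}-\half\gcdY{x_0}{i}+\half\gcdY{x_1}{j}+\half$.

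The step I expect to need care is the sign: the stated formula has $\half(x_0j-x_1i)$ with no absolute value. I would argue that $s\prec t$ in counterclockwise polar order around $\origin$, and all of $\VV$ lies in the upper halfplane, so the angular difference from $s$ to $t$ is in $(0,\pi)$. Internal edges require the two endpoints to lie on distinct rays, so the triangle is nondegenerate and $\det(s,t)=x_0j-x_1i>0$. This lets me drop the absolute value and gives the stated identity. The degenerate radial and source/sink cases are excluded because they are handled separately by their gcd counts, as stated in the text.
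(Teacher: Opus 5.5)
Your proposal is correct and follows the paper's proof. Both apply Pick's theorem to $\triangle\origin s t$ with $b_\triangle=\gcdY{x_0}{i}+1+\gcdY{x_1}{j}$ and then subtract the $\gcdY{x_0}{i}+1$ already-counted points of the closed diagonal $\origin s$. Your sign discussion justifies what the paper only asserts by calling $\half(x_0j-x_1i)$ the ``signed area''. One small imprecision remains: distinct rays in the closed upper halfplane only give an angular gap in $(0,\pi]$, and a gap of $\pi$ (with a degenerate triangle) occurs when $s$ and $t$ lie on opposite sides of the $x$-axis. In that case $t-s=(x_1-x_0,0)$ with $\cardin{x_1-x_0}\geq 2$ is not primitive, so primitivity, not distinctness of rays, is what rules it out.
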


\begin{proof}
    The triangle $\triangle = \triangle \origin s t$ has signed area $\areaC_\triangle = \frac{1}{2}(x_0 j - x_1 i)$.  And $\triangle$ has
    \begin{align*}
        b_\triangle
        &=
        (\gcdY{x_0}{i})
        + \rho
        + (\gcdY{x_1}{j})
        =
        (\gcdY{x_0}{i}) + 1 + (\gcdY{x_1}{j})
    \end{align*}
    grid points on its boundary.  By Pick's theorem, the total number of grid points in or on~$\triangle$ is $\xi = \areaC_\triangle + \half b_\triangle + 1$.  Subtracting the already-counted points---the origin, the vertex~$s$, and the $(\gcdY{x_0}{i}) - 1$ interior points of $\origin s$, subtracting $(\gcdY{x_0}{i}) + 1$ in total, implies
    \begin{align*}
        \nG_t - \nG_s
        &=%
        \xi - (\gcdY{x_0}{i}) - 1
        \\&
        =
        \areaC_\triangle + \half b_\triangle + 1 - \pgcdY{x_0}{i} - 1\\
        &=%
        \half (x_0 j - x_1 i) + \half \pgcdY{x_0}{i} + \half
        + \half \pgcdY{x_1}{j} - \pgcdY{x_0}{i}%
        \\&%
        =%
        \half (x_0 j - x_1 i)- \half \pgcdY{x_0}{i}
        + \half \pgcdY{x_1}{j} + \half .
    \end{align*}
\end{proof}

\subsubsection{The conserved index}

\begin{defn}
    \deflab{potential}%
    Given a transition in the \DP from $(s,\nG_s) \to (t,\nG_t)$, its (source) \emph{potential} and target \emph{potential} are, respectively,
    \begin{equation*}
        \varphi(s,\nG_s, x_1) = \nG_s - \half  x_1 i
        - \half \pgcdY{x_0}{i}%
        \quad\text{and}\quad
        \overline{\varphi}(t,\nG_t, x_0)
        =
        \nG_t - \half x_0 j
        - \half \gcdY{x_1}{j}
        - \half. %
    \end{equation*}
    Note that the two potentials are the same by \lemref{grid-update}. In particular, the \emphi{potential} of the transition is
    \begin{math}
        \sigma = \varphi(s,\nG_s, x_1) = \overline{\varphi}(t,\nG_t, x_0).
    \end{math}
\end{defn}

Thus, the transition potential depends only on the source state and the $x$-coordinate of the target state (or on the target state and the $x$-coordinate of the source). The above readily implies the following:

\begin{lemma}
    \lemlab{conserved}%
    For a fixed value of the potential $\sigma$, we have:
    \begin{compactenumI}
        \item %
        $\nG_s = \sigma + \frac{1}{2}\, x_1 i + \frac{1}{2}\pgcdY{x_0}{i}$.

        \item \begin{math} \nG_t%
            =%
            \sigma + \frac{1}{2} x_0 j + \frac{1}{2}\pgcdY{x_1}{j} + \frac{1}{2}.
        \end{math}

        \item For two source states $(s, \nG_s), (s', \nG_{s'})$ sharing
        the same $x$-axis, with $s=(x_0, i)$ and $s'=(x_0, i')$, and
        consider their transition into the same target state
        $(t, \nG_t)$, where $t= (x_1, j)$. For,
        $\Delta = \varphi(s, \nG_s,x_1) - \varphi(s', \nG_{s'},x_1) $, we
        have
        \begin{equation*}
            \Delta =
            (\nG_{s} - \nG_{s'}) - \half x_1(i - i') -
            \half \bigl(\pgcdY{x_0}{i} - \pgcdY{x_0}{i'}\bigr),
        \end{equation*}
        which is independent of the target height $j$.
    \end{compactenumI}
\end{lemma}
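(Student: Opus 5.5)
The proof is pure algebra on the identity of \lemref{grid-update}, read through \defref{potential}. Nothing geometric is needed. For a primitive transition $(s,\nG_s)\to(t,\nG_t)$ with $s=(x_0,i)$ and $t=(x_1,j)$, \lemref{grid-update} gives
\[
    \nG_t-\nG_s=\half(x_0 j-x_1 i)-\half\pgcdY{x_0}{i}+\half\pgcdY{x_1}{j}+\half .
\]
The plan is to move every term involving only source data (together with $x_1$) to one side, and every term involving only target data (together with $x_0$) to the other. This yields
\[
    \nG_s-\half x_1 i-\half\pgcdY{x_0}{i}=\nG_t-\half x_0 j-\half\pgcdY{x_1}{j}-\half .
\]
The left side is exactly $\varphi(s,\nG_s,x_1)$ and the right side is exactly $\overline{\varphi}(t,\nG_t,x_0)$. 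This is the equality of the two potentials already noted after \defref{potential}, and its common value is $\sigma$.

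Items (I) and (II) then follow by solving $\sigma=\varphi(s,\nG_s,x_1)$ for $\nG_s$ and $\sigma=\overline{\varphi}(t,\nG_t,x_0)$ for $\nG_t$. Each is a one-line rearrangement.

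For (III), I would take two sources $s=(x_0,i)$ and $s'=(x_0,i')$ in the same column and subtract their source potentials, both evaluated at the same target column $x_1$. This directly gives the displayed formula for $\Delta$. The point to emphasize is that $\varphi(\cdot,\cdot,x_1)$ depends on the target only through $x_1$. Hence the height $j$ never enters either potential, and $\Delta$ is independent of $j$.

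There is one subtlety to address. When both $s\to t$ and $s'\to t$ are genuine primitive transitions into the same $(t,\nG_t)$, the equality of potentials forces $\varphi(s,\nG_s,x_1)=\overline{\varphi}(t,\nG_t,x_0)=\varphi(s',\nG_{s'},x_1)$, so $\Delta=0$. The statement should therefore be read as a comparison of source potentials relative to a common target column. The independence from $j$ is what I expect the subsequent batching to use.

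I expect no real obstacle. The only care needed is bookkeeping: the signed area convention $x_0 j-x_1 i$ must be the same one used in \lemref{grid-update}, and the $\tfrac12$ constant must be attached to the target side, consistent with \defref{potential}.
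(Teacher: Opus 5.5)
Your proposal is correct and matches the paper, which gives no separate argument and only notes that the lemma follows readily from \defref{potential} together with \lemref{grid-update}. Items (I) and (II) are rearrangements of the two potential definitions, which coincide by the grid-count identity, and (III) is the difference of two source potentials taken at the same target column $x_1$; your side remark is also right, since $\Delta=0$ whenever both are genuine transitions into the same state $(t,\nG_t)$, which is exactly why one batch shares a single $\sigma$.
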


\subsubsection{Monotonicity of the cost curves}

\newcommand{\perC}{\overline{\tau}}

The dynamic programming maintains for every state $(s,\nG_s)$ the best perimeter of the polygon found, where $s$ is the last vertex before going back to the origin. Let $\perC( s, \nG_s)$ denote this value.

For a transition from $s=(x_0, i)$ (with current perimeter $\perC$) to $t=(x_1, j)$, the new perimeter is $\perC' = \perC - \dY{\origin}{s} + \dY{\origin}{t} + \dY{s}{t}$.  Since $\dY{\origin}{t}$ depends only on the target, minimizing $\tau'$ over sources at column $x_0$ (for a fixed potential $\sigma$) is equivalent to minimizing
\begin{align}
    F_i(j)
    &=
    C_i + d_i(j),
\end{align}
where
\begin{math}
    C_i = \perC\bigl(s, \nG_s) - \dY{\origin}{(x_0, i)} %
    = \perC\bigl(s, \sigma + \half x_1 i + \half \pgcdY{x_0}{i} \bigr) - \dY{\origin}{(x_0, i)},
\end{math}
and
\begin{math}
    d_i(j) = \dY{s}{t} = \sqrt{(x_1 - x_0)^2 + (j - i)^2\,}.\Bigr.
\end{math}
Here $C_i$ does not depend on $t$.

\begin{lemma} %
    \lemlab{crossing}%
    Fix two distinct columns and put $a=\cardin{x_1-x_0}>0$.  For source
    heights $i_0 < i_1$, the difference
    $F_{i_0}(j)-F_{i_1}(j)$ is strictly increasing in~$j$.  In particular,
    the two curves cross at most once.
\end{lemma}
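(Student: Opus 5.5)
The constants $C_{i_0},C_{i_1}$ do not depend on $j$, so $F_{i_0}(j)-F_{i_1}(j)=(C_{i_0}-C_{i_1})+\bigl(d_{i_0}(j)-d_{i_1}(j)\bigr)$. It therefore suffices to show that $D(j)=d_{i_0}(j)-d_{i_1}(j)$ is strictly increasing in $j$. I will extend $j$ to a real variable and write $h(y)=\sqrt{a^2+y^2}$, so that $d_i(j)=h(j-i)$ and
\[
    D(j)=h(j-i_0)-h(j-i_1).
\]

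I will differentiate in $j$ rather than argue with discrete differences. We have $h'(y)=y/\sqrt{a^2+y^2}$. Since $a>0$, the second derivative $h''(y)=a^2/(a^2+y^2)^{3/2}$ is strictly positive, so $h'$ is strictly increasing on all of $\Re$; this is where $a\neq 0$ is essential. Then
\[
    D'(j)=h'(j-i_0)-h'(j-i_1).
\]
Because $i_0<i_1$, the argument $j-i_0$ is strictly larger than $j-i_1$. Strict monotonicity of $h'$ therefore gives $D'(j)>0$ for every real $j$. Hence $D$ is strictly increasing on $\Re$, and in particular on the integers. Equivalently, $D(j)=\int_{j-i_1}^{j-i_0}h'(y)\,dy$ integrates an increasing function over a window of fixed length that slides to the right.

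The crossing statement follows directly. Because $F_{i_0}-F_{i_1}$ is strictly increasing, it has at most one sign change and vanishes at most once. So the curves $F_{i_0}$ and $F_{i_1}$ cross at most once: for $j$ below the crossing point $F_{i_0}<F_{i_1}$, and above it the inequality reverses.

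I do not expect a real obstacle. The only points to check are that the distinct-column hypothesis ($a>0$) is exactly what gives strict convexity of $h$, and that the restriction to integer $j$ or to the admissible target range does not affect the conclusion, since strict monotonicity on $\Re$ passes to any subset. If $a=0$, $h(y)=\cardin{y}$ is only weakly convex and the strictness would fail, which is why vertical moves are handled separately.
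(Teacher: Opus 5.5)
Your proof is correct and follows the paper's own argument. Both drop the $j$-independent constants $C_{i_0},C_{i_1}$, differentiate $d_{i_0}(j)-d_{i_1}(j)$, and use that $y/\sqrt{a^2+y^2}$ is strictly increasing for $a>0$ together with $j-i_0>j-i_1$; your second-derivative computation simply spells out that last monotonicity step.
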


\begin{proof}
    The constants $C_{i_0}$ and $C_{i_1}$ do not depend on~$j$, so it
    suffices to prove that $d_{i_0}(j)-d_{i_1}(j)$ is strictly
    increasing.  Its derivative is
    \[
        \frac{j-i_0}{\sqrt{a^2+(j-i_0)^2}}
        -
        \frac{j-i_1}{\sqrt{a^2+(j-i_1)^2}}.
    \]
    The function $x/\sqrt{a^2+x^2}$ is strictly increasing for
    $a>0$, and $j-i_0>j-i_1$.  Hence this derivative is positive
    for every~$j$, proving strict monotonicity.  In particular, the
    difference between the two curves has at most one zero.
\end{proof}

Each column pair $(x_0, x_1)$ and a fixed potential $\sigma$, form a \emphi{batch}. It is the set
\begin{equation*}
    \Set{ \bigl((x_0, i), \nG_s \bigr)}{\nG_s = \sigma + \half x_1 i + \half\pgcdY{x_0}{i}}.
\end{equation*}

\begin{corollary} %
    \corlab{monotone-opt}%
    Within a batch, let $i^*(j)$ denote the source height minimizing $F_i(j)$ among all sources in the batch.  Then $i^*(j)$ is non-decreasing in~$j$.
\end{corollary}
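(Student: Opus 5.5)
The plan is to derive \corref{monotone-opt} directly from \lemref{crossing} via the standard ``single-crossing implies monotone argmin'' (Monge/totally monotone) argument. Fix a batch, that is, a column pair $(x_0,x_1)$ with $a=\cardin{x_1-x_0}>0$ and a potential $\sigma$. For each source height $i$ in the batch, the constant $C_i$ is then determined, since the state $((x_0,i),\nG_s)$ with $\nG_s=\sigma+\half x_1 i+\half\pgcdY{x_0}{i}$ is fixed. Hence each $F_i(j)=C_i+d_i(j)$ is a fixed function of the target height $j$ alone. Sources whose state is unreachable can be assigned $C_i=+\infty$ and dropped, so they never minimize. To make $i^*(j)$ well defined, I will break ties among minimizers by taking the largest minimizing $i$; the smallest would work symmetrically.

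The core step is an exchange argument. Let $j<j'$, and suppose, for contradiction, that $i_1=i^*(j)>i_0=i^*(j')$. Minimality of $i_1$ at $j$ gives $F_{i_1}(j)\le F_{i_0}(j)$, that is, $F_{i_0}(j)-F_{i_1}(j)\ge 0$. By \lemref{crossing}, applied with $i_0<i_1$, the difference $F_{i_0}-F_{i_1}$ is strictly increasing in $j$. Therefore $F_{i_0}(j')-F_{i_1}(j')>0$, so $F_{i_1}(j')<F_{i_0}(j')$. This contradicts $i_0$ being a minimizer at $j'$. Hence $i^*(j)\le i^*(j')$.

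Strict monotonicity makes the tie-breaking convention harmless. Consider the equality case $F_{i_1}(j)=F_{i_0}(j)$. It still forces strict inequality at $j'$, so any consistent convention, including the largest-index rule, works.

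The only delicate point, which I expect to be the main obstacle, is confirming that \lemref{crossing} really applies inside a batch. That is, the difference of two batch curves must depend on $j$ only through $d_{i_0}(j)-d_{i_1}(j)$. The constants $C_i$ depend on $x_1$ through the potential formula but not on $j$, because within a batch $x_1$ is fixed and only $j$ varies. So the constants cancel in the derivative exactly as in \lemref{crossing}.

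Two caveats remain. First, I restrict $j$ to target heights in column $x_1$, and I use \lemref{crossing} on the real extension of $j$, which is fine since strict monotonicity on reals implies it on integers. Second, targets $t$ that are not valid for some source, for example because of polar order or the $\emax$ cap, can be handled by treating $F_i(j)=+\infty$ there. This preserves the argument only if the admissible sets form intervals compatible with the order. If that fails, I would state the corollary over the full (unpruned) family, since extra candidates cannot lower the optimum below the true one.
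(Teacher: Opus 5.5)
Your proof is correct and is essentially the paper's argument. The paper fixes a minimizer $i_a$ at $j_0$ and uses the strict monotonicity of $F_{i_b}-F_{i_a}$ from \lemref{crossing} to show that every $i_b<i_a$ is strictly worse at any $j_1>j_0$, which is the direct form of your contradiction and is likewise insensitive to tie-breaking; your closing caveat about pruned targets is not needed here, since the corollary concerns the unpruned curves $F_i(j)=C_i+d_i(j)$ and eligibility is handled separately by the canonical decompositions in \lemref{eligible-envelope}.
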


\begin{proof}
    Suppose $i^*(j_0) = i_a$ and $j_1 > j_0$.  For any $i_b < i_a$: since $F_{i_b}(j) - F_{i_a}(j)$ is strictly increasing (\lemref{crossing}) and $F_{i_b}(j_0) \geq F_{i_a}(j_0)$ (optimality of $i_a$ at $j_0$), we have $F_{i_b}(j_1) > F_{i_a}(j_1)$.  So $i_b$ cannot be optimal at~$j_1$, giving $i^*(j_1) \geq i_a = i^*(j_0)$.
\end{proof}

We now use the following standard corollary.

\begin{corollary}
    \corlab{envelope}%
    Given a set of $q$ source curves $\{F_i\}_{i=1}^{q}$ inserted in
    increasing source-height order, and $r$ target-height queries made in
    increasing target-height order within one batch, all queried target
    minima can be computed online in $O(q+r)$ time.  The same statement
    holds after replacing all heights by their negatives.
\end{corollary}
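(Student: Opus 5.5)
We use the standard monotone-deque (``SMAWK-lite'') technique for totally monotone lower envelopes, relying only on \lemref{crossing}.

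\emph{Data structure.} Maintain a deque $\mathcal{Q}$ of source indices $i_1<i_2<\cdots<i_m$ (in insertion order, hence increasing height). The invariant is that, for target heights $j$ at least the last queried height, each $i_r$ in $\mathcal{Q}$ is the minimizer on a nonempty contiguous interval of $j$. These intervals appear in the same order as the deque, by \corref{monotone-opt}. For each consecutive pair $(i_r,i_{r+1})$, let $\chi(i_r,i_{r+1})$ be the crossing point where $F_{i_r}-F_{i_{r+1}}$ changes sign. By \lemref{crossing} this difference is strictly increasing, so the crossing is unique, and $i_{r+1}$ is better exactly for $j>\chi$. The crossing is computed in $O(1)$ arithmetic operations from the closed forms of $d_i$ and $C_i$. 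Since only integer $j$ matter, an $O(1)$ comparison test on the integer neighbours of a candidate suffices. Alternatively, we can compare $F$ values directly at the relevant query height and avoid exact crossings entirely.

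\emph{Insertion.} When a new curve $F_{i}$ arrives with $i$ larger than all current indices, repeatedly pop the back element $i_m$. We pop as long as $\chi(i_{m-1},i)\le\chi(i_{m-1},i_m)$, i.e., while $i_m$ is never strictly the unique best once $i$ is present, or when $F_i$ already beats $i_m$ at the current query position. Then push $i$. The popped curves are dominated on the remaining query range. Each step is correct because all pairwise differences are monotone in the same direction, so a curve dominated by its neighbours at their crossing stays dominated.

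\emph{Query.} For a target height $j$ that is nondecreasing across calls, pop from the front while $F_{i_1}(j)\ge F_{i_2}(j)$. This is valid because, by \lemref{crossing}, once $i_2$ beats $i_1$ it keeps beating it for all larger $j$. Then answer $F_{i_1}(j)$.

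\emph{Interleaving and cost.} Insertions and queries may interleave online. Each query must use only sources inserted so far, which is consistent because the deque always represents the envelope of the inserted curves on the future query range. Every index is pushed once and popped at most once, and each query does $O(1)$ work beyond its pops, so the total time is $O(q+r)$ amortized.

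\emph{Negated heights.} For the reversed order, apply the same argument to $i\mapsto -i$, $j\mapsto -j$. The function $d_i(j)$ depends only on $(j-i)^2$, so \lemref{crossing} holds verbatim after negation.

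The main obstacle is the correctness of back-popping under interleaving with front queries. Specifically, we must show that a curve popped from the back could not later become the minimizer for some larger query $j$. This follows from a three-curve argument. For $i_a<i_b<i_c$ with $\chi(i_a,i_c)\le\chi(i_a,i_b)$, we have $\chi(i_b,i_c)\le\chi(i_a,i_b)$, so at every $j$ one of $i_a$ or $i_c$ is at least as good as $i_b$. This needs only the single-crossing property of \lemref{crossing}.
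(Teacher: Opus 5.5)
This is essentially the paper's proof: a monotone deque with crossing-based back-pops, front-pops once the front curve is overtaken, and push-once/pop-once amortization, where your back-pop test $\chi(i_{m-1},i)\le\chi(i_{m-1},i_m)$ agrees with the paper's test (``$F_{i_b}$ meets the new curve no later than it meets its predecessor'') whenever $\chi(i_{m-1},i_m)$ is a genuine finite crossing. Two small cautions: here $F_{i_0}-F_{i_1}$ is bounded in $j$ and need not change sign, and if the front curve is beaten everywhere by its successor (possible before the first query, when your ``current query position'' rule is undefined) your form of the test can evict a live curve while the paper's form cannot, so use the paper's form or pop such a front at once; and your aside about comparing $F$ only at the current query height works for front-pops but not for back-pops, which genuinely need crossing comparisons.
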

\begin{proof}
    By \corref{monotone-opt}, the optimal source index is non-decreasing
    as target heights are queried in order.  Maintain a deque of candidate
    sources.  When inserting source $i'$ at the back, remove any source
    $i_b$ at the back for which the crossing $F_{i_b}=F_{i'}$ occurs no
    later than the crossing $F_{i_b}=F_{i_a}$, where $i_a$ is immediately
    ahead of $i_b$ in the deque; after that point $i_b$ can never be
    optimal.  When querying target~$j$, advance the front past any source
    that has been overtaken.  The insertion and query orders are monotone,
    so each source enters and leaves the deque at most once, and each
    queried target is queried once.  The total time is $O(q+r)$.
\end{proof}

\paragraph*{A note on column order.}
For a nonvertical column, the polar sweep orders heights increasingly when
$x>0$ and decreasingly when $x<0$; on the column $x=0$, ties are ordered
by distance from the origin, hence by increasing height in our upper
halfplane.  Same-side column pairs are handled online in these
polar-induced orders.  If both columns are negative, this is just the
change of variables $\tilde i=-i$, $\tilde j=-j$, so
\lemref{crossing} and \corref{monotone-opt} apply verbatim.

The remaining admissible nonvertical pairs cross the $y$-axis in the
forward polar direction: positive-to-zero, positive-to-negative, or
zero-to-negative.  For such a pair, every source state in the batch is
finalized before any target state in the batch is processed.  We therefore
build the lower envelope statically before the first target query.  If the
target column is $x_1=0$, its target queries are in increasing height, and
we use the original heights.  If $x_1<0$, its target queries are in
decreasing height, and we use the reflected coordinates
$(\tilde i,\tilde j)=(-i,-j)$; the sources are sorted by increasing
$\tilde i$ when the static envelope is built.  The reverse mixed
directions, namely zero-to-positive, negative-to-zero, and
negative-to-positive, have no admissible internal edge because they violate
the polar order.  The case $x_0=x_1$ consists of vertical primitive
vectors $(0,\pm1)$ and is processed directly.

\begin{lemma}
    \lemlab{eligible-envelope}%
    Fix a column pair $(x_0,x_1)$, put $a=\cardin{x_1-x_0}>0$, and fix a potential value $\sigma$.  Suppose $q$ source states are inserted into this batch and $r$ target states are actually queried.  The transitions of this batch that are actual edges of the visibility graph~$\G$ can be processed in
    \begin{math}
        (q+r)\,\emax^{o(1)}
    \end{math}
    time.
\end{lemma}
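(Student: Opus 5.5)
The plan is to reduce the lemma to \corref{envelope}. The batch contains source--target pairs that are not edges of~$\G$, so I will argue that the extra pairs can either be relaxed harmlessly or be removed by splitting the batch into $\emax^{o(1)}$ sub-batches. I would first formulate a \emph{soundness principle}: the \DP stays correct if it also relaxes transitions $s\to t$ with $s,t\in\VV$ in the polar order, provided the recorded grid count never exceeds the true number of grid points of the swept fan. A completed chain with recorded count $k$ then bounds a fan containing at least $k$ grid points. Its convex hull has no larger perimeter, and \obsref{trim-to-k} turns it into a feasible polygon. Completeness needs only that every primitive link of~$\Popt$ is still relaxed with its exact count. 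This principle lets me relax a superset of the edges of~$\G$, as long as counts are never overestimated.

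The length and direction constraints come first. For a fixed column pair, the condition $\lenX{t-s}\leq\emax$ reads $|j-i|\leq\sqrt{\emax^2-a^2}$, which is a window of fixed width around~$i$. Both its endpoints are monotone in~$j$, so I would enforce it inside the deque of \corref{envelope}: when a source leaves the window, expire it from the front. The window also limits which sources may be inserted before a given query, so insertions stay in increasing order. I would handle the tangent and visibility filters the same way. For a fixed source, the admissible directions form an angular interval (\lemref{dp-annulus_4}), so within a column pair the admissible $j$ form an interval whose endpoints move monotonically with~$i$. If that monotonicity fails at the boundary, I fall back on the soundness principle: the extra pairs still produce valid fan chains and only enlarge the relaxed edge set.

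The main obstacle is primitivity, $\gcd(a,j-i)=1$, because the batch formula of \lemref{grid-update} and the potential of \defref{potential} assume $\rho=1$. Minima cannot be computed by Möbius inclusion--exclusion. Instead, for each divisor $g$ of~$a$, I relax the pairs with $g\mid(j-i)$ using the Pick count computed with $\rho$ replaced by~$g$. This replacement shifts the potential by the constant $(g-1)/2$, so \lemref{conserved} and \lemref{crossing} still apply. The pairs with $g\mid(j-i)$ split by the residue of $i \bmod g$ into interleaved arithmetic progressions. Each progression is monotone in both heights, so \corref{envelope} applies to each progression separately. Its total cost is linear in the sources and queries routed to it, and summing over residues gives $O(q+r)$ for each~$g$.

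A pair with true gcd $G$ is relaxed once for every $g\mid G$. When $g<G$, the recorded count omits $G-g$ boundary points of the edge, so the count is an underestimate and the relaxation is sound. When $g=1$ and the pair is primitive, the count is exact, so every link of~$\Popt$ is still relaxed correctly and completeness holds. The number of divisors satisfies $d(a)=a^{o(1)}\leq\emax^{o(1)}$. Hence the total time is $(q+r)\,\emax^{o(1)}$, and enumerating the divisors costs $\emax^{o(1)}$ per column pair after a sieve precomputation.
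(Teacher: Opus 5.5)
\textbf{The gap: front expiry breaks \corref{envelope}.} You enforce the length window $\cardin{j-i}\leq\sqrt{\emax^2-a^2}$, and then the visibility and tangent filters ``the same way'', by expiring sources from the front of the deque of \corref{envelope}. That is not sound for this data structure. The deque pops a source $i_b$ from the back when the new source $i'$ overtakes it no later than $i_b$ would overtake its front neighbour $i_a$. This is valid only if $i_a$ stays available for every later query up to that crossing.

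In a sliding window the smallest heights expire first, so this fails. Take $i_a<i_b<i'$ with $C_{i_a}$ so small that $F_{i_a}<F_{i_b}$ for every $j$. Choose $C_{i_b}-C_{i'}$ so that $F_{i_b}$ and $F_{i'}$ cross shortly after $i_a$ leaves the window. Then $i_b$ is popped when $i'$ is inserted. Yet for the first queried targets after $i_a$ expires, $i_b$ is the true eligible minimum, and the structure returns $F_{i'}$. So a target state can receive a value strictly larger than its correct predecessor minimum, and a link of the optimal chain can be lost.

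Your soundness principle cannot repair this. It licenses relaxing \emph{extra} pairs, not silently dropping eligible ones. Separately, you do not prove that the visibility and tangent interval endpoints move monotonically in $i$. The tangent filter actually leaves the complement of an interval, i.e.\ up to two intervals, not one.

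\textbf{What the paper does instead.} For each source, the length, polar-order, visibility and tangent constraints cut out $O(1)$ intervals of the sorted target heights $J$ of column $x_1$. Each interval is split into $O(\log M)$ canonical nodes of an implicit segment tree on $J$, and a copy of the source is stored in each node. A target queries the $O(\log M)$ ancestors of its leaf. Inside a node every stored source is eligible for every routed target, and the insertion and query streams are subsequences of the global ordered streams. Hence \corref{envelope} applies verbatim, with no expiry.

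\textbf{The alternative your principle suggests.} You could push your principle all the way and drop these filters inside the batch; polar order is already enforced by inserting only finalized sources. But then you no longer process exactly the edges of $\G$, which is what the lemma asserts. Moreover, \lemref{forced-count}, which gives the $O(\cardin{\VV}^2)$ state bound used in \lemref{queue-annulus}, covers only paths in $\G$. Underestimated counts, or chains cutting through $\mathcal{B}$, can create states at $p$ with counts below $f_p$. You would have to prune states outside $[f_p,f_p+\cardin{\VV}]$. That pruning is complete, because the optimal chain is a path in $\G$.

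\textbf{A bookkeeping error on primitivity.} Your divisor scheme is a legitimate alternative to the paper's product of residue segment trees. However, replacing the true gcd $G$ by $g$ in Pick's formula lowers the count by $(G-g)/2$, not by $G-g$. When $G-g$ is odd, the pair therefore matches no integer target state at all. Under your principle, the $g=1$ round alone already suffices: odd-$G$ pairs are relaxed with an underestimate and even-$G$ pairs are never matched. So the divisor enumeration is unnecessary.
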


\begin{proof}
    Choose the height coordinate used for this batch as described in the preceeding
    column-order paragraph: either the original height, or the reflected
    height $\tilde y=-y$.  The potential value is still computed in the
    original coordinates; the chosen coordinate is only for sorting,
    residue filtering, and the lower-envelope deque.  In the rest of this
    proof, $i$ and $j$ denote these chosen height coordinates.  In the
    reflected case the Euclidean distance and the primitiveness condition
    are unchanged because $j-i=-(\tilde j-\tilde i)$ in the original
    coordinates.  Let $J$ be the sorted list of all target heights in the
    column~$x_1$ in this chosen coordinate; write
    $M=\cardin{J}$.  The value $M$ enters only through logarithmic
    decomposition factors, while $r$ below denotes the number of target
    states actually queried in this batch.

    Fix a source state with source point $s=(x_0,i)$, using the chosen
    height coordinate for the notation.  We first ignore primitiveness.
    The admissible target heights form a constant number of intervals in
    the sorted list~$J$.  The length constraint is
    $\cardin{j-i}\leq \sqrt{\emax^2-a^2}$, hence one interval.  The
    polar-order constraint $s\prec t$ is also an interval on a fixed target
    column, with the same distance tie-breaking used in the definition of
    $\G$.  For visibility, as $t=(x_1,j)$ moves up a fixed vertical column,
    the directed line $st$ rotates monotonically.  Since
    $\mathcal{B}$ is convex and $s\notin\mathcal{B}$ for internal edges,
    the lines through $s$ that touch $\mathcal{B}$ are the two tangents
    (possibly coincident in a degenerate case); between two consecutive
    tangent directions the truth value of ``$\mathcal{B}$ lies on the swept
    side of $st$ and the supporting line does not cut through
    $\mathcal{B}$ except possibly at~$\origin$'' is constant.  Touching a
    vertex or lying along a supporting edge only changes whether an endpoint
    of an interval is included, which is immaterial for the canonical
    decomposition.  The same monotone-rotation argument applies to the
    inner-disk tangent filter: the forbidden directions are the open angular
    interval between the two tangents from~$s$ to $\Din$ (or empty if no
    tangent obstruction exists).  Thus visibility and tangency introduce
    only $O(1)$ additional breakpoints.  Intersecting these constant-many
    constraints gives $O(1)$ intervals of target heights for this source.

    The segment tree over~$J$ is implicit: nodes are created only when a
    copied source or an actual target query touches them.  A source
    interval is decomposed into $O(\log M)$ canonical nodes, so each source
    copy is stored only where every target in the node satisfies all
    nonprimitive eligibility constraints for that source.  A target query
    visits the $O(\log M)$ ancestors of its leaf, and therefore sees exactly
    the sources whose nonprimitive constraints allow that target height.

    It remains only to enforce $\gcdY{a}{(j-i)}=1$ in these chosen
    coordinates.  Let $p_1,\ldots,p_t$ be the distinct prime divisors
    of~$a$.  In each target-tree node, use the product of segment trees on
    the residue sets $\ZZ/p_\ell\ZZ$: store a source according to the
    residue vector $(i\bmod p_1,\ldots,i\bmod p_t)$, and route a target~$j$
    to the canonical decomposition of
    $\prod_\ell((\ZZ/p_\ell\ZZ)\setminus\{j\bmod p_\ell\})$.  This creates
    $O(\prod_\ell \log p_\ell)$ source and query copies.  Since
    $\prod_\ell p_\ell\leq a\leq\emax$, for every fixed $\varepsilon>0$ we
    have
    $\prod_\ell O(\log p_\ell)=O_\varepsilon(\emax^\varepsilon)=\emax^{o(1)}$.

    Each resulting canonical subproblem preserves the required order.  For
    same-side online batches, a segment-tree node receives a subsequence of
    the globally ordered source insertions, and the residue filter only
    splits this subsequence into further subsequences; target queries reach
    ancestor nodes, and the residue decomposition again keeps subsequences
    of the globally ordered target queries.  Hence every online canonical
    subproblem satisfies \corref{envelope}.  For a forward mixed batch, all
    sources are already finalized before the first target query.  For each
    touched canonical subproblem we sort its copied sources in the chosen
    height coordinate, build the deque/static lower envelope in linear time
    after sorting, and then answer the target queries in the chosen target
    order.  The total sorting cost is bounded by the number of copied
    sources times an additional $O(\log k)$ factor, since no copied source
    list has size larger than the total number of states.

    Across all canonical copies, the total number of copied source
    insertions, including allocation and static-envelope construction, is
    bounded by
    \[
        O(q\log M\log k)\prod_{\ell=1}^tO(\log p_\ell),
    \]
    where the final $O(\log k)$ factor is only needed for static mixed
    batches.  The total number of target-query copies is
    \[
        O(r\log M)\prod_{\ell=1}^tO(\log p_\ell).
    \]
    Summing gives
    \[
        O\bigl((q+r)\log M\log k\bigr)
        \prod_{\ell=1}^t O(\log p_\ell)
        =
        (q+r)\emax^{o(1)} ,
    \]
    since the algorithmic cap in this subsection is
    $\emax=\Theta(k^{1/4})$, so $M\leq O(k)$ implies
    $\log M, \log k=\emax^{o(1)}$.  The canonical decompositions are exact,
    so no ineligible edge is used, and no eligible edge is missed.
\end{proof}

\subsubsection{The improved algorithm}

The transition reorganization applies to any annulus run constructed as
in \lemref{dp-annulus}.  The only parameters that enter the running-time
bound are the number of good vertices and the maximum allowed link
length.

\begin{lemma}
    \lemlab{queue-annulus}%
    Let $\VV$ be the good vertex set of one annulus run, and suppose all
    links used by the run have length at most~$\emax$.  The dynamic
    program over~$\VV$ can be evaluated in
    \[
        O\bigl(\cardin{\VV}^2\,\emax^{1+o(1)}\bigr)
    \]
    time.
\end{lemma}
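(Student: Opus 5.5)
We decompose the work of the dynamic program by batches and charge each batch to the states it touches. By \lemref{forced-count}, every $p\in\VV$ carries at most $\cardin{\VV}+1$ grid-count values, so the run has $S=O(\cardin{\VV}^2)$ reachable states. The nonprimitive source and sink edges and the vertical links $(0,\pm1)$ are handled directly. As noted at the start of this subsection, they cost $O(\cardin{\VV}^2)$ in total, so only the internal primitive nonvertical edges need care. Such an edge goes between two columns $x_0\neq x_1$ with $\cardin{x_1-x_0}\leq\emax$. By \defref{potential} and \lemref{conserved}, it lies in exactly one batch $(x_0,x_1,\sigma)$, where $\sigma=\varphi(s,\nG_s,x_1)$. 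We process the columns and heights in the polar order of the \DP and route each transition through its batch using \lemref{eligible-envelope}.

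The accounting goes as follows. Fix a state $(s,\nG_s)$ with $s$ in column $x_0$. For each target column $x_1$ with $0<\cardin{x_1-x_0}\leq\emax$, there are only $O(\emax)$ such columns, and the state is inserted as a source into exactly one batch, namely the one whose potential is $\varphi(s,\nG_s,x_1)$. It is inserted only if it is reachable, and only when the polar order says its column pair is admissible. Symmetrically, every target state $(t,\nG_t)$ in column $x_1$ is queried in exactly one batch per source column $x_0$, the one with potential $\overline{\varphi}(t,\nG_t,x_0)$. Summing over all batches, $\sum(q+r)=O(S\cdot\emax)=O(\cardin{\VV}^2\emax)$. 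By \lemref{eligible-envelope}, the total time is therefore $O(\cardin{\VV}^2\emax)\cdot\emax^{o(1)}$. On top of this, we pay $O(\cardin{\VV}^2)$ for the direct cases and hashing or dictionary overhead for locating batches by key $(x_0,x_1,\sigma)$. This overhead is either expected $O(1)$ per access or, deterministically, $O(\log k)=\emax^{o(1)}$ per access with a balanced tree, so it is absorbed.

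The main obstacle is correctness of the online interleaving. A target state's value must be final before it acts as a source, and each batch query must see every source that can reach it. We plan to argue this from the acyclic polar order of $\G$. Inserting sources into the batches for all later columns as soon as their values are final, and querying a target only when it is reached in the sweep, respects this order. Within a column, heights come in the polar-induced monotone order described in the column-order note, which is exactly what \corref{envelope} and \lemref{eligible-envelope} require. The mixed, axis-crossing pairs are built statically once all their sources are final.

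One subtlety needs care. For a target whose queries are interleaved with source insertions in the same batch, we need all sources of that batch that precede the target in polar order to have been inserted already. We will check that, for same-side pairs, sources are ordered before targets whenever they precede them polarly, so the deque content at query time is exactly the set of eligible earlier sources. Finally, since every relaxation of \lemref{eligible-envelope} realizes the true minimum over eligible edges of $\G$, the computed values equal those of the explicit \DP of \lemref{dp-annulus}, which proves the claim.
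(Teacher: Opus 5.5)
Your proposal is correct and follows essentially the same route as the paper. You batch the internal primitive transitions by column pair and conserved potential, and charge each of the $O(\cardin{\VV}^2)$ states one insertion per target column and one query per source column within horizontal distance $\emax$. You then invoke \lemref{eligible-envelope} together with the polar topological order, using static builds for the axis-crossing pairs, to get both the $O(\cardin{\VV}^2\emax^{1+o(1)})$ bound and correctness. Your explicit accounting for looking up batches by the key $(x_0,x_1,\sigma)$ is a small detail the paper leaves implicit, and it is indeed absorbed into the $\emax^{o(1)}$ factor.
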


\begin{proof}
We use the same visibility graph $\G = (\VV, \EE)$ and \DP{} state space
$(s, \nG)$ as in \lemref{dp-annulus}, but reorganize the transitions by
column pair.  States are processed in the same polar topological order as
before.  For same-side online batches, when a state becomes final, it is
inserted into the batch structures for all future target columns within
horizontal distance~$\emax$; when a target state is processed, it queries
only these already-finalized sources.  For forward mixed batches crossing
the $y$-axis, finalized sources are first collected in lists; immediately
before the first target query for such a batch, all possible source states
for that batch are already final, so the canonical envelope structures are
built statically in the order from the column-order paragraph.

\paragraph*{Batching.} %
For each ordered column pair $(x_0, x_1)$ and each value of the
potential $\sigma$ (\defref{potential}), we maintain a \emphi{batch}.
When a source state $\bigl((x_0, i),\nG_s\bigr)$ is finalized, its value
determines, for this target column~$x_1$, the unique potential
\[
    \sigma=\nG_s-\frac12 x_1 i-\frac12\pgcdY{x_0}{i}.
\]
The source state is inserted into that online batch, or appended to the
static source list if the batch is a forward mixed batch.  When a target state
$\bigl((x_1,j),\nG_t\bigr)$ is processed, the source column~$x_0$
determines the unique potential
\[
    \sigma=\nG_t-\frac12 x_0j-\frac12\pgcdY{x_1}{j}-\frac12,
\]
and the target queries only that batch.  In a static mixed batch, the
first such query triggers the one-time build from the collected source
list.  Thus batches are paid for by actual source insertions and actual
target queries.  By
\lemref{eligible-envelope}, a batch with $q$ inserted sources and $r$
queried targets costs $(q+r)\emax^{o(1)}$ while considering exactly the
transitions present in~$\G$.

\paragraph*{Total cost.}  Rather than counting potential values, we count
actual insertions and queries.  Each finalized source state is inserted
into the batch structures for $O(\emax)$ possible target columns, and each
insert costs $\emax^{o(1)}$ amortized after the canonical decompositions
of \lemref{eligible-envelope}.  Since there are $O(\cardin{\VV}^2)$
states, the total insertion work is
$O(\cardin{\VV}^2\emax^{1+o(1)})$.

Similarly, for each target state $\bigl((x_1, j), \nG_t\bigr)$ and each
source column~$x_0$ with $|x_1 - x_0| \leq \emax$, the conserved index
determines a unique batch to query.  The total query work is bounded by
the number of (target state, source column) pairs, times the same
$\emax^{o(1)}$ overhead:
\begin{equation*}
    \sum_{{(x_1, j, \nG_t)}}
    \cardin{\Set{x_0 }{ |x_1 - x_0| \leq \emax}}
    \leq
    O(\cardin{\VV}^2) \cdot O(\emax)\cdot \emax^{o(1)}
    =
    O(\cardin{\VV}^2 \cdot \emax^{1+o(1)}),
\end{equation*}
since there are $O(\cardin{\VV}^2)$ target states (\lemref{dp-annulus}) and each has $O(\emax)$ candidate source columns.  Insertions and queries together give
$O(\cardin{\VV}^2 \cdot \emax^{1+o(1)})$.

\paragraph*{Correctness.}
The eligible-envelope reorganization changes how predecessor minima are
found, not the set of \DP transitions.  The topological order guarantees
that a target state queries only finalized predecessor states: an
uninserted future state is after the target in the polar order and
therefore cannot be a predecessor, while forward mixed batches are built
only after all of their sources have become final.  Every
query is restricted to actual edges of~$\G$, including the polar-order,
primitive, visibility, length, and inner-disk tangent constraints.  Thus
the recurrence being evaluated is exactly the original acyclic \DP
recurrence, and the optimum found is unchanged.
\end{proof}

Applying this lemma to the annulus from \lemref{in} gives the following
standalone speedup.

\begin{theorem}
    \thmlab{dp-queue}%
    The minimum-perimeter convex grid polygon enclosing $k$ grid points can be computed in $O(k^{2 - 1/12 + o(1)})$ time.
\end{theorem}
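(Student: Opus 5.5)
The plan is to apply \lemref{queue-annulus} to the single annulus run of \lemref{in}, with no center search and no tangent pruning.

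First, recall the annulus run behind \lemref{dp-annulus}. After anchoring a bottom support grid point at the origin, \lemref{in} explicitly gives $\Din=\diskY{q}{r}$ and $\Dout=\diskY{q}{R}$ with $R-r=O(k^{1/3})$ and $R=\Theta(\sqrt{k})$. These disks satisfy $\Din\subseteq\operatorname{int}(\Popt)$ and $\Popt\subseteq\Dout$. The good set $\VV$ of \defref{forbidden} therefore satisfies $\cardin{\VV}=O(k^{5/6})$, as in the proof of \lemref{visibility}. By the discussion following \defref{forbidden}, the graph $\G$ built on $\VV$, with the visibility and inner-disk tangent filters and link cap $\emax$, loses no boundary link of $\Popt$. \lemref{forced-count} bounds the number of reachable grid-count states by $O(\cardin{\VV}^2)$.

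Second, fix $\emax=\Theta(k^{1/4})$ with the constant from \corref{edge-improved}, so that no edge of $\Popt$, and in particular neither origin-incident source nor sink portion, exceeds the cap. Then plug into \lemref{queue-annulus}:
\[
O\bigl(\cardin{\VV}^2\emax^{1+o(1)}\bigr)=O\bigl(k^{5/3}\cdot k^{1/4+o(1)}\bigr)=O\bigl(k^{23/12+o(1)}\bigr)=O\bigl(k^{2-1/12+o(1)}\bigr).
\]

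Third, account for preprocessing. Computing $q,r,R$ explicitly, enumerating $\VV$, and building $\mathcal{B}$ as a convex hull take $O(\cardin{\VV}\log k)$ time. The gcd tables for source and sink edges, and the prime factorizations of all $a\le\emax$ used for residue filtering, take negligible time. The source and sink relaxations cost $O(\cardin{\VV}^2)$, as noted at the start of \secref{monotonicity}. Correctness is inherited: the queue reorganization evaluates exactly the acyclic recurrence of \lemref{dp-annulus}, whose optimum is that of the original problem via \obsref{trim-to-k}.

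The step I expect to be the main obstacle is confirming that the hypotheses of \lemref{queue-annulus} hold verbatim for this coarse annulus of width $O(k^{1/3})$ rather than $O(k^{1/6})$. This requires that every link, source edge and sink edge is at most $\emax$, and that the polar order with distance tie-breaking is the one under which the column-order case analysis was carried out. Both hold because the run is literally the one from \lemref{dp-annulus}. So the theorem reduces to the arithmetic above, and nothing in the batching argument depends on the annulus width except through $\cardin{\VV}$.
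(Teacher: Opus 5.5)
Your proposal is correct and follows essentially the same route as the paper. You take the single annulus run from \lemref{in}, with $\cardin{\VV}=O(k^{5/6})$ and $\emax=O(k^{1/4})$, and plug it into \lemref{queue-annulus} to get $O(k^{23/12+o(1)})$. Your remarks on preprocessing and correctness are consistent with the paper's argument; they simply spell out details the paper leaves implicit.
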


\begin{proof}
    By \lemref{visibility}, the annulus construction has
    $\cardin{\VV}=O(k^{5/6})$.  By \corref{edge-improved}, we may take
    $\emax=O(k^{1/4})$.  Applying \lemref{queue-annulus} gives
    \[
        O\bigl(\cardin{\VV}^2\,\emax^{1+o(1)}\bigr)
        =
        O\bigl(k^{5/3} k^{1/4+o(1)}\bigr)
        =
        O(k^{23/12+o(1)})
        =
        O(k^{2-1/12+o(1)}).
    \]
\end{proof}

\begin{lemma}%
    \lemlab{centered-dp}
    Assume the optimal polygon $\Popt$ is anchored so that the origin is a
    boundary grid point on a bottom supporting line and
    $\Popt\subseteq\{y\geq0\}$.  Let $c$ be the
    \defrefY{epicenter}{epicenter} of $\Popt$.  Let
    $\Gamma=C_\Gamma k^{1/6}$ be an explicit upper bound on the
    minimum-width concentric disk sandwich for $\Popt$, and let
    $r_0=\sqrt{k/\pi}$.  For any candidate point~$q$, the annulus \DP
    restricted to the annulus centered at~$q$ with radii $r_0-C\Gamma$ and
    $r_0+C\Gamma$ can be evaluated in $O(k^{1 + 7/12+o(1)})$ time using
    \lemref{queue-annulus}.  If, in addition, $\dY{q}{c}=O(\Gamma)$, then
    this annulus contains every boundary grid point of $\Popt$, and the
    annulus \DP contains the boundary path of~$\Popt$.
\end{lemma}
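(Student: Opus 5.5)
The lemma has two independent parts: a running-time bound for an arbitrary candidate center, and a correctness statement when the candidate is close to the epicenter.

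\emph{Running time.} I plan to get this by plugging the size of the good set into \lemref{queue-annulus}. The annulus centered at $q$ has outer radius $R=r_0+C\Gamma=\Theta(\sqrt{k})$ and width $2C\Gamma=O(k^{1/6})$. Hence its area is $O(k^{1/2}\cdot k^{1/6})=O(k^{2/3})$. Covering it by unit grid cells gives $\cardin{\VV}=O(k^{2/3})$, since the boundary contribution is $O(\sqrt{k})$ and is dominated. With $\emax=\Theta(k^{1/4})$ from \corref{edge-improved}, \lemref{queue-annulus} gives
\[
    O\bigl(\cardin{\VV}^2\emax^{1+o(1)}\bigr)=O\bigl(k^{4/3}k^{1/4+o(1)}\bigr)=O(k^{19/12+o(1)}),
\]
which is exactly $k^{1+7/12+o(1)}$.

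This bound holds for every $q$, because \lemref{queue-annulus} needs only the vertex count and the link cap. The graph-construction steps (visibility, tangent filter, forbidden region) add only $O(\cardin{\VV}\emax^2)=O(k^{7/6})$ work, which is absorbed.

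\emph{Containment.} Assume $\dY{q}{c}=O(\Gamma)$ and write $\diskY{c}{\rin}\subseteq\Popt\subseteq\diskY{c}{\Rout}$ with $\Rout-\rin\le\Gamma$. As in the proof of \lemref{in}, Pick's theorem with $b=O(k^{1/3})$ gives $\areaOpt=k-O(k^{1/3})$. Combined with $\pi\rin^2\le\areaOpt\le\pi\Rout^2$ and $\Rout-\rin\le\Gamma$, this yields $\rin,\Rout\in[r_0-O(\Gamma),r_0+O(\Gamma)]$.

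Every boundary grid point $z$ of $\Popt$ satisfies $\rin\le\dY{z}{c}\le\Rout$. The triangle inequality then gives $\dY{z}{q}\in[r_0-C\Gamma,\,r_0+C\Gamma]$ once $C$ exceeds the sum of the hidden constants. The same choice makes $\diskY{q}{r_0-C\Gamma}\subseteq\operatorname{int}\diskY{c}{\rin}\subseteq\operatorname{int}\Popt$. This is what the forbidden-region and tangent-filter arguments need.

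\emph{Path preservation.} I will then repeat verbatim the argument given after \defref{forbidden} for \lemref{dp-annulus}, with this inner and outer disk. The forbidden hull lies in the interior of $\Popt$ except at $\origin$, so every primitive boundary link of $\Popt$ has both endpoints in $\VV$. Each such link lies on a supporting line of $\Popt$, so it passes the visibility and inner-disk tangent tests. Each also has length at most $\emax$. The two origin-incident portions survive as source and sink edges in the same way. Hence the boundary path of $\Popt$ is a source--sink path of $\G$.

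\emph{Main obstacle.} The only delicate step is the constant bookkeeping: the interval $r_0\pm O(\Gamma)$ for $\rin,\Rout$ must be explicit, so that a single absolute $C$ can be fixed before the algorithm runs. I would make this explicit via \propref{gap-from-b} together with an explicit $b\le C'k^{1/3}$ from \thmref{improved-gap}.
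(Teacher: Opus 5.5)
Your proposal is correct and follows essentially the same route as the paper. The paper also bounds $\cardin{\VV}=O(\sqrt{k}\,\Gamma)=O(k^{2/3})$, plugs this with $\emax=O(k^{1/4})$ into \lemref{queue-annulus}, and uses Pick's theorem with the improved gap to place $\rin,\Rout$ within $O(\Gamma)$ of $r_0$. It then chooses $C$ large so that $\diskY{q}{r_0-C\Gamma}\subset\operatorname{int}\diskY{c}{\rin}\subseteq\Popt\subseteq\diskY{c}{\Rout}\subseteq\diskY{q}{r_0+C\Gamma}$, which is what makes the boundary links pass the visibility and tangent tests. Your remarks on graph-construction overhead and explicit constants are harmless additions that the paper leaves implicit.
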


\begin{proof}
    The annulus has radius $\Theta(\sqrt{k})$ and width $O(\Gamma)$, so the area estimate from \lemref{visibility} gives $\cardin{\VV}=O(\sqrt{k}\,\Gamma)=O(k^{2/3})$.  By \corref{edge-improved}, the run may use $\emax=O(k^{1/4})$.  Applying \lemref{queue-annulus} gives
    \[
        O\bigl(\cardin{\VV}^2\,\emax^{1+o(1)}\bigr)
        =
        O\bigl(k^{4/3} k^{1/4+o(1)}\bigr)
        =
        O(k^{19/12+o(1)}),
    \]
    which is the claimed running time.

    For containment, \thmref{improved-gap} and Pick's theorem imply $\diskY{c}{\rin}\subseteq \Popt\subseteq\diskY{c}{\Rout}$ with $\Rout-\rin=O(k^{1/6})=O(\Gamma)$ and $\rin,\Rout=r_0+O(\Gamma)$, exactly as in the proof of \lemref{in}.  Therefore, if $\lenX{q-c}=O(\Gamma)$, then for a sufficiently large constant $C$,
    \[
        \diskY{q}{r_0-C\Gamma} \subset \operatorname{int}\bigl(\diskY{c}{\rin}\bigr) \subseteq \Popt \subseteq \diskY{c}{\Rout} \subseteq \diskY{q}{r_0+C\Gamma}.
    \]
    Thus every boundary grid point of $\Popt$ lies in the annulus.  The
    inner disk of this annulus is strictly contained in $\Popt$, so no
    supporting line of a boundary link of $\Popt$ meets its interior.
    Consequently the boundary links of $\Popt$ pass the same visibility
    and tangent tests used in the annulus graph.
\end{proof}

\subsubsection{Putting everything together}

\begin{theorem}
    \thmlab{dp-leftmost}
    The minimum-perimeter convex grid polygon enclosing~$k$ grid points can be computed in~$O(k^{29/18+o(1)})$ time.
\end{theorem}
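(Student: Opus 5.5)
We combine the center-location reduction of \secref{rings_2} with the per-run cost of \lemref{centered-dp}. The target exponent decomposes as
\[
    \frac{29}{18} = \frac{19}{12} + \frac{1}{36}.
\]
This suggests that the number of candidate centers should be $O(k^{1/36})$ rather than the $O(k^{1/9})$ used in \lemref{dp-annulus_5}. The most natural way to get such a count is a search region of width $O(k^{7/36})$, covered by candidates spaced $\Theta(k^{1/6})$ apart, inside a band of height $O(k^{1/6})$. So the plan has two parts: tighten the horizontal uncertainty of the epicenter, relative to a suitable anchor, and then pay $O(k^{19/12+o(1)})$ per candidate.

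\textbf{Step 1: the anchor.} Rotate by a multiple of $\pi/2$; this is a symmetry of $\ZZ^2$ and preserves optimality. After the rotation, the relevant support edge is the bottom edge, a horizontal segment of length $\Theta(k^{1/4})$ by \lemref{leftmost-edge}. We do not know which grid point of this edge to anchor at. We therefore enumerate the anchor offset along the edge at a spacing fine enough that the anchor is near the midpoint of the edge. Equivalently, we use \lemref{leftmost_midpoint}(III) directly, which says that in epicenter coordinates the midpoint of the support edge is within $\nabla=O(k^{5/18})$ horizontally of the epicenter.

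\textbf{Step 2: reducing to $O(k^{1/36})$ candidates.} The count $k^{5/18}/k^{1/6}=k^{1/9}$ only recovers \lemref{dp-annulus_5}, so the $\nabla$ bound must be sharpened. I would refine the telescoping argument of \lemref{leftmost_midpoint}, in which consecutive column midpoints move by at most $1/2$. The first idea is to start the walk at a column closer to the support line. \lemref{one_ring} gives $m(0)=O(t)$ for a start depth $t$, and the walk contributes $W/2=O(t+k^{1/6})$, so the total error is $O(t+\sqrt{R}\,\WD/\sqrt t)$. Balancing these two terms is what produced $k^{5/18}$ in the first place.

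The improvement should come from evaluating the midpoints in the correct way. The walk steps presumably do not all share a sign, since the exchange with the integer shift $h$ pins each column midpoint to within $1/2$ of the midpoint of its neighbors. So I would try to bound $\sum_i \bigl(m(i)-m(i-1)\bigr)$ by a cancellation or second-difference argument instead of the crude $W/2$. The aim is a total bound of $O(k^{7/36})$, which yields $O(k^{7/36}/k^{1/6})=O(k^{1/36})$ candidate centers, each within $O(k^{1/6})$ of $\epi$. The vertical uncertainty is already $O(k^{1/6})$ by \lemref{in}.

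\textbf{Step 3: running and certifying.} For each candidate $q$, run the annulus \DP of \lemref{centered-dp} with \lemref{queue-annulus}, at cost $O(k^{19/12+o(1)})$ per run. Every run outputs a feasible polygon, via \obsref{trim-to-k} applied to the fan hull. At least one run has $\dY{q}{\epi}=O(\Gamma)$, and by the second part of \lemref{centered-dp} that run contains the boundary path of $\Popt$. Hence taking the minimum over all runs returns the optimum. The total cost is $O(k^{1/36}\cdot k^{19/12+o(1)})=O(k^{29/18+o(1)})$.

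\textbf{Main obstacle.} The hard step is Step 2, improving $\nabla$ from $k^{5/18}$ to $k^{7/36}$. If the cancellation argument fails, the fallback is to use part of the perimeter slack to constrain the center. In that approach we would run a cheap coarse \DP on a wider ring to locate the bottom support edge, and then use the left and right support edges together, via \lemref{leftmost_midpoint} applied in both directions, to pin the horizontal coordinate of $\epi$ as the average of two independent estimates.
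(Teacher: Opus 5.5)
Your exponent bookkeeping ($29/18 = 19/12 + 1/36$) and your Step 3 match the paper. Step 2, however, is a genuine gap. You need $\nabla = O(k^{7/36})$ and give no proof, only the hope of cancellation. The shift exchange in \lemref{leftmost_midpoint} yields only the first-difference bound $|m(i)-m(i-1)|\le 1/2$. It carries no sign or second-difference information, and nothing prevents every step from having the same sign.

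The requirement is also much stronger than it looks. To make the starting error $\Delta(t)=O(\sqrt{R}\,\WD/\sqrt{t})$ from the proof of \lemref{one_ring} of order $k^{7/36}$, you must start at depth $t$ of order $k^{4/9}$. The walk then has $\Theta(k^{4/9})$ steps, and their net drift would have to be only $O(k^{7/36})$. Your fallback does not help either. Averaging two estimates, each with worst-case error $O(k^{5/18})$, does not improve the worst case. And locating the support edges of $\Popt$ with a coarse \DP presupposes already knowing $\Popt$.

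The missing idea is that $\nabla$ does not need to be sharpened at all. The extra factor $k^{1/4}$ comes from the freedom in choosing the anchor, and your Step 1 throws this freedom away by insisting that the anchor sit near the midpoint of the support edge. The anchor is not something the algorithm fixes or enumerates; it is an existential choice in the correctness argument. A run with candidate center $q$ succeeds as soon as \emph{some} lattice point $s$ of the support edge $e_\ell$, used as the origin, places the epicenter within $O(\Gamma)$ of $q$ (\lemref{centered-dp}).

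In epicenter coordinates, anchoring at $s=(x_\ell,s_y)$ and rotating by $90^\circ$ puts the epicenter at $(s_y,-x_\ell)$. Its horizontal coordinate is therefore $s_y$, which ranges at unit spacing over an interval $I$ of length $\lambda\ge\lambda_0=c_\ell k^{1/4}$, with midpoint in $[-H,H]$ where $H=O(k^{5/18})$. This is exactly where the \emph{lower} bound of \lemref{leftmost-edge} is used.

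Take candidates $q_t=(t,r_0)$, with $t$ ranging over the multiples of $\lambda_0/2$ in $[-H-\lambda_0,H+\lambda_0]$. Then some $t$ lies in $I$, some lattice point $s\in e_\ell$ has $|s_y-t|\le 1/2$, and so $\lenX{q_t-c_s}=O(\Gamma)$. This gives $O(H/\lambda_0)=O(k^{5/18-1/4})=O(k^{1/36})$ candidates using the existing bound on $\nabla$. Your Step 3 then completes the proof exactly as in the paper.
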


\begin{proof}
    Let $\Gamma=C_\Gamma k^{1/6}$, where $C_\Gamma$ is chosen large enough
    for the disk-sandwich and radius estimates from \thmref{improved-gap}
    and the proof of \lemref{centered-dp}.  Let $r_0=\sqrt{k/\pi}$, and
    let $\lambda_0=c_\ell k^{1/4}$, where $c_\ell>0$ is a certified
    lower-bound constant from \lemref{leftmost-edge}.  Also set
    $H=C_H k^{5/18}$, where $C_H$ is a sufficiently large absolute
    constant from \lemref{leftmost_midpoint}.

    The algorithm is straightforward.  We build a small set of candidate centers, run the centered \DP of \lemref{centered-dp} once for each of them, and return the best feasible polygon.  The only point to prove is that one of these candidates is always within $O(\Gamma)$ of the correct center.

    Define
    \[
        T = \Set{ j\lambda_0/2 }{ j\in\ZZ,\, |j\lambda_0/2|\leq H+\lambda_0 }.
    \]
    Then
    \[
        \cardin{T} = O(H/\lambda_0+1) = O(k^{5/18-1/4}) = O(k^{1/36}).
    \]
    For each $t\in T$, run the centered \DP with candidate center
    \[
        q_t=(t,r_0).
    \]

    Fix an optimal polygon~$\Popt$, and let $c$ be a \defrefY{epicenter}{epicenter}.  Work in the translated coordinates of \lemref{leftmost_midpoint}, so that $c$ is the origin.  The grid is now the translate $L=\ZZ^2-c$.  Let $e_\ell$ be the leftmost edge, let $\lambda=\lenX{e_\ell}$, and let $p_\ell=(x_\ell,y_\ell)$ be its midpoint.  By \lemref{leftmost-edge} and \lemref{leftmost_midpoint},
    \[
        \lambda\geq\lambda_0, \qquad |y_\ell|\leq H, \qquad -x_\ell=r_0+O(\Gamma).
    \]
    Since $e_\ell$ is vertical, the $y$-coordinates of its points fill the interval
    \[
        I=[y_\ell-\lambda/2,\, y_\ell+\lambda/2].
    \]
    This interval has length at least $\lambda_0$ and midpoint in $[-H, H]$.  Since the points of $T$ are spaced by $\lambda_0/2$ and cover $[-H-\lambda_0,H+\lambda_0]$, there exists $t\in T\cap I$.

    Because $e_\ell$ is a vertical segment in the translated lattice~$L$,
    it contains a boundary lattice point $s=(x_\ell,s_y)$ with
    $|s_y-t|\leq 1/2$.  Write $s=z-c$ for some $z\in\ZZ^2$.  Then
    translating by $-s$ sends $L$ to $\ZZ^2-z=\ZZ^2$, and a
    counterclockwise $90^\circ$ rotation preserves $\ZZ^2$.  Translate
    $\Popt$ by $-s$ and apply this rotation. Denote the resulting polygon
    by $P_s$.  After the translation, $\Popt$ lies in the half-plane
    $x\geq 0$ because $s$ lies on the left supporting line, so after the
    rotation $P_s$ lies in the half-plane $y\geq 0$.  Thus the origin is a
    bottom support grid point of $P_s$.

    The \defrefY{epicenter}{epicenter} of $P_s$ is obtained from $c=0$ by the same translation and rotation, namely
    \[
        c_s=(s_y,-x_\ell).
    \]
    This is why we enumerate $y$-coordinates on the leftmost edge: after the rigid motion, that $y$-coordinate becomes the first coordinate of the center.  Hence
    \[
        |(c_s)_x-t|\leq \frac12, \qquad |(c_s)_y-r_0|=|-x_\ell-r_0|=O(\Gamma),
    \]
    so $\lenX{q_t-c_s}=O(\Gamma)$.  By \lemref{centered-dp}, the \DP run
    with center $q_t$ contains the boundary path of~$P_s$, so its optimum
    value is at most $\perimX{P_s}=\lOpt$.  Conversely, every completed
    \DP solution yields, by the usual convex-hull and trimming step, a
    feasible $k$-point grid polygon.  Its perimeter is therefore at least
    $\lOpt$.  Hence the best polygon returned by this run is optimal.

    Since each run costs $O(k^{19/12+o(1)})$, trying all candidates in $T$ takes total time
    \[
        \cardin{T}\cdot O(k^{19/12+o(1)}) = O(k^{1/36})\cdot O(k^{19/12+o(1)}) = O(k^{29/18+o(1)}).
    \]
\end{proof}

\section{Implementation and experiments}
\seclab{experiments}

We implemented the \DP algorithm described here in both Rust and C++.
The source code is available at
\url{https://github.com/sarielhp/rust_k_perimeter}.  A webpage with an
interface to explore the optimal polygons computed is available at
\url{https://sarielhp.org/r/26/k_perimeter}.  We used Agentic AI to
optimize and write much of the code (specifically, Gemini and
Claude-code). This enabled us to implement various speedups that in the
past would not have been implemented because they tend to be quite
tedious. In particular, the implementation tries to precompute any useful
information that is used during the \DP, so that the \DP running time is
minimized. For example, the implementation precomputes the primitive
directions, visibility tests, and grid-point counts needed by
transitions, so that the main \DP loop performs only constant-time work
per processed edge.

We believe that some of the ideas we used here should be useful for other \DP problems. In particular, we were able to solve in practice instances where the \DP memory footprint exceeds the memory available on the computer we used.

We did not implement the more complicated parts of the theoretical
algorithm --- \secref{monotonicity} and \secref{rings_2} --- since the
practical implementation was already fast without these optimizations.
This leaves a gap between the strongest theoretical running time proved
above and the simpler algorithm used in the experiments.

The result was that we were able to solve very large instances (i.e., $k=2.5\cdot 10^6$) in less than two hours, using a reasonable desktop computer, see \tblref{r_t}. This becomes more impressive when one realizes that the basic algorithm cannot achieve this even for $k=10,000$. One can surely get further improvement with more work, but the results seem impressive enough that this seems like a natural stopping point.

\paragraph*{On the limited benefit of experiments.}

We tried to use our experiments to figure out what the true values of the various parameters our analysis used (such as number of vertices, width of annulus, etc) are. The problem is that for $k$ in the millions, the quantity $k^{1/6}$ is roughly $10$. At this point, trying to divine asymptotic bounds from experimental results seems pointless. To do that, and get results that can be trusted, one would probably need to run experiments with $k$ in the range $10^{9} \ldots 10^{28}$, which is well outside the current ability of our program (and even then, one might need to use even larger values of $k$). We thus shared only a few graphs summarizing our experimental results.

\subsection{Additional heuristics}

We implemented several heuristics that significantly improved the
performance in practice.

\begin{compactenumA}[wide, labelindent=0pt, leftmargin=10pt]
    \item \textsf{Too many grid points would be covered.} For each point
    $v \in \VV$, the algorithm precomputes the ``shortest'' path from $v$
    to $o'$. Specifically, this is the polygonal path between $v$ to $o'$
    covering the minimum number of grid points not yet covered --- this
    convex-chain is formed by the shortest path avoiding the forbidden
    points in the middle.
    This is a convex chain $C_v$, and the algorithm computes how many
    grid points this region adds (which must be included in any solution)
    if we attach it to a solution ending at $v$. Let $g_{\min,v}$ denote
    this number.
    Thus, when a new configuration ending at $v$ is
    considered, if the current number of grid points it covers, plus
    $g_{\min,v}$, exceeds $k$, the algorithm rejects this configuration
    (i.e., it does not queue it for further inspection).

    \item \textsf{Too few grid points would be covered.}  Similarly,
    given a new configuration $(v,\nG)$, and the edge $u \rightarrow v $ that
    arrived to it, one can compute an upper bound on the additional grid
    points this solution can cover.  Formally, all the additional points
    that might be covered by a solution using this edge must be to the
    left (or on) the vector $u \rightarrow v$, and to the left of the vector
    $\origin \rightarrow v$. Let $C_{u\rightarrow v}$ denote this cone.  The algorithm
    computes
    $\nG_{\max,u\to v} = \cardin{ ( B \cup \VV ) \cap C_{u \rightarrow v }}$, see
    \defref{forbidden}. To compute this number quickly, the
    implementation uses a $kd$-tree on the point set to perform this
    range counting query. If the new configuration under consideration by
    the edge $u \rightarrow v $ covers $\nG$ grid points currently, and
    $\nG + \nG_{\max,u\to v} < k$, then this configuration is rejected.

    \item \textsf{Topological ordering.} The exact order used by the
    queue seems to matter quite a bit for performance. We tried several
    different strategies, but the following seems to work the best ---
    the algorithm performs a topological ordering of the vertices of the
    graph, and numbers these vertices accordingly
    $1, \ldots, \cardin{\VV}$. The algorithm first handles the
    configurations with smaller topological index in the queue. Among
    configurations with the same topological index, it handles the
    configurations in increasing order of the number of grid points the
    partial solution covers.
\end{compactenumA}

\medskip%
\noindent%
The implementation maintains an array \ArrConf of configurations computed. The algorithm also maintains a hash table for looking up computed configurations. The configurations are stored in the hash table using a \emphw{key}, which is a pair $(i,\nG)$, where $i$ is the topological index of the vertex, and $\nG$ is the number of grid points covered so far. We store within each such configuration (i.e., in the array $\ArrConf$) the minimum perimeter partial solution computed so far ending at this configuration, the edge of the visibility graph used to arrive at this solution, and the index of the previous configuration that led to this configuration.  Thus, the algorithm can recover the optimal polygon at the end of the \DP execution.

\begin{compactenumA}[resume,wide, labelindent=0pt, leftmargin=10pt]

    \item \textsf{Memory mapping the configuration space.} The problem is that both $\ArrConf$ and $\Hash$ can become humongous as far as memory consumption (e.g., $k=10^6$). Fortunately, the access pattern to the array \ArrConf is almost linear --- only configurations that are close to the currently handled vertex (say with topological index $i$) may be updated. All entries in the array that have topological index $<i$ are relevant only for the final stage of recovering the solution. Thus, we memory-mapped the array $\ArrConf$ into a file using an SSD \cite{aa-ostep-23}. This enables us to let \ArrConf grow well past the memory available in the system, without dramatic slowdown in computation.
\end{compactenumA}

\begin{remark}
    There are standard techniques to improve the theoretical space
    complexity, trading space for time (used, for example, in reducing
    the edit-distance \DP space to linear). In our case, one can chop the
    ring into several pieces, and only store the \DP values at the
    beginning region $R_i$ of each piece which is contained in a rectangle with width $\emax$ and
    height $O(\WD)$ (as a \DP transition cannot completely cross $R_i$). When
    solutions recover the solution in the end, rerun the DP in each
    piece. We have not explored this idea further, but it might be the
    next natural step in trying to scale our implementation to larger
    inputs.
 \end{remark}

\begin{compactenumA}[resume,wide, labelindent=0pt, leftmargin=10pt]
    \item \textsf{Cleaning the hash table.}  Since the hash table is usable only for future configurations that have a higher topological index than the one being handled by the \DP loop, the algorithm, every once in a while, cleans the hash table, deleting all the configurations that are in the past and are no longer usable. The implementation uses a simple doubling threshold to decide when to perform the cleanup. It first sets some (relatively small) arbitrary threshold that depends on $k$ for the cleanup. Whenever the hash table $\Hash$ stores more than this threshold number of entries in the table, the implementation stops and performs the cleanup. The threshold is updated to be (say) twice the number of entries left after the cleanup. This has the effect that the memory used by the hash-table is effectively the width of the \DP. Since this problem has relatively low width, this results in a significant reduction in memory usage and increase in speed.

\end{compactenumA}

\subsubsection{Tightening the parameters}

So far, the heuristics described do not impact correctness. Next, we describe a few additional heuristics that were based on experimental results to fine-tune and speed up the execution. We verified our results up to $k=10,000$ without these heuristics, so we are quite confident the results of our program are correct, but it is quite conceivable that these guesses are wrong. For very large values of $k$, these guesses would fail and lead to an incorrect solution.

\begin{compactenumA}[resume]

    \item \textsf{Width for marking good/bad points.} %
    The algorithm computes the circle used in \lemref{tight-upper}, and
    places its bottom point at the origin. Then it marks all the points
    in distance $\leq 2 + k^{1/4}/4 $ from this circle as being good. By
    default, all other points are bad. The algorithm also shifts this
    circle a bit (roughly $\Theta(k^{1/4})$) to get enough margin of good
    points around the origin.

    \item \textsf{Turning angle.} As the optimal solution converges to a disk as $k$ increases, the optimal polygon has more edges, and the turning angles become smaller. As such, our implementation considers only edges with turning angles within the range $[0,2 \pi / k^{1/3}] $. This range is roughly what one would expect from a moderately turning polygon, as the maximum number of vertices of the optimal solution polygon is $O(k^{1/3})$. If we expect the turning angle to be roughly uniform at the vertices, this is the right quantity. Our experiments indicate that this upper bound is close to but is well above the maximum turning angle in the optimal solution.

    \item \textsf{Longest primitive vector.} %
    Similarly, the program enforces an upper bound of $\ceil{k^{1/3}}+1$ on the $x$ and $y$ extent of all the primitive vectors being considered. The set of primitive vectors is computed using Farey's sequence \cite{hw-itn-08}, which enables us to compute the primitive vector in linear time in their number.

\end{compactenumA}

\subsection{Experimental results}

We ran extensive experiments with our implementation, using a Linux machine with 64GB of memory and AMD Ryzen 7 5700G (a four-year-old CPU). Disappointingly, it is (roughly) only half the speed of the fastest CPU currently available in single-thread performance. The running times of our implementation are shown in \tblref{r_t}. Note that the running times seem a bit strange, as the interaction of the virtual memory together with the file-mapped mechanism has a somewhat unpredictable effect on the running time.

\begin{table}[p]
    % flatex input: [figs/rt_table.tex]
  \centering
  \small
  \begin{tabular}[t]{rr}
    \toprule
    $k$ Value & Time (s) \\
    \midrule
    10 & 0.03 \\
    20 & 0.02 \\
    50 & 0.02 \\
    100 & 0.03 \\
    500 & 0.05 \\
    1,000 & 0.09 \\
    5,000 & 0.83 \\
    10,000 & 3.61 \\
    \bottomrule
  \end{tabular}
  \hfill
  \begin{tabular}[t]{rr}
    \toprule
    $k$ Value & Time (s) \\
    \midrule
    20,000 & 10.14 \\
    30,000 & 18.98 \\
    40,000 & 42.61 \\
    50,000 & 46.30 \\
    60,000 & 73.37 \\
    70,000 & 99.33 \\
    80,000 & 117.15 \\
    90,000 & 130.22 \\
    \bottomrule
  \end{tabular}
  \hfill
  \begin{tabular}[t]{rr}
    \toprule
    $k$ Value & Time (s) \\
    \midrule
    100,000 & 131.60 \\
    150,000 & 385.86 \\
    200,000 & 556.94 \\
    250,000 & 947.47 \\
    300,000 & 1,214.20 \\
    350,000 & 1,525.66 \\
    1,500,000 & 4,716.69 \\
    2,500,000 & 6,827.01 \\
    \bottomrule
  \end{tabular}

    \caption{Performance summary split into three independent parallel
       sections.}
    \tbllab{r_t}
\end{table}

\begin{figure}[p]
    \centering \includegraphics[width=0.8\linewidth]{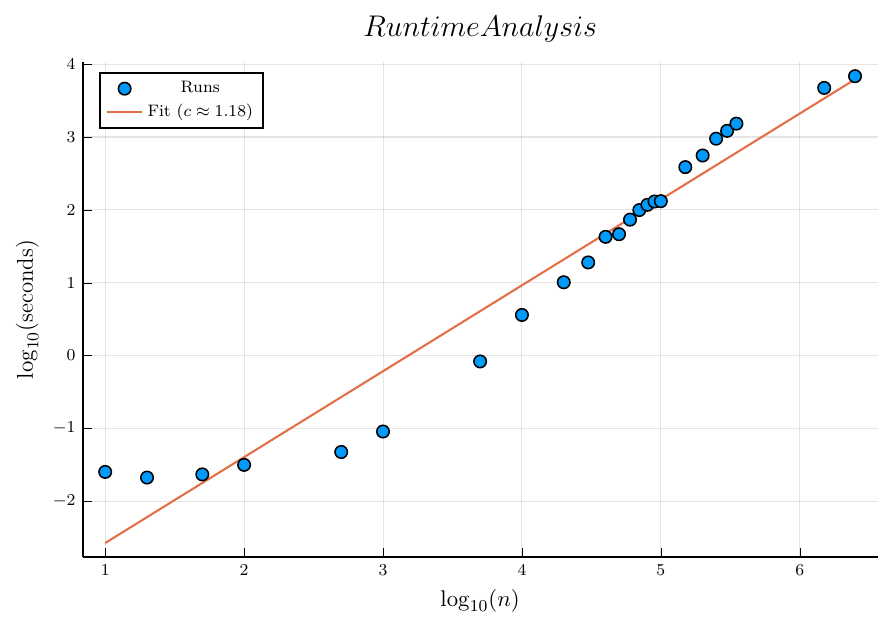}
    \caption{Fitting a line in the $\log/\log$ space.}
    \figlab{r_t_fit}
\end{figure}

We also performed a statistical analysis of various parameters of the optimal solutions and tried to figure out how these quantities grow as a function of $k$. This was done by doing the best linear fit in the $\log/\log$ space of the relevant parameters.

\figref{r_t_fit} shows the running time fitting. It suggests the current algorithm running time is $O(k^{1.42})$.  Such results, being based on experimental evidence, should be taken with appropriate skepticism. Nevertheless, we believe the results are interesting and present them here.

\paragraph{Number of configurations computed.}
Maybe a better proxy for the running time is the number of distinct configurations computed. Note that this does not quite bound the running time, as each configuration involves additional work in the \DP. The graph of \figref{configs} suggests the number of configurations grows at the rate of $O(k^{5/4})$.

\begin{figure}[ht]
    \centering%
    \includegraphics[width=0.9\linewidth]{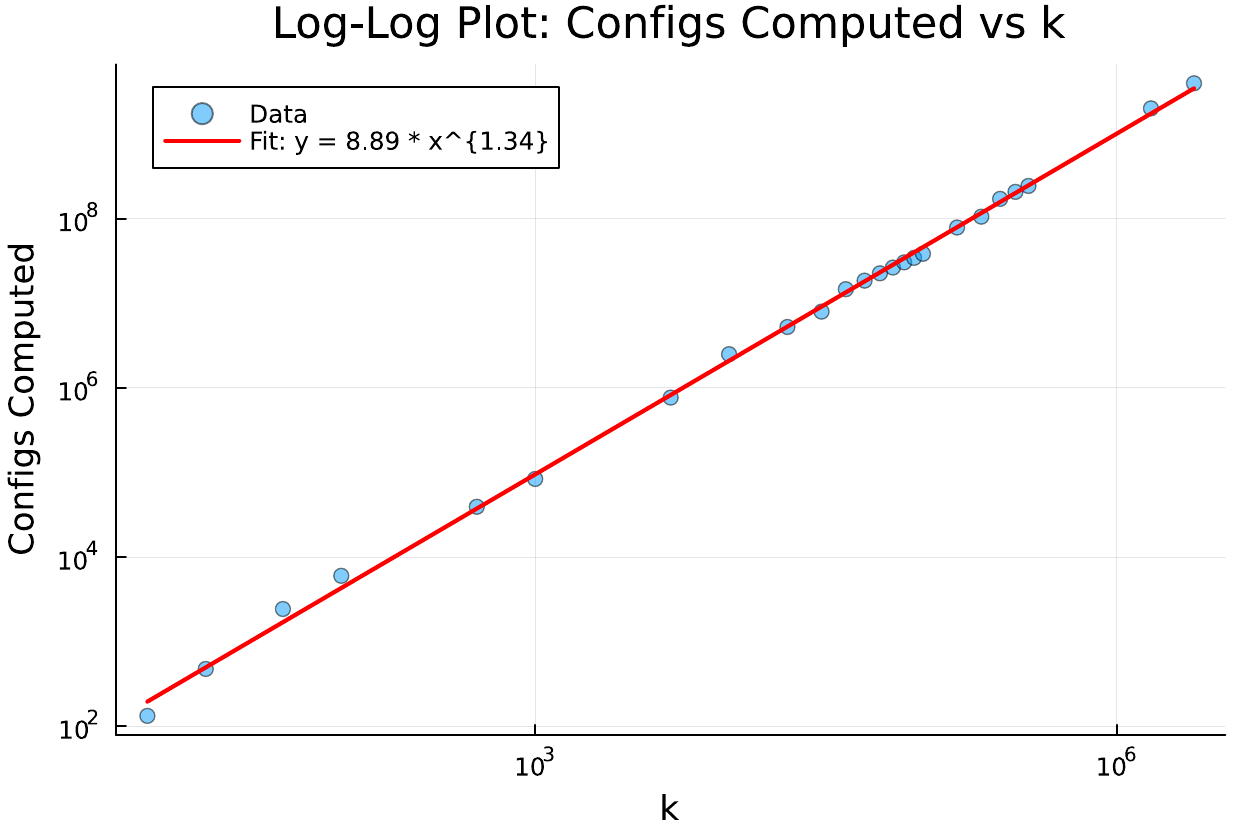}
    \caption{\# Configurations computed as a function of $k$.}
    \figlab{configs}
\end{figure}

\section{Conclusions}

In this paper, we presented fast algorithms, both in theory and practice,
for computing the smallest perimeter polygon containing $k$ grid
points. Starting with a baseline of $O(k^3)$ dynamic programming
algorithm for this problem, we were able to improve the running time to
(roughly) $O(k^{1.62})$ (we believe further improvement should be
possible). Furthermore, starting with an implementation that struggled
with solving instances for $k$ in the low few thousands, our current
implementations solved the problem successfully for $k=2.5\cdot 10^6$. We
benefited from going back and forth between the implementation and the
theory---the theory suggesting how to improve the implementation, and the
applied results suggesting what structural properties the optimal
solution has that one should prove, etc.

\subsection{AI Acknowledgement}
For transparency, we describe the role that AI tools played in this
work. Essentially (almost) all of the code implementation was
AI-generated: we used agentic coding assistants (in particular Gemini and
Claude Code) to write, optimize, and debug the Rust and C++ code,
including several of the more tedious speedups that we likely would not
have implemented by hand --- we emphasize however that the ideas described
in the paper both on the theory and the applied side were the authors
ideas. On the mathematical side, the results and their proofs are the
authors', but we relied heavily on large language models (ChatGPT,
Gemini, and Claude, among others) to proofread the proofs, expand
arguments that were too terse, and fill in routine gaps in the
write-up. All AI-assisted proofs and text were subsequently read and
verified by the authors. This process was occasionally substantive: for
example, it helped surface a genuine gap in the monotone-queue speedup of
\secref{monotonicity} (\lemref{eligible-envelope}), which we were then
able to patch. Overall, we found AI to be a remarkably useful
collaborator throughout the project, both as a writing and pedagogical
aid and as an accelerator for the implementation and
experiments. Responsibility for any remaining errors rests, as always,
with the authors.

On the other hand, the AI programming tools repeatedly tried to simplify
our implementation by removing certain optimizations as unnecessary
(specifically, the memory mapping).

\printbibliography

\appendix

\section{\texorpdfstring{More examples of optimal $k$-polygons}{More examples of optimal k-polygons}}

\begin{figure}[h]
    \foreach \i in {11,...,15}{%
       \hfill \includegraphics[page=\i, width=0.18\linewidth]{figs/polygons}%
    }%
    \par\bigskip
    \foreach \i in {16,...,20}{%
       \hfill \includegraphics[page=\i, width=0.18\linewidth]{figs/polygons}%
    }%
    \par\bigskip
    \foreach \i in {21,...,25}{%
       \hfill \includegraphics[page=\i, width=0.18\linewidth]{figs/polygons}%
    }%
    \par\bigskip
    \foreach \i in {26,...,30}{%
       \hfill \includegraphics[page=\i, width=0.18\linewidth]{figs/polygons}%
    }%
    \par\bigskip
    \foreach \i in {31,...,35}{%
       \hfill \includegraphics[page=\i, width=0.18\linewidth]{figs/polygons}%
    }%
    \caption{Optimal polygons for $k=13, \ldots, 37$}
    \figlab{rest_1}
\end{figure}

\begin{figure}
    \par\bigskip
    \foreach \i in {36,...,40}{%
       \hfill \includegraphics[page=\i, width=0.18\linewidth]{figs/polygons}%
    }%
    \caption{Optimal polygons for $k=38, \ldots, 42$}
    \figlab{rest_1b}
\end{figure}

\begin{figure}
    \foreach \i in {48,53,58,63,68}{%
       \hfill \includegraphics[page=\i, width=0.18\linewidth]{figs/polygons}%
    }%
    \par\bigskip
    \foreach \i in {73,78,83,88,93}{%
       \hfill \includegraphics[page=\i, width=0.18\linewidth]{figs/polygons}%
    }%
    \caption{Optimal polygons for $k=50, \ldots, 95$}
    \figlab{rest_2}
\end{figure}
\begin{figure}
    \foreach \i in {1,...,5}{%
       \hfill \includegraphics[page=\i, width=0.18\linewidth]{figs/polygons_100_1000}%
    }%
    \par\bigskip
    \foreach \i in {6,...,10}{%
       \hfill \includegraphics[page=\i, width=0.18\linewidth]{figs/polygons_100_1000}%
    }%
    \caption{Optimal polygons for $k=100, \ldots, 1000$}
    \figlab{rest_3}
\end{figure}

\begin{figure}
    \foreach \i in {1,...,3}{%
       \hfill \includegraphics[page=\i, width=0.32\linewidth]{figs/polygons_5000_20000}%
    }%
    \caption{Optimal polygons for $k=5000$, $k=10,000$ and $k=20,000$.  }
    \figlab{rest_4}
\end{figure}

\end{document}